\documentclass{amsart}
\usepackage{graphicx}
\graphicspath{{images/}}

\usepackage{a4wide}
\usepackage[hidelinks]{hyperref}
\usepackage{xcolor}
\usepackage{verbatim}
\usepackage{caption}
\usepackage{mathtools}
\usepackage{enumerate}

\usepackage{xcolor}

\theoremstyle{plain}
\newtheorem{thm}{Theorem}

\newtheorem{prop}[thm]{Proposition}
\newtheorem{lemma}[thm]{Lemma}
\newtheorem{cor}[thm]{Corollary}
\theoremstyle{definition}
\newtheorem{definition}[thm]{Definition}

\newtheorem{remark}[thm]{Remark}

\usepackage{amssymb,amsmath,amsthm,mathrsfs}
\usepackage{graphicx}        
\usepackage{upgreek} 

\newcommand{\tn}[1]{\ensuremath{\mathbb{T}^{#1}}}

\newcommand{\rn}[1]{\ensuremath{\mathbb{R}^{#1}}}
\newcommand{\zn}[1]{\ensuremath{\mathbb{Z}^{#1}}}
\newcommand{\sn}[1]{\ensuremath{\mathbb{S}^{#1}}}
\newcommand{\nn}[1]{\ensuremath{\mathbb{N}^{#1}}}
\newcommand{\etot}[1]{\ensuremath{E_{\mathrm{tot},#1}}}
\newcommand{\metot}[1]{\ensuremath{\mathcal{E}_{\mathrm{tot},#1}}}
\newcommand{\hetot}[1]{\ensuremath{\hat{E}_{\mathrm{tot},#1}}}
\newcommand{\R}{\mathbb{R}}

\newcommand{\bD}{\bar{D}}

\newcommand{\g}{\gamma}
\renewcommand{\a}{\alpha}

\newcommand{\bge}{\bar{g}}

\renewcommand{\d}{\partial}

\newcommand{\ric}{\mathrm{Ric}}

\newcommand{\md}{{\mathrm{d}}}

\newcommand{\mfI}{\mathfrak{I}}

\newcommand{\tr}{\mathrm{tr}}

\newcommand{\rodiv}{\mathrm{div}}

\newcommand{\roRic}{\mathrm{Ric}}
\newcommand{\roScal}{\mathrm{Scal}}
\newcommand{\roRiem}{\mathrm{Riem}}

\newcommand{\rograd}{\mathrm{grad}}

\newcommand{\refer}{\mathrm{ref}}

\newcommand{\ldr}[1]{\langle #1\rangle}

\newcommand{\mR}{\mathcal{R}}

\newcommand{\mK}{\mathcal{K}}

\newcommand{\mI}{\mathcal{I}}

\newcommand{\mH}{\mathcal{H}}

\newcommand{\ve}{\varepsilon}

\renewcommand{\o}{\omega}

\newcommand{\bfI}{\mathbf{I}}

\newcommand{\bfJ}{\mathbf{J}}
\newcommand{\bfK}{\mathbf{K}}

\newcommand{\bfc}{\mathbf{c}}

\newcommand{\K}{\mathcal{K}}
\newcommand{\rK}{\mathring{\K}}
\newcommand{\ch}{\mathcal{H}}
\newcommand{\rch}{\mathring{\ch}}
\newcommand{\rphi}{\mathring{\Phi}}
\newcommand{\rpsi}{\mathring{\Psi}}
\newcommand{\diver}{\mathrm{div}}
\newcommand{\rp}{\mathring{p}}

\newcommand{\sric}{\ric_h}
\newcommand{\lie}{\mathcal{L}}
\newcommand{\n}{\nabla}
\newcommand{\ovn}{\overline{\n}}
\newcommand{\I}{\mathrm{\mathbf{I}}}
\newcommand{\tsum}{\textstyle\sum}
\newcommand{\eig}{\widehat e}
\newcommand{\deig}{\widehat \omega}
\newcommand{\p}{\partial}
\newcommand{\mfx}{\mathfrak{X}}
\newcommand{\cb}{\mathcal{B}}
\newcommand{\cR}{\mathcal{R}}
\newcommand{\J}{\mathrm{\mathbf{J}}}
\newcommand{\rce}{\mathring{e}}
\newcommand{\ro}{\mathring{\omega}}

\newcommand{\tint}{\textstyle\int}

\newcommand{\ear}{\vec{e}}

\newcommand{\abs}[1]{|#1|}

\newcommand{\snorm}[1]{\|#1\|}

\renewcommand{\S}{\Sigma}
\newcommand{\s}{\sigma}
\newcommand{\sigmap}{{\sigma_p}}

\renewcommand{\k}{k}

\newcommand{\bk}{\bar \k}

\newcommand{\ce}{{\check e}}
\newcommand{\co}{{\check \o}}

\newcommand{\mrmfK}{\mathring{\mathfrak{K}}}

\newcommand{\mrPsi}{\mathring{\Psi}}
\newcommand{\mrPhi}{\mathring{\Phi}}
\newcommand{\mrp}{\mathring{p}}

\newcommand{\dtp}{e_0(\varphi)}

\DeclareMathOperator{\Id}{Id}

\begin{document}

\title{Complete asymptotics in the formation of quiescent big bang singularities}
\author{Andrés Franco-Grisales \and Hans Ringström}
\date{}
\address{Department of Mathematics, KTH Royal Institute of Technology, 100 44 Stockholm, Sweden}
\email[Andrés Franco-Grisales]{anfg@kth.se}
\email[Hans Ringström]{hansr@kth.se}
\begin{abstract}
  There are three categories of mathematical results concerning quiescent big bang singularities: the derivation of asymptotics
  in a symmetry class; the construction of spacetimes
  given initial data on the singularity; and the proof of big bang formation in the absence of symmetries, including the proof
  of stable big bang formation. In a recent article, the first author demonstrated the
  existence of developments corresponding to a geometric notion of initial data on a big bang singularity. Moreover, this
  article, combined with previous articles by the second author, gives a
  unified and geometric perspective on large classes of seemingly disparate results in the first two categories. Concerning the third category, Oude
  Groeniger et al recently formulated a general condition on initial data ensuring big bang formation,
  including curvature blow up. This result, among other things, generalises previous results on stable big bang formation.
  However, it does not include a statement saying that the 
  solutions induce initial data on the singularity. Here we tie all three categories of results together by demonstrating
  that the solutions of Oude Groeniger et al induce data on the singularity. However, the results are more general and
  can potentially be used to derive similar conclusions in other gauges. 
\end{abstract}

\maketitle

\section{Introduction}

The subject of this article is quiescent big bang singularities. To put this topic into context, it is natural to return to the work of
Belinski\v{\i}, Khalatnikov and Lifschitz (BKL), see, e.g., \cite{bkl70,bkl82} as well as more modern refinements, e.g., \cite{dhn,dn,hur,hul}.
Very crudely, the idea is to study the asymptotics in the direction of a big bang singularity using a foliation by spacelike hypersurfaces, the
mean curvatures of which diverge; i.e., if $K$ denotes the Weingarten map of the leaves of the foliation, $\theta:=\tr K$ diverges. In order
to extract a finite limit, it is therefore natural to expansion normalise by dividing by the
mean curvature: $\mK:=K/\theta$. We refer to $\mK$ as \textit{the expansion normalised Weingarten map}. One suggestion of BKL is that, along a
causal curve going into the singularity, the eigenvalues of $\mK$ should oscillate according to a chaotic one-dimensional map referred to as
the \textit{Kasner map} or \textit{BKL map}. This is referred to as the oscillatory setting. However, for certain matter models and for
certain symmetry classes, the oscillations can be neutralised and the resulting behaviour is \textit{quiescent} in the sense that the eigenvalues
converge. Oscillatory solutions are very interesting, but, to the best of our knowledge, there are only results in the spatially homogeneous
setting; see, e.g., \cite{weaver,cbu,BianchiIXattr, beguin,lhwg,lrt,brehm,bad}. There is a vast literature on spatially homogeneous solutions, and
most of the results concern the quiescent setting. There are also some results in the case of $\tn{3}$-Gowdy symmetry; see, e.g.,
\cite{cim,RinAsVel,RinSCC}. However, going beyond these results has turned out to be difficult. One way to avoid this difficulty is to
turn the problem around, to start with ``data at the singularity'' and to prove that there are corresponding solutions. Over the last 20--30
years, a substantial body of work on this topic has appeared; see, e.g., \cite{kar,iak,ren,aarendall,iam,damouretal,sta,ABIF,klinger,aeta,fal,aaf}.
Unfortunately, these results have certain deficiencies. One deficiency is that many of the results are in the real analytic setting, a regularity
class incompatible with finite speed of propagation, a central notion in general relativity. A second deficiency is that the notions of initial
data in the different results are strongly connected to the particular choice of gauge. It is therefore not easy to see how the conditions of the
different articles relate. Finally, viewing the constructed spacetimes as solutions to the initial value problem given initial data on the
singularity, it would be highly desirable to have a geometric formulation of the initial value problem; isometric data on the singularity should give
rise to isometric developments etc. In \cite{RinQC}, an article based on \cite{RinWave,RinGeoJDG}, the second author developed such a geometric
notion of initial data. In \cite{andres}, the first author demonstrated that this notion of initial data can be used to formulate a geometric
initial value problem for the Einstein--nonlinear scalar field equations with data on the singularity. In particular, given initial data there is,
up to isometry, a unique corresponding development, and isometric initial data on the singularity give rise to isometric developments. We expect
this result to give a unified perspective on the results starting with initial data on the singularity. In the spatially homogeneous Bianchi class
A setting, this is verified in \cite{RinSHID,RinModel}. In the $\tn{3}$--Gowdy setting, it is verified in \cite{RinSHID}. In particular, this
means that the notion of initial data used in \cite{kar,ren} is a special case of the initial data used in \cite{RinQC,andres}. That the notion
of initial data used in \cite{aarendall} is a special case of the notion introduced in \cite{RinQC} is verified in \cite{RinQC}. That the notion
of initial data used in \cite{klinger,fal,aaf} is a special case of the notion introduced in \cite{RinQC} is also verified in \cite{RinQC}.

Studying the stability of big bang formation is an alternative approach which has been very active in the last 5--10 years. The results
began with the work of Rodnianski and Speck, demonstrating stability of spatially homogeneous, isotropic and spatially flat solutions in the
Einstein-scalar field and the Einstein-stiff fluid setting; see \cite{rasql,rasq}. Speck demonstrates a similar result in the case of
$\sn{3}$-spatial topology; see \cite{specks3}. In \cite{rsh}, the authors consider the vacuum equations in higher dimensions in situations
with intermediate anisotropy; the conditions are more general than requiring proximity to isotropy, but they are still quite far from covering
the full subcritical range in which stability is expected. In \cite{fau}, Fajman and Urban prove a stability result in the case of spatial
local homogeneity and isotropy and negative spatial curvature. This result was later generalised to allow the presence of Vlasov
matter and any sign of the sectional curvature in \cite{fauVl}. Finally, Fournodavlos, Rodnianski and Speck demonstrate stable big bang formation
for the full subcritical range of Kasner-scalar field spacetimes in \cite{FRS}. This is a very important result for many reasons. First, because
it covers the maximal range in which one can obtain stability. Second, because the authors introduce a new formulation of the equations using a
Fermi-Walker propagated frame which has turned out to be very fruitful. The extension of this result to stability of the big
bang singularity for the Einstein-Maxwell-scalar field-Vlasov system in the full strong subcritical range is discussed in \cite{anetal}.

All of the stability results mentioned so far
are obtained using constant mean curvature (CMC) foliations. Such foliations are natural for several reasons. To begin with, they are uniquely
determined by the spacetime for large classes of matter models. Moreover, they synchronise the singularity, something which is not to be
expected for most gauges. On the other hand, CMC foliations have deficiencies. In particular, the corresponding gauge fixed system
includes elliptic equations. This means that it is not possible to localise the analysis in space. This is contrary to the expected spatial
localisation of the causal structure. Moreover, thinking ahead, going beyond stability of spatially homogeneous solutions, quite violent
spatial variations are to be expected; this
can be seen already in the spikes appearing in the $\tn{3}$-Gowdy symmetric setting, see, e.g., \cite{raw}. In order to obtain a clear picture
of the behaviour of solutions near the big bang, it can therefore be expected to be necessary to localise the analysis. In the spatially
homogeneous, isotropic and spatially flat setting, localisation is achieved by Beyer and Oliynyk in \cite{bat}, a result which is generalised
to include a fluid in \cite{baotwo}. This requires the development of a new, local, gauge. The authors achieve this by using the scalar field
as a time coordinate. In \cite{boz}, Beyer, Oliynyk and Zheng generalise the result to cover the full subcritical range in $3$-dimensions,
a result which is generalised to higher dimensions in \cite{zheng}. As in \cite{FRS}, the authors use a Fermi-Walker transported frame in order
to obtain their results.

In spite of the importance of the above results, they have one fundamental weakness, namely that they only demonstrate stability of specific
spatially locally homogeneous solutions. Due to the ideas of BKL and the results \cite{aarendall,andres} concerning data on the singularity, it
is natural to expect that solutions should exhibit quiescent asymptotics more generally. Moreover, due to the perspective developed in
\cite{RinWave,RinGeoJDG,RinQC}, it is natural to expect that expansion normalised initial data should play an important role in phrasing the corresponding
condition. In fact, in \cite{OPR}, Oude Groeniger, Petersen and the second author provide a general condition on initial data, without any
reference to a background solution (spatially homogeneous or otherwise) such that the corresponding solutions exhibit big bang formation with a
crushing CMC foliation and curvature blow up. Moreover, the result gives partial asymptotics, corresponding to some of the data on the singularity,
as formulated in \cite{RinQC} or \cite{andres}. In addition, the results of \cite{OPR} not only imply the previous stability results in the
Einstein--scalar field setting with CMC foliations, they substantially extend them. Since \cite{OPR} concerns the Einstein--nonlinear scalar field
setting, the results also yield future and past global nonlinear stability results for large classes of spatially locally homogeneous solutions.

Considering the above circle of ideas, the only thing missing to tie the results given data on the singularity together with the stability results
is a proof that the solutions constructed in \cite{OPR} induce data on the singularity in the sense of \cite{RinQC} or \cite{andres}. In this article,
we provide such a proof. Moreover, we do so by providing tools we expect to be applicable more generally. More specifically, we prove that on the
basis of appropriate $C^2$-assumptions concerning solutions to the FRS equations used in \cite{OPR}, it is possible to deduce $C^k$-conclusions for
any $k$. We also prove that, given an appropriate foliation (not necessarily a CMC foliation) and appropriate assumptions concerning a Fermi-Walker
propagated orthonormal frame of the leaves of the foliation as well as concerning basic geometric quantities expressed with respect this Fermi-Walker
propagated orthonormal frame, it follows that the solution induces data on the singularity in the sense of \cite{RinQC} or \cite{andres}.

\subsection{The Einstein--nonlinear scalar field equations}\label{ssection:enlsf eqs}
In this article, we are interested in the Einstein--nonlinear scalar field equations. They are determined once we specify a
\textit{potential}; i.e., a function $V \in C^\infty(\R)$. A solution consists of a time oriented Lorentz manifold $(M,g)$, i.e. a \textit{spacetime},
and a function $\varphi\in C^{\infty}(M)$, referred to as the \textit{scalar field}, satisfying
\begin{subequations}\label{seq:nlsf sys}
\begin{align}
    \ric_g - \tfrac{1}{2} \roScal_g g &= T,\label{eq:EE nlsf}\\
    \Box_g \varphi &= V' \circ \varphi. \label{eq:sf we}
\end{align}
\end{subequations}
Here $\ric_g$ and $\roScal_g$ denote the Ricci and scalar curvatures of $g$ respectively, $\Box_g$ is the wave operator associated with $g$, and
\begin{equation}\label{eq:Tdef}
  T = \md\varphi \otimes \md\varphi - \big( \tfrac{1}{2}|\md\varphi|_g^2 + V \circ \varphi \big)g
\end{equation}
is the energy-momentum tensor of $\varphi$. Note that adding a cosmological constant term to the left hand side of (\ref{eq:EE nlsf}) would not
make the system more general, as the cosmological constant can be absorbed in the potential.

There is a Cauchy problem associated with (\ref{seq:nlsf sys}). 
\begin{definition}
  \textit{Initial data for the Einstein--nonlinear scalar field equations} (\ref{seq:nlsf sys}) consist of a Riemannian manifold $(\S,h)$, a symmetric
  covariant $2$-tensor field $k$ and two smooth functions $\phi$ and $\psi$ satisfying
  \begin{subequations}\label{seq:constraints}
    \begin{align}
      \roScal_h - \abs k^2_{h} + (\tr_h k)^2
      &= \psi^2 + |\md \phi|_{h}^{2} + 2 V \circ \phi, \label{eq: Hamiltonian constraint} \\
      \mathrm{div}_hk - \md \tr_h k
      &= \psi \md \phi. \label{eq: momentum constraint}
    \end{align}
  \end{subequations}  
  If $\theta:=\tr_hk$ is constant, the initial data are said to be \textit{constant mean curvature} (CMC) initial data.
  Given initial data for the Einstein--nonlinear scalar field equations, say $(\S,h,k,\phi,\psi)$ a corresponding \textit{development} is a
  solution $(M,g,\varphi)$ to (\ref{seq:nlsf sys}), and an embedding $i:\Sigma\rightarrow M$ such that the following hold. First, $i^*g=h$, so that,
  in particular, $i(\S)$ is a spacelike hypersurface in $(M,g)$. Second, if $\bk$ is the second fundamental form induced on $i(\S)$ and $U$ is the
  future directed unit normal to $i(\S)$, then
  \[
  i^*\bk=k,\ \ \ \varphi\circ i=\phi,\ \ \
  (U\varphi)\circ i=\psi.
  \]
\end{definition}
\begin{remark}
  Equation \eqref{eq: Hamiltonian constraint} is called the \emph{Hamiltonian constraint equation} and \eqref{eq: momentum constraint} is called the
  \emph{momentum constraint equation}.
\end{remark}
\begin{remark}
  Due to arguments going back to the work of Yvonne Choquet-Bruhat and Robert Geroch \cite{cbag}, there is a unique maximal globally hyperbolic
  development associated with initial data; see also \cite[Chapter~23]{stab} for a textbook presentation. 
\end{remark}
\begin{remark}\label{remark:induced id}
  Let $(M,g,\varphi)$ be a solution to (\ref{seq:nlsf sys}) and let $\S$ be a spacelike hypersurface in $(M,g)$ with future directed unit normal $U$.
  Define $h$ and $k$ to be the associated induced first and second fundamental forms. Let, moreover, $\phi := \varphi|_\S$ and
  $\psi := U(\varphi)|_\S$. Then $(\S,h,k,\phi,\psi)$ are initial data for the Einstein--nonlinear scalar field equations; see, e.g.,
  \cite[Chapter~13]{RinCauchy}.
\end{remark}

Given initial data such that the associated mean curvature is strictly positive, it is possible to define associated expansion normalised initial
data, a concept which is of central importance in the definition of initial data on the singularity.

\begin{definition}\label{def:exp norm id}
  Let $(\S,h,k,\phi,\psi)$ be initial data for the Einstein--nonlinear scalar field equations and $\theta:=\tr_hk$. Assume that $\theta>0$.
  Let $K$ be the $(1,1)$-tensor field obtained by raising one index in $k$ using $h$. Then $\mK:=\theta^{-1}K$ is called the
  \textit{expansion normalised Weingarten map}. Define
  \[
  \mH(X,Y):=h(\theta^\mK X,\theta^\mK Y),\ \ \
  \Psi:=\theta^{-1}\psi,\ \ \
  \Phi:=\phi+\Psi\ln \theta,
  \]
  for all $X,Y\in T_p\S$ and all $p\in\S$, where
  \[
  \theta^\K X := \tsum_{m=0}^\infty \tfrac{(\ln\theta)^m}{m!} \K^m X.
  \]
  Then $(\S,\mH,\mK,\Phi,\Psi)$ are referred to as \textit{expansion normalised initial data} associated with the initial data $(\S,h,k,\phi,\psi)$. 
\end{definition}

\subsection{Initial data on the singularity}
Next, we define the notion of initial data on the singularity. 
\begin{definition}\label{def:idos}
    For integers $\ell \geq 1$ and $n \geq 3$, let $(\Sigma,\rch)$ be an $n$-dimensional $C^\ell$ Riemannian manifold, $\rK$ a $C^\ell$ $(1,1)$-tensor field on $\Sigma$ and $\rphi, \rpsi \in C^\ell(\Sigma)$. Assume that 
    \begin{enumerate}
        \item $\tr\rK = 1$ and $\rK$ is self adjoint with respect to $\rch$.
        \item $\tr\rK^2 + \rpsi^2 = 1$ and $\diver_{\rch} \rK = \rpsi \md\rphi$.
        \item The eigenvalues $\rp_1,\ldots,\rp_n$ of $\rK$ are everywhere distinct, and satisfy 
        \begin{equation}\label{eq:cond on eigenvalues}
          \rp_I + \rp_J - \rp_K < 1
        \end{equation}
        for all $I,J$ and $K$ such that $I \neq J$.
    \end{enumerate}
    Then $(\Sigma,\rch,\rK,\rphi,\rpsi)$ are $C^\ell$ \emph{robust nondegenerate quiescent initial data on the singularity for the
      Einstein--nonlinear scalar field equations}. If they are $C^\ell$ for any $\ell\in\nn{}$, they are said to be smooth. 
\end{definition}
\begin{remark}
  It is possible to omit the nondegeneracy requirement; i.e., the requirement that the eigenvalues be distinct. However, we are mainly
  interested in the nondegenerate setting in this article. It is also possible to omit the requirement that (\ref{eq:cond on eigenvalues})
  hold. However, it is then necessary to introduce separate conditions on the structure coefficients associated with an eigenframe of $\rK$;
  see, e.g., \cite[Definition~10, p.~9]{RinQC}. Moreover, the solutions inducing initial data violating (\ref{eq:cond on eigenvalues}) are
  expected to be unstable. Since we are here interested in the asymptotics induced by solutions arising in stability results, such a
  generalisation is therefore not of great interest here. 
\end{remark}
In order to relate a solution as in Subsection~\ref{ssection:enlsf eqs} to initial data on the singularity in the sense of
Definition~\ref{def:idos}, we first need to introduce a foliation. Let $(M,g,\varphi)$ be a globally hyperbolic solution to the Einstein--nonlinear
scalar field equations. Then $M$ is diffeomorphic to $\S\times \mI$ for some $n$-manifold $\S$ and some open interval $\mI=(t_-,t_+)$, where
$\S_t:=\S\times\{t\}$ are spacelike Cauchy hypersurfaces. When speaking of big bang singularities, we have
crushing singularities in mind. In other words, we here assume the mean curvature of the leaves $\S_t$ of the foliation to diverge uniformly to
$\infty$ as $t\downarrow t_-$. The metric $g$ and scalar field $\varphi$ induce initial data, cf. Remark~\ref{remark:induced id}, on each
leaf: $(\S,h(t),k(t),\phi(t),\psi(t))$; we think of
the data as being defined on $\S$. Moreover, these data induce expansion normalised initital data $(\S,\mH(t),\mK(t),\Phi(t),\Psi(t))$ as explained
in Definition~\ref{def:exp norm id}. We say that the solution induces initial data on the singularity along the foliation $\{\Sigma_t\}_{t\in\mI}$
if there are $t_0\in\mI$; constants $\delta>0$ and $C_\ell$ for $\ell\in \nn{}_0$; and robust quiescent initial data on the singularity for the
Einstein--nonlinear scalar field equations, say $(\Sigma,\rch,\rK,\rphi,\rpsi)$, such that
\begin{subequations}\label{seq:conv to dos}
  \begin{align}
    \|\theta^{\delta}\bD^\ell(\mH(t)-\rch)\|_{C^0(\S)}+\|\theta^{\delta}\bD^\ell(\mK(t)-\rK)\|_{C^0(\S)} &\leq C_\ell,\\
    \|\theta^{\delta}\bD^\ell(\Phi(t)-\rphi)\|_{C^0(\S)}
    +\|\theta^{\delta}\bD^\ell(\Psi(t)-\rpsi)\|_{C^0(\S)} &\leq C_\ell
  \end{align}
\end{subequations}
for all $t\leq t_0$ and all $\ell\in\nn{}_0$. Here $\bD$ is the Levi-Civita connection associated with a fixed reference Riemannian metric on $\S$,
say $h_{\refer}$, and the $C^0$-norm of a tensor field $T$ is defined as follows:
\[
\|T\|_{C^0(\S)}:=\sup_{x\in\S}|T(x)|_{h_{\refer}}.
\]
In case $\S$ is non-compact, one could relax (\ref{seq:conv to dos}) to the requirement that for each compact subset $K$ of $\Sigma$, there are
constants $\delta$ and $C_\ell$ such that (\ref{seq:conv to dos}) holds with $\S$ replaced by $K$. 

In $3+1$-dimensions, the step from data on the singularity to corresponding developments is discussed in \cite{andres}. In particular, due
to \cite[Theorem~1.21, p.~10]{andres}, initial data on the singularity give rise to unique (up to isometry) maximal globally hyperbolic
developments with a local Gaussian foliation close to the singularity. Next, we turn to the opposite direction. In particular, we prove that
the general condition on initial data introduced in \cite{OPR}, ensuring that the corresponding maximal globally hyperbolic developments
have crushing CMC singularities with curvature blow up, in fact yields solutions inducing data on the singularity along the CMC foliation. 

\subsection{The main result}

In this subsection, we state the main result, namely that the solutions constructed in \cite{OPR} induce data on the
singularity. However, to begin with, we need to introduce some terminology. 

\begin{definition} \label{def: alpha admissible}
  Let $\mfI=(\S,h,k,\phi,\psi)$ be CMC initial data with mean curvature $\theta>0$. Let $\mK$ be the associated expansion normalised Weingarten
  map, see Definition~\ref{def:exp norm id}, and let $\sigmap \in (0, 1)$. Then $\mfI$ and the eigenvalues $p_1, \hdots, p_n$ of $\mK$ are
  said to be $\sigmap$-\textit{admissible}
  if they satisfy
  \begin{equation}
    p_I + p_J - p_K
    < 1 - \sigmap, \label{cond: alternating q}
  \end{equation}
  for all $I, J, K$, such that $I \neq J$.
\end{definition}
Concerning the potential, we make the following assumption.
\begin{definition}\label{def:Vadm}
  Fix $\s_{V} \in (0, 1)$. If $V\in C^{\infty}(\rn{})$ has the property that for each $k\in \nn{}_0$,
  there is a constant $c_{k}>0$ such that
  \begin{equation}
    \textstyle{\sum}_{\ell \leq k}|V^{(\ell)}(x)|
        \leq c_{k} e^{2(1-\s_{V})|x|} \label{eq: V assumption}
  \end{equation}
  for all $x\in\rn{}$, then $V$ is said to be a \textit{$\s_V$-admissible potential}. 
\end{definition}
\begin{remark}
  In comparison with \cite[Definition~1, p.~5]{OPR}, we have here omitted the requirement that $V$ be non-negative. We include this requirement
  explicitly in Theorem~\ref{thm: big bang formation} instead.
\end{remark}

\begin{thm}[The main theorem] \label{thm: big bang formation} 
  Fix $3\leq n\in\nn{}$ and admissibility thresholds $\s_V$, $\sigmap \in (0,1)$. Let $(\S, h_\refer)$ be a closed Riemannian manifold of dimension
  $n$ and let $V\in C^{\infty}(\rn{})$ be a non-negative $\s_V$--admissible potential. Then there is a $\kappa\in\nn{}$, depending only on $n$, $\s_V$
  and $\sigmap$, such that the following holds. For any $\zeta_0 > 0$, there is a $\zeta_1 > 0$ such that:

  If $\mfI$ are $\sigmap$-admissible CMC initial data on $\S$ for the Einstein--nonlinear scalar field equations with potential $V$, such that the
  associated expansion-normalized initial data $(\S, \mH_0,\mK_0,\Phi_0,\Psi_0)$ satisfy
  \begin{equation} \label{eq: Sobolev bound assumption}
    \snorm{\mH^{-1}_0}_{C^{0}_\refer(\S)}+\snorm{\mH_0}_{H^{\kappa}_\refer(\S)} + \snorm{\mK_0}_{H^{\kappa}_\refer(\S)}
    + \snorm{\Phi_0}_{H^{\kappa}_\refer(\S)} + \snorm{\Psi_0}_{H^{\kappa}_\refer(\S)}
    < \zeta_0;
  \end{equation}
  $|p_I-p_J|>\zeta_{0}^{-1}$ for $I\neq J$, where $p_I$ are the eigenvalues of $\mK_0$; and the mean curvature satisfies $\theta_0 > \zeta_1$,
  then the maximal globally hyperbolic development of $\mfI$, say $(M, g, \varphi)$, with associated embedding $\iota: \S \hookrightarrow M$, has a
  past crushing big bang singularity in the following sense:

  \noindent \textbf{CMC foliation:}
  There is a diffeomorphism $\Xi$ from $(0, t_0] \times \S$ to $J^-\left(\iota(\S)\right)$, i.e.~the causal past of the Cauchy hypersurface
  $\iota(\S)$, such that $\Xi (\S_{t_0}) = \iota(\S)$ and the hypersurfaces $\Xi(\S_t) \subset M$ are spacelike Cauchy hypersurfaces with
  constant mean curvature $\theta(t) = \frac1t$, for each $t \in (0, t_0]$.

  \noindent \textbf{Asymptotic data:}
  There are smooth robust nondegenerate quiescent initial data on the singularity for the Einstein--nonlinear scalar field equations,
  say $(\Sigma,\rch,\rK,\rphi,\rpsi)$, and constants $\delta>0$ and $C_\ell$ for $\ell\in\nn{}_0$, such that
  \begin{equation}\label{eq:Cl asymptotics as data}
    \|\mH(t)-\rch\|_{C^\ell_\refer(\S)}+\|\mK(t)-\rK\|_{C^\ell_\refer(\S)}+\|\Phi(t)-\rphi\|_{C^\ell_\refer(\S)}
    +\|\Psi(t)-\rpsi\|_{C^\ell_\refer(\S)}\leq C_\ell t^{\delta}
  \end{equation}
  for all $t\leq t_0$ and all $\ell\in\nn{}_0$, where $[\Sigma,\mH(t),\mK(t),\Phi(t),\Psi(t)]$ are the expansion normalised initial data
  induced on $\S_t$. 
    
  \noindent \textbf{Curvature blow-up:}
  Let $\rp_I$ denote the eigenvalues of $\rK$. Then there are constants $C_\ell>0$, $\ell\in\nn{}_0$, such that
  $\mathfrak{R}_g := \roRic_{g, \mu \nu} \roRic_{g}^{\mu \nu}$ and $\mathfrak{K}_g := \roRiem_{g,\mu \nu \xi \rho} \roRiem_g^{\mu \nu \xi \rho}$
  satisfy
  \begin{subequations}\label{seq:Kretschmann Ricci asymptotics intro}
    \begin{align}
      \big\| t^4 \mathfrak{K}_g(t, \cdot) - 
      4 \big[ \textstyle{\sum}_{I} \rp_I^2 (1 - \rp_I^2)
        + \textstyle{\sum}_{I < J} \rp_I^2 \rp_J^2 \big]
      \big\|_{C^{\ell}_\refer(\S)}\leq C_\ell t^{\delta}, \\
      \| t^4 \mathfrak{R}_g(t, \cdot) - \rpsi^4
      \|_{C^{\ell}_\refer(\S)}\leq C_\ell t^{\delta}
    \end{align}
  \end{subequations}
  for all $t \in (0, t_0]$ and all $\ell\in\nn{}_0$, so that $(M, g)$ is $C^2$ past inextendible. Moreover, every past directed causal geodesic
  in $M$ is incomplete and $\mathfrak{K}_g$ blows up along every past inextendible causal curve.
\end{thm}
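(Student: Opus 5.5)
The proof will be built on \cite{OPR}, which the introduction says already gives the CMC foliation, curvature blow-up and partial asymptotics. Its main theorem (with the same $\kappa$, $\zeta_0$, $\zeta_1$) provides the crushing CMC foliation $\theta(t)=1/t$ on $(0,t_0]\times\S$. Along this foliation it also provides a Fermi--Walker propagated orthonormal frame $\{e_I\}$ of the leaves with dual frame $\{\omega^I\}$, and it gives $C^2$-type (in fact finite-$H^{\kappa}$-type) bounds with decay $t^{\delta}$ for the FRS variables. These variables are the structure coefficients $\gamma$, the frame components $k_{IJ}$, $\varphi$ and $e_0(\varphi)$. The bounds include convergence of $t k_{IJ}$, of $t e_0(\varphi)$, of $\varphi + (te_0\varphi)\ln t$, and of suitably rescaled frame components $t^{-p_I}e_I$ and $t^{p_I}\omega^I$.

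The argument then has two steps, matching the two tools announced in the introduction.

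\textbf{Step 1: $C^2$ to $C^k$.} Starting from the $C^2$ (or low-order Sobolev) control supplied by \cite{OPR}, I will prove $C^k$ bounds for every $k$ by induction on $k$. I commute the FRS evolution equations with $k$ spatial derivatives $\bD$, taken in the reference metric. Once the lower-order quantities are known to be bounded with $t^{\delta}$-decay, the highest-order terms satisfy linear ODE/transport equations along the normal direction, of the form
\[
t\partial_t X = A X + F,
\]
where $A$ is determined by the limiting eigenvalues and $F$ decays like $t^{\delta}$.

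Spatial derivatives of the frame introduce weights $t^{-p_I}$. The condition $p_I+p_J-p_K<1-\sigmap$ from Definition~\ref{def: alpha admissible} is exactly what makes the bad terms integrable in $t$. These terms are products of structure coefficients $\gamma^K_{IJ}$ with powers $t^{1-p_I-p_J+p_K}$.

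Grönwall then gives uniform bounds. Integrability in $t$ gives existence of the limits, with a rate obtained by slightly shrinking $\delta$ at each order. Sobolev embedding on the closed manifold $\S$ is only needed at the base level.

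\textbf{Step 2: from the frame to expansion normalised data.} With the frame asymptotics in all $C^k$ in hand, I construct the limits directly:
\[
\rK := \lim_{t\to0} \mK(t), \qquad \rpsi := \lim_{t\to0} \Psi(t), \qquad \rphi := \lim_{t\to0} \Phi(t).
\]
The main work is $\mH(t)=h(\theta^{\mK}\cdot,\theta^{\mK}\cdot)$. Expressed in the frame, it is $\sum_I$ of products of $t^{-2p_I(t)}$-type factors with $\omega^I\otimes\omega^I$, plus off-diagonal corrections.

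I will show that $\theta^{\mK(t)}$ composed with the rescaled frame converges. To do this I write $\mK(t)=\rK+O(t^{\delta})$ and estimate
\[
\theta^{\mK}-\theta^{\rK} = O\big(t^{\delta}|\ln t|\, \theta^{\max(\cdot)}\big).
\]
Nondegeneracy ($|p_I-p_J|>\zeta_0^{-1}$) lets me diagonalize smoothly, because the eigenvalues stay distinct and the eigenprojections are smooth. Off-diagonal terms between different eigenspaces carry factors $t^{p_J-p_I}$. These are controlled because the frame components of $k$ in the off-diagonal directions decay at a compatible rate, as provided by the OPR estimates. The $|\ln t|$ loss is absorbed by reducing $\delta$.

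This gives \eqref{eq:Cl asymptotics as data}. Passing to the limit in the constraints multiplied by $\theta^{-2}$ then gives $\tr\rK=1$, $\tr\rK^2+\rpsi^2=1$, and $\diver_{\rch}\rK=\rpsi\md\rphi$. In these limits $\roScal_h/\theta^2$, $V/\theta^2$ and $|\md\phi|^2/\theta^2$ vanish, using the admissibility of $V$ and the bound on $\Phi$. The strict inequality \eqref{eq:cond on eigenvalues} and distinctness of the eigenvalues pass to the limit from the uniform $\sigmap$ margin and the $\zeta_0^{-1}$ gap.

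The curvature statements, incompleteness and inextendibility are inherited from \cite{OPR}, upgraded to $C^\ell$ via Step 1. I expect the main obstacle to be the off-diagonal control in Step 2: showing that $\mH(t)$ converges despite the mismatched powers $t^{-p_I}$ requires sharp decay of the components $\mK^I{}_J$ for $I\neq J$. My first attempt will be to derive this from the evolution equation for $k_{IJ}$ in the Fermi--Walker frame, which is $t\partial_t(tk_{IJ})=O(t^{\delta})$ (schematically), combined with the structure-coefficient weights.
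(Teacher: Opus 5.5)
Your overall architecture matches the paper's: take the CMC foliation, the Fermi--Walker frame and the low-order bounds from \cite{OPR}, upgrade them to all $C^k$, then construct the data on the singularity. Both steps, however, fail as you describe them.

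\textbf{Step 1.} Commuting the FRS system with spatial derivatives does not produce ODEs at top order. $\partial_t\gamma_{IJK}$ contains $e_I(Nk_{JK})$. $\partial_tk_{IJ}$ contains $e_K(N\gamma_{KIJ})$, $e_I(\gamma_{JKK})$ and $e_Ie_JN$. The equations for $e_I\varphi$ and $e_0\varphi$ contain $e_I(Ne_0\varphi)$ and $e_Ie_I\varphi$. Finally, $N$ solves an elliptic equation. So your source $F$ at order $k$ contains $k+1$ derivatives of the unknowns, and integrating along the normal loses a derivative at every order. This is exactly why a fixed-regularity argument as in \cite{OPR} needs ever more derivatives of the data, which is the problem Step 1 is supposed to solve. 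What is needed are $L^2$ energy estimates that:
\begin{itemize}
\item integrate by parts to cancel the top-order cross terms ($\gamma$ against $k$, and $e_I\varphi$ against $e_0\varphi$);
\item use the momentum constraint to replace $e_Ik_{IJ}$;
\item use elliptic estimates for $N-1$ with numerical leading constants;
\item treat $a_I=\gamma_{IJJ}$ as a separate unknown.
\end{itemize}
Even then the top-order energies are not bounded. The leading coefficient is of order $t^{-1}$, which is not integrable, so Gr\"onwall only gives growth like $\langle\ln t\rangle^{b_\ell}t^{-8}$. The rate is made independent of $\ell$ and $n$ via $t^2k_{IJ}k_{IJ}+t^2(e_0\varphi)^2\le 1+Ct^{2\varepsilon}$. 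Near-sharp $C^\ell$ bounds then come from interpolating this growth against the fixed-order bounds of \cite{OPR}. The limits of $tk_{IJ}$, $te_0\varphi$ and $\varphi-\mathring{\Psi}\ln t$ come from integrating the corresponding equations. Your ``$t\partial_tX=AX+F$'' would at best give $t^{A}$-type behaviour, not uniform bounds, and the subcritical condition is not what drives this step.

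\textbf{Step 2.} This step rests on inputs that \cite{OPR} does not provide.
\begin{itemize}
\item \cite{OPR} gives only non-sharp bounds like $\|e\|_{C^{k}}\le Ct^{-1+3\sigma}$ for the Fermi--Walker frame, its dual and $\gamma$.
\item It gives no convergence of $t^{-p_I}e_I$. That quantity is not even meaningful, because the Fermi--Walker frame is not an eigenframe of $\mathcal{K}$. It is related to one by a limiting rotation $\mathring{A}$, which in general mixes directions with different scalings $\theta^{p_I}$.
\item It gives no convergence of $\mathcal{K}(t)$ either, only of its eigenvalues, $tk_{IJ}$, $\Phi$ and $\Psi$.
\end{itemize}
So setting $\mathring{\mathcal{K}}:=\lim\mathcal{K}(t)$ presupposes the hardest part of the result. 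The comparison $\theta^{\mathcal{K}}-\theta^{\mathring{\mathcal{K}}}$ also cannot close: an $O(t^{\delta})$ error mixing eigendirections is amplified by factors $\theta^{|p_I-p_J|}$ with order-one exponents, which $t^{\delta}|\ln t|$ cannot absorb.

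The missing idea is the following chain of steps:
\begin{itemize}
\item Pass, on a finite covering space and using nondegeneracy, to a smooth $h$-orthonormal eigenframe $\widehat{e}_I$ of $\mathcal{K}$.
\item Show that the orthogonal matrix $A$ with $\widehat{e}_I=A_I^Je_J$ converges. This holds because $\partial_tA=O(t^{-1+2\delta})$, being built from off-diagonal components of $\mathcal{L}_{e_0}K$ divided by $\Theta(p_I-p_J)$.
\item This transfers the crude bounds to $\check{e}_I=\theta^{-p_I}\widehat{e}_I$, its dual $\check{\omega}^I$ and its structure coefficients $\Lambda_{IJ}^K$. They start at sizes $t^{-1+\mathring{p}_I+2\delta}$, $t^{-1-\mathring{p}_I+2\delta}$ and $t^{-1+\mathring{p}_I+\mathring{p}_J-\mathring{p}_K+2\delta}$ respectively.
\item These quantities obey an ODE system. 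Integrating it improves each bound by $t^{\delta}$ at the cost of two derivatives. The subcritical condition keeps coupling terms such as $\theta^{2(p_K-p_L)}\mathcal{B}_K^L\Lambda_{IJ}^L$ subdominant. A finite open cover of $\Sigma$ with small shifts $\alpha_j$ avoids vanishing exponents.
\item After a number of iterations depending only on $\delta$ everything is bounded, and one more integration gives limits.
\end{itemize}
Only then do $\mathcal{H}=\sum_I\check{\omega}^I\otimes\check{\omega}^I$ and $\mathcal{K}=\sum_Ip_I\check{\omega}^I\otimes\check{e}_I$ converge. The two-derivative loss per iteration is also why Step 1 must supply every $C^k$ bound beforehand, and why it cannot itself be done by ODE integration.
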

\begin{remark}
  The constant $\kappa$ is given as follows. In the main theorem of \cite{OPR}, i.e. \cite[Theorem~12, pp.~6--7]{OPR}, the constants $k_0$
  and $k_1$ are introduced. Choosing $k_0$ by demanding that equality hold in \cite[(12a), p.~6]{OPR}; and choosing $k_1$ to be the smallest
  integer such that \cite[(12b), p.~6]{OPR} holds for this choice of $k_0$, the integer $\kappa$ is given by $\kappa=k_1+2$. 
\end{remark}
\begin{remark}
  The Sobolev spaces $H^\ell_\refer$ and $C^\ell$ spaces $C^\ell_\refer$ in (\ref{eq: Sobolev bound assumption}), (\ref{eq:Cl asymptotics as data}) and
  (\ref{seq:Kretschmann Ricci asymptotics intro}) are defined using the reference metric $h_{\refer}$ and the associated Levi--Civita
  connection. In most of the article we use the $C^\ell$ and Sobolev spaces introduced in Subsection~\ref{ssection:norms and basic est}.
\end{remark}
\begin{remark}
  The requirement that $V$ be non--negative can be relaxed; cf. \cite[Remark~16, p.~7]{OPR}.
\end{remark}
\begin{remark}
  In \cite{OPR}, the stability of \textit{quiescent model solutions}, a notion introduced in \cite[Definition~30, p.~11]{OPR}, is also discussed; cf.
  \cite[Theorem~33, p.~12]{OPR} and \cite[Subsection~1.6, pp.~13--19]{OPR}. The background solutions to the Einstein-scalar field equations considered in,
  e.g., \cite{specks3,FRS,fau,fauVl} are examples of quiescent model solutions, but the notion is more general than this. There are quiescent model
  solutions inducing degenerate data on the singularity; i.e., data such that the $\rp_I$ are not all distinct. For this reason, the solutions obtained
  by appealing to the stability results are not necessarily such that Theorem~\ref{thm: big bang formation} applies. However, even in the degenerate
  setting, one does obtain conclusions. Consider a solution obtained by appealing to \cite[Theorem~33, p.~12]{OPR}. Then the first part of the proof of
  Theorem~\ref{thm: big bang formation}, which simply consists of an application of Theorem~\ref{thm:hod}, still applies. Next, assume that at some
  point, say $x\in \S$, the eigenvalues $\rp_I(x)$ are all distinct. Then there is a closed ball, say $\bar{B}_R(x)$, of radius $R>0$ with respect to
  the reference metric $h_{\refer}$ such that, for every $y\in\bar{B}_R(x)$, the eigenvalues $\rp_I(y)$ are all distinct. Choosing $T>0$, $\delta>0$ and
  $r>0$ small enough, it can then be ensured that $g$ and $\varphi$, restricted to $M_{\mathrm{loc}}:=\bar{B}_R(x)\times (0,T)$, constitute a solution to
  the Einstein–nonlinear scalar field equations satisfying the $(\delta,n,k_0,r)$-assumptions for any $k_0$. This means that we can apply
  Theorem~\ref{thm:asymptotics} in order to obtain smooth initial data on the singularity in the ball $\bar{B}_R(x)$. Moreover,
  (\ref{eq:Cl asymptotics as data}) holds with $\S$ replaced by $\bar{B}_R(x)$. 
\end{remark}
\begin{proof}
  The proof is to be found in Section~\ref{section: proof of main theorem}.
\end{proof}
The statement of the result may seem a bit strange at first sight. However, from the point of view of initial data on the singularity, it is
quite natural. In fact, consider a solution that induces initial data on the singularity along a CMC foliation. Then it is
clear that for any $k\in\nn{}$, the expansion normalised initial data on the leaves of the CMC foliation are uniformly bounded in $H^k_\refer$
all the way to the singularity; i.e., as the mean curvature diverges to infinity. It is therefore natural to first fix a bound
on the size of the expansion normalised initial data, without imposing any restriction on how large the bound may be, and then to require a
lower bound on the mean curvature. 

\subsection{The FRS equations}\label{ssection:eqs}
One topic of interest in this article is solutions to the FRS equations, introduced in \cite{FRS}. The purpose of the present
section is to recall the slight reformulation of these equations introduced in \cite[Theorem~12, pp.~5--6]{OPR}. These equations arise from the Einstein--nonlinear scalar field equations by assuming a CMC foliation, a vanishing shift vector field, and using a Fermi-Walker transported frame. The underlying
assumptions are the following. Let $(\S,h_{\refer})$ be a closed parallelisable
Riemannian manifold of dimension $n$; let $\{E_i\}_{i=1}^n$ be a smooth global orthonormal frame with respect to $h_{\refer}$
with dual frame $\{\eta^i\}_{i=1}^n$; and let $V\in C^\infty(\rn{})$. The manifold on which we wish to solve the equations is of the form
$M:=\S\times\mI$ for some open interval $\mI\subset (0,\infty)$ and the basic variables are $N$, $e_I^i$, $\omega^I_i$, $\gamma_{IJK}$,
$k_{IJ}$ and $\varphi$. Moreover, we assume $N>0$ on $M$. As clarified in the statement of \cite[Proposition~70, pp.~23--24]{OPR},
the geometric interpretation of these variables is the following. Let $e_0:=N^{-1}\d_t$, where $\d_t$ signifies differentiation
with respect to the second variable in $M\times\mI$; $e_I:=e_I^iE_i$; and $\omega^I:=\omega^I_i\eta^i$. Then $\{e_\a\}_{\a=0}^n$
is a time oriented orthonormal frame for a Lorentz metric $g$ on $M$ and if $h$ denotes the induced metric on
the constant-$t$ hypersurfaces $\S_t:=\S\times\{t\}$, then $\{e_I\}_{I=1}^n$ is an orthonormal frame with respect to $h$ with dual frame
$\{\omega^I\}_{I=1}^n$. Moreover,
\[
  k:=k_{IJ}\omega^I\otimes\omega^J, \ \ \
  \gamma_{IJK}=\omega^K([e_I,e_J])
\]
are the induced second fundamental form on $\S_t$ and the structure coefficients of the frame $\{e_I\}_{I=1}^n$, respectively,
and the mean curvature satisfies $k_{II}=t^{-1}$ (here and below, we sum over repeated capital Latin indices, even if both are
subscripts). Given the above underlying assumptions, the equations, see \cite[Proposition~70, pp.~23--24]{OPR}, are the following, using
the conventions introduced in Subsection~\ref{ssection:symm and antisymm}: 

\textit{The evolution equations for the frame and co-frame:}
\begin{align}
  e_0(e_I^i) 
  &= - k_{IJ}e_J^i, \label{eq: transport frame} \\
  e_0(\o^I_i) 
  &= k_{IJ} \o^J_i. \label{eq: transport co-frame}
\end{align}
\textit{The evolution equations for $\gamma$ and $k$:}
\begin{align}
  \begin{split}
    e_{0}\g_{IJK}
    &= - 2 N^{-1} e_{[I}(Nk_{J]K}) 
    - k_{IL} \g_{LJK} - k_{JL} \g_{ILK} + k_{KL} \g_{IJL},
  \end{split} \label{eq:concoef} \\
  \begin{split}
    e_{0}k_{IJ} 
    &= N^{-1}e_{(I}e_{J)} (N) - N^{-1} e_{K} (N\g_{K(IJ)})
    - e_{(I}(\g_{J)KK}) - t^{-1}k_{IJ} \\
    &\quad +\g_{KLL}\g_{K(IJ)} + \g_{I(KL)}\g_{J(KL)} - \tfrac14 \g_{KLI} \g_{KLJ}\\
    &\quad + e_I(\varphi)e_J(\varphi)+\tfrac2{n-1}(V\circ\varphi) \delta_{IJ}.
  \end{split} \label{eq:sff}
\end{align}
\textit{The evolution equations for the derivatives of the scalar field:}
\begin{align}
  e_0 \left( e_I \varphi \right)
  &= N^{-1} e_I \left( N e_0 \varphi \right) - k_{IJ} e_J(\varphi), \label{eq:spatial scalar field derivative} \\
  e_{0}\left( e_{0}(\varphi) \right)
  &= e_{I}\left( e_{I}(\varphi) \right) -t^{-1}e_{0}(\varphi)+N^{-1}e_{I}(N)e_{I}(\varphi) -\g_{JII}e_{J}(\varphi)-V'\circ\varphi. \label{eq:scalarfield}
\end{align}
\textit{The Hamiltonian constraint equation:}
\begin{align}
  \begin{split}
    2e_I(\g_{IJJ})
    &- \tfrac14\g_{IJK}(\g_{IJK} + 2 \g_{IKJ}) -\g_{IJJ}\g_{IKK} - k_{IJ}k_{IJ} + t^{-2} \\
    &= (e_{0}\varphi)^{2} + e_{I}(\varphi)e_{I}(\varphi) + 2V \circ \varphi.
  \end{split} \label{eq:hamconstraint}
\end{align}
\textit{The momentum constraint equation:}
\begin{equation} \label{eq:MC}
  e_I k_{IJ}
  =\g_{LII}k_{LJ} + \g_{IJL}k_{IL} + e_{0}(\varphi) e_{J}(\varphi).
\end{equation}
\textit{The lapse equation:}
\begin{equation}
  \begin{split}
    e_{I}e_{I}(N) &= t^{-2}(N-1)+\g_{JII}e_{J}(N) - \big( e_{I}(\varphi)e_{I}(\varphi) + \tfrac{2n}{n-1} V \circ \varphi \big)N \\
    &\quad + \left( 2e_I(\g_{IJJ}) - \tfrac14\g_{IJK}(\g_{IJK} + 2 \g_{IKJ}) - \g_{IJJ}\g_{IKK} \right) N.
  \end{split}\label{eq:LapseEquation}
\end{equation}
It is convenient to recall that combining (\ref{eq:hamconstraint}) and (\ref{eq:LapseEquation}) yields
\cite[(85), p.~32]{OPR}:
\begin{equation} \label{eq: alternative lapse}
  \begin{split}
    e_{I}e_{I}(N) &= t^{-2}(N-1)+\g_{JII}e_{J}(N) \\
    &\quad - \big( t^{-2} - k_{IJ} k_{IJ} - e_0(\varphi) e_0(\varphi) + \tfrac{2}{n-1}V\circ\varphi \big)  N.
  \end{split}
\end{equation}
In the above equations and in what follows, it is useful to recall \cite[Lemma~72, p.~25]{OPR}:
\begin{lemma}[Lemma~70, p.~25, \cite{OPR}] \label{lemma: spatial curvatures}
  Let $(\S,h)$ be a Riemannian manifold with a smooth global orthonormal frame $\{e_I\}_{I=1}^{n}$. Then the Ricci and scalar curvature are given by
  \begin{align}
    \begin{split}
    \roRic_h(e_I, e_J) 
    &= e_K (\g_{K(IJ)}) + e_{(I} (\g_{J)KK}) - \g_{I(KL)} \g_{J(KL)}\\
    &\quad + \tfrac14 \g_{KLI} \g_{KLJ} - \g_{KLL} \g_{K(IJ)},
    \end{split}\label{eq: SpatialRicciCurvature} \\
    \roScal_h 
    &= 2 e_I (\g_{IJJ}) - \tfrac14 \g_{IJK} (\g_{IJK} + 2 \g_{IKJ}) - \g_{IJJ} \g_{IKK},\label{eq:SpatialScalarCurvature}
  \end{align}
  where $\gamma_{IJK} = h([e_I, e_J], e_K)$.
\end{lemma}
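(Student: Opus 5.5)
We prove the formulas in Lemma~\ref{lemma: spatial curvatures} by direct computation in the orthonormal frame $\{e_I\}$. The first step is to express the Levi-Civita connection coefficients $\Gamma_{IJK}:=h(\nabla_{e_I}e_J,e_K)$ in terms of the structure coefficients $\g_{IJK}=h([e_I,e_J],e_K)$. Since the frame is orthonormal, $e_I(h(e_J,e_K))=0$, and Koszul's formula reduces to
\[
  \Gamma_{IJK}=\tfrac12\big(\g_{IJK}-\g_{JKI}+\g_{KIJ}\big).
\]
Note that $\Gamma_{IJK}=-\Gamma_{IKJ}$ (metric compatibility) and $\Gamma_{IJK}-\Gamma_{JIK}=\g_{IJK}$ (torsion freeness). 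Both identities are easily checked from the formula and serve as sanity checks.

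The second step is to compute
\[
  \roRic_h(e_J,e_L)=\tsum_I h\big(\nabla_{e_I}\nabla_{e_J}e_L-\nabla_{e_J}\nabla_{e_I}e_L-\nabla_{[e_I,e_J]}e_L,\,e_I\big).
\]
Expanding $\nabla_{e_I}(\Gamma_{JLM}e_M)=e_I(\Gamma_{JLM})e_M+\Gamma_{JLM}\Gamma_{IMN}e_N$ and using $[e_I,e_J]=\g_{IJM}e_M$, we obtain
\[
  \roRic_{JL}=e_I(\Gamma_{JLI})-e_J(\Gamma_{ILI})+\Gamma_{JLM}\Gamma_{IMI}-\Gamma_{ILM}\Gamma_{JMI}-\g_{IJM}\Gamma_{MLI}.
\]
We then substitute the expression for $\Gamma$ and simplify.

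For the derivative terms we proceed as follows. First, $\Gamma_{ILI}=\tfrac12(\g_{ILI}-\g_{LII}+\g_{IIL})=\g_{ILI}=-\g_{LII}$, so $-e_J(\Gamma_{ILI})=e_J(\g_{LII})$. Next, $e_I(\Gamma_{JLI})$ combines $e_I$ applied to $\g_{JLI}$, $\g_{LIJ}$ and $\g_{IJL}$. Because the Ricci tensor is symmetric, we may symmetrise in $(J,L)$. This kills the antisymmetric part $\g_{JLI}$ and leaves $e_I(\g_{I(JL)})$. Together with the symmetrised $e_{(J}(\g_{L)II})$, this yields the first line of \eqref{eq: SpatialRicciCurvature}.

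The quadratic terms require the most care, and this is the main obstacle of the proof: it is pure index bookkeeping. After inserting the formula for $\Gamma$, the three quadratic terms produce products of the types $\g\g$ with contractions $(KLL,K(IJ))$, $(I(KL),J(KL))$ and $(KLI,KLJ)$. We check that these come with coefficients $-1$, $-1$ and $+\tfrac14$ after symmetrisation, using the antisymmetry $\g_{IJK}=-\g_{JIK}$ repeatedly. To reduce the risk of errors, it is efficient to first decompose $\Gamma_{IJK}$ as $\g_{K(IJ)}$-type symmetric pieces plus $\tfrac12\g_{IJK}$.

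Finally, the scalar curvature follows by tracing \eqref{eq: SpatialRicciCurvature} with $I=J$. The derivative terms give $e_K(\g_{KII})+e_I(\g_{IKK})=2e_I(\g_{IJJ})$. For the quadratic terms, $-\g_{I(KL)}\g_{I(KL)}+\tfrac14\g_{KLI}\g_{KLI}-\g_{KLL}\g_{KII}$ expands, via $\g_{I(KL)}\g_{I(KL)}=\tfrac12(\g_{IKL}\g_{IKL}+\g_{IKL}\g_{ILK})$ and relabelling, to $-\tfrac14\g_{IJK}(\g_{IJK}+2\g_{IKJ})-\g_{IJJ}\g_{IKK}$. This gives \eqref{eq:SpatialScalarCurvature}.
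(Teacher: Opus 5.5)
Your proposal is correct. The paper gives no proof of this lemma; it quotes it from \cite{OPR}. Your route is the standard one: the Koszul formula $\Gamma_{IJK}=\tfrac12(\g_{IJK}-\g_{JKI}+\g_{KIJ})$ in the orthonormal frame, then the frame expression for $\roRic_h$, then substitution. This is exactly how the paper obtains $\Gamma_{IJK}$ in the proof of Lemma~\ref{estimate for B}, and how it computes the Ricci components in the expansion normalised eigenframe in Lemma~\ref{spatial ricci}. The following all check out:
\begin{itemize}
\item your expression for $\roRic_h(e_J,e_L)$ in terms of $\Gamma$;
\item the identity $\Gamma_{ILI}=-\g_{LII}$;
\item the symmetrisation argument for the derivative terms;
\item the trace computation giving the scalar curvature.
\end{itemize}

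The part you left as bookkeeping is really the whole content of the lemma, and one claim there is inaccurate: the quadratic terms do not produce only the three contraction types you list. Carrying out the substitution, with your free indices $J,L$, gives exactly
\[
-\Gamma_{ILM}\Gamma_{JMI}=\tfrac14\g_{KMJ}\g_{KML}+\tfrac12\g_{KJM}\g_{KML},\qquad
-\g_{IJM}\Gamma_{MLI}=-\g_{J(KM)}\g_{L(KM)}-\tfrac12\g_{KJM}\g_{KML}.
\]
Meanwhile $\Gamma_{JLM}\Gamma_{IMI}=-\Gamma_{JLM}\g_{MII}$, whose symmetric part is $-\g_{M(JL)}\g_{MII}$. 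So a mixed contraction $\g_{KJM}\g_{KML}$ appears, with the free index in an antisymmetric slot of one factor and in the last slot of the other. It cancels between the second and third terms. Only after that cancellation do the coefficients $-1$, $+\tfrac14$, $-1$ emerge, each coming from a single one of the three terms. You should write this computation out rather than assert its outcome.
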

When deriving energy estimates, it is, in the end, useful to consider an energy associated with
\begin{equation}\label{eq:aIdef}
  a_I:=\gamma_{IJJ}.
\end{equation}
Interestingly, this quantity satisfies an evolution equation which is in some respects better than that of $\g$. In fact, due to
(\ref{eq:concoef}) and (\ref{eq:MC}),
\begin{equation*}
  \begin{split}
    e_0a_I = & - 2 N^{-1} e_{[I}(Nk_{J]J})- k_{IL} \g_{LJJ} - k_{JL} \g_{ILJ} + k_{JL} \g_{IJL}\\
    = & -N^{-1}t^{-1}e_IN+N^{-1}e_J(N)k_{IJ}+e_J(k_{JI})- k_{IL} \g_{LJJ} \\
    = & -N^{-1}t^{-1}e_IN+N^{-1}e_J(N)k_{IJ}- k_{IL} a_L +a_Lk_{LI} + \g_{JIL}k_{JL} + e_{0}(\varphi) e_{I}(\varphi).
  \end{split}
\end{equation*}
In other words,
\begin{equation}\label{eq:dt aI}
  \d_ta_I=-t^{-1}e_I(N)+e_J(N)k_{IJ}+ N\g_{JIL}k_{JL} + Ne_{0}(\varphi) e_{I}(\varphi).
\end{equation}

\subsection{Higher order derivative estimates for solutions to the FRS equations}

Even though our main application in the present paper is to derive complete asymptotics for the solutions constructed in \cite{OPR},
it is of interest to derive energy estimates using minimal assumptions. In this way, the result might be useful in other
contexts. The main assumptions are the following.

\begin{definition}\label{def:bas sup ass}
  Consider a solution as described in Subsection~\ref{ssection:eqs}. In particular, the mean curvature of $\S_t:=\S\times\{t\}$ is $\theta=t^{-1}$.
  Assume that the existence interval includes $(0,t_0]$, where $t_0\in (0,1]$. Next, assume that there is an $\ve>0$ and a constant $C$ such that 
  \begin{subequations}\label{eq:estdynvarmfltot gen}
    \begin{align}
      t^{1-3\ve}\|e\|_{C^{1}(\S_t)}+ t^{1-3\ve}\|\omega\|_{C^{1}(\S_t)} +t^{1-2\ve}\|\g\|_{C^{1}(\S_t)}
      +t^{1-2\ve}\|\ear\varphi\|_{C^1(\S_t)}  &\leq  C,\label{eq:e om g phi be gen}\\      
      t \|k\|_{C^{1}(\S_t)}+t \|e_0 \varphi\|_{C^{1}(\S_t)} &\leq  C,\label{eq:k phi be gen}\\
      \|N-1\|_{C^2(\S_t)} &\leq  Ct^{\ve}\label{eq:N be gen}
    \end{align}
  \end{subequations}
  for $t\leq t_0$. In addition to this, assume, for $t\leq t_0$,
  \begin{equation}\label{eq:almost asymptotic Hcon gen}
    t^2 \|t^{-2} - k_{IJ} k_{IJ} - \dtp \dtp \|_{C^{1}(\S_t)} \leq Ct^{2\ve}.
  \end{equation}
  Then the solution is said to satisfy \textit{the basic supremum assumptions}. 
\end{definition}
\begin{remark}
  In the definition we use the notation introduced in Subsection~\ref{ssection:norms and basic est}.
\end{remark}
Given these assumptions, the following conclusions hold.
\begin{thm}\label{thm:hod}
  Consider a solution to (\ref{eq: transport frame})--(\ref{eq:LapseEquation}) as described in Subsection~\ref{ssection:eqs}.
  Assume, in addition,  that $V$ is a $\s_V$-admissible potential for some $\s_V\in (0,1)$ and that the basic supremum assumptions, see
  Definition~\ref{def:bas sup ass}, are satisfied with $3\ve\leq \s_V$. Then there is, for every $\ell\in\nn{}_0$, a constant $C_\ell$ such that 
  \begin{subequations}\label{seq:hod}
    \begin{align}
      t^{1-2\ve}\|e\|_{C^{\ell}(\S_t)}+ t^{1-2\ve}\|\omega\|_{C^{\ell}(\S_t)} +t^{1-\ve}\|\g\|_{C^{\ell}(\S_t)}+t^{1-\ve}\|\ear\varphi\|_{C^\ell(\S_t)}
      &\leq C_\ell,\label{eq:eogearphi Cl est}\\
      t\|k\|_{C^{\ell}(\S_t)}+t\|e_0 \varphi\|_{C^{\ell}(\S_t)} &\leq C_\ell,\label{eq:k ezphi opt Cl bd}\\
      t^{1-3\ve}\|\ear N\|_{C^\ell(\S_t)}+t^{-\ve/2}\|N-1\|_{C^\ell(\S_t)} &\leq C_\ell\label{eq:lapse est interm}
    \end{align}
  \end{subequations}
  hold for all $\ell\in\nn{}_0$ and all $t\leq t_0$. Moreover, there are $\mrPhi,\mrPsi\in C^{\infty}(\S)$ and, for every $I,J\in\{1,\dots,n\}$,
  $\mrmfK_{IJ}\in C^{\infty}(\S)$ such that
  \begin{subequations}\label{seq:k phi prel est}
    \begin{align}
      \|t k_{IJ}(\cdot,t)-\mrmfK_{IJ}\|_{C^\ell(\S)}&\leq C_\ell t^{\ve/4},\label{eq:tkIJ conv}\\
      \|t (e_0\varphi)(\cdot,t)-\mrPsi\|_{C^\ell(\S)}+\|\varphi(\cdot,t)-\mrPsi\ln t-\mrPhi\|_{C^\ell(\S)} &\leq C_\ell t^{\ve/4}\label{eq:ezPhi and Phi conv final}
    \end{align}
  \end{subequations}
  hold for all $\ell\in\nn{}_0$, all $I,J\in\{1,\dots,n\}$ and all $t\leq t_0$.
\end{thm}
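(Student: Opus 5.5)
We argue by induction on the number of derivatives $\ell$, using energy estimates for the FRS system in which the loss of $t^{\ve}$ powers per derivative is controlled. The basic supremum assumptions of Definition~\ref{def:bas sup ass} supply the case $\ell\leq 1$ of (\ref{seq:hod}), even with better weights: for $e,\omega$ we have $t^{1-3\ve}$ versus $t^{1-2\ve}$, and for $\g,\ear\varphi$ we have $t^{1-2\ve}$ versus $t^{1-\ve}$. This slack of one $t^{\ve}$ is what we spend to run a high order argument.

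The first step is the estimate for the lapse. We apply $E_{i_1}\cdots E_{i_m}$ to the elliptic equation (\ref{eq: alternative lapse}). Its zeroth order coefficient is $t^{-2}+O(t^{-2+2\ve})$ by (\ref{eq:almost asymptotic Hcon gen}), and the source involves $t^{-2}\cdot t^{2\ve}$ together with $V\circ\varphi$. Here $|V\circ\varphi|\lesssim e^{2(1-\s_V)|\varphi|}\lesssim t^{-2+2\s_V}$, because $|\varphi|\leq |\mrPsi||\ln t|+C$ with $|\mrPsi|\leq 1$ (from the Hamiltonian constraint, roughly $t|e_0\varphi|\leq 1+Ct^{\ve}$). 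The maximum principle, or alternatively an $L^2$ argument on $\S$, gives $\|N-1\|_{H^m}\lesssim t^{\ve}(1+\text{lower order})$. The weight $t^{-2}(N-1)$ dominates the Laplacian term $e_Ie_I$, whose coefficients are $t^{-2+6\ve}$-sized in $C^0$. This yields $\|N-1\|_{C^\ell}$ and $\|\ear N\|_{C^\ell}$ in terms of $k$, $e_0\varphi$, $\g$ and $e$ at the same order.

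Next come energy estimates. We commute $E$-derivatives (via $e=e^i_IE_i$) through (\ref{eq: transport frame}), (\ref{eq: transport co-frame}), (\ref{eq:concoef}), (\ref{eq:sff}), (\ref{eq:spatial scalar field derivative}) and (\ref{eq:scalarfield}), and form the FRS-type energy. Its quadratic terms are $t^2|\d^m k|^2$, $t^2|\d^m e_0\varphi|^2$, $t^{2-2\ve}|\d^m\g|^2$ (with $a_I$ from (\ref{eq:aIdef}) and (\ref{eq:dt aI}) used to cancel the bad principal term), $t^{2-2\ve}|\d^m\ear\varphi|^2$, and $t^{2-4\ve}|\d^m e|^2$. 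The principal terms cancel by the symmetric hyperbolic structure exactly as in \cite{FRS,OPR}. The lower order terms are bounded using the induction hypothesis and Moser/Gagliardo--Nirenberg interpolation, which produce factors $t^{-1+\ve}$ integrable in time. A Gronwall argument in $t$ then gives $t^{-\eta}$ losses that are absorbed.

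The main obstacle is that each spatial derivative of $e$ costs $t^{-1+2\ve}$, so the naive top order estimate loses powers. We handle this by interpolating between the $C^1$ bounds with slack $t^{\ve}$ and a crude high Sobolev bound (growth $t^{-A_m}$), which recovers the sharp weights in (\ref{eq:eogearphi Cl est}) by giving up the extra $\ve$. Sobolev embedding then converts the energy bounds to the $C^\ell$ bounds.

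Finally, for (\ref{seq:k phi prel est}) we compute $\d_t(tk_{IJ})$ from (\ref{eq:sff}). The $-t^{-1}k_{IJ}$ term cancels, leaving $t\,N\cdot[\text{second order terms}]$, which by (\ref{seq:hod}) is $O(t^{-1+\ve/2})$ in $C^\ell$, together with the $V$ term, which is $O(t^{-1+2\s_V})$. This is integrable, so $tk_{IJ}$ converges with rate $t^{\ve/4}$ after interpolation. The same argument applied to (\ref{eq:scalarfield}) gives convergence of $te_0\varphi$. Integrating $\d_t\varphi=N e_0\varphi = t^{-1}\mrPsi+O(t^{-1+\ve/4})$ then yields $\mrPhi$.
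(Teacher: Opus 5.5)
Your overall architecture matches the paper's: crude high-order energy bounds, then interpolation against the $C^1$ bounds that carry a spare $t^{\ve}$, then integrating $\d_t(tk_{IJ})$ and $\d_t(te_0\varphi)$. However, the step that carries the theorem is missing, and your energy paragraph misdescribes it.

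\textbf{The non-integrable top-order terms.} In the order-$m$ energy identity, some terms are linear in top-order quantities with coefficients $t^{-1}$, $k_{IJ}$ or $e_0\varphi$. Examples are the terms coming from $-k_{IL}\g_{LJK}$ in (\ref{eq:concoef}) and from $-k_{IJ}e_J^i$ in (\ref{eq: transport frame}), and the lapse couplings $t^{-1}k_{IJ}E_{\bfI}(N)E_{\bfI}(k_{IJ})$ and $e_0(\varphi)E_{\bfI}(e_J\varphi)E_{\bfI}(e_JN)$. These are not lower order. They are of size $t^{-1}$ times the energy, not $t^{-1+\ve}$. An energy method can only bound them via $|k|\le t^{-1}(1+Ct^{\ve})$, not via the eigenvalue structure. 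Time weights therefore only trade them for power growth: Gronwall yields $t^{-A}$, not $t^{-\eta}$. Indeed, if you really had only $t^{-\eta}$ losses, your interpolation step would be superfluous.

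\textbf{Why the growth exponent must be order-independent.} Interpolation recovers the sharp weights only if the growth exponent is essentially independent of the order. With $\|e\|_{H^k}\lesssim t^{-A_k}$, (\ref{eq:psiinterpol}) gives $\|e\|_{H^\ell}\lesssim t^{-1+3\ve-\ell(A_k-1+3\ve)/k}$. This is $\lesssim t^{-1+2\ve}$ only if $\ell(A_k-1+3\ve)\le \ve k$, which is hopeless for large $\ell$ if $A_k$ grows linearly in $k$. Your ``$t^{-A_m}$'' leaves exactly this open. Order-independence is not automatic, since $\ell$-dependent Gagliardo--Nirenberg and commutator constants can land in front of the top-order energy.

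Establishing this uniformity is the core of the paper. Lemma~\ref{lemma:metotl} gives $\d_t\metot{\ell}\ge -8t^{-1}\metot{\ell}-C_\ell t^{-1+\ve}\metot{\ell}-C_\ell t^{-1}\metot{\ell-1}^{1/2}\metot{\ell}^{1/2}$ with a numerical $8$. Inductively the last term then costs only powers of $\ldr{\ln t}$, and $\etot{\ell}\le C_\ell t^{-9}$ for every $\ell$ (Corollary~\ref{cor:energy estimates}). Getting the numerical constant requires:
\begin{enumerate}
\item the extra weight $t^{-2\ve}$ in the frame energy, which makes the commutator contributions in (\ref{eq:EbfI eI comm est}) integrable;
\item the division into good, bad and leading-order terms, showing that the only non-integrable top-order contributions have the form $aE_{\bfI}(f)E_{\bfI}(g)$, with $a$ a numerical multiple of $t^{-1}$, $k_{IJ}$ or $e_0\varphi$;
\item the lapse identity (\ref{eq:lapse ito kEk etc}) and the bound (\ref{eq:ear N Hl est}), both with numerical coefficients;
\item a separate energy for $a_I$;
\item the bound $t^2k_{IJ}k_{IJ}+t^2(e_0\varphi)^2\le 1+Ct^{2\ve}$ from (\ref{eq:almost asymptotic Hcon gen}).
\end{enumerate}

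\textbf{Smaller points.}
\begin{enumerate}
\item Your weights ($t^2$ on $k$ and $e_0\varphi$, but $t^{2-2\ve}$ on $\g$ and $\ear\varphi$) destroy the integration-by-parts cancellation of principal terms that you invoke. That cancellation needs equal weights within each coupled pair; otherwise a term with $m+1$ derivatives survives.
\item The $C^1$ bounds on $k$ and $e_0\varphi$ have no slack, so interpolation only gives $t^{1+\ve/4}\|k\|_{C^\ell}\le C$. Thus (\ref{eq:k ezphi opt Cl bd}) is an output of integrating $\d_t(tk_{IJ})$, not an input as in your last paragraph. With this weaker bound, the term $-(N-1)k_{IJ}$ in $\d_t(tk_{IJ})$ is $O(t^{-1+\ve/4})$.
\item The lapse equation does not give $\|N-1\|_{H^m}\lesssim t^{\ve}$ at high order, because $E_{\bfI}(t^{-2}-k_{IJ}k_{IJ}-(e_0\varphi)^2)$ is not a priori small for $|\bfI|\ge 2$. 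The paper only obtains $t^{\ve/2}$ in (\ref{eq:lapse est interm}), again by interpolation.
\end{enumerate}
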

\begin{remark}
  The eigenvalues of the expansion normalised Weingarten map are not assumed to be distinct in this theorem. 
\end{remark}
\begin{proof}
  The proof is to be found at the end of Subsection~\ref{ssection:summing up}.
\end{proof}

\subsection{Deriving asymptotics, given higher order derivative bounds}

Our next goal is to derive asymptotics, given higher order derivative bounds. We do so for crushing foliations that are not necessarily CMC,
since the results are applicable in a more general setting we now describe. To begin with, we consider a spacetime $(M,g)$ with a smooth
foliation by spacelike hypersurfaces, lapse $N>0$ and vanishing shift vector field, so that $M=\S\times\mI$ and
\begin{equation}\label{eq:standard form g ito h}
  g = -N^2\md t \otimes \md t + h,
\end{equation}
where $h$ denotes the family of induced (Riemannian) metrics on the $\Sigma_t:=\S\times\{t\}$; see Subsection~\ref{ssection:conv geo} for our
conventions concerning the geometry. In this setting, and with the conventions introduced in Subsection~\ref{ssection:conv geo}, the Einstein
equations read as follows. 

\begin{prop} \label{einsteins equations in a foliation}
  In the setting described above and with the conventions introduced in Subsection~\ref{ssection:conv geo}, let $\varphi \in C^\infty(M)$.
  Then $(M,g,\varphi)$ solves the Einstein--nonlinear scalar field equations with potential $V\in C^\infty(\R)$ if and only if the following
  system of equations holds: the evolution equations for $K$ and $\varphi$,
  \begin{subequations}
    \begin{align}
      \lie_{e_0} K + \sric^{\sharp} + \theta K - N^{-1}(\ovn^2N)^\sharp
      = & \md\varphi \otimes  \rograd_h \varphi + \tfrac{2}{n-1}(V \circ \varphi)\Id,\label{evolution equation for the second fundamental form}\\
      -e_0e_0 \varphi + \Delta_h \varphi - \theta e_0 \varphi + \ldr{\md\varphi,\md(\ln N)}_h = & V' \circ \varphi,\label{eq:we varphi split}
    \end{align}
  \end{subequations}
  where $\Id$ denotes the family of identity $(1,1)$-tensors on the $\Sigma_t$; and the constraint equations:
  \begin{subequations}\label{seq:constraints split}
    \begin{align}
      \roScal_h + \theta^2 - \tr K^2 &= e_0(\varphi)^2 + |\md\varphi|_h^2 + 2V \circ \varphi,\label{hamiltonian constraint}\\
      \diver_h K - \md\theta &= e_0(\varphi)\md\varphi.\label{momentum constraint}
    \end{align}
  \end{subequations}  
\end{prop}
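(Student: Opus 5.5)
The plan is to derive the split equations from the standard $n+1$ decomposition of the Einstein tensor and of the wave operator with respect to the foliation. Conventions: $e_0=N^{-1}\d_t$ is the future directed unit normal, $K$ is the Weingarten map of the leaves with $k(X,Y)=g(\n_X e_0,Y)$, and $\theta=\tr K$. Since the shift vanishes, $\n_{e_0}e_0=N^{-1}\rograd_h N$; this is the source of the lapse terms.

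\textbf{The wave equation.} First I compute $\Box_g\varphi$ in the frame adapted to the foliation:
\[
\Box_g\varphi=-e_0e_0\varphi-\theta\, e_0\varphi+\Delta_h\varphi+\ldr{\md\varphi,\md\ln N}_h .
\]
The term $-\theta e_0\varphi$ comes from the trace over spatial directions of $-g(\n_{e_I}e_I,e_0)$. The lapse term comes from the acceleration $\n_{e_0}e_0$. Comparing with (\ref{eq:sf we}) gives (\ref{eq:we varphi split}).

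\textbf{The constraints.} The Gauss equation gives
\[
2G(e_0,e_0)=\roScal_h+\theta^2-\tr K^2 .
\]
The Codazzi equation gives $G(e_0,X)$ in terms of $\diver_h K-\md\theta$, up to a sign fixed by the convention for $k$. From (\ref{eq:Tdef}),
\[
T(e_0,e_0)=\tfrac12(e_0\varphi)^2+\tfrac12|\md\varphi|_h^2+V\circ\varphi,
\qquad
T(e_0,X)=e_0(\varphi)\,X\varphi ,
\]
where I use $|\md\varphi|_g^2=-(e_0\varphi)^2+|\md\varphi|_h^2$. Equating the components of (\ref{eq:EE nlsf}) yields (\ref{hamiltonian constraint}) and (\ref{momentum constraint}).

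\textbf{The evolution equation.} I use the standard identity for the spatial projection of the spacetime Ricci tensor:
\[
\ric_g(X,Y)=\ric_h(X,Y)+N^{-1}(\lie_{\d_t}k)(X,Y)+\theta k(X,Y)-2(k\cdot k)(X,Y)-N^{-1}\ovn^2N(X,Y).
\]
Raising an index with $h$ and using $\lie_{\d_t}h=2Nk$ converts this into
\[
\lie_{e_0}K+\sric^\sharp+\theta K-N^{-1}(\ovn^2N)^\sharp=\ric_g^\sharp|_{\text{spatial}} .
\]
The point of the conversion is that the $-2k\cdot k$ term cancels against the term produced by differentiating $h^{-1}$.

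Next I trace-reverse (\ref{eq:EE nlsf}). Taking the trace gives $\roScal_g(1-\tfrac{n+1}{2})=\tr_gT$, so
\[
\ric_g=T-\tfrac{1}{n-1}(\tr_gT)\,g .
\]
From (\ref{eq:Tdef}),
\[
\tr_gT=|\md\varphi|_g^2-\tfrac{n+1}{2}|\md\varphi|_g^2-(n+1)V\circ\varphi .
\]
Substituting, $\ric_g=\md\varphi\otimes\md\varphi+\tfrac{2}{n-1}(V\circ\varphi)\,g$. Its spatial restriction raised with $h$ is exactly the right hand side of (\ref{evolution equation for the second fundamental form}).

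\textbf{The converse.} Assume the split system holds. The spatial-spatial components of the trace-reversed equations hold by the argument above. Combined with the two constraints, which are the $G(e_0,e_0)$ and $G(e_0,\cdot)$ components, this gives all components of (\ref{eq:EE nlsf}), since the $(0,0)$ component of $\ric_g$ is determined by the trace together with $G(e_0,e_0)$. Together with (\ref{eq:we varphi split}) the full system follows.

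\textbf{Main obstacle.} The only real work is bookkeeping of signs and conventions: the sign convention for $k$, and the Lie derivative of the $(1,1)$-tensor $K$ versus that of $k$. These must be made consistent with Subsection~\ref{ssection:conv geo}. I would check the cancellation of the $K^2$ terms carefully.
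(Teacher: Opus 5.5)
Your proposal is correct and follows essentially the same route as the paper. In the forward direction you split the wave operator, use Gauss--Codazzi for the constraints, and combine the spatial--spatial decomposition of $\ric_g$ with the trace-reversed Einstein equations. In the converse you note that the spatial components of $\ric_g$, together with the normal--normal and normal--tangential constraint components, determine all of (\ref{eq:EE nlsf}). The only difference is that the paper cites these decomposition identities from \cite{RinGeo} and \cite{RinCauchy} instead of deriving them. Your explicit checks are all correct: the cancellation of the $K^2$ terms when raising an index, the trace reversal yielding $\md\varphi\otimes\md\varphi+\tfrac{2}{n-1}(V\circ\varphi)g$, and the recovery of the $(0,0)$ component from $G(e_0,e_0)$.
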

\begin{remark}
  Equations~\eqref{hamiltonian constraint} and \eqref{momentum constraint} are of course the Hamiltonian and momentum constraint equations on
  the leaves of the foliation; see \eqref{seq:constraints}.
\end{remark}
\begin{proof}
  To begin with, (\ref{evolution equation for the second fundamental form}) is, e.g., an immediate consequence of \cite[(269), p.~65]{RinGeo},
  keeping in mind that
  \[
    \rho=T(e_0,e_0),\ \ \
    \bar{p}=n^{-1}h^{ij}T_{ij},
  \]
  $\mathcal{P}$ is defined by \cite[(17), p.~8]{RinGeo} and the fact that $T$ is given by (\ref{eq:Tdef}). Next, (\ref{eq:we varphi split})
  follows from a straightforward computation and (\ref{seq:constraints split}) follow from \cite[Chapter~13]{RinCauchy}.

  Turning to the other direction, (\ref{eq:sf we}) is an immediate consequence of (\ref{eq:we varphi split}). Due to
  (\ref{seq:constraints split}) the normal--normal and the normal--tangential components of Einstein's equations hold. Next, combining
  \cite[(257), p.~63]{RinGeo} and \cite[(269), p.~65]{RinGeo} with (\ref{evolution equation for the second fundamental form}), it is clear
  that the spatial--spatial components of the spacetime Ricci tensor equal what they would if Einstein's equations would hold. Combining this
  information with the fact that the normal--normal and the normal--tangential components of Einstein's equations hold, it can be deduced that
  (\ref{eq:EE nlsf}) holds. 
\end{proof}

\begin{definition}\label{def:asympt assump}
  Let $3\leq n\in\nn{}$, $k_0\in\nn{}$, $r>0$, $\delta\in (0,1)$ and fix a $\delta$-admissible potential $V\in C^\infty(\R)$. Let $(M,g,\varphi)$ be a
  solution to the
  Einstein--nonlinear scalar field equations with potential $V$ such that $M = \S\times (0,T)$, where $0<T\in\R$; $\Sigma$ is a closed $n$-dimensional
  manifold; $\Sigma$ has a global frame $\{E_i\}_{i=1}^n$ with dual frame $\{\eta^i\}_{i=1}^n$; and the metric $g$ takes the form
  (\ref{eq:standard form g ito h}), where $N>0$ and $h$ denotes a family of Riemannian metrics on $\Sigma_t :=\S\times \{t\}$. Let $\theta$ denote
  the mean curvature of $\S_t$. Let $\{e_I\}_{I=1}^n$ be
  a family of Fermi-Walker transported orthonormal frames on the $\S_t$ with respect to $h$, so that
  \begin{equation} \label{fermi walker equation}
    \n_{e_0} e_I = e_I(\ln N)e_0,
  \end{equation}
  where $e_0$ is the future directed unit normal to the $\S_t$ and $\nabla$ denotes the Levi-Civita connection of $(M,g)$. Assume that $\theta>0$ and
  let $\ch$, $\K$, $\Phi$ and $\Psi$ be the expansion normalised initial data induced on $\Sigma_t$. Let $p_I$ be the eigenvalues of $\K$ and assume
  them to be distinct everywhere. Assume that there are everywhere distinct functions $\rp_1, \ldots, \rp_n \in C^{k_0}(\Sigma)$, and functions
  $\rphi, \rpsi \in C^{k_0}(\Sigma)$ satisfying
  \begin{equation}\label{eq:Kasner relations}
    \textstyle{\sum}_I \rp_I = \textstyle{\sum}_I \rp_I^2 + \rpsi^2 = 1,
  \end{equation}
  and
  \begin{equation} \label{subcritical condition}
    \rp_I + \rp_J - \rp_K < 1 - 2\delta, \qquad I \neq J.
  \end{equation}
  Assume that there is a constant $\bfc_0$ such that
  \begin{equation}\label{eq:pI lim Phi lim Psi lim}
    \|p_I(t) - \rp_I\|_{C^{k_0}(\Sigma)} + \|\Phi(t) - \rphi\|_{C^{k_0}(\Sigma)} + \|\Psi(t) - \rpsi\|_{C^{k_0}(\Sigma)} \leq \bfc_0 t^\delta
  \end{equation}
  for all $t<T$. Let $k_{IJ}:=k(e_I,e_J)$; $\{\omega^I\}_{I=1}^n$ denote the frame dual to $\{e_I\}_{I=1}^n$; and $\gamma_{IJK}:=\omega^K([e_I,e_J])$ denote the
  structure coefficients of the frame $\{e_I\}_{I=1}^n$. Let $e_I = e_I^iE_i$ and $\omega^I = \omega_i^I\eta^i$. Define $\Theta := N\theta$. Assume
  that the following estimates hold:
  \begin{subequations} \label{bounds on the solution}
    \begin{align}
      t\|k\|_{C^{k_0}(\Sigma)} + t\|e_0\varphi\|_{C^{k_0}(\Sigma)} &\leq \bfc_0,\\
      t^{1-3\delta}\|e\|_{C^{k_0}(\Sigma)} + t^{1-3\delta}\|\omega\|_{C^{k_0}(\Sigma)} + t^{-\delta}\|t\Theta - 1\|_{C^{k_0}(\Sigma)}
      + t^{1-3\delta}\|\gamma\|_{C^{k_0}(\Sigma)} &\leq \bfc_0,\\
      \|N\|_{C^{k_0}(\Sigma)} + \|N^{-1}\|_{C^{0}(\Sigma)} + t^{1-\delta}\|\ear(\ln N)\|_{C^{k_0}(\Sigma)} + t^{1-\delta}\|\ear\varphi\|_{C^{k_0}(\Sigma)} &\leq \bfc_0,\\
      \|t\Theta - 1\|_{C^{0}(\Sigma)} &\leq \tfrac{1}{2}\label{eq:tTheta and N invertible}
    \end{align}
  \end{subequations}
  for $t<T$, where we use the notation introduced in Subsection~\ref{ssection:norms and basic est}. Finally, assume that $|p_I-p_J| \geq r$ for $I \neq J$. Then $(M,g,\varphi)$ is said to be a
  \textit{solution to the Einstein--nonlinear scalar field equations satisfying the} $(\delta,n,k_0,r)$-\textit{assumptions}.
\end{definition}

\begin{thm}\label{thm:asymptotics}
  Let $3\leq n\in\nn{}$, $\ell\in\nn{}$, $r>0$, $\delta\in (0,1)$ and fix a $\delta$-admissible potential $V\in C^\infty(\R)$. Then there is a $\kappa_0$,
  depending only on $\delta$ and $\ell$, such that if $k_0\geq \kappa_0$ and $(M,g,\varphi)$ is a solution to the Einstein--nonlinear scalar field
  equations satisfying the $(\delta,n,k_0,r)$-assumptions, then there is a $t_0>0$ and a $C^\ell$ Riemannian metric $\rch$ and $(1,1)$-tensor field
  $\rK$ on $\Sigma$, such that
  \[
    \|\ch(t) - \rch\|_{C^\ell(\Sigma)} + \|\K(t) - \rK\|_{C^\ell(\Sigma)} \leq Ct^\delta
  \]
  for some constant $C$ and all $t\leq t_0$. Moreover, $\rp_I$ are the eigenvalues of $\rK$, and $(\Sigma,\rch,\rK,\rphi,\rpsi)$ are $C^\ell$ robust
  nondegenerate quiescent initial data on the singularity for the Einstein--nonlinear scalar field equations.
\end{thm}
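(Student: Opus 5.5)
Given the $(\delta,n,k_0,r)$-assumptions, I would prove Theorem~\ref{thm:asymptotics} by diagonalising $\K$ in the Fermi--Walker frame and integrating the evolution of $\ch$ along the eigenframe. The basic point is that $\ch$ is built from $h$ and $\theta^{\K}$. So if $X_I$ denote the eigenvector fields of $\K$ (normalised suitably) and $Y^I$ the dual one-forms, then
\[
  \ch=\textstyle{\sum}_I \theta^{2p_I}h(X_I,X_I)\,Y^I\otimes Y^I.
\]
Convergence of $\ch$ therefore amounts to convergence of the eigenframe, together with convergence of the rescaled quantities $\theta^{2p_I}h(X_I,X_I)$.

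\textbf{Step 1: the eigenframe.} Since $|p_I-p_J|\geq r$, the eigenvalues $p_I$ are smooth functions of $\K$, and the spectral projections $\Pi_I$ are $C^{k_0}$ in $\K$ with bounds depending on $r$. I would first show that $\K(t)$ converges. Using (\ref{evolution equation for the second fundamental form}) with $e_0=N^{-1}\partial_t$, compute $\partial_t\K$. The term $-N\theta K$ combines with $\partial_t\theta$ (obtained by tracing) so that the leading $t^{-1}$ parts cancel. Using \eqref{bounds on the solution}, the remainder is bounded in $C^{k_0-2}$ by $t^{-1+2\delta}$, coming from $\ric_h\sim\gamma^2+e(\gamma)\sim t^{-2+6\delta}$, $\ovn^2N$ and $|\md\varphi|^2$, multiplied by $N\theta^{-1}\sim t$.

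To handle $\ric_h$ and $\ovn^2 N$ I would express them in the frame via Lemma~\ref{lemma: spatial curvatures}, losing derivatives through $e$. This makes $\partial_t\K$ integrable, so $\K(t)\to\rK$ in $C^{k_0-2}$ at rate $t^{2\delta}$. By (\ref{eq:pI lim Phi lim Psi lim}) the eigenvalues of $\rK$ are then $\rp_I$, and the $\Pi_I(t)$ converge to projections $\mathring\Pi_I$.

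\textbf{Step 2: convergence of $\ch$.} I would fix a reference eigenframe $X_I(t):=\Pi_I(t)\mathring X_I$, where $\mathring X_I$ is a fixed frame adapted to $\mathring\Pi_I$, and compute
\[
  \partial_t\big[\theta^{2p_I}h(X_I,X_J)\big].
\]
Using $\partial_t h=2Nk$, the term $2N\theta p_I h(X_I,X_I)$ cancels against $2\ln\theta\,\partial_t p_I+2p_I\partial_t\ln\theta$ up to errors of the type $t^{-1}(t\Theta-1)$ and $\ln t\cdot\partial_tp_I$. Here one needs that $\partial_t p_I$ is $O(t^{-1+2\delta})$ in a pointwise sense and not merely integrable; Step~1 gives this.

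The diagonal terms are therefore integrable. For the off-diagonal terms ($I\neq J$) the factor is $\theta^{p_I+p_J}h(X_I,X_J)$, and $h(X_I,X_J)$ decays because $\K$ is $h$-self-adjoint, so that eigenvectors belonging to distinct eigenvalues are $h$-orthogonal. Thus $\ch(X_I,X_J)=0$ for $I\neq J$ exactly, and only the diagonal terms and the frame drift $\partial_tX_I=(\partial_t\Pi_I)\mathring X_I$ matter. The drift is $O(t^{-1+2\delta})$, but it multiplies $\theta^{2p_I}h\sim t^{-2p_I}|X|_h^2$.

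\textbf{Main obstacle.} The hardest step is controlling this drift term: $\ch(\partial_tX_I,X_I)$ contains contributions $\theta^{p_I+p_J}h(X_J,X_I)$-like cross terms weighted by $\theta^{p_I-p_J}$. These can grow like $t^{-(p_J-p_I)}$, since a small rotation of $X_I$ toward $X_J$ is amplified. My plan is to prove that the off-diagonal components $\K^I{}_J$ in a fixed frame satisfy improved decay $t^{2\delta+|p_I-p_J|}$, in the ordered (upper triangular) direction. This I would read off from the structure of the $e_I$-components of $\ch$, exploiting that $\ric_h$ components $\ric(X_I,X_J)$ carry weights $\theta^{-p_I-p_J}$ and behave like $\gamma$-combinations $\theta^{p_K-p_I-p_J}$. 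The subcritical condition (\ref{subcritical condition}), $\rp_I+\rp_J-\rp_K<1-2\delta$, is precisely what turns each such term into $t^{-1+2\delta}$. This weighted estimate on $\K$, obtained through a Grönwall argument in the eigenframe, is the key step, and it costs derivatives; this is where $\kappa_0$ comes from.

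\textbf{Step 3: the constraints in the limit.} Given convergence, I would pass to the limit in the constraints. Multiplying \eqref{hamiltonian constraint} by $\theta^{-2}$ gives $\tr\rK^2+\rpsi^2=1$, since $\theta^{-2}\roScal_h\to0$ by the bounds on $\gamma$ and $e$. Multiplying \eqref{momentum constraint} by $\theta^{-1}$ and rewriting it in terms of $\ch$ and $\K$ yields $\diver_{\rch}\rK=\rpsi\,\md\rphi$; here the $\ln\theta\,\md\Psi$ terms cancel against $\md\theta$ terms, as in \cite{RinQC}.

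Self-adjointness of $\rK$ with respect to $\rch$, $\tr\rK=1$, distinctness of the eigenvalues and (\ref{eq:cond on eigenvalues}) are inherited from the hypotheses. Choosing $k_0$ large relative to $\ell$ then gives $C^\ell$ data.
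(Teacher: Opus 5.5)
There is a genuine gap, and it is in Step~1, which in fact carries essentially the whole content of the theorem.

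\textbf{Step~1 does not give convergence of $\K$ as a tensor field.} The $O(t^{-1+2\delta})$ bound you derive controls only the time derivatives of the Fermi--Walker components $\theta^{-1}k_{IJ}$. These involve $\sric(e_I,e_J)$, $N^{-1}(\ovn^2N)(e_I,e_J)$ and $e_I(\varphi)e_J(\varphi)$, which are $O(t^{-2+2\delta})$ by Lemma~\ref{estimate for B}.

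The theorem, however, asserts convergence of $\K$ as a tensor field on $\Sigma$, i.e.\ of $\K^i{}_j=\sum_{I,J}e_I^i\,\theta^{-1}k_{IJ}\,\omega^J_j$. The Fermi--Walker frame itself does not converge. With only $\|e\|,\|\omega\|\lesssim t^{-1+3\delta}$ available, converting to $E_i$-components gives at best $|\partial_t\K^i{}_j|\lesssim t^{-3+8\delta}$, which is not integrable for small $\delta$. A priori not even boundedness of $\K^i{}_j$ is known: $\K^i{}_j=\sum_Ip_I\ce_I^i\co^I_j$, and initially one only has $|\ce_I|\lesssim t^{-1+\rp_I+2\delta}$ and $|\co^I|\lesssim t^{-1-\rp_I+2\delta}$.

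What the Fermi--Walker bounds do yield directly is convergence of the orthogonal matrix $A$ relating $\{e_I\}$ to an $h$-orthonormal eigenframe (Proposition~\ref{bounds for change of basis matrix}). That says nothing about convergence of tensor fields. So the limiting projections $\mathring\Pi_I$, on which your Step~2 is built, are not available.

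Conversely, if Step~1 did hold, the drift you call the main obstacle would be harmless. In that case $\partial_tX_I=\dot\Pi_I\mathring X_I$ is $h$-orthogonal to $X_I$ up to an $O(t^{-1+4\delta})$ multiple of $X_I$, the cross terms vanish identically, and $\theta^{2p_I}h(X_I,X_I)$ converges by direct integration. Your proposed ``weighted estimate on $\K^I{}_J$ in a fixed frame'' is also not well posed as stated, since the eigenframe components of $\K$ are diagonal.

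\textbf{The missing mechanism.} The idea you need is what produces the weighted control. The paper works with the expansion-normalised eigenframe $\ce_I=\theta^{-p_I}\eig_I$, its dual $\co^I$ and its structure coefficients $\Lambda_{IJ}^K$. These satisfy $\partial_t\ce_I^i=\cb_I^J\ce_J^i$, $\partial_t\co^I_i=-\cb_J^I\co^J_i$, and a similar equation for $\Lambda_{IJ}^K$.

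The off-diagonal $\cb_I^J$ contain $\sric^\sharp(\ce_I,\co^J)$, which is quadratic in $\ce$ and $\Lambda$ with $\theta$-weights. With the crude bounds they are only $O(t^{-1+\rp_I-\rp_J+2\delta})$, which is not integrable when $\rp_I<\rp_J$. Improving them requires improved bounds on $\ce$ and $\Lambda$ themselves, so no single Gr\"onwall argument closes.

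Instead the paper iterates:
\begin{enumerate}
\item Suppose $\ce$, $\co$, $\Lambda$ satisfy the crude bounds with the gain $2\delta$ replaced by $m\delta$ (or are bounded).
\item Using the crude bounds for the other factor and $|\rp_I|<1-2\delta$, one gets $|E_\I\cb_I^J|\lesssim t^{-1+\rp_I-\rp_J+(m+1)\delta}+t^{-1+2\delta}\min\{1,t^{2(\rp_I-\rp_J)}\}$. The subcritical condition (\ref{subcritical condition}) is what makes products of already bounded factors harmless.
\item Integrating the evolution equations then upgrades $m$ to $m+1$, at the cost of two derivatives. The finite cover and shifts $\alpha_j$ of Lemma~\ref{technical lemma} keep the exponents uniformly away from $0$.
\end{enumerate}
After $N(\delta)$ steps everything is bounded. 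One more integration gives the limits $\rce_I$ and $\ro^I$, and one sets $\rch=\sum_I\ro^I\otimes\ro^I$ and $\rK=\sum_I\rp_I\ro^I\otimes\rce_I$. This is where $\kappa_0=2N+\ell+2$ comes from.

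Your Step~3 is in line with the paper's treatment of the constraints. It presupposes, however, the convergence of the frame and structure coefficients, which the proposal does not establish.
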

\begin{remark}
  The same statement holds if we assume $\S$ to be a smooth compact manifold with boundary. In particular, in case $\S$ is non--compact, localising
  the assumptions to compact balls in $\S$, we obtain correspondingly localised conclusions. 
\end{remark}
\begin{remark}
  In the case of the FRS equations, combining Theorem~\ref{thm:hod} with Theorem~\ref{thm:asymptotics} yields stronger conclusions (smooth initial
  data on the singularity) under weaker assumptions. The reader is encouraged to formulate the corresponding result. 
\end{remark}
\begin{proof}
  The proof is to be found at the end of Section~\ref{section:asymptotics}.
\end{proof}

\subsection{Outline of the arguments}

The main result of this article is Theorem~\ref{thm: big bang formation}. However, Theorem~\ref{thm: big bang formation} is an immediate consequence
of Theorems~\ref{thm:hod} and \ref{thm:asymptotics}. Moreover, the latter theorems have an independent interest in their own right.
In particular, it is to be expected that Theorem~\ref{thm:asymptotics} can be used to obtain data on the singularity in other
settings. For these reasons, we here focus on a description of the arguments required to prove Theorems~\ref{thm:hod} and \ref{thm:asymptotics}.

\textit{Higher order derivative estimates.} In \cite{OPR}, the authors derive partial information concerning the asympotics of solutions in the $C^k$-norm
for fixed finite $k$. However, as $k$ tends to infinity, the number of derivatives you need to control initially tends to infinity. 
This is unfortunate. In the present article, we therefore begin by deriving partial information concerning the asymptotics for all derivatives, given
assumptions concerning the initial data in a fixed finite degree of regularity. This is the purpose of Theorem~\ref{thm:hod}. The basic idea is to prove energy
estimates inductively on the order $\ell$ of the energy $\metot{\ell}$ introduced in (\ref{eq:metotldef}). For the first order energy, we have estimates by
assumption. For a higher order, say $\ell$, the goal is to prove an estimate of the form (\ref{eq:dtmetotlest}). In this estimate, we effectively control the
last term on the right hand side due to the induction hypothesis and we control the second term on the right hand side because the factor in front of the
energy is integrable. For these reasons, we can effectively ignore the last two terms on the right hand side of (\ref{eq:dtmetotlest}), with a slight loss.
This means, inductively, that the energy only grows at the rate $t^{-9}$ in the direction of the singularity; see
Corollary~\ref{cor:energy estimates}. Moreover, the rate is independent of the number of derivatives and of the dimension. Combining this estimate
with interpolation yields good enough estimates for most of the quantities of interest. However, the estimates for $k_{IJ}$ and $e_0\varphi$ are not
immediately optimal. Nevertheless, combining our improved estimates with the equations, optimal estimates for these quantities can also be derived.
Moreover, we obtain the conclusions of Theorem~\ref{thm:hod}.

To conclude, the main point is to derive an estimate of the form (\ref{eq:dtmetotlest}). We begin by deriving some basic consequences of the
assumptions of Theorem~\ref{thm:hod} in Subsections~\ref{ssection:basic cons hod} and
\ref{ssection:est pot hod}. Moreover, in  Subsection~\ref{ssection:en hod}, we introduce notation for our basic energies.
It is of particular importance to note that we include additional time dependent weights (that tend to infinity in the direction of the singularity)
in the energies associated with the frame and co-frame; see (\ref{eq:high ord en e and omega}). In Subsection~\ref{ssection:algorithm} we derive an
algorithm for estimating expressions that occur in the energy estimates. The main point here is to develop a calculus that allows us to divide
the expressions that appear in the energy estimates into terms that can be estimated by the last two terms on the right hand side of
(\ref{eq:dtmetotlest}) and terms that contribute to the first term on the right hand side. In the development of the algorithm, it is crucial that
we have the extra time dependent weights in the energies for the frame and co-frame. With the algorithm at hand, we derive elliptic
estimates for the lapse function in Subsection~\ref{ssection:lapse hod}. The identity (\ref{eq:lapse ito kEk etc}) may look odd at
first sight. However, it turns out to be important since the second term on the left hand side appears in the energy estimates. We also need
(\ref{eq:ear N Hl est}) and (\ref{eq:ENl final est lemma}). It is important to note here that the coefficients in front of the squared Sobolev
norms on the right hand sides of (\ref{eq:ear N Hl est}) and (\ref{eq:ENl final est lemma}) are numerical. Deriving energy estimates for the
frame and co-frame is straightforward; see Subsection~\ref{ssection:frame est hod}. Estimating the second fundamental form and the structure
coefficients is more complicated. However, integrating by parts and appealing to the algorithm yields (\ref{eq:dt E gamma k}). In this estimate,
it is important to note that the terms appearing in the integrands have the form $aE_{\bfI}(f)E_{\bfI}(g)$. Moreover, the $f$ and $g$ appearing are
basic variables in the FRS equations. In addition, $a$ is either a numerical multiple of $t^{-1}$ or a numerical multiple of $k_{IJ}$ or $e_0\varphi$.
The reason estimates of this form are useful when deriving (\ref{eq:dtmetotlest}) is that, due to (\ref{eq:almost asymptotic Hcon gen}),
\[
  t^2k_{IJ}k_{IJ}+t^2(e_0\varphi)^2\leq 1+Ct^{2\varepsilon}.
\]
There is only one problem with (\ref{eq:dt E gamma k}): expressions of the form $E_{\bfI}(\g_{IJJ})$ appear.
Such terms could potentially give rise to dimensionally dependent constants in the first term on the right hand side of
(\ref{eq:dtmetotlest}). We avoid this possibility by estimating the energy for $a_I$, introduced in (\ref{eq:Ealdef}), separately in
Subsection~\ref{ssection:the energy of aI}. Note that the energy estimate again has a similar structure to that described above.
All that remains is to estimate the energy associated with the scalar field. We do so in Subsection~\ref{ssection:the scalar field hod}.
Again, the relevant estimate has the form described above. Combining all the above observations, we, in the end, arrive at the estimate
(\ref{eq:dtmetotlest}) and thereby at the desired conclusion.

\textit{Asymptotics.} In the context of quiescent big bang formation, one would like to show that solutions induce data on the singularity as in \cite{RinQC} or \cite{andres}. In particular, if the solution is expected to exhibit stable big bang formation, one would like to show that it induces data on the singularity as in Definition~\ref{def:idos}. In \cite{OPR}, the authors obtain some of these data. Specifically, if $( \Sigma, \ch, \K, \Phi, \Psi )$ denotes the expansion normalized initial data along the foliation, the authors prove the convergence of the eigenvalues of $\K$, of $\Phi$ and of $\Psi$, in $C^{k_0}$, for some fixed $k_0 \in \mathbb{N}$, to nontrivial limits satisfying the expected constraints, \eqref{eq:Kasner relations} and \eqref{subcritical condition}. However, they fail to obtain convergence of $\ch$ and $\K$, due to limited information on the asymptotic behavior of the Fermi-Walker frame $\{e_I\}_{I=1}^n$. The authors obtain estimates of the form
\[
\|e_I\|_{C^{k_0}(\Sigma)} \leq Ct^{-1+3\sigma}
\]
for some $\sigma > 0$, along with similar estimates for the corresponding dual frame and structure coefficients, which are not sharp. That this is enough information to prove big bang formation is a remarkable property of the Fermi-Walker frame. One does not need to have full knowledge about the asymptotics to prove curvature blow-up. 

Following the results of \cite{OPR}, it makes sense to set as our starting point some $C^{k_0}$ bounds on a Fermi-Walker frame and on the components of geometric quantities in terms of this frame, along with $C^{k_0}$ convergence of the eigenvalues of $\K$, of $\Phi$ and of $\Psi$. These are precisely the estimates \eqref{bounds on the solution} and \eqref{eq:pI lim Phi lim Psi lim}. Note that these estimates correspond to the conclusions obtained in \cite{OPR} with two exceptions. First, since we do not assume the CMC condition, we need a way to control the mean curvature. This is done through the quantity $\Theta = N\theta$, which is assumed to be asymptotic to $t^{-1}$ as $t \downarrow 0$. Second, in \cite{OPR}, the authors obtain convergence of the lapse, $N \to 1$ in $C^{k_0}$ as $t \downarrow 0$. By contrast, here we only assume that $N$ is bounded in $C^{k_0}$, and that $N^{-1}$ is bounded in $C^0$. However, we point out that if the CMC condition is assumed, then the assumption that $t\Theta-1 \to 0$ becomes $N-1 \to 0$. We begin by deriving some basic consequences of the assumptions in Subsection~\ref{ssection: basic consequences of the assumptions}.

The idea to derive the remaining asymptotics, is to introduce an eigenframe $\{\eig_I\}_{I=1}^n$ for $\K$. That is, an orthonormal frame for $h$ satisfying $\K(\eig_I) = p_I\eig_I$. This is done in Subsection~\ref{ssection:the eigenframe}. Here the nondegeneracy assumption is important, since it ensures that one can take this frame to be smooth. Since the eigenframe diagonalizes both $\ch$ and $\K$ simultaneously, and we already know the eigenvalues of $\K$ to converge, if we can show that appropriate normalizations of the $\eig_I$ converge to nontrivial limits as $t \downarrow 0$, it will follow that $\ch$ and $\K$ also converge. This is the purpose of Theorem~\ref{asymptotics of the eigenframe}, of which Theorem~\ref{thm:asymptotics} is a direct consequence. To begin with, it is thus necessary to relate the Fermi-Walker frame to the eigenframe. This is done through a family of orthogonal matrices $A$ satisfying $\eig_I = A_I^Je_J$. Einstein's equations plus the assumptions can then be used to deduce that the $A_I^J$ are globally bounded in $C^{k_0-1}$. In fact, one can prove that there is a family of orthogonal matrices $\mathring{A}$ such that $A_I^J \to \mathring{A}_I^J$ in $C^{k_0-1}$ as $t \downarrow 0$; see Proposition~\ref{bounds for change of basis matrix}. Once this has been done, one can translate the estimates for the Fermi-Walker frame, the corresponding dual frame, and the structure coefficients, to the eigenframe and the associated objects. Of course, the resulting estimates are not optimal yet, but it is possible to improve on them.

Once estimates for the eigenframe have been obtained, we proceed, in Subsection~\ref{ssection: the expansion normalized eigenframe}, to consider the expansion normalized eigenframe $\{\ce_I\}_{I=1}^n$ given by $\ce_I = \theta^{-p_I} \eig_I$ (so that $\ch(\ce_I,\ce_I) = 1$), along with the corresponding dual frame and structure coefficients. These objects are now expected to converge as $t \downarrow 0$. They also inherit some estimates from the Fermi-Walker frame. However, at this point, the bound that we obtain blows up as $t \downarrow 0$. Einstein's equations then imply a system of evolution equations for the $\ce_I$ and related objects, which we use in Subsection~\ref{ssection: asymptotics} to derive asymptotics. It turns out that these equations can be integrated repeatedly as ODEs to improve on the estimates, at the price of losing two derivatives on each iteration. Thus, given an $\ell \in \nn{}$ one concludes that the $\ce_I$ and related objects are globally bounded in $C^\ell$ after a finite number of steps, provided that $k_0$ is large enough. Finally, one more integration of the evolution equations yields the existence of nontrivial $C^\ell$ limits for the $\ce_I$ and related objects. These limits can then be used to define $\rch$ and $\rK$, the limits of $\ch$ and $\K$, and it can be shown that $(\Sigma,\rch,\rK,\rphi,\rpsi)$ satisfies the conditions of Definition~\ref{def:idos}.

\textit{Proof of the main result.} Once we have Theorems~\ref{thm:hod} and \ref{thm:asymptotics} at our disposal, we can combine them with
the arguments in \cite{OPR} in order to deduce that Theorem~\ref{thm: big bang formation} holds. The details are to be found in
Section~\ref{section: proof of main theorem}.

\subsection{Acknowledgements}

This research was funded by the Swedish Research Council (Vetenskapsrådet), dnr. 2022-03053. The first author was supported by foundations
managed by The Royal Swedish Academy of Sciences.

\section{Higher order energy estimates for solutions to the FRS equations}\label{section:Higher order derivatives}

The purpose of the present section is to prove Theorem~\ref{thm:hod}. In particular, we, in the present section, consider solutions to
(\ref{eq: transport frame})--(\ref{eq:LapseEquation}) as described in Subsection~\ref{ssection:eqs}. Moreover, we, unless otherwise stated,
assume that $V$ is a $\s_V$-admissible potential and that the basic supremum assumptions, see Definition~\ref{def:bas sup ass}, are satisfied
with $3\ve\leq \s_V$. 

\subsection{Elementary consequences of the basic supremum assumptions}\label{ssection:basic cons hod}
Due to (\ref{eq:almost asymptotic Hcon gen}), we know that $t|e_0\varphi|\leq 1+Ct^{2\ve}$ for $t\leq t_0$.
Combining this estimate with (\ref{eq:N be gen}) yields
\[
  t\|\d_t\varphi\|_{C^0(\S_t)}\leq 1+Ct^{\ve}
\]
for $t\leq t_0$. This means that
\begin{equation}\label{eq:phi bd gen}
  \|\varphi\|_{C^0(\S_t)}\leq -\ln t+C
\end{equation}
for all $t\leq t_0$. Similarly, combining (\ref{eq:k phi be gen}) and (\ref{eq:N be gen}), we conclude that
\begin{equation}\label{eq:phi Co be gen}
  \|\varphi\|_{C^1(\S_t)}\leq C\ldr{\ln t}
\end{equation}
for all $t\leq t_0$. Next, since
\[
  \rodiv_{h_{\refer}}e_I=\rodiv_{h_{\refer}}(e_I^iE_i)=E_i(e_I^i)+e_I^i\rodiv_{h_{\refer}}E_i,
\]
it follows from (\ref{eq:e om g phi be gen}) that 
\begin{equation}\label{eq:diveIest gen}
  t^{1-3\ve}\|\rodiv_{h_{\refer}}e_{I}\|_{C^{0}(\S_t)}\leq C 
\end{equation}
for all $t\leq t_0$. Finally, note that combining (\ref{eq:e om g phi be gen}) and (\ref{eq:N be gen}) yields
\begin{equation}\label{eq:ear N gen}
  t^{1-4\ve}\|\ear N\|_{C^1(\S_t)}\leq C
\end{equation}
for all $t\leq t_0$.

\subsection{Energies and basic estimates}\label{ssection:en hod}
The energies we estimate in the end are, given $\ell\in\nn{}_0$, defined as follows
\begin{subequations}\label{seq:higher order energies}
  \begin{align}
    E_{(e,\omega),\ell}(t) &:= t^{-2\ve}\|e\|_{H^{\ell}(\S_t)}^2+t^{-2\ve}\|\omega\|_{H^{\ell}(\S_t)}^2,\label{eq:high ord en e and omega}\\
    E_{(\gamma,k),\ell}(t) &:= \tfrac{1}{2}\|\gamma\|_{H^{\ell}(\S_t)}^2+\|k\|_{H^{\ell}(\S_t)}^2,\\
    E_{(\varphi),\ell}(t) &:= \|e_0\varphi\|_{H^\ell(\S)}^2+\|\ear\varphi\|_{H^\ell(\S_t)}^2+t^{-2+2\ve}\|\varphi\|_{H^{\ell}(\S_t)}^2,\label{eq:high ord en phi}\\
    E_{(N),\ell}(t) &:= \|\ear N\|_{H^{\ell}(\S_t)}^2+t^{-2}\|N-1\|_{H^\ell(\S_t)}^2,\\
    \etot{\ell}(t) &:= E_{(e,\omega),\ell}(t)+E_{(\gamma,k),\ell}(t)+E_{(\varphi),\ell}(t),\\
    \hetot{\ell}(t) &:= \etot{\ell}+E_{(N),\ell}(t).
  \end{align}
\end{subequations}
Here $\ve>0$ is the constant whose existence is assumed in Definition~\ref{def:bas sup ass}. Using notation analogous to the one introduced
in (\ref{seq:e g Ck Hk norms}), we, next, define an energy for $a$ introduced in (\ref{eq:aIdef}):
\begin{equation}\label{eq:Ealdef}
  E_{(a),\ell}(t):=\|a\|_{H^\ell(\S_t)}^2.
\end{equation}
It turns out to be convenient to include this expression in the total energy, and in the final step of the energy estimates, we therefore focus
on 
\begin{equation}\label{eq:metotldef}
  \metot{\ell}:=\etot{\ell}+E_{(a),\ell}.
\end{equation}
In Subsection~\ref{ssection:bas es and identities} we record some of the basic estimates and identities we use when deriving energy estimates. 

\subsection{Estimating the potential}\label{ssection:est pot hod}

Next, we record some basic estimates concerning the potential. 

\begin{lemma}
  Consider a solution to (\ref{eq: transport frame})--(\ref{eq:LapseEquation}) as described in Subsection~\ref{ssection:eqs}. Assume, in addition,
  that $V$ is a $\s_V$-admissible potential and that the basic supremum assumptions, see Definition~\ref{def:bas sup ass}, are satisfied with
  $3\ve\leq \s_V$. Then, for any $\ell\in\nn{}_0$, there is a constant $C_\ell$, depending only on $(\S,h_{\refer})$, $c_\ell$, $\ve$ and $\ell$, such that
  the following holds for all $t\leq t_0$:
  \begin{equation}\label{eq:V and V prime Hl est}
    t^2\|V\circ\varphi\|_{H^\ell(\S_t)}+t^2\|V'\circ\varphi\|_{H^\ell(\S_t)}\leq C_\ell t^{5\ve}(1+\|\varphi\|_{H^\ell(\S_t)}).
  \end{equation}
  Moreover,
  \begin{equation}\label{eq:EiV and EiV prime Hl est}
    t^2\|E_i(V\circ\varphi)\|_{H^{\ell-1}(\S_t)}+t^2\|E_i(V'\circ\varphi)\|_{H^{\ell-1}(\S_t)}\leq C_\ell t^{5\ve}\|\varphi\|_{H^\ell(\S_t)}.
  \end{equation}
  Finally,
  \begin{equation}\label{eq:V and V prime Cone est gen}
    t^2\|V\circ\varphi\|_{C^1(\S_t)}+t^2\|V'\circ\varphi\|_{C^1(\S_t)}\leq C_\ell t^{5\ve}.
  \end{equation}
\end{lemma}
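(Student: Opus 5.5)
The plan is to reduce everything to pointwise bounds for $V^{(m)}\circ\varphi$ coming from (\ref{eq: V assumption}) and (\ref{eq:phi bd gen}), and then to handle derivatives with the chain rule plus Moser-type product estimates.

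\emph{Pointwise bound.} By (\ref{eq:phi bd gen}), and since $\varphi\to-\infty$ is not guaranteed, we use $|\varphi|\leq -\ln t+C$. Then (\ref{eq: V assumption}) gives
\[
  |V^{(m)}\circ\varphi|\leq c_m e^{2(1-\s_V)|\varphi|}\leq C_m t^{-2(1-\s_V)}=C_mt^{-2+2\s_V}.
\]
Since $3\ve\leq\s_V$, this yields $t^2|V^{(m)}\circ\varphi|\leq C_m t^{6\ve}\leq C_mt^{5\ve}$ for $t\leq t_0\leq 1$. This already handles the $C^0$ parts.

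\emph{The $C^1$ estimate (\ref{eq:V and V prime Cone est gen}).} Write $E_i(V\circ\varphi)=(V'\circ\varphi)E_i\varphi$ and use (\ref{eq:phi Co be gen}), $\|\varphi\|_{C^1}\leq C\ldr{\ln t}$. This gives $t^2|E_i(V\circ\varphi)|\leq Ct^{6\ve}\ldr{\ln t}\leq Ct^{5\ve}$, with the surplus $t^{\ve}$ absorbing the logarithm. The same argument applies to $V'$.

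\emph{Sobolev estimates.} For $\ell\geq1$, expand $E_{\bfI}(V\circ\varphi)$ with $|\bfI|=m\leq\ell$ by the chain rule (Faà di Bruno). Each term has the form $(V^{(j)}\circ\varphi)\,E_{\bfI_1}\varphi\cdots E_{\bfI_j}\varphi$, where $1\leq j\leq m$ and each $|\bfI_i|\geq1$ with $\sum|\bfI_i|=m$. Commutators of the $E_i$ also produce lower order terms with smooth coefficients; these are absorbed in the same way.

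Bound $V^{(j)}\circ\varphi$ in $C^0$ by $Ct^{-2+6\ve}$. For the product, use the standard Gagliardo--Nirenberg/Moser inequality on the closed manifold:
\[
  \|E_{\bfI_1}\varphi\cdots E_{\bfI_j}\varphi\|_{L^2}\leq C\|E\varphi\|_{C^0}^{j-1}\|\varphi\|_{H^m}.
\]
Concretely, apply Hölder with exponents $2m/(|\bfI_i|)$ and interpolate each factor between $\|\bar D\varphi\|_{C^0}$ and $\|\varphi\|_{H^{m}}$.

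Since $\|E\varphi\|_{C^0}\leq C\ldr{\ln t}$, we obtain
\[
  t^2\|E_{\bfI}(V\circ\varphi)\|_{L^2}\leq C t^{6\ve}\ldr{\ln t}^{\ell-1}\|\varphi\|_{H^\ell}\leq C_\ell t^{5\ve}\|\varphi\|_{H^\ell}.
\]
Only derivatives of order $\geq1$ hitting $\varphi$ appear, so no $1+$ term is needed, and this gives (\ref{eq:EiV and EiV prime Hl est}). Here $E_i(V\circ\varphi)\in H^{\ell-1}$ involves derivatives of orders $1$ to $\ell$. Adding the $L^2$ norm of $V\circ\varphi$, bounded by $C t^{-2+6\ve}\mathrm{vol}(\S)^{1/2}$, gives (\ref{eq:V and V prime Hl est}), which is why that estimate carries the term $1$. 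The case of $V'$ is identical, using $c_{k}$ with one more derivative.

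\emph{Main obstacle.} The main difficulty is the Moser-type product estimate. The constant must not carry any power of $\|\varphi\|_{H^\ell}$ beyond the first, and the $C^0$ factors must only be logarithmically large. This is the reason for interpolating against $\|\bar D\varphi\|_{C^0}$ rather than $\|\varphi\|_{C^0}$: the latter only enters through $V^{(j)}$, while the $C^1$ part of $\varphi$ enters polynomially in $\ldr{\ln t}$ and is absorbed by $t^{\ve}$.
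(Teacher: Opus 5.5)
Your proposal is correct and follows essentially the paper's route: the pointwise bound $|V^{(m)}\circ\varphi|\leq Ct^{-2+6\ve}$ from (\ref{eq: V assumption}) and (\ref{eq:phi bd gen}), then the chain-rule expansion into terms $V^{(j)}\circ\varphi\,E_{\bfI_1}\varphi\cdots E_{\bfI_j}\varphi$, then Lemma~\ref{lemma: products in L2}, and finally absorbing the powers of $\ldr{\ln t}$ into the spare $t^{\ve}$.

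The only difference is that the paper applies Lemma~\ref{lemma: products in L2} directly with $\|\varphi\|_{C^0}\leq -\ln t+C$. That norm is already only logarithmically large, so your stated reason for interpolating against $\bar{D}\varphi$ instead is unnecessary, and your H\"older exponents $2m/|\bfI_i|$ actually correspond to the $\|\varphi\|_{C^0}$ version. Also, no commutator terms arise in the Leibniz expansion.
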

\begin{remark}
  In particular, using the terminology introduced in (\ref{eq:high ord en phi}),
  \[
    t^2\|E_i(V\circ\varphi)\|_{H^{\ell-1}(\S_t)}+t^2\|E_i(V'\circ\varphi)\|_{H^{\ell-1}(\S_t)}\leq C_\ell t^{1+4\ve}E_{(\varphi),\ell}^{1/2}.
  \]  
\end{remark}
\begin{proof}
  Combining (\ref{eq:phi bd gen}) with (\ref{eq: V assumption}) yields the conclusion that
  \begin{equation}\label{eq:Vm circ phi est}
    \textstyle{\sum}_{m \leq k}|V^{(m)}\circ\varphi|\leq c_{k}e^{2C} t^{-2+2\s_V}\leq C_kt^{-2+6\ve}
  \end{equation}
  for any $k\in \nn{}_0$, where we used the fact that $\s_V\geq 3\ve$. Next, note that
  $E_{\bfI}V^{(i)}\circ\varphi$ consists of a linear combination of terms of the form
  \[
    V^{(i+m)}\circ\varphi E_{\bfI_1}\varphi\cdots E_{\bfI_m}\varphi.
  \]
  The estimates (\ref{eq:V and V prime Hl est}) and (\ref{eq:EiV and EiV prime Hl est}) follow by Gagliardo-Nirenberg estimates, i.e.
  Lemma \ref{lemma: products in L2}, (\ref{eq:phi bd gen}) and the fact that $\ldr{\ln t}^{m-1}t^{\ve}$ is bounded for $t\leq 1$. Finally,
  combining (\ref{eq:phi Co be gen}) with (\ref{eq:Vm circ phi est}) yields (\ref{eq:V and V prime Cone est gen}).
\end{proof}

\subsection{Algorithm}\label{ssection:algorithm}

Consider a solution to (\ref{eq: transport frame})--(\ref{eq:LapseEquation}) as described in Subsection~\ref{ssection:eqs}. Assume, in addition, 
that $V$ is a $\s_V$-admissible potential and that the basic supremum assumptions, see Definition~\ref{def:bas sup ass}, are satisfied with
$3\ve\leq \s_V$. Next, we formulate an algorithm for estimating expressions that occur frequently.

\textit{General terms.} Consider a term $T$ consisting of the following factors: $l_1$ factors of the form $N$; $l_2$ factors of the form $N-1$; $l_{3,a}$
factors of the form $e_I^i$ or $\omega_i^I$; $l_{3,b}$ factors of the form $t^{-\ve}e_I^i$ or $t^{-\ve}\omega_i^I$; $l_4$ factors of the form $e_IN$;
$l_5$ factors of the form $\gamma_{IJK}$; $l_6$ factors of the form $k_{IJ}$; $l_7$ factors of the form $e_0\varphi$; $l_8$ factors of the form $e_I\varphi$; and
$l_9$ factors of the form $V\circ\varphi$ or $V'\circ\varphi$. In what follows, we refer to such a term as a \textit{general term}. Let
\begin{align*}
  m_{\mathrm{int}} &:= -l_{3,a}-l_{3,b}-l_4-l_5-l_6-l_7-l_8-2l_9,\\
  m_{\ve} &:= l_2+3l_{3,a}+2l_{3,b}+4l_4+2l_5+2l_8+5l_9,\\
  \ell_{\ve} &:= l_2+l_{3,a}+l_{3,b}+l_4+l_5+l_8+l_9.    
\end{align*}
By (\ref{eq:estdynvarmfltot gen}), (\ref{eq:ear N gen}), (\ref{eq:EiV and EiV prime Hl est}), (\ref{eq:V and V prime Cone est gen}) and Gagliardo-Nirenberg estimates, $E_{\bfI}T$,
where $0<|\bfI|\leq \ell$, can be estimated in $L^2$ by
\begin{equation*}
  \begin{split}
    & C_\ell t^{m_{\mathrm{int}}+\ve m_{\ve}}[\iota_1tE_{(N),\ell}^{1/2}+\iota_2t^{1-\ve}E_{(N),\ell}^{1/2}+\iota_{3,a}t^{1-3\ve}t^{\ve}E_{(e,\omega),\ell}^{1/2}
      +\iota_{3,b}t^{1-2\ve}E_{(e,\omega),\ell}^{1/2}+\iota_4t^{1-4\ve}E_{(N),\ell}^{1/2}]\\
    & +C_\ell t^{m_{\mathrm{int}}+\ve m_{\ve}}[\iota_5t^{1-2\ve}E_{(\g,k),\ell}^{1/2}+\iota_6tE_{(\g,k),\ell}^{1/2}+\iota_7tE_{(\varphi),\ell}^{1/2}+\iota_8t^{1-2\ve}E_{(\varphi),\ell}^{1/2}
      +\iota_9t^{2-5\ve}t^{-1+4\ve}E_{(\varphi),\ell}^{1/2}],
  \end{split}
\end{equation*}
where $\iota_i=1$ if $l_i>0$ and $\iota_i=0$ if $l_i=0$. The quantities $\iota_{3,a}$ and $\iota_{3,b}$ are defined similarly. 

\textit{The good terms.} If $T$ is a general term with $\ell_{\ve}\geq 2$ or $l_{3,a}+l_9\geq 1$, then $T$ is called a \textit{good term}. Then
\begin{equation}\label{eq:good term}
  \|E_{\bfI}T\|_{L^2(\S_t)}\leq C_\ell t^{m_{\mathrm{int}}+1+\ve}\hetot{\ell}^{1/2}.
\end{equation}
Moreover, if $l_1+l_2+l_4=0$, then $\hetot{\ell}$ can be replaced by $\etot{\ell}$ on the right hand side. 

\textit{The bad terms.} If $T$ is a general term with $\ell_{\ve}=1$ and $l_{3,a}+l_9=0$, then $T$ is called a \textit{bad term}. In this case, we can
assume $T$ to consist of $l_1$ factors of the form $N$; one factor, say $f$, which is one of $t^{-1}(N-1)$, $t^{-\ve}e_I^i$, $t^{-\ve}\omega^I_i$,
$e_IN$, $\gamma_{IJK}$ or $e_I\varphi$; $l_6$ factors of the form $k_{IJ}$; and $l_7$ factors of the form $e_0\varphi$. Again, we apply $E_{\bfI}$ to such a term.
If no derivative hits $f$, then, appealing to Lemma~\ref{lemma: products in L2}, (\ref{eq:estdynvarmfltot gen}) and (\ref{eq:ear N gen}), the
corresponding expression can be estimated by
\begin{equation}\label{eq:the bad term good est}
  C_\ell t^{-l_6-l_7+1}t^{-1+\ve}\hetot{\ell}^{1/2}.
\end{equation}
Moreover, if $l_1=0$, then $\hetot{\ell}$ can be replaced by $\etot{\ell}$ on the right hand side. 
If at least one derivative hits $f$ and there is one other factor that is hit by at least one derivative, then one can retain one derivative on
$f$ and one derivative on one of the other factors. This leaves $\ell-2$ derivatives. Again, due to Lemma~\ref{lemma: products in L2},
(\ref{eq:estdynvarmfltot gen}) and (\ref{eq:ear N gen}),
\[
C_\ell t^{-l_6-l_7+1}t^{-1}\hetot{\ell-1}^{1/2}
\]
for the corresponding expressions. Moreover, $\hetot{\ell-1}$ can be replaced by $\etot{\ell-1}$ if $l_1=0$ and $f$ is neither $t^{-1}(N-1)$ nor $e_IN$.
What remains is when $E_{\bfI}$ only hits $f$. Since $N-1$ can be bounded by $Ct^{\ve}$ in $C^0$, we can replace
all the occurrences of $N$ in the corresponding expression with $1$ at the price of an error of the form (\ref{eq:the bad term good est}). What
we are left with are $E_{\bfI} f$ times the original $l_6$ factors of the form $k_{IJ}$ and the original $l_7$ factors of the form $e_0\varphi$. In
other words,
\begin{equation}\label{eq:EbfI bad term}
  E_{\bfI}T=E_{\bfI}f\cdot k_{I_1J_1}\cdots k_{I_{l_6}J_{l_6}}(e_0\varphi)^{l_7}+R_B,
\end{equation}
where
\begin{equation}\label{eq:RB est}
  \|R_B\|_{L^2(\S_t)}\leq C_\ell t^{-l_6-l_7+1}t^{-1+\ve}\hetot{\ell}^{1/2}+C_\ell t^{-l_6-l_7+1}t^{-1}\hetot{\ell-1}^{1/2}.
\end{equation}
Again, the hats can be removed on the right hand side if $l_1=0$ and $f$ is neither $t^{-1}(N-1)$ nor $e_IN$. For future reference, it is also of
interest to note that
\[
\|E_{\bfI}T\|_{L^2(\S_t)}\leq C_\ell t^{-l_6-l_7+1}t^{-1}\hetot{\ell}^{1/2}.
\]
Again, if $l_1=0$ and $f$ is neither $t^{-1}(N-1)$ nor $e_IN$, the hat can be removed on the right hand side.

\textit{The leading order terms.} If $T$ is a general term with $\ell_{\ve}=0$, then $T$ is called a \textit{leading order term}. Then, by
arguments similar to the above, 
\begin{equation}\label{eq:leading order terms}
  \begin{split}
    & E_{\bfI}[N^{l_1}k_{I_1J_1}\cdots k_{I_{l_6}J_{l_6}}(e_0\varphi)^{l_7}]\\
    ={} & l_1E_{\bfI}(N)k_{I_1J_1}\cdots k_{I_{l_6}J_{l_6}}(e_0\varphi)^{l_7}
    +E_{\bfI}(k_{I_1J_1})\cdots k_{I_{l_6}J_{l_6}}(e_0\varphi)^{l_7}\\
    & \negmedspace+\dots+k_{I_1J_1}\cdots E_{\bfI}(k_{I_{l_6}J_{l_6}})(e_0\varphi)^{l_7}
    +l_7k_{I_1J_1}\cdots k_{I_{l_6}J_{l_6}}E_{\bfI}(e_0\varphi)\cdot (e_0\varphi)^{l_7-1}+R_B,
  \end{split}
\end{equation}
where the first term is absent if $l_1=0$; the terms involving derivatives of the $k_{IJ}$ are absent if $l_6=0$; the second last term is absent
if $l_7=0$; and $R_B$ can be estimated in $L^2$ by
\begin{equation}\label{eq:leading order terms RB}
  \|R_B\|_{L^2(\S_t)}\leq C_\ell t^{-l_6-l_7+1}t^{\ve}\hetot{\ell}^{1/2}+C_\ell t^{-l_6-l_7+1}\hetot{\ell-1}^{1/2}.
\end{equation}
Moreover, if $l_1=0$, the hats can be removed from the right hand side. For future reference, it is also of interest to keep in mind that 
\begin{equation}\label{eq:est lead order terms}
  \|E_{\bfI}T\|_{L^2(\S_t)}\leq C_\ell t^{-l_6-l_7+1}\hetot{\ell}^{1/2}.
\end{equation}
Again, if $l_1=0$, the hat can be removed from the right hand side.

\textit{Commutators.} Next, consider $[E_{\bfI},e_I]\psi$. It can be written as $e_I^i[E_{\bfI},E_i]\psi$, plus a linear combination of terms of the form
\begin{equation}\label{eq:comm terms eiI}
  E_{\bfJ}(e_I^i)E_{\bfK}\psi,
\end{equation}
where $|\bfJ|+|\bfK|=|\bfI|+1$, $|\bfJ|\geq 1$ and $|\bfK|\geq 1$. Due to (\ref{eq:e om g phi be gen}),
\[
\|e_I^i[E_{\bfI},E_i]\psi\|_{L^2(\S_t)}\leq C_\ell t^{-1+3\ve}\|\psi\|_{H^\ell(\S_t)}
\]
for some constant $C_\ell$, where $|\bfI|\leq \ell$. Next, note that (\ref{eq:comm terms eiI}) can be written
\[
E_{\bfJ_1}E_{j_1}(e_I^i)E_{\bfK_1}E_{k_1}\psi,
\]
where $|\bfJ_1|+|\bfK_1|\leq |\bfI|-1$. By Gagliardo-Nirenberg estimates, we conclude that
\[
\|E_{\bfJ}(e_I^i)E_{\bfK}\psi\|_{L^2(\S_t)}\leq C_\ell\|e\|_{C^1(\S_t)}\|\psi\|_{H^{\ell}(\S_t)}+C_\ell\|\psi\|_{C^1(\S_t)}\|e\|_{H^\ell(\S_t)}.
\]
Note that the first term on the right hand side can be estimated by appealing to (\ref{eq:e om g phi be gen}). This yields, assuming
$|\bfI|\leq \ell$ and keeping (\ref{eq:high ord en e and omega}) in mind,
\begin{equation}\label{eq:EbfI eI comm est}
  \|[E_{\bfI},e_I]\psi\|_{L^2(\S_t)}\leq C_\ell t^{-1+3\ve}\|\psi\|_{H^\ell(\S_t)}+C_\ell t^{\ve}\|\psi\|_{C^1(\S_t)}E_{(e,\omega),\ell}^{1/2}.
\end{equation}

\subsection{Estimating the lapse function}\label{ssection:lapse hod}
Concerning the lapse function, the following estimates hold.
\begin{lemma}
  Consider a solution to (\ref{eq: transport frame})--(\ref{eq:LapseEquation}) as described in Subsection~\ref{ssection:eqs}.
  Assume, in addition,  that $V$ is a $\s_V$-admissible potential and that the basic supremum assumptions, see
  Definition~\ref{def:bas sup ass}, are satisfied with $3\ve\leq \s_V$. Then, for each $\ell\in\nn{}$, there is a $t_\ell>0$ such
  that for $t\leq t_\ell$,
  \begin{equation}\label{eq:lapse ito kEk etc}
    \begin{split}      
      & E_{(N),\ell}+2\textstyle{\sum}_{0<|\bfI|\leq \ell}\textstyle{\int}_{\S} \big(k_{IJ} E_{\bfI}k_{IJ} + \dtp E_{\bfI}\dtp \big)E_{\bfI}(N-1)\md\mu_{h_{\refer}}=\mR,
    \end{split}
  \end{equation}
  where
  \[
  |\mR|\leq C_\ell t^{\ve}\etot{\ell}+C_\ell\etot{\ell-1}^{1/2}\etot{\ell}^{1/2}.
  \]
  Moreover,
  \begin{equation}\label{eq:ear N Hl est}
    \begin{split}
      \|\ear N\|_{H^\ell(\S_t)}^2 \leq \|k\|_{H^\ell(\S_t)}^2+\|e_0\varphi\|_{H^\ell(\S_t)}^2
      +C_\ell t^{\ve}\etot{\ell}+C_\ell\etot{\ell-1}^{1/2}\etot{\ell}^{1/2}
    \end{split}
  \end{equation}
  for all $t\leq t_\ell$. Finally,
  \begin{equation}\label{eq:ENl final est lemma}
    \begin{split}
      \|\ear N\|_{H^\ell(\S_t)}^2+\tfrac{1}{2}t^{-2}\|N-1\|_{H^\ell(\S_t)}^2 \leq{} & 2\big(\|k\|_{H^\ell(\S_t)}^2+\|e_0\varphi\|_{H^\ell(\S_t)}^2\big)\\
      &\negmedspace +C_\ell t^{\ve}\etot{\ell}+C_l\etot{\ell-1}^{1/2}\etot{\ell}^{1/2}
    \end{split}
  \end{equation}
  for all $t\leq t_\ell$.
\end{lemma}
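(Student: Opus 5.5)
The plan is to work with the alternative form (\ref{eq: alternative lapse}) of the lapse equation. Write it as
\[
e_Ie_I(N-1)-t^{-2}(N-1)=\g_{JII}e_J(N)-\big(t^{-2}-k_{IJ}k_{IJ}-\dtp\dtp+\tfrac{2}{n-1}V\circ\varphi\big)N .
\]
For each multiindex $\bfI$ with $0<|\bfI|\leq\ell$, apply $E_{\bfI}$ to this identity. Multiply by $-E_{\bfI}(N-1)$, integrate over $\S$ with respect to $\md\mu_{h_{\refer}}$, and sum.

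\textbf{Left hand side.} Write $e_I=e_I^iE_i$ and integrate the second order term by parts using $\rodiv_{h_{\refer}}e_I$. This produces $\sum\int |e_IE_{\bfI}(N-1)|^2$ plus error terms. The error terms come from three sources: (\ref{eq:diveIest gen}); the commutator estimate (\ref{eq:EbfI eI comm est}), applied twice; and converting $\sum_I|e_IE_{\bfI}N|^2$ into $|E_{\bfI}(\ear N)|^2$. Each of these is controlled by $C_\ell t^{\ve}\hetot{\ell}$ or $C_\ell\hetot{\ell-1}^{1/2}\hetot{\ell}^{1/2}$. For the commutators, the factor $t^{-1+3\ve}$ multiplies $\|\ear N\|$-type norms, and the weight $t^{-2\ve}$ in $E_{(e,\omega),\ell}$ together with (\ref{eq:ear N gen}) supplies an extra $t^{\ve}$. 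The zeroth order term gives exactly $t^{-2}\|E_{\bfI}(N-1)\|^2$. The left hand side therefore yields $E_{(N),\ell}$ up to errors.

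\textbf{Right hand side.} The key point is that the factor $t^{-2}-k_{IJ}k_{IJ}-(\dtp)^2$ is small in $C^1$ by (\ref{eq:almost asymptotic Hcon gen}). Apply $E_{\bfI}$ to its product with $N$ using the algorithm of Subsection~\ref{ssection:algorithm}, treating it as a sum of leading order terms as in (\ref{eq:leading order terms}).
\begin{itemize}
\item The terms in which every derivative hits $k_{IJ}k_{IJ}+(\dtp)^2$ give $-2(k_{IJ}E_{\bfI}k_{IJ}+\dtp E_{\bfI}\dtp)$. After multiplication by $-E_{\bfI}(N-1)$ and moving to the left, these produce exactly the integral term in (\ref{eq:lapse ito kEk etc}).
\item The term in which all derivatives hit $N$ carries the small coefficient $t^{-2}\cdot t^{2\ve}$. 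It is absorbed as $C t^{2\ve}\cdot t^{-2}\|E_{\bfI}(N-1)\|^2$ into the left hand side for $t\leq t_\ell$.
\item Terms with derivatives split between the factors are lower order, giving $\etot{\ell-1}^{1/2}\etot{\ell}^{1/2}$.
\item The potential term is handled by (\ref{eq:V and V prime Hl est}) and (\ref{eq:EiV and EiV prime Hl est}), with gain $t^{5\ve}$.
\item The term $\g_{JII}e_J(N)$ is a good term, since $\ell_\ve=2$, so (\ref{eq:good term}) gives $t^{-1+\ve}\hetot{\ell}^{1/2}$. Paired against $t^{-1}\|E_{\bfI}(N-1)\|$, this is $C t^{\ve}\hetot{\ell}$.
\end{itemize}
At this stage $\mR$ is bounded in terms of $\hetot{}$. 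Since $\hetot{\ell}=\etot{\ell}+E_{(N),\ell}$, the $E_{(N),\ell}$ parts carry a factor $t^{\ve}$ (or are lower order, handled inductively in $\ell$ via the $\ell-1$ version), and are absorbed into the left hand side for small $t$. This gives (\ref{eq:lapse ito kEk etc}) with $\etot{}$ only.

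\textbf{The two remaining estimates.} Apply Cauchy--Schwarz to the integral term in (\ref{eq:lapse ito kEk etc}):
\[
2\,|k_{IJ}E_{\bfI}k_{IJ}+\dtp E_{\bfI}\dtp|\,|E_{\bfI}(N-1)| \leq 2t\big(|k|^2+(\dtp)^2\big)^{1/2}\big(|E_{\bfI}k|^2+|E_{\bfI}\dtp|^2\big)^{1/2}\cdot t^{-1}|E_{\bfI}(N-1)|.
\]
By (\ref{eq:almost asymptotic Hcon gen}) we have $t^2(|k|^2+(\dtp)^2)\leq 1+Ct^{2\ve}$. Hence the integral is at most
\[
(1+Ct^{2\ve})\,\big(\|k\|_{H^\ell}^2+\|e_0\varphi\|_{H^\ell}^2\big)^{1/2}\, t^{-1}\|N-1\|_{H^\ell}.
\]
Using $2ab\leq a^2+b^2$ gives (\ref{eq:ear N Hl est}), because the $t^{-2}\|N-1\|^2$ terms cancel. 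Using instead $2ab\leq 2a^2+\tfrac12 b^2$ leaves $\tfrac12 t^{-2}\|N-1\|^2$ on the left, which gives (\ref{eq:ENl final est lemma}). The zeroth order part of $\|N-1\|_{H^\ell}$ is handled by the $\bfI=0$ version of the same identity; this is the same computation without derivatives.

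\textbf{Main obstacle.} The hardest step is the bookkeeping that guarantees the coefficients in front of the top order norms come out \emph{exactly} $1$ and $2$. This requires that every commutator and every lower order contribution genuinely carries either a $t^{\ve}$ gain or a lower energy. That in turn relies on the $t^{-2\ve}$ weight in the frame energy and on the $C^1$ smallness in (\ref{eq:almost asymptotic Hcon gen}).
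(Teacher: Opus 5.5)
Your outline follows the paper's route. You apply $E_{\bfI}$ to (\ref{eq: alternative lapse}), pair with $E_{\bfI}(N-1)$ and integrate by parts. You treat $\g_{JII}e_J(N)$ and the potential as good terms, isolate $2(k_{IJ}E_{\bfI}k_{IJ}+\dtp E_{\bfI}\dtp)$, and close with Cauchy--Schwarz, (\ref{eq:Cz k ezphi bd}) and Young's inequality.

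\textbf{The gap: removing the hats.} Before that step your identity has the form $E_{(N),\ell}+X=\mR'$. Here $X$ is the integral term and $|\mR'|\leq C_\ell t^{\ve}(\etot{\ell}+E_{(N),\ell})+C_\ell\etot{\ell-1}^{1/2}E_{(N),\ell}^{1/2}$. Contrary to your third bullet, the split-derivative terms and the term $R\,|E_{\bfI}(N-1)|^2$ produce $E_{(N),\ell}$, not $\etot{\ell}$.

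You cannot ``absorb'' these into the left hand side of an \emph{identity}. The left side also contains $X$, which has no sign and is of the same size as $E_{(N),\ell}$. Nothing you have written excludes $E_{(N),\ell}\gg\etot{\ell}$, and then $\mR'$ is not bounded by $C_\ell t^{\ve}\etot{\ell}+C_\ell\etot{\ell-1}^{1/2}\etot{\ell}^{1/2}$.

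What is missing is the a priori bound $E_{(N),\ell}\leq C_\ell\etot{\ell}$ for $t\leq t_\ell$, the paper's (\ref{eq:ENl ito El}), and it must come \emph{first}. To get it, apply your Cauchy--Schwarz step to the $\hetot{\ell}$-version: $E_{(N),\ell}\leq 2(1+Ct^{2\ve})ab+|\mR'|$, with $a^2=\|k\|_{H^\ell(\S_t)}^2+\|e_0\varphi\|_{H^\ell(\S_t)}^2$ and $b=t^{-1}\|N-1\|_{H^\ell(\S_t)}$. Then use $2(1+Ct^{2\ve})ab\leq 2(1+Ct^{2\ve})^2a^2+b^2/2$, apply Young's inequality to $\etot{\ell-1}^{1/2}E_{(N),\ell}^{1/2}$, and absorb for small $t$. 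Only then substitute back to obtain (\ref{eq:lapse ito kEk etc}), followed by (\ref{eq:ear N Hl est}) and (\ref{eq:ENl final est lemma}). This is exactly the paper's order: (\ref{eq:ENl first step}), (\ref{eq:ENl ito El}), (\ref{eq:lapse ito kEk etc}), (\ref{eq:ENl final step}).

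\textbf{A smaller omission: inhomogeneous terms.} Some remainder terms are not homogeneous in the energies, and ``the same computation without derivatives'' does not handle them. For $\bfI=0$, set $R:=t^{-2}-k_{IJ}k_{IJ}-\dtp\dtp$. The term $\int_{\S}NR(N-1)\md\mu_{h_{\refer}}$ contains $\int_{\S}R(N-1)\md\mu_{h_{\refer}}$, which is only $O(t^{-2+3\ve})$ by (\ref{eq:almost asymptotic Hcon gen}) and (\ref{eq:N be gen}). Likewise, the constant $1$ in (\ref{eq:V and V prime Hl est}), i.e.\ the term $\int_{\S}N(V\circ\varphi)(N-1)\md\mu_{h_{\refer}}$, is not of the form $t^{\ve}\hetot{\ell}$.

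To turn these into $C_\ell t^{\ve}\etot{\ell}$ you need the lower bound $t^{-2}\leq c\,\etot{\ell}$ for small $t$. The paper obtains it by integrating $t^2(k_{IJ}k_{IJ}+\dtp\dtp)\geq 1/2$, a consequence of (\ref{eq:almost asymptotic Hcon gen}). Alternatively, $k_{II}=t^{-1}$ gives $k_{IJ}k_{IJ}\geq t^{-2}/n$. With these two additions your argument goes through.
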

\begin{remark}
  One crude special case of (\ref{eq:ENl final est lemma}) is the following:
  \[
  E_{(N),\ell}=\|\ear N\|_{H^{\ell}(\S_t)}^2+t^{-2}\|N-1\|_{H^\ell(\S_t)}^2\leq C_\ell \etot{\ell}.
  \]
  In particular,
  \begin{equation}\label{eq:lapse estimate crude}
    \hetot{\ell}\leq C_\ell\etot{\ell}.
  \end{equation}
\end{remark}
\begin{proof}
  As in the proof of \cite[Lemma~112, p.~52]{OPR}, applying $E_{\bfI}$ to (\ref{eq: alternative lapse}) and then multiplying with $E_{\bfI}(N-1)$ yields
  \begin{equation*}
    \begin{split}
      & e_I E_\bfI e_I (N-1) E_{\bfI} (N-1) + [E_\bfI, e_I] e_I (N-1) E_{\bfI}(N-1)
      - t^{-2} \big(E_{\bfI}(N-1)\big)^2 \\
      ={} &  E_{\bfI} \big( \g_{JII} e_{J} (N) - N \big\{ t^{-2} - k_{IJ} k_{IJ} - \dtp \dtp + \tfrac{2}{n-1} V \circ \varphi \big\}\big)
      E_{\bfI}(N-1).
    \end{split}
  \end{equation*}
  Integrating the result over $\S$ yields
  \begin{equation}\label{eq:lapse first step}
    \begin{split}
      & -\textstyle{\int}_{\S} \big(E_\bfI e_I (N-1) E_{\bfI}e_I (N-1) +t^{-2} \big(E_{\bfI}(N-1)\big)^2 \big)\md\mu_{h_{\refer}}\\
      & -\textstyle{\int}_{\S} \big(E_\bfI e_I (N-1) [e_I,E_{\bfI}](N-1) - [E_\bfI, e_I] e_I (N-1) E_{\bfI}(N-1)\big)\md\mu_{h_{\refer}}\\
      & -\textstyle{\int}_{\S} E_\bfI e_I (N-1) E_{\bfI}(N-1)\rodiv_{h_{\refer}}e_I \md\mu_{h_{\refer}}\\
      ={} &  \textstyle{\int}_{\S}\big(E_{\bfI} \big( \g_{JII} e_{J} (N) - N \big\{ t^{-2} - k_{IJ} k_{IJ} - \dtp \dtp
      + \tfrac{2}{n-1} V \circ \varphi \big\} \big)
      E_{\bfI}(N-1)\big)\md\mu_{h_{\refer}},
    \end{split}
  \end{equation}
  where we used (\ref{eq:divergence formula}). Next, note that, due to (\ref{eq:diveIest gen}), 
  \begin{equation}\label{eq:diveI contr}
    \begin{split}
      & \textstyle{\sum}_I\sum_{|\bfI|\leq \ell}\big|\textstyle{\int}_{\S} E_\bfI e_I (N-1) E_{\bfI}(N-1)\rodiv_{h_{\refer}}e_I \md\mu_{h_{\refer}}\big|
      \leq Ct^{3\ve}E_{(N),\ell}.
    \end{split}
  \end{equation}
  Next, (\ref{eq:N be gen}) and (\ref{eq:EbfI eI comm est}) yield
  \[
  \|[E_{\bfI},e_I](N-1)\|_{L^2(\S_t)}\leq C_\ell t^{3\ve}E_{(N),\ell}^{1/2}+C_\ell t^{2\ve}E_{(e,\omega),\ell}^{1/2}.
  \]
  Thus
  \begin{equation}\label{eq:comm on N m one}
    \begin{split}
      & \big|\textstyle{\sum}_{|\bfI|\leq \ell}\sum_I\int_{\S_t}E_\bfI e_I (N-1) [e_I,E_{\bfI}](N-1)\md\mu_{h_{\refer}}\big|
      \leq C_\ell t^{2\ve}(E_{(N),\ell}+E_{(e,\omega),\ell}).
    \end{split}
  \end{equation}
  Next, combining (\ref{eq:ear N gen}) and  (\ref{eq:EbfI eI comm est}) yields
  \[
  \|[E_{\bfI},e_I]e_I(N-1)\|_{L^2(\S_t)}\leq C_\ell t^{-1+3\ve}\|\ear N\|_{H^\ell(\S_t)}+C_\ell t^{-1+5\ve}E_{(e,\omega),\ell}^{1/2}.
  \]
  Thus
  \begin{equation}\label{eq:comm on ear N}
    \begin{split}
      & \big|\textstyle{\sum}_{|\bfI|\leq \ell}\sum_I\textstyle{\int}_{\S} [E_\bfI, e_I] e_I (N-1) E_{\bfI}(N-1)\md\mu_{h_{\refer}}\big|
      \leq C_\ell t^{3\ve}(E_{(N),\ell}+E_{(e,\omega),\ell}).
    \end{split}
  \end{equation}
  Next, note that $\g_{JII} e_{J} (N)$ and $\tfrac{2N}{n-1} V \circ \varphi$ are good terms. In particular, (\ref{eq:good term}) yields 
  \begin{equation}\label{eq:geN pot contr lapse est}
    \|E_{\bfI}(\g_{JII} e_{J} (N))\|_{L^2(\S_t)}+\|E_{\bfI}\big(\tfrac{2N}{n-1} V \circ \varphi\big)\|_{L^2(\S_t)}
    \leq C_\ell t^{-1+\ve}\hetot{\ell}^{1/2}.
  \end{equation}  
  Combining (\ref{eq:lapse first step}), (\ref{eq:diveI contr}), (\ref{eq:comm on N m one}), (\ref{eq:comm on ear N})
  and (\ref{eq:geN pot contr lapse est}) yields
  \begin{equation}\label{eq:lapse second step}
    \begin{split}      
      & E_{(N),\ell}+\textstyle{\sum}_{|\bfI|\leq \ell}\textstyle{\int}_{\S}E_{\bfI} \big( N \big\{ -t^{-2}
      + k_{IJ} k_{IJ} + \dtp \dtp \big\} \big)E_{\bfI}(N-1)\md\mu_{h_{\refer}}=R_1,
    \end{split}
  \end{equation}
  where
  \[
  |R_1|\leq C_\ell t^{\ve}\hetot{\ell}.
  \]
  Next, we wish to estimate $E_{\bfI}(NR)$ in $L^2$, where 
  \[
  R:=-t^{-2} + k_{IJ} k_{IJ} + \dtp \dtp.
  \]
  Note that $E_\bfI(NR)$ can be written as a sum of $NE_{\bfI}R$ and a linear combination of terms of the form
  \[
  E_{\bfJ}E_jN\cdot E_{\bfK}R,
  \]
  where $|\bfJ|+|\bfK|=|\bfI|-1$. Due to Lemma~\ref{lemma: products in L2}, (\ref{eq:N be gen}) and (\ref{eq:almost asymptotic Hcon gen}), 
  \begin{equation*}
    \begin{split}
      \|E_{\bfJ}E_jN\cdot E_{\bfK}R\|_{L^2(\S_t)} &\leq C_\ell\|N-1\|_{C^1(\S_t)}\textstyle{\sum}_i\|E_iR\|_{H^{\ell-2}(\S_t)}+C_\ell\|R\|_{C^1(\S_t)}\|N-1\|_{H^\ell(\S_t)}\\
      &\leq C_\ell t^{\ve}\textstyle{\sum}_i\|E_iR\|_{H^{\ell-2}(\S_t)}+C_\ell t^{-2+2\ve}\|N-1\|_{H^\ell(\S_t)},
    \end{split}
  \end{equation*}
  where the first term on the right hand sides can be omitted if $|\bfK|=0$. Moreover,
  \[
  \|E_iR\|_{H^{\ell-2}(\S_t)}\leq C_\ell t^{-1}(\|k\|_{H^{\ell-1}(\S_t)}+\|e_0\varphi\|_{H^{\ell-1}(\S_t)})\leq C_\ell t^{-1}\etot{\ell-1}^{1/2},
  \]
  where we appealed to (\ref{eq:k phi be gen}) and Gagliardo-Nirenberg estimates. Summing up,
  \begin{equation*}
    \begin{split}
      \big|\textstyle{\int}_{\S}E_{\bfJ}E_jN\cdot E_{\bfK}R\cdot E_{\bfI}(N-1)\md\mu_{h_{\refer}}\big|
      \leq C_\ell t^{\ve}\etot{\ell-1}^{1/2}E_{(N),\ell}^{1/2}+C_\ell t^{2\ve}E_{(N),\ell}.
    \end{split}
  \end{equation*}
  Note also that, due to (\ref{eq:N be gen}) and (\ref{eq:almost asymptotic Hcon gen}), 
  \[
  \big|\textstyle{\int}_{\S}NR(N-1)\md\mu_{h_{\refer}}\big|\leq Ct^{-2+3\ve}.
\]
What remains is thus to consider expressions of the form 
  \begin{equation*}
    \begin{split}
      & \textstyle{\int}_{\S} N E_{\bfI} \big(k_{IJ} k_{IJ} + \dtp \dtp \big)E_{\bfI}(N-1)\md\mu_{h_{\refer}}\\
      ={} & \textstyle{\int}_{\S} 2\big(k_{IJ} E_{\bfI} k_{IJ} + \dtp E_{\bfI} \dtp \big)E_{\bfI}(N-1)\md\mu_{h_{\refer}}+R_2,
    \end{split}
  \end{equation*}  
  where $\bfI\neq 0$. Similarly to the discussion of the leading order terms, see Subsection~\ref{ssection:algorithm},
  \[
  |R_2|\leq C_\ell t^{\ve}\hetot{\ell}+C_\ell\etot{\ell-1}^{1/2}E_{(N),\ell}^{1/2}.
  \]
  Summing up, we conclude that
  \begin{equation}\label{eq:lapse third step}
    \begin{split}      
      & E_{(N),\ell}+2\textstyle{\sum}_{0<|\bfI|\leq \ell}\textstyle{\int}_{\S} \big(k_{IJ} E_{\bfI}k_{IJ} + \dtp E_{\bfI}\dtp \big)E_{\bfI}(N-1)\md\mu_{h_{\refer}}=R_3,
    \end{split}
  \end{equation}
  where
  \begin{equation}\label{eq:Rthree est}
    |R_3|\leq C_\ell t^{\ve}\hetot{\ell}+C_\ell\etot{\ell-1}^{1/2}E_{(N),\ell}^{1/2}+Ct^{-2+3\ve}.
  \end{equation}
  At this point, it is convenient to remember that there is a $t_a>0$ such that for $t\leq t_a$,
  \[
  t^2k_{IJ}k_{IJ}+t^2e_0(\varphi)e_0(\varphi)\geq\tfrac{1}{2};
  \]
  this is an immediate consequence of \eqref{eq:almost asymptotic Hcon gen}. Integrating this inequality over $\S$ yields the existence of a constant $c_a>0$
  such that 
  \[
  t^{-2}\leq c_a\textstyle{\int}_{\S}[k_{IJ}k_{IJ}+e_0(\varphi)e_0(\varphi)]\md\mu_{h_{\refer}}\leq c_a\etot{\ell}.
  \]
  In particular, this means that for $t\leq t_a$, we can remove the last term on the right hand side of (\ref{eq:Rthree est}). Next, note that   
  \[
  |2k_{IJ}E_{\bfI}k_{IJ}+2e_0\varphi\cdot E_{\bfI}e_0\varphi|\leq 2\big(\textstyle{\sum}_{I,J}k_{IJ}^2+(e_0\varphi)^2\big)^{1/2}
  \big(\textstyle{\sum}_{I,J}(E_{\bfI}k_{IJ})^2+(E_{\bfI}e_0\varphi)^2\big)^{1/2}
  \]
  This means that
  \begin{equation*}
    \begin{split}
      & \big|\textstyle{\sum}_{0<|\bfI|\leq \ell}\textstyle{\int}_{\S}
       \big( 2k_{IJ} E_{\bfI}k_{IJ} + 2e_0\varphi\cdot E_{\bfI}e_0\varphi \big)E_{\bfI}(N-1)\md\mu_{h_{\refer}}\big|\\
      \leq{} & 2\big\|\big(\textstyle{\sum}_{I,J}k_{IJ}^2+(e_0\varphi)^2\big)^{1/2}\big\|_{C^0(\S_t)}
      \big(\|k\|_{H^\ell(\S_t)}^2+\|e_0\varphi\|_{H^\ell(\S_t)}^2\big)^{1/2}\|N-1\|_{H^\ell(\S_t)}.
    \end{split}
  \end{equation*}
  On the other hand, (\ref{eq:almost asymptotic Hcon gen}) yields
  \begin{equation}\label{eq:Cz k ezphi bd}
    \begin{split}
      t\big\|\big(\textstyle{\sum}_{I,J}k_{IJ}^2+(e_0\varphi)^2\big)^{1/2}\big\|_{C^0(\S_t)}
      &\leq 1+\big\|\big(t^2\textstyle{\sum}_{I,J}k_{IJ}^2+t^2(e_0\varphi)^2\big)^{1/2}-1\big\|_{C^0(\S_t)}\\
      &\leq 1+Ct^{2\ve}.
    \end{split}
  \end{equation}
  To conclude
  \begin{equation*}
    \begin{split}
      & \big|\textstyle{\sum}_{0<|\bfI|\leq \ell}\textstyle{\int}_{\S}
      \big( 2k_{IJ} E_{\bfI}k_{IJ} + 2e_0\varphi\cdot E_{\bfI}e_0\varphi \big)E_{\bfI}(N-1)\md\mu_{h_{\refer}}\big|\\
      \leq{} & 2t^{-1}(1+Ct^{\ve})\big(\|k\|_{H^\ell(\S_t)}^2+\|e_0\varphi\|_{H^\ell(\S_t)}^2\big)^{1/2}\|N-1\|_{H^\ell(\S_t)}.
    \end{split}
  \end{equation*}
  Combining this estimate with (\ref{eq:lapse third step}) yields the conclusion that, for $t\leq t_a$, 
  \begin{equation}\label{eq:ENl first step}
    \begin{split}
      E_{(N),\ell} &\leq  2t^{-1}\big(\|k\|_{H^\ell(\S_t)}^2+\|e_0\varphi\|_{H^\ell(\S_t)}^2\big)^{1/2}\|N-1\|_{H^\ell(\S_t)}\\
      &\quad +C_\ell t^{\ve}(E_{(N),\ell}+\etot{\ell})+C_\ell\etot{\ell-1}^{1/2}E_{(N),\ell}^{1/2}.
    \end{split}
  \end{equation}
  Using the estimate $2ab\leq a^2/2+2b^2$ on the first term on the right hand side, it follows that
  \begin{equation*}
    \begin{split}
      \|\ear N\|_{H^{\ell}(\S_t)}^2+\tfrac{1}{2}t^{-2}\|N-1\|_{H^\ell(\S_t)}^2&\leq  2\big(\|k\|_{H^\ell(\S_t)}^2+\|e_0\varphi\|_{H^\ell(\S_t)}^2\big)\\
      &\quad +C_\ell t^{\ve}(E_{(N),\ell}+\etot{\ell})+C_\ell\etot{\ell-1}^{1/2}E_{(N),\ell}^{1/2}
      \end{split}
  \end{equation*}
  for $t\leq t_a$. By a similar argument, it is, from this estimate, straightforward to deduce that there is, for each $\ell$, a $t_\ell>0$ and a $C_\ell$
  such that
  \begin{equation}\label{eq:ENl ito El}
    E_{(N),\ell}\leq C_\ell\etot{\ell}
  \end{equation}
  for all $t\leq t_\ell$. Inserting this information into (\ref{eq:ENl first step}) yields
  \begin{equation}\label{eq:ENl final step}
    \begin{split}
      E_{(N),\ell} &\leq 2t^{-1}\big(\|k\|_{H^\ell(\S_t)}^2+\|e_0\varphi\|_{H^\ell(\S_t)}^2\big)^{1/2}\|N-1\|_{H^\ell(\S_t)}\\
      &\quad +C_\ell t^{\ve}\etot{\ell}+C_\ell\etot{\ell-1}^{1/2}\etot{\ell}^{1/2}.
    \end{split}
  \end{equation}
  Finally, combining (\ref{eq:lapse third step}) with (\ref{eq:ENl ito El}) yields (\ref{eq:lapse ito kEk etc}). Moreover,
  (\ref{eq:ENl final step}) and arguments similar to the above yield (\ref{eq:ear N Hl est}) and (\ref{eq:ENl final est lemma}).
  The estimate (\ref{eq:lapse estimate crude}) follows from (\ref{eq:ENl ito El}). The lemma follows
\end{proof}

\subsection{Estimating the frame}\label{ssection:frame est hod}
Next, we estimate the time derivative of $E_{(e,\omega),\ell}$ from below.
\begin{lemma}\label{lemma:dt Eeomega lb}
  Consider a solution to (\ref{eq: transport frame})--(\ref{eq:LapseEquation}) as described in Subsection~\ref{ssection:eqs}.
  Assume, in addition,  that $V$ is a $\s_V$-admissible potential and that the basic supremum assumptions, see
  Definition~\ref{def:bas sup ass}, are satisfied with $3\ve\leq \s_V$. Then, for each $\ell\in\nn{}$, there are constants $C_\ell$ and
  $t_\ell>0$ such that for $t<t_\ell$,
  \[
  \d_t E_{(e,\omega),\ell}\geq -2(1+\ve)t^{-1}E_{(e,\omega),\ell}-C_\ell t^{-1+\ve}\etot{\ell}-C_{\ell}t^{-1}\etot{\ell-1}^{1/2}\etot{\ell}^{1/2}.
  \]
\end{lemma}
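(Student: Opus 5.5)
We differentiate the weighted energy directly. Since $E_{(e,\omega),\ell}=t^{-2\ve}(\|e\|_{H^\ell}^2+\|\omega\|_{H^\ell}^2)$, we have
\[
\d_t E_{(e,\omega),\ell}=-2\ve t^{-1}E_{(e,\omega),\ell}+2t^{-2\ve}\textstyle{\sum}_{|\bfI|\leq\ell}\int_\S\big(E_{\bfI}e_I^i\,E_{\bfI}\d_te_I^i+E_{\bfI}\omega^I_i\,E_{\bfI}\d_t\omega^I_i\big)\md\mu_{h_{\refer}}.
\]
By (\ref{eq: transport frame}) and (\ref{eq: transport co-frame}), $\d_te_I^i=-Nk_{IJ}e_J^i$ and $\d_t\omega^I_i=Nk_{IJ}\omega^J_i$. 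The key is to apply the algorithm of Subsection~\ref{ssection:algorithm} to $t^{-\ve}Nk_{IJ}e_J^i$, viewed as a general term with $l_1=1$, $l_6=1$ and $l_{3,b}=1$ (the factor $t^{-\ve}e_J^i$). It is then a bad term with $\ell_\ve=1$ and $l_{3,a}+l_9=0$, with $f=t^{-\ve}e_J^i$. The co-frame term is handled in the same way.

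By (\ref{eq:EbfI bad term}),
\[
E_{\bfI}(t^{-\ve}Nk_{IJ}e_J^i)=t^{-\ve}k_{IJ}E_{\bfI}e_J^i+R_B.
\]
By (\ref{eq:RB est}) with $l_6=1$, $l_7=0$, we get $\|R_B\|_{L^2}\leq C_\ell t^{-1+\ve}\hetot{\ell}^{1/2}+C_\ell t^{-1}\hetot{\ell-1}^{1/2}$. Since $f$ is neither $t^{-1}(N-1)$ nor $e_IN$, the hat is present only because $l_1=1$, and we remove it using (\ref{eq:lapse estimate crude}). For the remainder contribution, multiply by $t^{-\ve}E_{\bfI}e_I^i$, whose $L^2$ norm is bounded by $\etot{\ell}^{1/2}$, and apply Cauchy--Schwarz. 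This produces exactly the error terms $C_\ell t^{-1+\ve}\etot{\ell}+C_\ell t^{-1}\etot{\ell-1}^{1/2}\etot{\ell}^{1/2}$.

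The leading contribution for the frame is $-2t^{-2\ve}\sum\int k_{IJ}E_{\bfI}e_I^iE_{\bfI}e_J^i$; the co-frame gives the analogous term with a plus sign. Here we use that $k$ is symmetric, so pointwise $|k_{IJ}X_IX_J|\leq|k|\,|X|^2$, where $|k|$ is the Euclidean norm on indices, which dominates the operator norm. By (\ref{eq:Cz k ezphi bd}), i.e.\ from (\ref{eq:almost asymptotic Hcon gen}), $t|k|\leq 1+Ct^{2\ve}$. Hence each leading term is bounded below by $-2t^{-1}(1+Ct^{2\ve})E_{(e,\omega),\ell}$, and the $Ct^{-1+2\ve}E_{(e,\omega),\ell}$ part is absorbed into $C_\ell t^{-1+\ve}\etot{\ell}$.

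Adding the $-2\ve t^{-1}E_{(e,\omega),\ell}$ from differentiating the weight gives the total $-2(1+\ve)t^{-1}E_{(e,\omega),\ell}$ plus errors, which is the claimed bound. We take $t_\ell$ small enough that (\ref{eq:lapse estimate crude}) applies.

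The main obstacle is the bookkeeping in the bad-term expansion: we must ensure that derivatives landing on $N$ or $k$ genuinely cost only $t^{-1}\etot{\ell-1}^{1/2}$ or $t^{-1+\ve}$. This is exactly what the extra $t^{-\ve}$ weight on $e$ and $\omega$ provides, through the bound $t^{1-3\ve}\|e\|_{C^1}\leq C$. We also need the leading coefficient to be the sharp constant $1$ and not merely $C$; this comes from the Hamiltonian-constraint-type bound (\ref{eq:almost asymptotic Hcon gen}), not from (\ref{eq:k phi be gen}).
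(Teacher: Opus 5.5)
Your proposal is correct and follows the paper's proof essentially step for step. You treat $t^{-\ve}Nk_{IJ}e_J^i$ (and its co-frame analogue) as a bad term via (\ref{eq:EbfI bad term})--(\ref{eq:RB est}), remove the hats using (\ref{eq:lapse estimate crude}), and bound the leading quadratic form by (\ref{eq:Cz k ezphi bd}); adding the $-2\ve t^{-1}E_{(e,\omega),\ell}$ coming from differentiating the weight then gives the stated bound.
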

\begin{proof}
  When the time derivative hits the factor $t^{-2\ve}$, the result is $-2\ve t^{-1}E_{(e,\omega),\ell}$. Next, let us focus on the energy associated
  with the frame. This means we have to estimate 
  \begin{equation}\label{eq:time der frame energy}
    2t^{-2\ve}\textstyle{\sum}_{I,i}\textstyle{\int}_{\S}E_{\bfI}e_I^i\cdot E_{\bfI}\d_t e_I^i\md\mu_{h_{\refer}}
    =2t^{-2\ve}\textstyle{\sum}_{I,J,i}\textstyle{\int}_{\S}E_{\bfI}e_I^i\cdot E_{\bfI}(-Nk_{IJ}e_J^i)\md\mu_{h_{\refer}}.
  \end{equation} 
  The expression $Nk_{IJ}t^{-\ve}e_J^i$ is a bad term. In particular, due to the arguments presented in Subsection~\ref{ssection:algorithm},
  it can be written
  \[
  E_{\bfI}(Nk_{IJ}t^{-\ve}e_J^i)=k_{IJ}t^{-\ve}E_{\bfI}e_J^i+R_B,
  \]
  where $R_B$ satisfies the estimate
  \[
  \|R_B\|_{L^2(\S_t)}\leq C_\ell t^{-1+\ve}\etot{\ell}^{1/2}+C_\ell t^{-1}\etot{\ell-1}^{1/2};
  \]
  cf. (\ref{eq:RB est}) and (\ref{eq:lapse estimate crude}). Next, due to (\ref{eq:Cz k ezphi bd}), 
  \[
  -2t^{-2\ve}\textstyle{\sum}_{I,J,i}\textstyle{\int}_{\S}k_{IJ}E_{\bfI}e_I^i\cdot E_{\bfI}e_J^i\md\mu_{h_{\refer}}
  \geq -2(1+Ct^{\ve})t^{-1}t^{-2\ve}\textstyle{\sum}_{I,i}\textstyle{\int}_{\S}|E_{\bfI}e_I^i|^2 \md\mu_{h_{\refer}}
  \]
  Due to this estimate and an analogous estimate concerning the energy for $\omega$, the lemma follows. 
\end{proof}

\subsection{Estimating the second fundamental form and the structure coefficients}
Next, we estimate the energy associated with the second fundamental form and the structure coefficients. 
\begin{lemma}\label{lemma:E gamma k est}
  Consider a solution to (\ref{eq: transport frame})--(\ref{eq:LapseEquation}) as described in Subsection~\ref{ssection:eqs}.
  Assume, in addition,  that $V$ is a $\s_V$-admissible potential and that the basic supremum assumptions, see
  Definition~\ref{def:bas sup ass}, are satisfied with $3\ve\leq \s_V$. Then, for each $\ell\in\nn{}$, there is a $t_\ell>0$ such that for $t\leq t_\ell$,
  \begin{equation}\label{eq:dt E gamma k}
    \begin{split}
      \d_t E_{(\gamma,k),\ell} = & -\textstyle{\sum}_{I,J,K,L}\sum_{|\bfI|\leq \ell}\int_{\S}k_{IL}E_{\bfI}(\gamma_{IJK})E_{\bfI}(\gamma_{LJK})\md\mu_{h_{\refer}}\\
      & -\textstyle{\sum}_{I,J,K,L}\sum_{|\bfI|\leq \ell}\int_{\S}k_{JL}E_{\bfI}(\gamma_{IJK})E_{\bfI}(\gamma_{ILK})\md\mu_{h_{\refer}}\\
      & +\textstyle{\sum}_{I,J,K,L}\sum_{|\bfI|\leq \ell}\int_{\S}k_{KL}E_{\bfI}(\gamma_{IJK})E_{\bfI}(\gamma_{IJL})\md\mu_{h_{\refer}}\\
      & -2t^{-1}\textstyle{\sum}_{I,J}\sum_{|\bfI|\leq \ell}\int_{\S}|E_{\bfI}(k_{IJ})|^2\md\mu_{h_{\refer}}\\
      & -2t^{-1}\textstyle{\sum}_{I,J}\sum_{0<|\bfI|\leq \ell}\int_{\S}k_{IJ}E_{\bfI}(N)E_{\bfI}(k_{IJ})\md\mu_{h_{\refer}}\\        
      & -2\textstyle{\sum}_{I,J,L}\sum_{|\bfI|\leq \ell}\int_{\S}k_{LJ}E_{\bfI}(\g_{LII})E_{\bfI}(e_{J} (N))\md\mu_{h_{\refer}}\\
      & -2\textstyle{\sum}_{J}\sum_{|\bfI|\leq \ell}\int_{\S}e_{0}(\varphi) E_{\bfI}[e_{J}(\varphi)]E_{\bfI}(e_{J} (N))\md\mu_{h_{\refer}}\\
      & +2\textstyle{\sum}_{I,J,L,K}\sum_{|\bfI|\leq \ell}\int_{\S}(k_{LJ}E_{\bfI}(\g_{LII}) + k_{IL}E_{\bfI}(\g_{IJL}))E_{\bfI}(\g_{JKK})\md\mu_{h_{\refer}}\\
      & +2\textstyle{\sum}_{J,K}\sum_{|\bfI|\leq \ell}\int_{\S}e_{0}(\varphi) E_{\bfI}[e_{J}(\varphi)]E_{\bfI}(\g_{JKK})\md\mu_{h_{\refer}}+\mR,
    \end{split}
  \end{equation}
  where
  \[
  |\mR|\leq C_\ell t^{-1+\ve}\etot{\ell}+C_\ell t^{-1}\etot{\ell-1}^{1/2}\etot{\ell}^{1/2}.
  \]
\end{lemma}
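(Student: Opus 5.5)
The plan is to differentiate $E_{(\gamma,k),\ell}$ in $t$ and substitute the evolution equations (\ref{eq:concoef}) and (\ref{eq:sff}), multiplied by $N$ since $\d_t=Ne_0$. For each multi-index $\bfI$ with $|\bfI|\leq\ell$ this gives
\[
\d_t\tfrac12\|E_{\bfI}\gamma\|^2_{L^2}=\textstyle\int_\S E_{\bfI}\gamma_{IJK}E_{\bfI}(Ne_0\gamma_{IJK})\md\mu_{h_{\refer}},
\]
and the analogous identity with $k_{IJ}$ and a factor $2$. We then sort the right hand sides with the algorithm of Subsection~\ref{ssection:algorithm}. The zeroth order terms $-Nk_{IL}\gamma_{LJK}$, etc., in (\ref{eq:concoef}) are bad terms with $f=\gamma$. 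By (\ref{eq:EbfI bad term}) and (\ref{eq:RB est}), combined with (\ref{eq:lapse estimate crude}), they produce the first three lines of (\ref{eq:dt E gamma k}), up to errors $C_\ell t^{-1+\ve}\etot{\ell}+C_\ell t^{-1}\etot{\ell-1}^{1/2}\etot{\ell}^{1/2}$. The term $-t^{-1}Nk_{IJ}$ in (\ref{eq:sff}) is a leading order term. By (\ref{eq:leading order terms}) it produces the fourth line (from $E_{\bfI}k_{IJ}$) and the fifth line (from $E_{\bfI}N$, after replacing $N$ by $1$ elsewhere). The quadratic $\gamma\gamma$ terms, the terms $e_I\varphi e_J\varphi$ and $V$, the terms $e_I(N)k_{JK}$ from expanding $e_{[I}(Nk_{J]K})$, and the terms $e_K(N)\gamma_{K(IJ)}$ all have $\ell_\ve\geq2$ or $l_9\geq1$, and therefore are good terms bounded by (\ref{eq:good term}).

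The principal part is the second order terms: $-2e_{[I}(k_{J]K})$ in $\d_t\gamma$, and $e_{(I}e_{J)}N-e_K(\gamma_{K(IJ)})-e_{(I}\gamma_{J)KK}$ in $\d_tk$, all multiplied by $N$. Here we commute $E_{\bfI}$ past $e_I$ using (\ref{eq:EbfI eI comm est}), and move $N$ outside at the cost of good or bad errors. Then we integrate by parts in the top order pairings using (\ref{eq:divergence formula}); the divergence error is controlled by (\ref{eq:diveIest gen}). The standard symmetric hyperbolic cancellation of FRS, with the weights $\tfrac12$ on $\gamma$ and $1$ on $k$, makes $\int E_{\bfI}\gamma_{IJK}\,e_I E_{\bfI}k_{JK}$ cancel against $-\int E_{\bfI}k_{JK}\,e_I E_{\bfI}\gamma_{IJK}$, up to the divergence error. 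The leftover terms are of two kinds: the ones involving $e_Ie_JN$ and the ones with the trace $\gamma_{JKK}=a_J$.

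For the lapse term, integrating by parts gives $-\int E_{\bfI}(e_Ik_{IJ})E_{\bfI}(e_JN)$. We replace $e_Ik_{IJ}$ by the right hand side of the momentum constraint (\ref{eq:MC}), namely $\gamma_{LII}k_{LJ}+\gamma_{IJL}k_{IL}+e_0\varphi\, e_J\varphi$. These are bad terms, and (\ref{eq:EbfI bad term}) turns them into the sixth and seventh lines. Here the $\gamma_{IJL}k_{IL}$ contribution pairs antisymmetrically and drops out, or is absorbed, since $\gamma_{IJL}$ is antisymmetric in $I,J$ and $k$ is symmetric. The $e_{(I}\gamma_{J)KK}$ term is handled the same way: after integration by parts it becomes $\int E_{\bfI}(e_Jk_{JI})E_{\bfI}a_I$, and using (\ref{eq:MC}) again yields the eighth and ninth lines.

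The main obstacle is this top order bookkeeping. One must check that every second order term either cancels exactly or is converted via (\ref{eq:MC}) into bad terms whose leading coefficients are precisely $k$ or $e_0\varphi$ times a numerical factor, with no stray factors of $t^{-1}$ times $E_{\bfI}\gamma$. The key points are the correct symmetrisations $[IJ]$ and $(IJ)$ and tracking that the commutator errors carry an extra $t^{\ve}$ or drop an order. All remaining errors are then collected into $\mR$ and bounded using (\ref{eq:lapse estimate crude}), choosing $t_\ell$ small enough that the lapse lemma applies.
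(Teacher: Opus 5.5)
Your overall strategy matches the paper's: differentiate the energy, sort terms into good, bad and leading order ones, commute $E_{\bfI}$ past $e_I$, integrate by parts, use the FRS cancellation between the $\g$- and $k$-equations, and insert the momentum constraint (\ref{eq:MC}). However, two of your steps are wrong, and your final formula comes out right only because the two errors offset each other.

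First, $e_{[I}(N)k_{J]K}$ is not a good term. It has $l_4=l_6=1$ and all other $l_i=0$, so $\ell_{\ve}=1$ and $l_{3,a}+l_9=0$, which makes it a bad term. By (\ref{eq:EbfI bad term}), its contribution to $\d_tE_{(\g,k),\ell}$ is, up to admissible errors,
\[
-2\textstyle{\sum}_{I,J,K}\sum_{|\bfI|\leq\ell}\int_{\S}k_{JK}E_{\bfI}(\g_{IJK})E_{\bfI}(e_IN)\md\mu_{h_{\refer}}.
\]
By (\ref{eq:lapse estimate crude}) this is of size $t^{-1}\etot{\ell}$, with no gain of $t^{\ve}$ and no lower order energy, so it cannot be put into $\mR$.

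Second, the contribution $-2\sum\int k_{IL}E_{\bfI}(\g_{IJL})E_{\bfI}(e_JN)$ produced by the momentum constraint does not drop out. $\g_{IJL}$ is antisymmetric in its first two indices, but the contraction with the symmetric $k_{IL}$ is over $I$ and $L$, where $\g_{IJL}$ has no symmetry. Your own target formula shows this: the eighth line of (\ref{eq:dt E gamma k}) retains precisely $k_{IL}E_{\bfI}(\g_{IJL})$, paired with $E_{\bfI}(\g_{JKK})$. Absorbing it into the error is not an option either, since it is again of order $t^{-1}\etot{\ell}$.

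What actually happens is an exact cancellation between these two terms. Relabel $(I,J,K)\to(J,I,L)$ in the display above and use $\g_{JIL}=-\g_{IJL}$; it becomes $+2\sum\int k_{IL}E_{\bfI}(\g_{IJL})E_{\bfI}(e_JN)$, which is the negative of the momentum constraint term. This is the cancellation the paper invokes at the end of its proof, between the first term on the right hand side of (\ref{eq:fbt good comb}) and the $\g_{IJL}k_{IL}$-term in (\ref{eq:contr MC ft exp lb}). You need to keep $k_{JK}E_{\bfI}(e_IN)$ as the leading part of a bad term and exhibit this cancellation explicitly. With that correction, the rest of your bookkeeping goes through as you describe.
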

\begin{proof}
  Note, to begin with, that (\ref{eq:concoef}) and (\ref{eq:sff}) can be written
  \begin{align}
    \d_t\g_{IJK} &= - 2 e_{[I}(Nk_{J]K}) +R_{\gamma,IJK},\\
    \d_tk_{IJ} &= e_{(I}e_{J)} (N) - e_{K} (N\g_{K(IJ)})
    - Ne_{(I}(\g_{J)KK})+R_{k,IJ},\label{eq:dtkIJ}
  \end{align}
  where
  \begin{subequations}
    \begin{align}
    R_{\gamma,IJK} := & - Nk_{IL} \g_{LJK} - Nk_{JL} \g_{ILK} + Nk_{KL} \g_{IJL},\label{eq:Rgammadef}\\
    R_{k,IJ} := & - t^{-1}Nk_{IJ} +N\g_{KLL}\g_{K(IJ)} + N\g_{I(KL)}\g_{J(KL)} - \tfrac14 N\g_{KLI} \g_{KLJ}\label{eq:Rkdef}\\
                      & + Ne_I(\varphi)e_J(\varphi)+\tfrac{2N}{n-1}(V\circ\varphi) \delta_{IJ}.\nonumber
    \end{align}
  \end{subequations}  
  Differentiating the energy for $k$ and $\gamma$ then yields
  \begin{equation*}
    \begin{split}
      \d_t E_{(\gamma,k),\ell} :={} & \textstyle{\sum}_{I,J,K}\sum_{|\bfI|\leq \ell}\int_{\S}E_{\bfI}(\gamma_{IJK})E_{\bfI}(- 2 e_{[I}(Nk_{J]K}) +R_{\gamma,IJK})\md\mu_{h_{\refer}}\\
      &\negmedspace +2\textstyle{\sum}_{I,J}\sum_{|\bfI|\leq \ell}\int_{\S}E_{\bfI}(k_{IJ})E_{\bfI}(e_{(I}e_{J)} (N) - e_{K} (N\g_{K(IJ)}))\md\mu_{h_{\refer}}\\
      &\negmedspace +2\textstyle{\sum}_{I,J}\sum_{|\bfI|\leq \ell}\int_{\S}E_{\bfI}(k_{IJ})E_{\bfI}(- Ne_{(I}(\g_{J)KK})+R_{k,IJ})\md\mu_{h_{\refer}}.
    \end{split}
  \end{equation*}  
  The terms arising from $R_{\gamma,IJK}$ and $R_{k,IJ}$ can be estimated directly in terms of the energy. However, the remaining terms contain too many
  derivatives. In order to estimate the corresponding expressions, we need to integrate by parts, appeal to the momentum constraint, etc. In order
  to emphasise this division it is useful to, using the symmetries of $\g_{IJK}$ and $k_{IJ}$, rewrite the time derivative of the energy as
  \begin{equation*}
    \begin{split}
      \d_t E_{(\gamma,k),\ell} :={} & \textstyle{\sum}_{I,J,K}\sum_{|\bfI|\leq \ell}\int_{\S}E_{\bfI}(\gamma_{IJK})E_{\bfI}(- 2 e_{I}(Nk_{JK}))\md\mu_{h_{\refer}}\\
      &\negmedspace +2\textstyle{\sum}_{I,J}\sum_{|\bfI|\leq \ell}\int_{\S}E_{\bfI}(k_{IJ})E_{\bfI}(e_{I}e_{J} (N) - e_{K} (N\g_{KIJ}))\md\mu_{h_{\refer}}\\
      &\negmedspace +2\textstyle{\sum}_{I,J}\sum_{|\bfI|\leq \ell}\int_{\S}E_{\bfI}(k_{IJ})E_{\bfI}(- Ne_{I}(\g_{JKK}))\md\mu_{h_{\refer}}\\
      &\negmedspace +\textstyle{\sum}_{I,J,K}\sum_{|\bfI|\leq \ell}\int_{\S}E_{\bfI}(\gamma_{IJK})E_{\bfI}(R_{\gamma,IJK})\md\mu_{h_{\refer}}\\
      &\negmedspace +2\textstyle{\sum}_{I,J}\sum_{|\bfI|\leq \ell}\int_{\S}E_{\bfI}(k_{IJ})E_{\bfI}(R_{k,IJ})\md\mu_{h_{\refer}}.
    \end{split}
  \end{equation*}
  Let us begin by considering
  \begin{equation}\label{eq:gamma fbt}
    \textstyle{\sum}_{I,J,K}\sum_{|\bfI|\leq \ell}\textstyle{\int}_{\S}E_{\bfI}(\gamma_{IJK})E_{\bfI}(- 2 e_{I}(Nk_{JK}))\md\mu_{h_{\refer}}.
  \end{equation}
  Due to (\ref{eq:k phi be gen}), (\ref{eq:N be gen}), (\ref{eq:est lead order terms}), (\ref{eq:EbfI eI comm est}) and (\ref{eq:lapse estimate crude}), this
  expression equals
  \begin{equation}\label{eq:k main term pi}
    -2\textstyle{\sum}_{I,J,K}\sum_{|\bfI|\leq \ell}\textstyle{\int}_{\S}E_{\bfI}(\gamma_{IJK}) e_{I}E_{\bfI}(Nk_{JK})\md\mu_{h_{\refer}}
  \end{equation}
  up to terms that can be estimated by $C_\ell t^{-1+\ve}\etot{\ell}$. Next, note that $e_IE_{\bfI}(Nk_{JK})$ can be written as a sum of
  \begin{equation}\label{eq:imp terms k pi}
    e_I[NE_{\bfI}(k_{JK})],\ \
    E_{\bfI}(e_IN)k_{JK},
  \end{equation}
  of
  \begin{equation}\label{eq:junk terms k ft}
    [e_I,E_{\bfI}](N)k_{JK},\ \
    E_{\bfI}(N)e_Ik_{JK}
  \end{equation}
  and linear combinations of terms of the form
  \begin{equation}\label{eq:junk terms k st}
    e_I[E_{\bfI_1}E_{i_1}(N)E_{\bfI_2}E_{i_2}(k_{JK})].
  \end{equation}
  Moreover, if $\bfI=0$, then the only term we need to handle is the first expression in (\ref{eq:imp terms k pi}). The contribution to
  (\ref{eq:k main term pi}) from the first expression in (\ref{eq:imp terms k pi}) can be rewritten by integrating by parts; cf.
  (\ref{eq:divergence formula}). When doing so, there is one term involving the divergence of $e_I$. Due to (\ref{eq:diveIest gen}), this
  term can be estimated in absolute value by $C_\ell t^{-1+3\ve}\etot{\ell}$. What remains to be considered is
  \[
  2\textstyle{\sum}_{I,J,K}\sum_{|\bfI|\leq \ell}\textstyle{\int}_{\S}e_IE_{\bfI}(\gamma_{IJK}) NE_{\bfI}(k_{JK})\md\mu_{h_{\refer}}.
  \]
  Appealing to (\ref{eq:e om g phi be gen}) and (\ref{eq:EbfI eI comm est}), we can commute $e_I$ and $E_{\bfI}$ in the first factor in the
  integrand; the commutator term can be estimated by $C_\ell t^{-1+3\ve}\etot{\ell}$. What remains is then
  \begin{equation}\label{eq:gamma fbt frt}
    2\textstyle{\sum}_{I,J,K}\sum_{|\bfI|\leq \ell}\textstyle{\int}_{\S}E_{\bfI}(e_I\gamma_{IJK}) NE_{\bfI}(k_{JK})\md\mu_{h_{\refer}}.
  \end{equation}
  The contribution (\ref{eq:k main term pi}) from the second expression in (\ref{eq:imp terms k pi}), we keep as it is:
  \begin{equation}\label{eq:gamma fbt srt}
    -2\textstyle{\sum}_{I,J,K}\sum_{|\bfI|\leq \ell}\textstyle{\int}_{\S}E_{\bfI}(\gamma_{IJK}) E_{\bfI}(e_{I}N)k_{JK}\md\mu_{h_{\refer}}.
  \end{equation}
  In order to estimate the contribution to (\ref{eq:k main term pi}) from the first expression in (\ref{eq:junk terms k ft}), it is sufficient
  to appeal to (\ref{eq:k phi be gen}), (\ref{eq:N be gen}), (\ref{eq:EbfI eI comm est}) and (\ref{eq:lapse estimate crude}). The
  contribution can be estimated by
  \begin{equation}\label{eq:protot error term}
    C_\ell t^{-1+\ve}\etot{\ell}
  \end{equation}
  (in fact, a better estimate holds). The contribution to (\ref{eq:k main term pi}) from the second expression in (\ref{eq:junk terms k ft})
  can, keeping in mind that we can assume that $\bfI\neq 0$, also be bounded by (\ref{eq:protot error term}); cf. (\ref{eq:e om g phi be gen}),
  (\ref{eq:k phi be gen}) and (\ref{eq:lapse estimate crude}). Finally, due to Gagliardo-Nirenberg estimates, (\ref{eq:estdynvarmfltot gen})
  and (\ref{eq:lapse estimate crude}), the contribution from (\ref{eq:junk terms k st}) can be bounded by
  (\ref{eq:protot error term}). To summarise, up to terms that can be estimated in absolute value by (\ref{eq:protot error term}), the expression
  (\ref{eq:gamma fbt}) can be written as a sum of (\ref{eq:gamma fbt frt}) and (\ref{eq:gamma fbt srt}). 

  Next, consider
  \begin{equation}\label{eq:gamma tmd}
    2\textstyle{\sum}_{I,J}\sum_{|\bfI|\leq \ell}\int_{\S}E_{\bfI}(k_{IJ})E_{\bfI}(- e_{K} (N\g_{KIJ}))\md\mu_{h_{\refer}}.
  \end{equation}
  Note that $E_{\bfI}(- e_{K} (N\g_{KIJ}))$ can be written as a sum of 
  \begin{equation}\label{eq:gamma tmd fjt}
    E_{\bfI}(- e_{K} (N)\g_{KIJ}),\ \ \
    E_{\bfI}(-Ne_K\g_{KIJ})+NE_{\bfI}(e_K\g_{KIJ}),
  \end{equation}
  and
  \[
    -NE_{\bfI}(e_K\g_{KIJ}).
  \]
  Since the first expression in (\ref{eq:gamma tmd fjt}) is a good term, the contribution to (\ref{eq:gamma tmd}) from the first expression in
  (\ref{eq:gamma tmd fjt}) can be bounded by (\ref{eq:protot error term}); cf. (\ref{eq:good term}) and (\ref{eq:lapse estimate crude}).
  Appealing to Gagliardo-Nirenberg estimates, (\ref{eq:estdynvarmfltot gen}) and (\ref{eq:lapse estimate crude}), the contribution to
  (\ref{eq:gamma tmd}) from the second expression in (\ref{eq:gamma tmd fjt}) can be bounded by (\ref{eq:protot error term}) as well. What remains is
  \[
    -2\textstyle{\sum}_{I,J,K}\sum_{|\bfI|\leq \ell}\int_{\S}E_{\bfI}(k_{IJ})NE_{\bfI}(e_{K} \g_{KIJ})\md\mu_{h_{\refer}}.
  \]
  Clearly, this term cancels (\ref{eq:gamma fbt frt}). To summarise,
  \begin{equation}\label{eq:fbt good comb}
    \begin{split}
      & \textstyle{\sum}_{I,J,K}\sum_{|\bfI|\leq \ell}\textstyle{\int}_{\S}E_{\bfI}(\gamma_{IJK})E_{\bfI}(- 2 e_{I}(Nk_{JK}))\md\mu_{h_{\refer}}\\
      & +2\textstyle{\sum}_{I,J,K}\sum_{|\bfI|\leq \ell}\int_{\S}E_{\bfI}(k_{IJ})E_{\bfI}(- e_{K} (N\g_{KIJ}))\md\mu_{h_{\refer}}\\
      = & -2\textstyle{\sum}_{I,J,K}\sum_{|\bfI|\leq \ell}\textstyle{\int}_{\S}E_{\bfI}(\gamma_{IJK}) E_{\bfI}(e_{I}N)k_{JK}\md\mu_{h_{\refer}}+\dots,
    \end{split}
  \end{equation}
  where the dots signify terms that can be estimated in absolute value by (\ref{eq:protot error term}). 
  
  Next, let us turn to
  \begin{equation}\label{eq:eIeJN fbt}
    2\textstyle{\sum}_{I,J}\sum_{|\bfI|\leq \ell}\int_{\S}E_{\bfI}(k_{IJ})E_{\bfI}(e_{I}e_{J} (N))\md\mu_{h_{\refer}}.
  \end{equation}
  Note that $E_{\bfI}(e_{I}e_{J}(N))$ can be written as a sum of
  \begin{equation}\label{eq:eIeJ fbt division}
    [E_{\bfI},e_I](e_JN),\ \ \
    e_IE_{\bfI}(e_JN).
  \end{equation}
  Due to (\ref{eq:ear N gen}), (\ref{eq:EbfI eI comm est}) and (\ref{eq:lapse estimate crude}), the contribution to (\ref{eq:eIeJN fbt}) from the
  first expression in (\ref{eq:eIeJ fbt division}) can be bounded by (\ref{eq:protot error term}). The contribution of the second
  expression in (\ref{eq:eIeJ fbt division}) to (\ref{eq:eIeJN fbt}) can be estimated by first integrating by parts. This results in two terms. The term
  which involves the divergence of $e_I$ can be bounded by (\ref{eq:protot error term}). The remaining term is
  \[
    -2\textstyle{\sum}_{I,J}\sum_{|\bfI|\leq \ell}\int_{\S}e_IE_{\bfI}(k_{IJ})E_{\bfI}(e_{J} (N))\md\mu_{h_{\refer}}.
  \]
  Commuting $e_I$ and $E_{\bfI}$ in the first factor in the integrand leads to terms that can be estimated in absolute value by (\ref{eq:protot error term});
  cf. (\ref{eq:k phi be gen}) and (\ref{eq:EbfI eI comm est}). This means that we are left with the problem of estimating
  \begin{equation}\label{eq:contr MC ft}
    -2\textstyle{\sum}_{I,J}\sum_{|\bfI|\leq \ell}\int_{\S}E_{\bfI}(e_Ik_{IJ})E_{\bfI}(e_{J} (N))\md\mu_{h_{\refer}}.
  \end{equation}
  In order to estimate this expression, we appeal to (\ref{eq:MC}). However, we only write down the details later.

  Next, consider
  \begin{equation}\label{eq:eIgJKK k fbt}
    2\textstyle{\sum}_{I,J,K}\sum_{|\bfI|\leq \ell}\int_{\S}E_{\bfI}(k_{IJ})E_{\bfI}(- Ne_{I}(\g_{JKK}))\md\mu_{h_{\refer}}.
  \end{equation}
  Note that
  \begin{equation}\label{eq:eIgJKK k fbt div}
    - Ne_{I}(\g_{JKK})=e_{I}(N)\g_{JKK}- e_{I}(N\g_{JKK}).
  \end{equation}
  The first term on the right hand side of (\ref{eq:eIgJKK k fbt div}) is a good term, and the corresponding contribution to (\ref{eq:eIgJKK k fbt})
  can be estimated by (\ref{eq:protot error term}) due to (\ref{eq:good term}) and (\ref{eq:lapse estimate crude}). What remains is
  \[
    -2\textstyle{\sum}_{I,J,K}\sum_{|\bfI|\leq \ell}\int_{\S}E_{\bfI}(k_{IJ})E_{\bfI}e_{I}(N\g_{JKK})\md\mu_{h_{\refer}}.
  \]
  At this point we can commute $E_{\bfI}$ and $e_I$, integrate by parts, and commute $e_I$ and $E_{\bfI}$ again. This results, by arguments similar to
  the above, in
  \begin{equation}\label{eq:contr MC st}
    2\textstyle{\sum}_{I,J,K}\sum_{|\bfI|\leq \ell}\int_{\S}E_{\bfI}(e_Ik_{IJ})E_{\bfI}(N\g_{JKK})\md\mu_{h_{\refer}}
  \end{equation}
  plus terms that can be estimated in absolute value by (\ref{eq:protot error term}). Again, we can appeal to (\ref{eq:MC}) in order to estimate this
  expression.

  Next, let us turn to
  \begin{equation}\label{eq:Rgamma term}
    \textstyle{\sum}_{I,J,K}\sum_{|\bfI|\leq \ell}\int_{\S}E_{\bfI}(\gamma_{IJK})E_{\bfI}(R_{\gamma,IJK})\md\mu_{h_{\refer}},
  \end{equation}
  where $R_{\g,IJK}$ is introduced in (\ref{eq:Rgammadef}). The three terms appearing in $R_{\gamma,IJK}$ are all bad terms; cf.
  Subsection~\ref{ssection:algorithm}. In fact, due to (\ref{eq:EbfI bad term}), (\ref{eq:RB est}) and (\ref{eq:lapse estimate crude}),
  \begin{equation*}
    \begin{split}
      & \textstyle{\sum}_{I,J,K}\sum_{|\bfI|\leq \ell}\int_{\S}E_{\bfI}(\gamma_{IJK})E_{\bfI}(R_{\gamma,IJK})\md\mu_{h_{\refer}}\\
      = & -\textstyle{\sum}_{I,J,K,L}\sum_{|\bfI|\leq \ell}\int_{\S}k_{IL}E_{\bfI}(\gamma_{IJK})E_{\bfI}(\gamma_{LJK})\md\mu_{h_{\refer}}\\
      & -\textstyle{\sum}_{I,J,K,L}\sum_{|\bfI|\leq \ell}\int_{\S}k_{JL}E_{\bfI}(\gamma_{IJK})E_{\bfI}(\gamma_{ILK})\md\mu_{h_{\refer}}\\
      & +\textstyle{\sum}_{I,J,K,L}\sum_{|\bfI|\leq \ell}\int_{\S}k_{KL}E_{\bfI}(\gamma_{IJK})E_{\bfI}(\gamma_{IJL})\md\mu_{h_{\refer}}+\dots,
    \end{split}
  \end{equation*}
  where the dots represent terms that can be estimated in absolute value by
  \begin{equation}\label{eq:protot error term vtwo}
    C_\ell t^{-1+\ve}\etot{\ell}+C_\ell t^{-1}\etot{\ell-1}^{1/2}\etot{\ell}^{1/2}.
  \end{equation}
  Next, consider
  \begin{equation}\label{eq:Rk contribution}
    2\textstyle{\sum}_{I,J}\sum_{|\bfI|\leq \ell}\int_{\S}E_{\bfI}(k_{IJ})E_{\bfI}(R_{k,IJ})\md\mu_{h_{\refer}},
  \end{equation}
  where $R_{k,IJ}$ is given by (\ref{eq:Rkdef}). The first term appearing in (\ref{eq:Rkdef}) is a leading order term; see
  Subsection~\ref{ssection:algorithm}. In fact, due to (\ref{eq:leading order terms}), (\ref{eq:leading order terms RB})
  and (\ref{eq:lapse estimate crude}), it gives the following contribution to (\ref{eq:Rk contribution}):
  \begin{equation*}
    \begin{split}
      & -2t^{-1}\textstyle{\sum}_{I,J}\sum_{|\bfI|\leq \ell}\int_{\S}|E_{\bfI}(k_{IJ})|^2\md\mu_{h_{\refer}}\\
      & -2t^{-1}\textstyle{\sum}_{I,J}\sum_{0<|\bfI|\leq \ell}\int_{\S}k_{IJ}E_{\bfI}(N)E_{\bfI}(k_{IJ})\md\mu_{h_{\refer}}+\dots,
    \end{split}
  \end{equation*}  
  where the dots represent terms that can be estimated in absolute value by (\ref{eq:protot error term vtwo}). The remaining terms appearing
  in $R_{k,IJ}$ are all good terms, and their contribution can, due to (\ref{eq:good term}) and (\ref{eq:lapse estimate crude}), be estimated by
  (\ref{eq:protot error term vtwo}). To summarise,
  \begin{equation*}
    \begin{split}
      & 2\textstyle{\sum}_{I,J}\sum_{|\bfI|\leq \ell}\int_{\S}E_{\bfI}(k_{IJ})E_{\bfI}(R_{k,IJ})\md\mu_{h_{\refer}}\\
      = & -2t^{-1}\textstyle{\sum}_{I,J}\sum_{|\bfI|\leq \ell}\int_{\S}|E_{\bfI}(k_{IJ})|^2\md\mu_{h_{\refer}}\\
      & -2t^{-1}\textstyle{\sum}_{I,J}\sum_{0<|\bfI|\leq \ell}\int_{\S}k_{IJ}E_{\bfI}(N)E_{\bfI}(k_{IJ})\md\mu_{h_{\refer}}+\dots,
    \end{split}
  \end{equation*}
  where the dots represent terms that can be estimated in absolute value by (\ref{eq:protot error term vtwo}).
  At this point, we return to (\ref{eq:contr MC ft}). Due to (\ref{eq:MC}), we need to estimate
  \begin{equation}\label{eq:contr MC ft exp}
    \begin{split}
      -2\textstyle{\sum}_{J}\sum_{|\bfI|\leq \ell}\int_{\S}E_{\bfI}(\g_{LII}k_{LJ} + \g_{IJL}k_{IL} + e_{0}(\varphi) e_{J}(\varphi))E_{\bfI}(e_{J} (N))\md\mu_{h_{\refer}}.
    \end{split}
  \end{equation}
  Since all the terms appearing on the right hand side of (\ref{eq:MC}) are bad terms, (\ref{eq:EbfI bad term}), (\ref{eq:RB est}),
  (\ref{eq:lapse estimate crude}) and the arguments preceding (\ref{eq:contr MC ft}) yield the conclusion that 
    \begin{equation}\label{eq:contr MC ft exp lb}
      \begin{split}
        & 2\textstyle{\sum}_{I,J}\sum_{|\bfI|\leq \ell}\int_{\S}E_{\bfI}(k_{IJ})E_{\bfI}(e_{I}e_{J} (N))\md\mu_{h_{\refer}}\\
        = & -2\textstyle{\sum}_{J}\sum_{|\bfI|\leq \ell}\int_{\S}(k_{LJ}E_{\bfI}(\g_{LII}) - k_{IL}E_{\bfI}(\g_{JIL})
        + e_{0}(\varphi) E_{\bfI}[e_{J}(\varphi)])E_{\bfI}(e_{J} (N))\md\mu_{h_{\refer}}+\dots,
      \end{split}
    \end{equation}
    where the dots represent terms that can be estimated in absolute value by (\ref{eq:protot error term vtwo}).
    
    Finally, we need to estimate the contribution from (\ref{eq:contr MC st}). Again, due to (\ref{eq:MC}), we need to estimate
    \[
      2\textstyle{\sum}_{J,K}\sum_{|\bfI|\leq \ell}\int_{\S}E_{\bfI}(\g_{LII}k_{LJ} + \g_{IJL}k_{IL} + e_{0}(\varphi) e_{J}(\varphi))E_{\bfI}(N\g_{JKK})\md\mu_{h_{\refer}}.
    \]
    Note that $N\g_{JKK}$ can be replaced by $\g_{JKK}$ up to an error that can be estimated in absolute value by $C_\ell t^{-1+\ve}\etot{\ell}$. For that reason,
    we, from now on, effectively set $N$ equal to $1$. The remaining terms can be estimated as in the argument concerning (\ref{eq:contr MC ft}).
    Combining these arguments with the observations made immediately prior to (\ref{eq:contr MC st}), it follows that 
    \begin{equation}\label{eq:contr MC st exp lb}
      \begin{split}
        & 2\textstyle{\sum}_{I,J,K}\sum_{|\bfI|\leq \ell}\int_{\S}E_{\bfI}(k_{IJ})E_{\bfI}(-Ne_I(\g_{JKK}))\md\mu_{h_{\refer}}\\
        ={} & 2\textstyle{\sum}_{J,K}\sum_{|\bfI|\leq \ell}\int_{\S}(k_{LJ}E_{\bfI}(\g_{LII}) + k_{IL}E_{\bfI}(\g_{IJL})
        + e_{0}(\varphi) E_{\bfI}[e_{J}(\varphi)])E_{\bfI}(\g_{JKK})\md\mu_{h_{\refer}}+\dots,
      \end{split}
    \end{equation}
    where the dots represent terms that can be estimated in absolute value by (\ref{eq:protot error term vtwo}).
    At this point, we can sum up the estimates in order to obtain (\ref{eq:dt E gamma k}), where we used the fact that there is
    a cancellation between the first term on the right hand side of (\ref{eq:fbt good comb}) and one of the terms arising from
    (\ref{eq:contr MC ft exp lb}). 
\end{proof}

\subsection{The scalar field}\label{ssection:the scalar field hod}
Next, we turn to energy estimates for the scalar field. 
\begin{lemma}\label{lemma:sf}
  Consider a solution to (\ref{eq: transport frame})--(\ref{eq:LapseEquation}) as described in Subsection~\ref{ssection:eqs}.
  Assume, in addition,  that $V$ is a $\s_V$-admissible potential and that the basic supremum assumptions, see
  Definition~\ref{def:bas sup ass}, are satisfied with $3\ve\leq \s_V$. Then, for each $\ell\in\nn{}$, there is a $t_\ell>0$ such that for $t\leq t_\ell$,
  \begin{equation}\label{eq:dt Ephi est}
    \begin{split}
      \d_tE_{(\varphi),\ell} = &  -2t^{-1}\textstyle{\sum}_{|\bfI|\leq \ell}\int_{\S}|E_{\bfI}(e_{0}\varphi)|^2\md\mu_{h_{\refer}}
      -2\textstyle{\sum}_{|\bfI|\leq \ell}\textstyle{\sum}_{I,J}\int_{\S}k_{IJ}(E_{\bfI}e_I\varphi)(E_{\bfI}e_J\varphi)\md\mu_{h_{\refer}}\\
      & -2t^{-1}(1-\ve)t^{-2+2\ve}\textstyle{\sum}_{|\bfI|\leq \ell}\int_{\S}|E_{\bfI}\varphi|^2\md\mu_{h_{\refer}}\\
      & +2\textstyle{\sum}_{|\bfI|\leq \ell}\textstyle{\sum}_{I}\int_{\S}e_0(\varphi)(E_{\bfI}e_I\varphi)[E_{\bfI}(e_IN)]\md\mu_{h_{\refer}} \\
      & -2t^{-1}\textstyle{\sum}_{0<|\bfI|\leq \ell}\int_{\S}e_{0}(\varphi)E_{\bfI}(N)E_{\bfI}(e_0\varphi)\md\mu_{h_{\refer}}+\mR,
    \end{split}
  \end{equation}
  where
  \[
  |\mR|\leq C_\ell t^{-1+\ve}\etot{\ell}+C_\ell t^{-1}\etot{\ell-1}^{1/2}\etot{\ell}^{1/2}.
  \]
\end{lemma}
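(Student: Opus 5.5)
We compute the time derivative of $E_{(\varphi),\ell}$ term by term, using $\d_t=Ne_0$, (\ref{eq:spatial scalar field derivative}) and (\ref{eq:scalarfield}), and apply the algorithm of Subsection~\ref{ssection:algorithm} together with (\ref{eq:lapse estimate crude}). The energy has three pieces.

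\emph{The $t^{-2+2\ve}\|\varphi\|_{H^\ell}^2$ piece.} Differentiating the weight gives exactly $-2(1-\ve)t^{-1}\cdot t^{-2+2\ve}\sum_{|\bfI|\leq\ell}\int_\S|E_\bfI\varphi|^2$. The remaining term is $2t^{-2+2\ve}\int E_\bfI\varphi\, E_\bfI(Ne_0\varphi)$. By (\ref{eq:est lead order terms}), $\|E_\bfI(Ne_0\varphi)\|_{L^2}\leq C_\ell \hetot{\ell}^{1/2}$. Since $t^{-1+\ve}\|\varphi\|_{H^\ell}\leq \etot{\ell}^{1/2}$, this term is bounded by $C_\ell t^{-1+\ve}\etot{\ell}$. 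The case $\bfI=0$ is handled in the same way.

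\emph{The $\|\ear\varphi\|_{H^\ell}^2$ piece.} By (\ref{eq:spatial scalar field derivative}), $\d_t(e_I\varphi)=e_I(Ne_0\varphi)-Nk_{IJ}e_J\varphi$. The last term is a bad term, so (\ref{eq:EbfI bad term}) yields $-2\sum\int k_{IJ}E_\bfI e_I\varphi\,E_\bfI e_J\varphi$ plus admissible errors. We split $e_I(Ne_0\varphi)=e_I(N)e_0\varphi+Ne_I(e_0\varphi)$. The first summand is a bad term with $f=e_IN$, which produces $2\int e_0\varphi\,E_\bfI e_I\varphi\,E_\bfI(e_IN)$. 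The second summand has one derivative too many. For it, we replace $N$ by $1$ at an admissible cost and commute $E_\bfI$ with $e_I$ using (\ref{eq:EbfI eI comm est}), leaving $2\int E_\bfI e_I\varphi\, e_IE_\bfI(e_0\varphi)$.

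\emph{The $\|e_0\varphi\|_{H^\ell}^2$ piece.} By (\ref{eq:scalarfield}), $\d_t(e_0\varphi)=N[e_Ie_I\varphi-t^{-1}e_0\varphi+N^{-1}e_INe_I\varphi-\g_{JII}e_J\varphi-V'\circ\varphi]$. The terms $e_INe_I\varphi$, $N\g_{JII}e_J\varphi$ and $NV'\circ\varphi$ are good terms, so they contribute $C_\ell t^{-1+\ve}\etot{\ell}$ by (\ref{eq:good term}). The term $-t^{-1}Ne_0\varphi$ is a leading order term. By (\ref{eq:leading order terms}) it yields $-2t^{-1}\int|E_\bfI e_0\varphi|^2-2t^{-1}\int e_0\varphi\, E_\bfI N\,E_\bfI e_0\varphi$ for $\bfI\neq0$, plus errors bounded via (\ref{eq:leading order terms RB}). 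The principal term $2\int E_\bfI e_0\varphi\, E_\bfI(Ne_Ie_I\varphi)$ is reduced, after setting $N=1$ and commuting, to $2\int E_\bfI e_0\varphi\, e_I E_\bfI(e_I\varphi)$. We integrate it by parts using (\ref{eq:divergence formula}); the divergence term is controlled by (\ref{eq:diveIest gen}). The result cancels the principal term left over from the $\ear\varphi$ piece.

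\emph{Main obstacle.} The delicate step is the error accounting in this cancellation. Replacing $N$ by $1$ inside $Ne_I(e_0\varphi)$ produces terms such as $E_\bfI(N-1)\,e_Ie_0\varphi$ and $(N-1)E_\bfI e_Ie_0\varphi$. The latter involves $\ell+1$ derivatives, so it must be integrated by parts together with the principal term rather than estimated directly. For this we use $\|N-1\|_{C^1}\leq Ct^{\ve}$ and (\ref{eq:ear N gen}), so that the errors are $O(t^{-1+\ve}\etot{\ell})$. The commutator terms are bounded via (\ref{eq:EbfI eI comm est}), using $\|e_0\varphi\|_{C^1}\leq Ct^{-1}$ and $\|\ear\varphi\|_{C^1}\leq Ct^{-1+2\ve}$ together with the weight $t^{\ve}$ in $E_{(e,\omega),\ell}^{1/2}$. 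All other errors have the form $C_\ell t^{-1+\ve}\etot{\ell}+C_\ell t^{-1}\etot{\ell-1}^{1/2}\etot{\ell}^{1/2}$, where we use (\ref{eq:lapse estimate crude}) to remove the hats. Summing the three pieces yields (\ref{eq:dt Ephi est}).
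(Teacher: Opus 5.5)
Your proposal is correct and follows essentially the same route as the paper: the same treatment of the weighted $\varphi$ term, the same good/bad/leading-order classification, and the same integration-by-parts cancellation of the top-order terms. The only difference is which of the two top-order terms you integrate by parts. The paper keeps $N$ throughout, integrates by parts the $\ear\varphi$ contribution, and then applies (\ref{eq:scalarfield}) to $N(e_0e_0\varphi-e_Ie_I\varphi)$, which is equivalent to your ordering; your observation that the top-order $(N-1)$ pieces must be absorbed into the joint integration by parts is exactly what the paper's estimate of the term where $e_I$ hits $N$ accomplishes.
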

\begin{proof}
  Let us begin by considering the last term on the right hand side of (\ref{eq:high ord en phi}). We wish to differentiate this term with respect
  to time. When the derivative hits the factor $t^{-2(1-\ve)}$, we simply obtain $-2(1-\ve)t^{-1}$ times the last term. Let us therefore focus on
  \[
  2t^{-2+2\ve}\textstyle{\sum}_{|\bfI|\leq \ell}\int_{\S}(E_{\bfI}\varphi)[E_{\bfI}(Ne_0\varphi)]\md\mu_{h_{\refer}}.
  \]
  Since $Ne_0\varphi$ is a leading order term, (\ref{eq:est lead order terms}) and (\ref{eq:lapse estimate crude}) imply that the absolute value of this
  expression can be estimated by (\ref{eq:protot error term}). Next, consider
  \begin{equation}\label{eq:time der first term in phi en}
    2\textstyle{\int}_{\S}(E_{\bfI}e_0\varphi)[E_{\bfI}(Ne_0^2\varphi)]\md\mu_{h_{\refer}}+2\sum_{I}\int_{\S}(E_{\bfI}e_I\varphi)[E_{\bfI}(\d_t e_I\varphi)]\md\mu_{h_{\refer}}.
  \end{equation}
  Due to (\ref{eq:spatial scalar field derivative}), the second term can be written
  \begin{equation}\label{eq:dt eIphisq en contr}
    2\textstyle{\sum}_{I}\int_{\S}(E_{\bfI}e_I\varphi)[E_{\bfI}(e_I(Ne_0\varphi)-Nk_{IJ}e_J(\varphi)]\md\mu_{h_{\refer}}.
  \end{equation}
  Since $Nk_{IJ}e_J(\varphi)$ is a bad term, the contribution from the second term in the second factor in the integrand can be written
  \[
  -2\textstyle{\sum}_{I,J}\int_{\S}k_{IJ}(E_{\bfI}e_I\varphi)(E_{\bfI}e_J\varphi)\md\mu_{h_{\refer}}+\dots,
  \]
  where the dots represent terms that can be estimated in absolute value by (\ref{eq:protot error term vtwo}).  
  Next, consider the contribution from the first term in the second factor in the integrand of (\ref{eq:dt eIphisq en contr}). It can be written
  \begin{equation}\label{eq:dt eIphisq en contr st}
    2\textstyle{\sum}_{I}\int_{\S}(E_{\bfI}e_I\varphi)[E_{\bfI}(e_I(N)e_0\varphi)+E_{\bfI}(Ne_I(e_0\varphi))]\md\mu_{h_{\refer}}.
  \end{equation}
  Again, $e_I(N)e_0\varphi$ is a bad term, and the contribution from the first term in the second factor in the integrand is
  \[
  2\textstyle{\sum}_{I}\int_{\S}e_0(\varphi)(E_{\bfI}e_I\varphi)[E_{\bfI}(e_IN)]\md\mu_{h_{\refer}}+\dots,
  \]
  where the dots represent terms that can be estimated in absolute value by (\ref{eq:protot error term vtwo}).  
  The contribution from the second term in the second factor in the integrand of (\ref{eq:dt eIphisq en contr st}) can be written as a sum of
  \begin{equation}\label{eq:N commuted through phi en}
    2\textstyle{\sum}_{I}\int_{\S}(E_{\bfI}e_I\varphi)NE_{\bfI}[e_I(e_0\varphi)]\md\mu_{h_{\refer}}
  \end{equation}
  plus a linear combination of terms of the form
  \begin{equation}\label{eq:distr der eIphi EN eIezphi}
    2\textstyle{\sum}_{I}\int_{\S}(E_{\bfI}e_I\varphi)E_{\bfI_1}[E_{i_1}(N-1)]E_{\bfI_2}[e_I(e_0\varphi)]\md\mu_{h_{\refer}},
  \end{equation}
  where $|\bfI_1|+|\bfI_2|=|\bfI|-1$. By arguments similar to the above, expressions of the form (\ref{eq:distr der eIphi EN eIezphi}) can be estimated
  in absolute value by (\ref{eq:protot error term}). Turning to (\ref{eq:N commuted through phi en}), commuting $e_I$ and $E_{\bfI}$ leads to
  a term that can be estimated in absolute value by (\ref{eq:protot error term}); cf. (\ref{eq:EbfI eI comm est}). We are thus left with
  \[
  2\textstyle{\sum}_{I}\int_{\S}(E_{\bfI}e_I\varphi)Ne_I[E_{\bfI}(e_0\varphi)]\md\mu_{h_{\refer}}.
  \]
  Integrating by parts in this expression generates two terms. The term involving the divergence of $e_I$ can be estimated in absolute value by
  $C_\ell t^{-1+3\ve}\etot{\ell}$. The remaining term can be written
  \[
  -2\textstyle{\sum}_{I}\int_{\S}e_I[N(E_{\bfI}e_I\varphi)]E_{\bfI}(e_0\varphi)\md\mu_{h_{\refer}}.
  \]
  The term that results when $e_I$ hits $N$ can be estimated in absolute value by $Ct^{-1+4\ve}\etot{\ell}$. Commuting $e_I$ and $E_{\bfI}$ again
  leads to expressions that can be bounded in absolute value by $C_\ell t^{-1+\ve}\etot{\ell}$. What remains is then
  \[
  -2\textstyle{\sum}_{I}\int_{\S}N(E_{\bfI}e_Ie_I\varphi)E_{\bfI}(e_0\varphi)\md\mu_{h_{\refer}}.
  \]
  Next, $N$ can be commuted through $E_{\bfI}$ at the price of a term that can be estimated in absolute value by $C_\ell t^{-1+\ve}\etot{\ell}$.
  We are left with
  \[
  -2\textstyle{\sum}_{I}\int_{\S}E_{\bfI}(Ne_Ie_I\varphi)E_{\bfI}(e_0\varphi)\md\mu_{h_{\refer}}.
  \]
  At this point, we can combine the result with the first term appearing in (\ref{eq:time der first term in phi en}). This leads to
  \[
  2\textstyle{\int}_{\S}E_{\bfI}[N(e_0e_0\varphi-e_Ie_I\varphi)]E_{\bfI}(e_0\varphi)\md\mu_{h_{\refer}}.
  \]
  Next, we appeal to (\ref{eq:scalarfield}). As a result, we need to estimate
  \[
  2\textstyle{\int}_{\S}E_{\bfI}(-t^{-1}Ne_{0}(\varphi)+e_{I}(N)e_{I}(\varphi) -N\g_{JII}e_{J}(\varphi)-NV'\circ\varphi)E_{\bfI}(e_0\varphi)\md\mu_{h_{\refer}}.
  \]
  The three last terms inside the paranthesis in the first factor in the integrand are good terms, and their contribution is bounded in absolute value
  by $C_\ell t^{-1+\ve}\etot{\ell}$. The first term is a leading order term. Due to (\ref{eq:leading order terms}), (\ref{eq:leading order terms RB})
  and (\ref{eq:lapse estimate crude}), we conclude that 
  \begin{equation*}
    \begin{split}
      & 2\textstyle{\int}_{\S}E_{\bfI}[N(e_0e_0\varphi-e_Ie_I\varphi)]E_{\bfI}(e_0\varphi)\md\mu_{h_{\refer}}\\
      = & -2t^{-1}\textstyle{\int}_{\S}|E_{\bfI}(e_{0}\varphi)|^2\md\mu_{h_{\refer}}
      -2t^{-1}\textstyle{\int}_{\S}e_{0}(\varphi)E_{\bfI}(N)E_{\bfI}(e_0\varphi)\md\mu_{h_{\refer}}+\dots,
    \end{split}
  \end{equation*}
  where the dots represent terms that can be estimated in absolute value by (\ref{eq:protot error term vtwo}) and the second term can be
  dropped if $\bfI=0$. The lemma follows.    
\end{proof}

\subsection{The energy of $a_I$}\label{ssection:the energy of aI}
Next, we consider the evolution of the energy $E_{(a),\ell}$, introduced in (\ref{eq:Ealdef}). 

\begin{lemma}\label{lemma:a}
  Consider a solution to (\ref{eq: transport frame})--(\ref{eq:LapseEquation}) as described in Subsection~\ref{ssection:eqs}.
  Assume, in addition,  that $V$ is a $\s_V$-admissible potential and that the basic supremum assumptions, see
  Definition~\ref{def:bas sup ass}, are satisfied with $3\ve\leq \s_V$. Then, for each $\ell\in\nn{}$, there is a $t_\ell>0$ such that for $t\leq t_\ell$,
  \begin{equation*}
    \begin{split}
      \d_t E_{(a),\ell} = & -2t^{-1}\textstyle{\sum}_{|\bfI|\leq \ell}\sum_{I}\int_{\S}(E_{\bfI}a_I)\cdot E_{\bfI}[e_I(N)]\md\mu_{h_{\refer}}\\
      & +2\textstyle{\sum}_{|\bfI|\leq \ell}\sum_{I,J}\int_{\S}k_{IJ}(E_{\bfI}a_I)\cdot E_{\bfI}[e_J(N)]\md\mu_{h_{\refer}}\\
      & +2\textstyle{\sum}_{|\bfI|\leq \ell}\sum_{I,J,L}\int_{\S}k_{JL}(E_{\bfI}a_I)\cdot E_{\bfI}(\g_{JIL})\md\mu_{h_{\refer}}\\
      & +2\textstyle{\sum}_{|\bfI|\leq \ell}\sum_{I}\int_{\S}(e_0\varphi)(E_{\bfI}a_I)\cdot E_{\bfI}(e_I\varphi)\md\mu_{h_{\refer}}+\mR,
    \end{split}
  \end{equation*}
  where
  \begin{equation}\label{eq:mR est aen}
    |\mR|\leq C_\ell t^{-1+\ve}\etot{\ell}+C_\ell t^{-1}\etot{\ell-1}^{1/2}\etot{\ell}^{1/2}.
  \end{equation}
\end{lemma}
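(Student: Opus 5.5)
The plan is to differentiate $E_{(a),\ell}=\sum_{|\bfI|\leq \ell}\sum_I\int_\S |E_\bfI a_I|^2\md\mu_{h_{\refer}}$ in time and insert (\ref{eq:dt aI}). Since $h_{\refer}$ and the $E_i$ are time independent, this gives
\[
\d_tE_{(a),\ell}=2\textstyle{\sum}_{|\bfI|\leq \ell}\sum_I\int_\S E_\bfI(a_I)\,E_\bfI\big(-t^{-1}e_I(N)+e_J(N)k_{IJ}+N\g_{JIL}k_{JL}+Ne_0(\varphi)e_I(\varphi)\big)\md\mu_{h_{\refer}}.
\]
The advantage of (\ref{eq:dt aI}) over (\ref{eq:concoef}) is that no derivative of $k$ appears on the right hand side. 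All terms are therefore of the order of the energy, and no integration by parts or use of the constraints is needed. What remains is to sort each term according to the algorithm of Subsection~\ref{ssection:algorithm}.

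\textbf{The term $-t^{-1}e_I(N)$.} It is kept as it is. It yields exactly the first term on the right hand side of the lemma; the factor $t^{-1}$ is simply pulled out of $E_\bfI$.

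\textbf{The terms $e_J(N)k_{IJ}$, $N\g_{JIL}k_{JL}$ and $Ne_0(\varphi)e_I(\varphi)$.} In the terminology of Subsection~\ref{ssection:algorithm}, each of these is a bad term: it has $\ell_\ve=1$ and $l_{3,a}+l_9=0$, with $f$ equal to $e_JN$, $\g_{JIL}$ and $e_I\varphi$ respectively. By (\ref{eq:EbfI bad term}),
\[
E_\bfI(e_J(N)k_{IJ})=k_{IJ}E_\bfI(e_JN)+R_B,
\]
and similarly $E_\bfI(N\g_{JIL}k_{JL})=k_{JL}E_\bfI\g_{JIL}+R_B$ and $E_\bfI(Ne_0\varphi\, e_I\varphi)=e_0\varphi\,E_\bfI e_I\varphi+R_B$. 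In each case (\ref{eq:RB est}) gives
\[
\|R_B\|_{L^2}\leq C_\ell t^{-1+\ve}\hetot{\ell}^{1/2}+C_\ell t^{-1}\hetot{\ell-1}^{1/2},
\]
since $l_6+l_7=1$. These produce the remaining three main terms of the lemma.

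\textbf{Bounding the remainders.} Combining the $R_B$ bounds with $\|E_\bfI a_I\|_{L^2}\leq C\|\g\|_{H^\ell}\leq C\etot{\ell}^{1/2}$, and using (\ref{eq:lapse estimate crude}) to remove the hats, the contributions of the $R_B$ are bounded by (\ref{eq:mR est aen}). Removing the hats requires $t\leq t_\ell$, which is where the restriction $t\leq t_\ell$ in the statement comes from.

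\textbf{Main obstacle.} The only delicate point is the bookkeeping in the bad-term expansion of $e_J(N)k_{IJ}$. Here $f=e_JN$ is one of the exceptional factors, so the hats must be removed via (\ref{eq:lapse estimate crude}) and not directly. One must also check that replacing $N$ by $1$ costs only $t^{\ve}$ relative to the main term, using (\ref{eq:N be gen}) and (\ref{eq:ear N gen}). Once this is done, no further structure (such as the momentum constraint) is needed, since that has already been built into (\ref{eq:dt aI}).
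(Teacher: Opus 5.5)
Your proposal is correct and is essentially the paper's own argument: differentiate $E_{(a),\ell}$, insert (\ref{eq:dt aI}), keep the $-t^{-1}e_I(N)$ term as it is, expand the other three terms as bad terms via (\ref{eq:EbfI bad term})--(\ref{eq:RB est}), and remove the hats using (\ref{eq:lapse estimate crude}). Your extra bookkeeping ($l_6+l_7=1$, and $\|a\|_{H^\ell}\leq C\etot{\ell}^{1/2}$ since $a_I=\g_{IJJ}$) simply makes explicit what the paper leaves implicit.
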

\begin{proof}
  Differentiating $E_{(a),\ell}$ with respect to time yields
  \[
    \d_t E_{(a),\ell}=2\textstyle{\sum}_{|\bfI|\leq \ell}\sum_{I}\int_{\S}(E_{\bfI}a_I)E_{\bfI}(\d_ta_I)\md\mu_{h_{\refer}}.
  \]
  The contribution from the first term on the right hand side of (\ref{eq:dt aI}) is
  \[
  -2t^{-1}\textstyle{\sum}_{|\bfI|\leq \ell}\sum_{I}\int_{\S}(E_{\bfI}a_I)\cdot E_{\bfI}[e_I(N)]\md\mu_{h_{\refer}}.
  \]
  The last three terms on the right hand side of (\ref{eq:dt aI}) are all bad terms. Up to a term which can be estimated in absolute
  value by (\ref{eq:protot error term vtwo}), the corresponding contributions can therefore be written
  \[
  2\textstyle{\sum}_{|\bfI|\leq \ell}\sum_{I}\int_{\S}(E_{\bfI}a_I)[k_{IJ} E_{\bfI}(e_J N)+k_{JL}E_{\bfI}(\gamma_{JIL})+e_0(\varphi)E_{\bfI}(e_I\varphi)]\md\mu_{h_{\refer}};
  \]
  cf. (\ref{eq:EbfI bad term}) and (\ref{eq:RB est}). The lemma follows.    
\end{proof}

\subsection{Summing up}\label{ssection:summing up}

Finally, we turn to $\metot{\ell}$, introduced in (\ref{eq:metotldef}). 
\begin{lemma}\label{lemma:metotl}
  Consider a solution to (\ref{eq: transport frame})--(\ref{eq:LapseEquation}) as described in Subsection~\ref{ssection:eqs}.
  Assume, in addition,  that $V$ is a $\s_V$-admissible potential and that the basic supremum assumptions, see
  Definition~\ref{def:bas sup ass}, are satisfied with $3\ve\leq \s_V$. Then, for each $\ell\in\nn{}$, there is a $t_\ell>0$ such that for $t\leq t_\ell$,
  \begin{equation}\label{eq:dtmetotlest}
    \d_t\metot{\ell}\geq -8t^{-1}\metot{\ell}-C_\ell t^{-1+\ve}\metot{\ell}-C_\ell t^{-1}\metot{\ell-1}^{1/2}\metot{\ell}^{1/2}.
  \end{equation}
\end{lemma}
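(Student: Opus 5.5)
The plan is to add up the identities and inequalities of Lemmas~\ref{lemma:dt Eeomega lb}, \ref{lemma:E gamma k est}, \ref{lemma:sf} and \ref{lemma:a}. The lapse terms get handled with the lapse lemma of Subsection~\ref{ssection:lapse hod}, and every remaining leading term gets bounded from below by a numerical multiple of $t^{-1}\metot{\ell}$. The point to exploit is the structure described in the outline. Up to errors of the form $C_\ell t^{-1+\ve}\etot{\ell}+C_\ell t^{-1}\etot{\ell-1}^{1/2}\etot{\ell}^{1/2}$, every term on the right hand sides of these lemmas is an integral of $a\,E_{\bfI}(f)E_{\bfI}(g)$. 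Here $f,g$ are basic variables (components of $k$, $\g$, $a$, $\ear\varphi$, $e_0\varphi$, $\ear N$, $N-1$), and $a$ is a numerical multiple of $t^{-1}$, $k_{IJ}$ or $e_0\varphi$. By (\ref{eq:Cz k ezphi bd}), $t\big(\sum_{I,J}k_{IJ}^2+(e_0\varphi)^2\big)^{1/2}\leq 1+Ct^{2\ve}$. Hence each such integral is bounded in absolute value by $t^{-1}(1+Ct^{\ve})$ times a product of $H^\ell$-norms. The factor $Ct^{\ve}$ goes into the $C_\ell t^{-1+\ve}\metot{\ell}$ term. Since $\etot{\ell}\leq\metot{\ell}$, all error terms are of the admissible form.

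\textbf{Step 1 (lapse cross terms).} Take the two terms $-2t^{-1}\int k_{IJ}E_{\bfI}(N)E_{\bfI}(k_{IJ})$ in (\ref{eq:dt E gamma k}) and $-2t^{-1}\int e_0(\varphi)E_{\bfI}(N)E_{\bfI}(e_0\varphi)$ in (\ref{eq:dt Ephi est}), summed over $0<|\bfI|\leq\ell$. Together they are exactly $-t^{-1}$ times the second term on the left of (\ref{eq:lapse ito kEk etc}). By that identity they equal $t^{-1}E_{(N),\ell}-t^{-1}\mR$, which is at least $-C_\ell t^{-1+\ve}\etot{\ell}-C_\ell t^{-1}\etot{\ell-1}^{1/2}\etot{\ell}^{1/2}$ because $E_{(N),\ell}\geq 0$. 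This is the reason for the odd-looking identity (\ref{eq:lapse ito kEk etc}).

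\textbf{Step 2 (terms with $E_{\bfI}(e_JN)$).} Several terms contain $E_{\bfI}(e_JN)$:
\begin{itemize}
\item the sixth and seventh lines of (\ref{eq:dt E gamma k});
\item the fourth line of (\ref{eq:dt Ephi est});
\item the first two lines of the formula in Lemma~\ref{lemma:a}.
\end{itemize}
Estimate each of them by Cauchy--Schwarz with the pointwise bound above. This gives $2t^{-1}(1+Ct^\ve)X\,\|\ear N\|_{H^\ell}$, where $X$ is an $H^\ell$-norm of $a$, $\ear\varphi$, and so on. For the first term in Lemma~\ref{lemma:a} the bound is direct, since there $a=t^{-1}$. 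Then use Young's inequality and (\ref{eq:ear N Hl est}), which replaces $\|\ear N\|_{H^\ell}^2$ by $\|k\|_{H^\ell}^2+\|e_0\varphi\|_{H^\ell}^2$ with a numerical coefficient $1$ plus admissible errors.

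\textbf{Step 3 (remaining quadratic terms).} The three $k\,E\g\,E\g$ terms are each bounded by $t^{-1}(1+Ct^\ve)\|\g\|^2_{H^\ell}$. Note that $E_{(\g,k),\ell}$ carries $\frac12\|\g\|^2$, so this costs a factor $6$ relative to that piece. The terms coupling to $E_{\bfI}(\g_{JKK})=E_{\bfI}(a_J)$ are bounded using $\|a\|_{H^\ell}$, which is why $E_{(a),\ell}$ is in $\metot{\ell}$. The $k\,E(e\varphi)E(e\varphi)$ and $-2t^{-1}|E e_0\varphi|^2$, $-2t^{-1}|Ek|^2$ terms are bounded directly. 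For the frame, Lemma~\ref{lemma:dt Eeomega lb} gives $-2(1+\ve)t^{-1}E_{(e,\omega),\ell}$ directly. The $\varphi$ term with $t^{-2+2\ve}$ contributes $-2t^{-1}(1-\ve)$ times its energy.

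\textbf{Main obstacle.} The delicate part is the bookkeeping of the numerical constants. One must check that after all Cauchy--Schwarz and Young splittings, each energy component is multiplied by at most $8t^{-1}$. The constant must not depend on $n$ or $\ell$, because of the sums over indices in the $\g$ terms. Summation over $I,J,K,L$ must therefore be handled as a quadratic form in the matrix $k$, whose operator norm is at most $t^{-1}(1+Ct^\ve)$, rather than term by term. If the constant comes out slightly larger, I would rebalance Young's inequality weights between $\|a\|^2$, $\|\g\|^2$ and $\|k\|^2$. Finally, choose $t_\ell$ as the minimum of the times provided by the lemmas and absorb all $t^{\ve}$ corrections into the $C_\ell t^{-1+\ve}\metot{\ell}$ term, which yields (\ref{eq:dtmetotlest}).
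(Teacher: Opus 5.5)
The overall architecture is right: sum the four lemmas, convert the lapse cross terms into $t^{-1}E_{(N),\ell}-t^{-1}\mR$ via \eqref{eq:lapse ito kEk etc}, and bound the rest using $t(\sum_{I,J}k_{IJ}^2+(e_0\varphi)^2)^{1/2}\leq 1+Ct^{2\ve}$. The gap is the constant. With the bounds you commit to, the bookkeeping does not close at $8$, and no choice of Young weights repairs it.

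Write $\|\cdot\|$ for $H^\ell(\S_t)$-norms. Integrals are over $\S$ with respect to $\md\mu_{h_{\refer}}$, summed over $|\bfI|\leq\ell$ and repeated indices.

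\textbf{Why your scheme fails at $8$.} Your per-integral bounds cost the following.
\begin{enumerate}
\item The five terms containing $E_{\bfI}(e_JN)$ cost $6t^{-1}\|a\|\|\ear N\|+4t^{-1}\|\ear\varphi\|\|\ear N\|$.
\item The $\g$--$a$ couplings cost $4t^{-1}\|\g\|\|a\|$. These are the $k_{IL}E_{\bfI}(\g_{IJL})E_{\bfI}(\g_{JKK})$ term in \eqref{eq:dt E gamma k} and the third line of Lemma~\ref{lemma:a}.
\item The $\varphi$--$a$ couplings cost $4t^{-1}\|\ear\varphi\|\|a\|$.
\end{enumerate}
Your bound on the three $k\,E\g\,E\g$ terms already uses $3t^{-1}\|\g\|^2$ of the available $4t^{-1}\|\g\|^2$ (because of the $\tfrac12$ in $E_{(\g,k),\ell}$). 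Hence the $\g$--$a$ coupling forces at least $4t^{-1}\|a\|^2$, and $2k_{LJ}E_{\bfI}(a_L)E_{\bfI}(a_J)$ adds a further $2t^{-1}\|a\|^2$.

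Set $x=\|a\|$, $y=\|\ear N\|$, $z=\|\ear\varphi\|$ and use
$6xy\leq 3\lambda x^2+3\lambda^{-1}y^2$, $4zy\leq 2\mu z^2+2\mu^{-1}y^2$ and $4zx\leq\beta z^2+4\beta^{-1}x^2$, together with \eqref{eq:ear N Hl est}. The requirements on the coefficients of $\|a\|^2$, $\|\ear\varphi\|^2$ and $\|k\|^2$ are then $3\lambda+4/\beta\leq 2$, $\beta+2\mu\leq 6$ and $3/\lambda+2/\mu\leq 6$. These force $6\mu^2-2\mu+16\leq 0$, which is impossible. Keeping the $t^{-1}E_{(N),\ell}$ you discard in Step~1 only relaxes the last condition to $\leq 7$, which is still infeasible.

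So your argument gives \eqref{eq:dtmetotlest} only with a larger numerical constant. That would be harmless for Corollary~\ref{cor:energy estimates} and Theorem~\ref{thm:hod}, since the constant is independent of $\ell$ and $n$, but it is not the stated lemma.

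\textbf{What is missing.} The paper's proof uses exact cancellations between the lemmas and one joint estimate.
\begin{enumerate}
\item The term $-2\int e_0(\varphi)E_{\bfI}[e_J\varphi]E_{\bfI}(e_JN)$ in \eqref{eq:dt E gamma k} cancels $+2\int e_0(\varphi)(E_{\bfI}e_I\varphi)E_{\bfI}(e_IN)$ in \eqref{eq:dt Ephi est}.
\item Since $\g_{LII}=a_L$ and $k$ is symmetric, $-2\int k_{LJ}E_{\bfI}(\g_{LII})E_{\bfI}(e_JN)$ cancels the second line of Lemma~\ref{lemma:a}.
\item The only surviving lapse term is $-2t^{-1}\int E_{\bfI}(a_I)E_{\bfI}(e_IN)$. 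Together with the $+t^{-1}E_{(N),\ell}$ from your Step~1 it is bounded below by $-t^{-1}\|a\|^2$, so \eqref{eq:ear N Hl est} is never needed.
\item The couplings to $a$ add up to $4\int k_{IL}E_{\bfI}(\g_{IJL})E_{\bfI}(a_J)+4\int e_0(\varphi)E_{\bfI}(e_J\varphi)E_{\bfI}(a_J)$. Estimate them together, putting the weight $t^2k_{IL}k_{IL}+t^2(e_0\varphi)^2\leq 1+Ct^{2\ve}$ from \eqref{eq:almost asymptotic Hcon gen} on the $a$-side. This gives $-t^{-1}\|\g\|^2-t^{-1}\|\ear\varphi\|^2-4t^{-1}\|a\|^2$, instead of charging each coupling the full factor $t^{-1}$.
\end{enumerate}
The resulting coefficients are:
\begin{enumerate}
\item $2(1+\ve)$ on $E_{(e,\omega),\ell}$;
\item $3$ on $E_{(\varphi),\ell}$;
\item $2$ on $\|k\|^2$;
\item $4$ on $\|\g\|^2$, that is $8$ relative to $\tfrac12\|\g\|^2$;
\item $7$ on $\|a\|^2$.
\end{enumerate}
This is \eqref{eq:dtmetotlest}.

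The joint estimate alone would in fact let your route close even without the cancellations, for example with $\lambda=2/3$ and $\mu=5/2$. The per-integral bound you describe does not.
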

\begin{proof}
  Combining Lemmas~\ref{lemma:dt Eeomega lb}, \ref{lemma:E gamma k est} and \ref{lemma:sf} with (\ref{eq:lapse ito kEk etc}) yields
  \begin{equation*}
    \begin{split}
      \d_t \etot{\ell} \geq & -2t^{-1}(1+\ve)E_{(e,\omega),\ell}-2t^{-1}\textstyle{\sum}_{|\bfI|\leq \ell}\int_{\S}|E_{\bfI}(e_{0}\varphi)|^2\md\mu_{h_{\refer}}\\
      & -2\textstyle{\sum}_{|\bfI|\leq \ell}\textstyle{\sum}_{I,J}\int_{\S}k_{IJ}(E_{\bfI}e_I\varphi)(E_{\bfI}e_J\varphi)\md\mu_{h_{\refer}}\\
      & -2t^{-1}(1-\ve)t^{-2+2\ve}\textstyle{\sum}_{|\bfI|\leq \ell}\int_{\S}|E_{\bfI}\varphi|^2\md\mu_{h_{\refer}}\\
      & -\textstyle{\sum}_{I,J,K,L}\sum_{|\bfI|\leq \ell}\int_{\S}k_{IL}E_{\bfI}(\gamma_{IJK})E_{\bfI}(\gamma_{LJK})\md\mu_{h_{\refer}}\\
      & -\textstyle{\sum}_{I,J,K,L}\sum_{|\bfI|\leq \ell}\int_{\S}k_{JL}E_{\bfI}(\gamma_{IJK})E_{\bfI}(\gamma_{ILK})\md\mu_{h_{\refer}}\\
      & +\textstyle{\sum}_{I,J,K,L}\sum_{|\bfI|\leq \ell}\int_{\S}k_{KL}E_{\bfI}(\gamma_{IJK})E_{\bfI}(\gamma_{IJL})\md\mu_{h_{\refer}}\\
      & +2\textstyle{\sum}_{I,J,L,K}\sum_{|\bfI|\leq \ell}\int_{\S}(k_{LJ}E_{\bfI}(\g_{LII}) + k_{IL}E_{\bfI}(\g_{IJL}))E_{\bfI}(\g_{JKK})\md\mu_{h_{\refer}}\\    
      & -2t^{-1}\textstyle{\sum}_{I,J}\sum_{|\bfI|\leq \ell}\int_{\S}|E_{\bfI}(k_{IJ})|^2\md\mu_{h_{\refer}}+t^{-1}E_{(N),l}\\
      & -2\textstyle{\sum}_{I,J,L}\sum_{|\bfI|\leq \ell}\int_{\S}k_{LJ}E_{\bfI}(\g_{LII})E_{\bfI}(e_{J} (N))\md\mu_{h_{\refer}}\\
      & +2\textstyle{\sum}_{J,K}\sum_{|\bfI|\leq \ell}\int_{\S}e_{0}(\varphi) E_{\bfI}[e_{J}(\varphi)]E_{\bfI}(\g_{JKK})\md\mu_{h_{\refer}}+\mR,
    \end{split}
  \end{equation*}
  where $\mR$ satisfies an estimate of the form (\ref{eq:mR est aen}). Before proceeding, note that for any $\a_I$, 
  \begin{equation}\label{eq:kIJ aI aJ}
    \big|\textstyle{\sum}_{I,J}k_{IJ}\alpha_I\alpha_J\big|\leq \big(\textstyle{\sum}_{I,J}k_{IJ}^2\big)^{1/2}
    \big(\textstyle{\sum}_{I,J}\alpha_I^2\alpha_J^2\big)^{1/2}\leq (1+Ct^{\ve})t^{-1}\textstyle{\sum}_I\a_I^2,
  \end{equation}
  where we appealed to (\ref{eq:Cz k ezphi bd}) in the last step. Keeping this estimate in mind, including $E_{(a),\ell}$ and recalling
  Lemma~\ref{lemma:a} then yields
  \begin{equation*}
    \begin{split}
      \d_t \metot{\ell}\geq & -2t^{-1}(1+\ve)E_{(e,\omega),\ell}-2t^{-1}E_{(\varphi),\ell}-2t^{-1}\textstyle{\sum}_{I,J}\sum_{|\bfI|\leq \ell}\int_{\S}|E_{\bfI}(k_{IJ})|^2\md\mu_{h_{\refer}}\\
      & -\textstyle{\sum}_{I,J,K,L}\sum_{|\bfI|\leq \ell}\int_{\S}k_{IL}E_{\bfI}(\gamma_{IJK})E_{\bfI}(\gamma_{LJK})\md\mu_{h_{\refer}}\\
      & -\textstyle{\sum}_{I,J,K,L}\sum_{|\bfI|\leq \ell}\int_{\S}k_{JL}E_{\bfI}(\gamma_{IJK})E_{\bfI}(\gamma_{ILK})\md\mu_{h_{\refer}}\\
      & +\textstyle{\sum}_{I,J,K,L}\sum_{|\bfI|\leq \ell}\int_{\S}k_{KL}E_{\bfI}(\gamma_{IJK})E_{\bfI}(\gamma_{IJL})\md\mu_{h_{\refer}}\\
      & +2\textstyle{\sum}_{J,L}\sum_{|\bfI|\leq \ell}\int_{\S}(k_{LJ}E_{\bfI}(a_L) + 2k_{IL}E_{\bfI}(\g_{IJL}))E_{\bfI}(a_J)\md\mu_{h_{\refer}}\\
      & +4\textstyle{\sum}_{J}\sum_{|\bfI|\leq \ell}\int_{\S}e_{0}(\varphi) E_{\bfI}[e_{J}(\varphi)]E_{\bfI}(a_J)\md\mu_{h_{\refer}}\\
      & -2t^{-1}\textstyle{\sum}_{|\bfI|\leq \ell}\sum_{I}\int_{\S}(E_{\bfI}a_I)\cdot E_{\bfI}[e_I(N)]\md\mu_{h_{\refer}}+t^{-1}E_{(N),\ell}+\mR,
    \end{split}
  \end{equation*}
  where $\mR$ satisfies an estimate of the form (\ref{eq:mR est aen}). Next, note that, due to (\ref{eq:kIJ aI aJ}),
  \begin{equation*}
    \begin{split}
      & -\textstyle{\sum}_{I,J,K,L}\sum_{|\bfI|\leq \ell}\int_{\S}k_{IL}E_{\bfI}(\gamma_{IJK})E_{\bfI}(\gamma_{LJK})\md\mu_{h_{\refer}}\\
      & -\textstyle{\sum}_{I,J,K,L}\sum_{|\bfI|\leq \ell}\int_{\S}k_{JL}E_{\bfI}(\gamma_{IJK})E_{\bfI}(\gamma_{ILK})\md\mu_{h_{\refer}}\\
      & +\textstyle{\sum}_{I,J,K,L}\sum_{|\bfI|\leq \ell}\int_{\S}k_{KL}E_{\bfI}(\gamma_{IJK})E_{\bfI}(\gamma_{IJL})\md\mu_{h_{\refer}}\\
      \geq & -\textstyle{\sum}_{I,J,K}\sum_{|\bfI|\leq \ell}\int_{\S}t^{-1}(3+Ct^{\ve})|E_{\bfI}(\gamma_{IJK})|^2\md\mu_{h_{\refer}}.
    \end{split}
  \end{equation*}
  Moreover,  
  \begin{equation*}
    \begin{split}
      & -2t^{-1}\textstyle{\sum}_{|\bfI|\leq \ell}\sum_{I}\int_{\S}(E_{\bfI}a_I)\cdot E_{\bfI}[e_I(N)]\md\mu_{h_{\refer}}+t^{-1}E_{(N),\ell}\\
      \geq & -t^{-1}\textstyle{\sum}_{|\bfI|\leq \ell}\sum_{I}\int_{\S}|E_{\bfI}a_I|^2\md\mu_{h_{\refer}}.
    \end{split}
  \end{equation*}
  We also have
  \begin{equation*}
    \begin{split}
      & 4\textstyle{\sum}_{J}\sum_{|\bfI|\leq \ell}\int_{\S}e_{0}(\varphi) E_{\bfI}[e_{J}(\varphi)]E_{\bfI}(a_J)\md\mu_{h_{\refer}}\\
      \geq & -t^{-1}\textstyle{\sum}_{J}\sum_{|\bfI|\leq \ell}\int_{\S}|E_{\bfI}[e_{J}(\varphi)]|^2\md\mu_{h_{\refer}}
      -4t^{-1}\textstyle{\sum}_{J}\sum_{|\bfI|\leq \ell}\int_{\S}|te_{0}(\varphi)|^2 |E_{\bfI}(a_J)|^2\md\mu_{h_{\refer}}. 
    \end{split}
  \end{equation*}
  Next,
  \begin{equation*}
    \begin{split}
      & 4\textstyle{\sum}_{I,J,L}\sum_{|\bfI|\leq \ell}\int_{\S}k_{IL}E_{\bfI}(\g_{IJL})E_{\bfI}(a_J)\md\mu_{h_{\refer}}\\
      \geq & -t^{-1}\textstyle{\sum}_{I,J,K}\sum_{|\bfI|\leq \ell}\int_{\S}|E_{\bfI}(\gamma_{IJK})|^2\md\mu_{h_{\refer}}\\
      & -4t^{-1}\textstyle{\sum}_{I,J,L}\sum_{|\bfI|\leq \ell}\int_{\S}t^2k_{IL}k_{IL}|E_{\bfI}(a_J)|^2\md\mu_{h_{\refer}}. 
    \end{split}
  \end{equation*}
  Summing up and recalling (\ref{eq:almost asymptotic Hcon gen}) and (\ref{eq:kIJ aI aJ}) yields
  \begin{equation*}
    \begin{split}
      \d_t \metot{\ell}\geq & -2(1+\ve)t^{-1}E_{(e,\omega),\ell}-3t^{-1}E_{(\varphi),\ell}-2t^{-1}\textstyle{\sum}_{I,J}\sum_{|\bfI|\leq \ell}\int_{\S}|E_{\bfI}(k_{IJ})|^2\md\mu_{h_{\refer}}\\
      & -4t^{-1}\textstyle{\sum}_{I,J,K}\sum_{|\bfI|\leq \ell}\int_{\S}|E_{\bfI}(\gamma_{IJK})|^2\md\mu_{h_{\refer}}\\
      & -7t^{-1}\textstyle{\sum}_{|\bfI|\leq \ell}\sum_{I}\int_{\S}|E_{\bfI}a_I|^2\md\mu_{h_{\refer}}+\mR.
    \end{split}
  \end{equation*}
  Since $3\ve\leq\sigma_V$ and $\sigma_V<1$, we conclude that (\ref{eq:dtmetotlest}) holds.   
\end{proof}
Due to the above estimates, we are in a position to deduce energy estimates for all orders.
\begin{cor}\label{cor:energy estimates}
  Consider a solution to (\ref{eq: transport frame})--(\ref{eq:LapseEquation}) as described in Subsection~\ref{ssection:eqs}.
  Assume, in addition,  that $V$ is a $\s_V$-admissible potential and that the basic supremum assumptions, see
  Definition~\ref{def:bas sup ass}, are satisfied with $3\ve\leq \s_V$. Then, for each $\ell\in\nn{}$, there is a $C_\ell$ such that, for $t\leq t_0$, 
  \begin{equation}\label{eq:energy estimates}
    \etot{\ell}(t)\leq C_\ell t^{-9}.
  \end{equation}
\end{cor}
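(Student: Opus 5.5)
We argue by induction on $\ell$, integrating the differential inequality (\ref{eq:dtmetotlest}) of Lemma~\ref{lemma:metotl} backwards in time from a fixed reference time. Since $\etot{\ell}\leq\metot{\ell}$, it suffices to prove $\metot{\ell}(t)\leq C_\ell t^{-9}$. For the base case, the basic supremum assumptions (\ref{eq:estdynvarmfltot gen}) give $C^1$ control of $e,\omega,\g,\ear\varphi,k,e_0\varphi$, hence also of $a_I=\g_{IJJ}$ and, using (\ref{eq:phi bd gen}), of $\varphi$. In particular, $\metot{0}$ and $\metot{1}$ are bounded by $Ct^{-2}$. To see this, note for instance that $t^{-2\ve}\|e\|_{H^1}^2\leq Ct^{-2\ve}t^{-2+6\ve}$, and that $t^{-2+2\ve}\|\varphi\|^2_{H^1}\leq Ct^{-2+2\ve}\ldr{\ln t}^2\leq Ct^{-2}$. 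This is certainly compatible with $t^{-9}$.

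For the inductive step, assume $\metot{\ell-1}\leq C_{\ell-1}t^{-9}$. Fix $t_\ell$ as in Lemma~\ref{lemma:metotl} (shrunk so that $t_\ell\leq t_0$). For $t\leq t_\ell$, estimate the last term of (\ref{eq:dtmetotlest}) with Young's inequality:
\[
C_\ell t^{-1}\metot{\ell-1}^{1/2}\metot{\ell}^{1/2}\leq \tfrac{1}{2}t^{-1}\metot{\ell}+\tfrac{1}{2}C_\ell^2t^{-1}\metot{\ell-1}\leq \tfrac12 t^{-1}\metot{\ell}+Ct^{-10}.
\]
This yields
\[
\d_t\metot{\ell}\geq -(8.5t^{-1}+C_\ell t^{-1+\ve})\metot{\ell}-Ct^{-10}.
\]
Set $F(t):=t^{8.5}\exp(-C_\ell t^{\ve}/\ve)\metot{\ell}(t)$. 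Then $\d_tF\geq -Ct^{-1.5}$, with the exponential factor bounded above and below on $(0,t_\ell]$. Integrating from $t$ to $t_\ell$ gives $F(t)\leq F(t_\ell)+Ct^{-1/2}$, so $\metot{\ell}(t)\leq C t^{-8.5}+Ct^{-9}\leq C_\ell t^{-9}$. The numerology closes because the coefficient $8$ in (\ref{eq:dtmetotlest}) is independent of $\ell$. The lower-order forcing enters with weight $t^{-1}\cdot t^{-9}=t^{-10}$, and integrating it yields exactly $t^{-9}$, which dominates $t^{-8.5}$. Consequently the exponent does not deteriorate under induction; it is reproduced at each order. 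The Young splitting must be chosen so that the homogeneous rate stays strictly below $9$, which is why $1/2$ rather than an arbitrary constant appears above.

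It remains to cover $t\in[t_\ell,t_0]$ and to justify $F(t_\ell)<\infty$. The second requires finiteness of $\metot{\ell}$ at a fixed positive time, which holds because the solution is smooth on $\S\times(0,t_0]$. Continuity of $\metot{\ell}$ on the compact interval $[t_\ell,t_0]$ then bounds it there by a constant, which is at most $Ct^{-9}$ since $t\leq1$. The main obstacle is really upstream, in Lemma~\ref{lemma:metotl}, where the constant $8$ must be independent of $\ell$. Here the only delicate points are bookkeeping: the constant $t_\ell$ of Lemma~\ref{lemma:metotl} depends on $\ell$, so $C_\ell$ must absorb both the values on $[t_\ell,t_0]$ and the base-case data. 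Finally, we pass from $\metot{\ell}$ back to $\etot{\ell}$ via $\etot{\ell}\leq\metot{\ell}$.
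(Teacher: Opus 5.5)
Your proof is correct and follows the paper's overall structure. Both arguments induct on $\ell$ via Lemma~\ref{lemma:metotl}, use an integrating factor built from $8\ln t+\ve^{-1}C_\ell t^{\ve}$, integrate backwards from $t_\ell$, and handle $[t_\ell,t_0]$ by smoothness.

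\textbf{Sign slip.} Your integrating factor has the wrong sign in the exponential. To absorb $-(8.5t^{-1}+C_\ell t^{-1+\ve})\metot{\ell}$ you need $F=t^{8.5}\exp(+C_\ell t^{\ve}/\ve)\metot{\ell}$, which matches the paper's $r_{\ell+1}$. With your minus sign, $\d_tF$ picks up an uncontrolled term $-2C_\ell t^{-1+\ve}F$, so the claimed inequality $\d_tF\geq -Ct^{-1.5}$ fails as written. The fix is immediate, since the factor is bounded above and below either way.

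\textbf{Minor citation.} In your base case, the $H^1$ bound on $\varphi$ comes from (\ref{eq:phi Co be gen}), not (\ref{eq:phi bd gen}).

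\textbf{Where the routes differ.} The real difference is how you handle the cross term $C_\ell t^{-1}\metot{\ell-1}^{1/2}\metot{\ell}^{1/2}$.
\begin{itemize}
\item The paper does not use Young's inequality. It differentiates $F^{1/2}$, which turns the cross term into a pure forcing $-Ct^{-1}\exp(r/2)\metot{\ell-1}^{1/2}$. Its induction hypothesis is $\metot{\ell}\leq C_\ell\ldr{\ln t}^{b_\ell}t^{-8}$, so each step loses only a power of $\ldr{\ln t}$. Only at the end is $\ldr{\ln t}^{b_\ell}t^{-8}$ weakened to $t^{-9}$.
\item Your Young splitting costs half a power of $t$ in the homogeneous rate, from $8$ to $8.5$; any coefficient $\eta<1$ in place of $1/2$ would also work. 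Because the forcing is $t^{-10}$ and $8.5<9$, the exponent $9$ reproduces itself exactly at every order.
\end{itemize}

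\textbf{What each buys.} In your version no logarithms accumulate, and the induction hypothesis is stated directly in the target form. The paper's version gives the sharper intermediate bound $t^{-8}$ up to logarithms. That extra sharpness is not needed for Theorem~\ref{thm:hod}, where any $\ell$-independent power suffices for the interpolation step. Your argument is slightly more elementary, since it needs no square-root differentiation. Both versions rely on the same essential input: the coefficient $8$ in Lemma~\ref{lemma:metotl} is independent of $\ell$.
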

\begin{proof}
  We prove a slightly improved version of (\ref{eq:energy estimates}) by induction. The inductive hypothesis is that there are
  constants $C_\ell$ and $b_\ell$ such that
  \begin{equation}\label{eq:induction}
    \metot{\ell}\leq C_\ell\ldr{\ln t}^{b_\ell}t^{-8}
  \end{equation}
  for $t\leq t_0$. That (\ref{eq:induction}) holds for $\ell\leq 1$ is an immediate consequence of (\ref{eq:estdynvarmfltot gen})
  and (\ref{eq:phi Co be gen}). 

  Given that (\ref{eq:induction}) holds for some $\ell$, consider (\ref{eq:dtmetotlest}) with $\ell$ replaced by $\ell+1$. Let $C_{\ell+1}$ be the constant appearing in the
  second term on the right hand side of (\ref{eq:dtmetotlest}) (with $\ell$ replaced by $\ell+1$) and let
  \begin{equation}\label{eq:rlpo def}
    r_{\ell+1}(t):=8\ln t+\ve^{-1}C_{\ell+1}t^{\ve}.
  \end{equation}
  Compute
  \[
    \d_t\Big(\exp(r_{\ell+1})\metot{\ell+1}\Big)\geq -C_{\ell+1}t^{-1}\exp(r_{\ell+1})\metot{\ell}^{1/2}\metot{\ell+1}^{1/2}.
  \]
  Introducing
  \begin{equation}\label{eq:Fl def}
    F_\ell:=\exp(r_{\ell+1})\metot{\ell+1}
  \end{equation}
  and noting that, by the inductive hypothesis, 
  \[
    \exp(r_{\ell+1}(t)/2)\metot{\ell}^{1/2}(t)\leq C_{\ell+1}\ldr{\ln t}^{b_\ell/2},
  \]
  we conclude that
  \[
    \d_tF_{\ell}^{1/2}\geq -C_{\ell+1}t^{-1}\ldr{\ln t}^{b_\ell/2}.
  \]
  Integrating this estimate from $t$ to $t_{\ell+1}$ yields
  \[
    F_{\ell}^{1/2}(t_{\ell+1})-F_{\ell}^{1/2}(t)\geq -C_{\ell+1}\ldr{\ln t}^{b_\ell/2+1}.
  \]
  Since the solution is smooth, $F_{\ell}^{1/2}(t_{\ell+1})$ is bounded by some constant, so that
  \[
    F_{\ell}^{1/2}(t)\leq C_{\ell+1}\ldr{\ln t}^{b_\ell/2+1}.
  \]
  for $t\leq t_{\ell+1}$. Recalling (\ref{eq:rlpo def}) and (\ref{eq:Fl def}), we conclude that
  \[
    \metot{\ell+1}(t)\leq C_{\ell+1}\ldr{\ln t}^{b_\ell+2}t^{-8}
  \]
  for $t\leq t_{\ell+1}$. Since this estimate also holds for $t_{\ell+1}\leq t\leq t_0$, due to the fact that the solution is smooth, we conclude that
  (\ref{eq:induction}) holds with $\ell$ replaced by $\ell+1$. By induction, we conclude that it holds for all $\ell$. Since $t\ldr{\ln t}^{b_\ell}$
  is bounded for $t\leq t_0$, the estimate (\ref{eq:energy estimates}) holds. 
\end{proof}

Finally, we are in a position to prove Theorem~\ref{thm:hod}.

\begin{proof}[Proof of Theorem~\ref{thm:hod}]
  Due to (\ref{eq:psiinterpol}), (\ref{eq:e om g phi be gen}) and (\ref{eq:energy estimates}), it follows that
  \[
    \|e\|_{H^\ell(\S_t)}\leq C_{\ell,k}t^{(-1+3\ve)(1-\ell/k)}t^{-9\ell/(2k)}.
  \]
  Taking a large enough $k$, depending only on $\ell$ and $\ve$, it follows that
  \[
    \|e\|_{H^\ell(\S_t)}\leq C_{\ell,\ve}t^{-1+2\ve}.
  \]
  Due to Sobolev embedding, the same holds for all $C^\ell$ norms. Due to this and similar arguments for $\omega$, $\gamma$ and $e_I\varphi$,
  we deduce from (\ref{eq:e om g phi be gen}) and (\ref{eq:energy estimates}) that (\ref{eq:eogearphi Cl est}) holds.
  Similarly, using (\ref{eq:e om g phi be gen}) and (\ref{eq:N be gen}), we can argue that (\ref{eq:lapse est interm}) holds.
  Next, due to (\ref{eq:psiinterpol}), (\ref{eq:k phi be gen}) and (\ref{eq:energy estimates}), it follows that
  \[
    t^{1+\ve/4}\|k\|_{C^{\ell}(\S_t)}+t^{1+\ve/4}\|e_0 \varphi\|_{C^{\ell}(\S_t)} \leq C_\ell
  \]
  for all $\ell\in\nn{}_0$ and all $t\leq t_0$. Moreover, due to (\ref{eq:psiinterpol}), (\ref{eq:phi Co be gen}) and (\ref{eq:energy estimates}), it follows that
  \[
    t^{\ve}\|\varphi\|_{C^{\ell}(\S_t)} \leq C_\ell
  \]
  for all $\ell\in\nn{}_0$ and all $t\leq t_0$. Combining this observation with (\ref{eq:V and V prime Hl est}) yields
  \[
    t^2\|V\circ\varphi\|_{C^\ell(\S_t)}+t^2\|V'\circ\varphi\|_{C^\ell(\S_t)}\leq C_\ell t^{4\ve}
  \]
  for all $\ell\in\nn{}_0$ and all $t\leq t_0$. On the other hand, due to (\ref{eq:dtkIJ}),
  \begin{equation*}
    \begin{split}
      \d_t(tk_{IJ}) = & -(N-1)k_{IJ}+te_{(I}e_{J)} (N) - te_{K} (N\g_{K(IJ)})- Nte_{(I}(\g_{J)KK}) +Nt\g_{KLL}\g_{K(IJ)}\\
      & + Nt\g_{I(KL)}\g_{J(KL)}- \tfrac14 Nt\g_{KLI} \g_{KLJ} + Nte_I(\varphi)e_J(\varphi)+\tfrac{2N}{n-1}t(V\circ\varphi) \delta_{IJ}.
    \end{split}
  \end{equation*}
  Combining this equality with the estimates derived so far yields the conclusion that
  \[
    \|\d_t(tk_{IJ})\|_{C^\ell(\S_t)}\leq C_\ell t^{-1+\ve/4}
  \]  
  for all $\ell\in\nn{}_0$ and all $t\leq t_0$. This estimate can be integrated to yield the existence of a $\mrmfK_{IJ}\in C^{\infty}(\S)$ such that
  (\ref{eq:tkIJ conv}) holds. Next, note that (\ref{eq:scalarfield}) implies that
  \[
  \d_t\left( te_{0}(\varphi) \right)=-(N-1)e_0(\varphi)+tN e_{I}\left( e_{I}(\varphi) \right)
  +te_{I}(N)e_{I}(\varphi) -tN\g_{JII}e_{J}(\varphi)-tNV'\circ\varphi.
  \]
  By an argument similar to the derivation of (\ref{eq:tkIJ conv}), we conclude that there is a $\mrPsi\in C^{\infty}(\S)$ such that
  \begin{equation}\label{eq:ezPhi conv}
    \|t (e_0\varphi)(\cdot,t)-\mrPsi\|_{C^\ell(\S)}+\|(t\d_t\varphi)(\cdot,t)-\mrPsi\|_{C^\ell(\S)}\leq C_\ell t^{\ve/4}
  \end{equation}
  for all $\ell\in\nn{}_0$ and all $t\leq t_0$; the bound on the second term on the left hand side follows from the bound of the first term,
  combined with (\ref{eq:lapse est interm}). The bound on the second term can be integrated to conclude that there is a
  $\mrPhi\in C^{\infty}(\S)$ such that
  \begin{equation}\label{eq:ezPhi and Phi conv}
    \|\varphi(\cdot,t)-\mrPsi\ln t-\mrPhi\|_{C^\ell(\S)}\leq C_\ell t^{\ve/4}
  \end{equation}
  for all $\ell\in\nn{}_0$ and all $t\leq t_0$. Combining (\ref{eq:ezPhi conv}) and (\ref{eq:ezPhi and Phi conv}) yields
  (\ref{eq:ezPhi and Phi conv final}). Finally, due to (\ref{seq:k phi prel est}), it is clear that (\ref{eq:k ezphi opt Cl bd}) holds. 
\end{proof}

\section{Asymptotics}\label{section:asymptotics}

The first step in the derivation of the asymptotics is to deduce basic consequences of the assumptions of Theorem~\ref{thm:asymptotics}.

\subsection{Basic consequences of the assumptions}\label{ssection: basic consequences of the assumptions}

We begin by deriving bounds for $\rp_I$. 

\begin{lemma} \label{bounds for the pI}
  Let $3\leq n\in\nn{}$, $k_0\in\nn{}$, $r>0$, $\delta\in (0,1)$ and fix a $\delta$-admissible potential $V\in C^\infty(\R)$.
  If $(M,g,\varphi)$ is a solution to the Einstein--nonlinear scalar field equations satisfying the $(\delta,n,k_0,r)$-assumptions, then
    \[
    |\rp_I| < 1-2\delta
    \]
    on $\S$ for all $I$.
\end{lemma}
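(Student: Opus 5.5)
The bound follows from the algebraic relations (\ref{eq:Kasner relations}) and the subcritical condition (\ref{subcritical condition}) alone, pointwise on $\S$. No analytic information about the solution is needed. Fix $x\in\S$ and write $\rp_I=\rp_I(x)$. I would prove separately that $\rp_I<1-2\delta$ and that $-\rp_I<1-2\delta$ for each $I$.

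\emph{Upper bound.} Since $n\geq 3$, given $I$ we can pick two distinct indices $J,K\neq I$. Apply (\ref{subcritical condition}) with the pair $(I,J)$ and third index $K$, and also with the pair $(I,K)$ and third index $J$:
\[
\rp_I+\rp_J-\rp_K<1-2\delta,\qquad \rp_I+\rp_K-\rp_J<1-2\delta.
\]
Adding these gives $2\rp_I<2(1-2\delta)$, so $\rp_I<1-2\delta$.

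\emph{Lower bound.} Here the choice of the third index is the key. Given $I$, pick $J,K$ distinct with $J,K\neq I$. Using the pair $(J,K)$ with third index $I$ gives $\rp_J+\rp_K-\rp_I<1-2\delta$. To turn this into a bound on $-\rp_I$, I would like to control $\rp_J+\rp_K$ from below by something involving $\rp_I$. The Kasner relation $\sum_L\rp_L=1$ gives $\rp_J+\rp_K=1-\rp_I-\sum_{L\neq I,J,K}\rp_L$. When $n=3$ the last sum is empty, so $1-2\rp_I<1-2\delta$, i.e.\ $\rp_I>\delta$. This is even stronger than needed.

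For general $n$, I would instead sum the inequality over the pair. Take the pairs $(J,K)$ with $J<K$ ranging over $\{1,\dots,n\}\setminus\{I\}$, always with third index $I$, and average over all such pairs. The average of $\rp_J+\rp_K$ is $\tfrac{2}{n-1}\sum_{L\neq I}\rp_L=\tfrac{2}{n-1}(1-\rp_I)$. This yields
\[
\tfrac{2}{n-1}(1-\rp_I)-\rp_I<1-2\delta,
\]
that is, $\rp_I\tfrac{n+1}{n-1}>\tfrac{2}{n-1}-1+2\delta=\tfrac{3-n}{n-1}+2\delta$. This gives $\rp_I>\tfrac{3-n}{n+1}+\tfrac{2(n-1)\delta}{n+1}$.

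Whether this implies $\rp_I>-(1-2\delta)$ needs to be checked. We need $\tfrac{3-n}{n+1}+\tfrac{2(n-1)\delta}{n+1}\geq -1+2\delta$. Multiplying by $n+1$, this reads $3-n+2(n-1)\delta\geq -n-1+2(n+1)\delta$, i.e.\ $4\geq 4\delta$. This holds since $\delta<1$, and strictly. So the averaging argument closes the lower bound.

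The only real obstacle is this lower bound, namely choosing the right combination of inequalities. The averaging over pairs $(J,K)$ with fixed third index $I$, combined with $\sum\rp_L=1$, is what I would try first, and the computation above suggests it suffices. The quadratic relation $\sum\rp_I^2+\rpsi^2=1$ would also give $|\rp_I|\leq1$, but that bound is not sharp enough, so it is not needed.
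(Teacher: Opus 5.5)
Your proof is correct. The upper bound could even be taken in one step, by choosing $K=J$ in \eqref{subcritical condition}, which only requires $I\neq J$; averaging over the pairs $J\neq K$ in $\{1,\dots,n\}\setminus\{I\}$ with third index $I$, together with $\sum_L\rp_L=1$, gives $\rp_I>\frac{3-n+2(n-1)\delta}{n+1}=-1+2\delta+\frac{4(1-\delta)}{n+1}$, which closes the lower bound. The paper writes out no argument of its own here, only invoking the proof of \cite[Lemma~77, p.~29]{OPR} applied pointwise to the $\rp_I$; your self-contained derivation from the trace relation and the subcritical condition is in line with that reference as the paper cites it.
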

\begin{remark}
  Here and below, $\S$ is assumed to be the closed manifold appearing in Definition~\ref{def:asympt assump}. 
\end{remark}
\begin{proof}
    The conclusion is an immediate consequence of the assumptions and the argument given in the proof of \cite[Lemma~77, p.~29]{OPR}.
\end{proof}

\begin{lemma} \label{estimates for powers of theta}
  Let $\alpha_I$, $I=1,\dots,n$, and $\beta$ be constants. Let $3\leq n\in\nn{}$, $k_0\in\nn{}$, $r>0$, $\delta\in (0,1)$ and fix a $\delta$-admissible
  potential $V\in C^\infty(\R)$. If $(M,g,\varphi)$ is a solution to the Einstein--nonlinear scalar field equations satisfying the
  $(\delta,n,k_0,r)$-assumptions, then, using the notation of Subsection~\ref{ssection:norms and basic est}, there is a constant $C$ such that
  \begin{equation}\label{eq:basic theta est}
    \big|E_\I\big(\theta^{\sum_I\alpha_Ip_I + \beta}\big)\big| \leq C\langle \ln t \rangle^{|\I|} t^{-\sum_I\alpha_I\rp_I - \beta}, \qquad
    \|\ln\theta\|_{C^{k_0}(\Sigma)} \leq C\langle \ln t \rangle
  \end{equation}
  on $M$ for $|\I| \leq k_0$; and for $t\in (0,T)$, respectively.
\end{lemma}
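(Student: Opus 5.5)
We first treat $\ln\theta$. Since $\Theta=N\theta$, we have $\ln\theta=\ln(t\Theta)-\ln N-\ln t$. By (\ref{eq:tTheta and N invertible}), $t\Theta\in[1/2,3/2]$. The assumptions (\ref{bounds on the solution}) give $\|t\Theta-1\|_{C^{k_0}(\S)}\leq \bfc_0 t^{\delta}$, $\|N\|_{C^{k_0}(\S)}\leq\bfc_0$ and $\|N^{-1}\|_{C^0(\S)}\leq \bfc_0$. We then apply the chain rule (Fa\`a di Bruno): $E_\I\ln f$ is a linear combination of products $f^{-m}E_{\I_1}f\cdots E_{\I_m}f$. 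This shows that $\ln(t\Theta)$ and $\ln N$ are bounded in $C^{k_0}(\S)$. Since $-\ln t$ is spatially constant, it contributes $\ldr{\ln t}$ only to the $C^0$ part, and this gives the second estimate in (\ref{eq:basic theta est}).

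For the first estimate, write
\[
\theta^{\sum_I\alpha_Ip_I+\beta}=\exp(F),\qquad F:=\big(\tsum_I\alpha_Ip_I+\beta\big)\ln\theta .
\]
The $p_I$ are bounded in $C^{k_0}$ by (\ref{eq:pI lim Phi lim Psi lim}), since the $\rp_I\in C^{k_0}$ are fixed, so the product rule gives $\|E_{\bfJ}F\|_{C^0}\leq C\ldr{\ln t}$ for $1\leq|\bfJ|\leq k_0$. Expanding $E_\I e^F$ as $e^F$ times a sum of products $E_{\bfJ_1}F\cdots E_{\bfJ_m}F$ with $\sum|\bfJ_i|=|\I|$ and $m\leq|\I|$ then yields
\[
|E_\I e^F|\leq C\ldr{\ln t}^{|\I|}e^F .
\]
It therefore remains to bound $e^F$ itself in $C^0$.

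This is the main point, since we must replace $p_I$ by $\rp_I$ in the exponent without losing a power of $t$. Write $\ln\theta=-\ln t+b$ with $b$ bounded, as shown above. Then
\[
F=-\big(\tsum_I\alpha_I\rp_I+\beta\big)\ln t-\big(\tsum_I\alpha_I(p_I-\rp_I)\big)\ln t+\big(\tsum_I\alpha_Ip_I+\beta\big)b .
\]
The last term is bounded. The middle term is bounded because $|p_I-\rp_I|\leq\bfc_0t^\delta$ and $t^\delta|\ln t|$ is bounded. Hence $e^F\leq C\,t^{-\sum_I\alpha_I\rp_I-\beta}$, and combining this with the derivative expansion above completes the proof.

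The only care needed concerns the range of $t$. If we need $t$ bounded away from $1$ for $\ldr{\ln t}$ to dominate $|\ln t|$, this is harmless: $\ldr{\ln t}\geq1$, and all quantities are bounded on compact subintervals of $(0,T)$ by smoothness, with $T$ finite.
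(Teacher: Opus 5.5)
Your proof is correct and follows essentially the same route as the paper. Both arguments use the decomposition $\ln\theta=\ln(t\Theta)-\ln N-\ln t$, absorb the term $(p_I-\rp_I)\ln t$ via $t^\delta|\ln t|$ bounded to get the $C^0$ bound on $\theta^{\sum_I\alpha_Ip_I+\beta}$, and expand $E_\I$ of the exponential as $\theta^{\sum_I\alpha_Ip_I+\beta}$ times products of derivatives of $(\sum_I\alpha_Ip_I+\beta)\ln\theta$, each factor bounded by $C\ldr{\ln t}$.
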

\begin{remark}
  Here and below, $T>0$ is assumed to be the positive number appearing in Definition~\ref{def:asympt assump}. 
\end{remark}
\begin{proof}
  We begin with the case without derivatives. Write
  \[
  \theta^{\sum_I\alpha_Ip_I + \beta} = N^{-\sum_I\alpha_Ip_I - \beta} (t\Theta)^{\sum_I\alpha_Ip_I+\beta} t^{\sum_I\alpha_I(\rp_I-p_I)} t^{-\sum_I \alpha_I\rp_I - \beta}.
  \]
  Note that $|(\ln t)(\rp_I-p_I)| \leq C|\ln t|t^\delta$ is bounded for $t \in (0,T)$. Moreover, due to (\ref{eq:tTheta and N invertible}),
  it is clear that $(t\Theta)^{-1}\leq 2$. Due to this observation and the assumptions,
  it is clear that $N$, $N^{-1}$, $t\Theta$ and $(t\Theta)^{-1}$ are bounded. Combining the above observations, it is clear that
  \[
  |\theta^{\sum_I\alpha_Ip_I+\beta}| \leq Ct^{-\sum_I\alpha_I\rp_I-\beta}.
  \]
  Next,
  \[
  |\ln \theta| = |\ln(t\Theta) - \ln N - \ln t|\leq C\ldr{\ln t}.
  \]
  Turning to the derivatives, if $1 \leq |\I| \leq k_0$,
  \[
  E_\I \ln\theta = E_\I(\ln(t\Theta) - \ln N-\ln t)= E_\I(\ln(t\Theta) - \ln N).
  \]
  Thus $|E_\I \ln \theta| \leq C$ and the second estimate in (\ref{eq:basic theta est}) follows. Finally,
  \[
  E_\I \big(\theta^{\sum_I\alpha_Ip_I + \beta}\big)
  = \theta^{\sum_I\alpha_Ip_I + \beta} \tsum E_{\I_1} \big( (\ln\theta)(\tsum_I\alpha_Ip_I+\beta) \big)
  \cdots E_{\I_r}\big( (\ln\theta)(\tsum_I\alpha_Ip_I+\beta) \big),
  \]
  from which the result follows.
\end{proof} 

\begin{lemma} \label{bounds for the scalar field}
  Let $3\leq n\in\nn{}$, $k_0\in\nn{}$, $r>0$, $\delta\in (0,1)$ and fix a $\delta$-admissible potential $V\in C^\infty(\R)$. If $(M,g,\varphi)$ is a
  solution to the Einstein--nonlinear scalar field equations satisfying the $(\delta,n,k_0,r)$-assumptions, then, using the notation of
  Subsection~\ref{ssection:norms and basic est}, there is a constant $C$ such that
  \begin{equation}\label{eq:basic vphi and Vvphi est}
    \|\varphi\|_{C^{k_0}(\Sigma)} \leq C\langle \ln t \rangle, \qquad \|V \circ \varphi\|_{C^{k_0}(\Sigma)} \leq C\langle \ln t \rangle^{k_0} t^{-2+2\delta}
  \end{equation}
  for all $t \in (0,T)$.
\end{lemma}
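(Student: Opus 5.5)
The plan is to use Definition~\ref{def:exp norm id} to express $\varphi$ in terms of the expansion normalised quantities, $\varphi=\Phi-\Psi\ln\theta$. By (\ref{eq:pI lim Phi lim Psi lim}), $\Phi(t)$ and $\Psi(t)$ are bounded in $C^{k_0}(\S)$ uniformly for $t\in(0,T)$, since $\rphi,\rpsi\in C^{k_0}(\S)$ and $t^\delta\leq T^\delta$. By the second estimate in (\ref{eq:basic theta est}), $\|\ln\theta\|_{C^{k_0}(\S)}\leq C\ldr{\ln t}$. Products are controlled in $C^{k_0}$ by products of $C^{k_0}$ norms (Leibniz rule with respect to the frame $\{E_i\}$). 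Together these give $\|\varphi\|_{C^{k_0}(\S)}\leq C\ldr{\ln t}$, which is the first estimate in (\ref{eq:basic vphi and Vvphi est}).

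For the potential, the first task is a sharp $C^0$ bound on $\varphi$, since an exponential of $|\varphi|$ cannot tolerate a constant-factor loss. I write
\[
\varphi=\Phi-\Psi\ln\theta=\Phi-\Psi\big[\ln(t\Theta)-\ln N\big]+\Psi\ln t .
\]
By (\ref{eq:tTheta and N invertible}) and the bounds on $N$ and $N^{-1}$, the bracket is bounded in $C^0$, and $\Phi,\Psi$ are bounded. Hence $|\varphi|\leq C+|\Psi|\,|\ln t|$. Next, $|\Psi|\leq|\rpsi|+Ct^\delta$, and $|\ln t|\,t^\delta$ is bounded, so $|\varphi|\leq C+|\rpsi|\,|\ln t|$. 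The relations (\ref{eq:Kasner relations}) give $\rpsi^2=1-\sum_I\rp_I^2$, and $\sum_I\rp_I=1$ forces $\sum_I\rp_I^2\geq 1/n$, so $|\rpsi|\leq 1$. I would rather use $|\rpsi|\leq 1$ alone than try to extract a gain from (\ref{subcritical condition}). Either way, $|\varphi|\leq C-\ln t$ for small $t$.

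Now I invoke the $\delta$-admissibility of $V$ from Definition~\ref{def:Vadm} with $\s_V=\delta$: $\sum_{m\leq k_0}|V^{(m)}(\varphi)|\leq c_{k_0}e^{2(1-\delta)|\varphi|}\leq Ct^{-2+2\delta}$. By the chain rule (Faà di Bruno), $E_\I(V\circ\varphi)$ with $|\I|\leq k_0$ is a linear combination of terms $V^{(m)}\circ\varphi\,E_{\I_1}\varphi\cdots E_{\I_m}\varphi$ with $m\leq|\I|$. Each derivative factor is bounded by $\|\varphi\|_{C^{k_0}}\leq C\ldr{\ln t}$, so every such term is at most $C\ldr{\ln t}^{k_0}t^{-2+2\delta}$, which is the second estimate in (\ref{eq:basic vphi and Vvphi est}).

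The main obstacle is obtaining the sharp coefficient $1$ in front of $|\ln t|$ in the $C^0$ bound for $\varphi$. Any larger coefficient would spoil the power $t^{-2+2\delta}$, and getting it exactly right requires both the convergence rate of $\Psi$ to $\rpsi$ (to absorb $|\Psi-\rpsi|\,|\ln t|$) and the algebraic bound $|\rpsi|\leq1$ coming from (\ref{eq:Kasner relations}). For $t$ away from $0$ (up to $T$), the estimates follow trivially by adjusting constants, since $\ldr{\ln t}$ and $t^{-2+2\delta}$ are bounded below there.
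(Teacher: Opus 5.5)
Your proposal is correct and follows the paper's proof essentially step for step. The paper writes $\varphi=-\rpsi\ln\theta+(\rpsi-\Psi)\ln\theta+\Phi$, bounds the last two terms using the convergence rate of $\Psi$ and $\ln\theta=\ln(t\Theta)-\ln N-\ln t$, uses $|\rpsi|\leq 1$ to get $e^{2(1-\delta)|\varphi|}\leq Ct^{-2+2\delta}$, and finishes with the chain-rule expansion of $E_{\mathbf{I}}(V\circ\varphi)$; this is exactly your argument, including the point that the coefficient of $|\ln t|$ must be $|\rpsi|$ with $|\Psi-\rpsi|\,|\ln t|$ absorbed.
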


\begin{proof}
  The first estimate in (\ref{eq:basic vphi and Vvphi est}) is an immediate consequence of the assumptions, the second estimate in
  (\ref{eq:basic theta est}) and
  \[
  \varphi = -\rpsi\ln\theta + (\rpsi-\Psi)\ln\theta + \Phi.
  \]
  Next, since
  \[
  |\rpsi| = \big(1 - \tsum_I \rp_I^2\big)^{1/2} \leq 1
  \]
  and $(\rpsi - \Psi)\ln\theta + \Phi$ is bounded,
  \[
  |V^{(k)}\circ\varphi| \leq Ce^{2(1-\delta)|\varphi|} \leq Ce^{2(1-\delta)|\rpsi\ln\theta|} \leq Ct^{-2+2\delta}
  \]
  for $k \leq k_0$, where we appealed to $\ln\theta = \ln(t\Theta) - \ln N - \ln t$. Next, 
  \[
  |E_\I(V\circ\varphi)| = \big|\tsum (V^{(k)}\circ\varphi) E_{\I_1}(\varphi) \cdots E_{\I_k}(\varphi)\big|\leq C\langle \ln t \rangle^{|\I|}t^{-2+2\delta}
  \]
  for $|\I| \leq k_0$. The lemma follows. 
\end{proof}

The following lemma is necessary to obtain bounds that are uniform on $\Sigma$ in Theorem~\ref{asymptotics of the eigenframe} below.

\begin{lemma} \label{technical lemma}
  Let $3\leq n\in\nn{}$, $k_0\in\nn{}$, $r>0$, $\delta\in (0,1)$ and fix a $\delta$-admissible potential $V\in C^\infty(\R)$. If $(M,g,\varphi)$ is a
  solution to the Einstein--nonlinear scalar field equations satisfying the $(\delta,n,k_0,r)$-assumptions, then, using the notation of
  Subsection~\ref{ssection:norms and basic est}, there is a finite open cover $\{U_j\}$ of $\Sigma$, and constants
  $0 < \alpha_j \leq \delta/4$ such that
  \begin{subequations}
    \begin{align}
      |-1+\rp_I(x) + \rp_J(x) - \rp_K(x) + m\delta + \alpha_j| &\geq \tfrac{\alpha_j}{2},\label{eq:subdominant bd away from zero}\\
      |-1-\rp_I(x) + m\delta + \alpha_j| &\geq \tfrac{\alpha_j}{2}
    \end{align}
  \end{subequations}
  for all $x \in U_j$, $m\in\zn{}$ and $I,J,K\in\{1,\dots,n\}$ with $I\neq J$.
\end{lemma}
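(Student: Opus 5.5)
The idea is a compactness and pigeonhole argument. It uses only that the $\rp_I$ are continuous on the compact manifold $\Sigma$, together with the bound of Lemma~\ref{bounds for the pI}.

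\emph{Step 1: reduce to finitely many $m$.} Write
\[
f_{IJK}(x):=-1+\rp_I(x)+\rp_J(x)-\rp_K(x),\qquad g_I(x):=-1-\rp_I(x).
\]
By Lemma~\ref{bounds for the pI}, $|\rp_I|<1-2\delta<1$, so $|f_{IJK}|\leq 4$ and $|g_I|\leq 2$ on $\Sigma$. Since $0<\alpha_j\leq\delta/4$, any $m\in\zn{}$ with $|m|\delta>4+\delta$ gives
\[
|f_{IJK}(x)+m\delta+\alpha_j|\geq |m|\delta-4-\delta/4\geq \delta/2\geq \alpha_j/2,
\]
and the analogous bound holds for $g_I$. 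So only the finitely many $m$ with $|m|\leq M:=\lceil (4+\delta)/\delta\rceil$ need to be considered. This leaves a finite collection $\mathcal{F}$ of continuous functions, namely $F:=f_{IJK}+m\delta$ and $g_I+m\delta$ with $|m|\leq M$. We have $|\mathcal{F}|\leq N_0:=(2M+1)(n^3+n)$.

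\emph{Step 2: pointwise choice of $\alpha$ (pigeonhole).} Fix $x\in\Sigma$. The set $\{-F(x):F\in\mathcal{F}\}$ has at most $N_0$ points. Partition $(0,\delta/4]$ into $2N_0+2$ subintervals of equal length $\lambda:=\delta/(8N_0+8)$. At most $2N_0$ of them can contain a point within distance $\lambda/2$ of some $-F(x)$. Hence some subinterval avoids all of these, and its midpoint $\alpha_x$ satisfies $|F(x)+\alpha_x|\geq\lambda/2$ for all $F\in\mathcal{F}$.

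\emph{Step 3: extend to a neighbourhood, then use compactness.} Since $\alpha_x\leq\delta/4$, we have $\lambda/4\leq\alpha_x/2$ only if $\alpha_x\leq \lambda/2$. That is backwards, so the margin has to be tied to $\alpha_x$ itself. Instead choose $\alpha_x\geq\lambda$; this is arranged by discarding the first subinterval, which costs only one extra interval in the count. Then require the stronger separation $|F(x)+\alpha_x|\geq \alpha_x$. To get it, pick among the $2N_0+2$ points $\alpha^{(q)}:=q\cdot\delta/(4(2N_0+2))$, for $q=1,\dots,2N_0+2$, one whose distance to each $-F(x)$ is at least $\delta/(8(2N_0+2))$.

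This margin is comparable to $\alpha^{(q)}$ only for small $q$. So the cleaner route is a geometric spacing: $\alpha^{(q)}:=\delta 4^{-q-1}$ for $q=0,\dots,N_0$. These are $N_0+1$ points, and each $-F(x)$ can be within distance $\alpha^{(q)}\cdot\tfrac34$ of at most one of them, because consecutive ratios are $4$. By pigeonhole some $q$ gives $|F(x)+\alpha^{(q)}|\geq\tfrac34\alpha^{(q)}$ for all $F\in\mathcal{F}$. By continuity of the finitely many $F$, there is an open neighbourhood $U_x$ of $x$ on which $|F+\alpha^{(q)}|\geq\alpha^{(q)}/2$.

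Covering $\Sigma$ by the sets $U_x$ and extracting a finite subcover, using compactness of $\Sigma$, gives the sets $U_j$ with constants $\alpha_j\in(0,\delta/4]$, as claimed. The only delicate point is the one encountered in Step 3: the margin must be proportional to $\alpha_j$ rather than uniform. The geometric spacing of candidate values is what resolves it.
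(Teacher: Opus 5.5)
Your architecture is sound and matches the paper's: discard all but finitely many $m$ using $|\mathring{p}_I|<1$, choose $\alpha_x$ pointwise with a margin proportional to $\alpha_x$, extend by continuity, and extract a finite subcover. However, the pigeonhole step you finally settle on is false as stated.

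With $\alpha^{(q)}=\delta 4^{-q-1}$, the values $y$ with $|y-\alpha^{(q)}|<\tfrac34\alpha^{(q)}$ form the interval $(\tfrac14\alpha^{(q)},\tfrac74\alpha^{(q)})$. For $\alpha^{(q+1)}=\tfrac14\alpha^{(q)}$ they form $(\tfrac1{16}\alpha^{(q)},\tfrac7{16}\alpha^{(q)})$. These overlap: $y=\tfrac5{16}\alpha^{(q)}$ is within $\tfrac34\alpha^{(q)}$ of $\alpha^{(q)}$ and within $\tfrac34\alpha^{(q+1)}$ of $\alpha^{(q+1)}$. So a single value $-F(x)$ can disqualify two consecutive candidates (never three). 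Then $N_0$ values can in principle disqualify all $N_0+1$ candidates. The claim ``at most one of them, because consecutive ratios are $4$'' fails, and with it the conclusion of Step~3.

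The repair is cheap. Any of the following works:
\begin{itemize}
\item use $2N_0+1$ candidates;
\item use ratio $8$, since then $\tfrac18(1+\tfrac34)<1-\tfrac34$ and the $\tfrac34$-neighbourhoods are disjoint;
\item keep ratio $4$ with a margin constant in $(\tfrac12,\tfrac35)$, which still lets the continuity step yield $\alpha^{(q)}/2$.
\end{itemize}
Step~2 and the abandoned first half of Step~3 should simply be deleted.

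The paper needs no pigeonhole at all. At each $x$, for each of the finitely many numbers $a_i$ (the $1-\mathring{p}_I-\mathring{p}_J+\mathring{p}_K$ and $1+\mathring{p}_I$), it takes the nearest multiple $m_i\delta$. It then lets $\alpha(x)$ be half the smallest nonzero $|a_i-m_i\delta|$. Because the lattice $\delta\zn{}$ has spacing $\delta\geq 4\alpha(x)$, this gives $|a_i-m\delta-\alpha(x)|\geq\alpha(x)$ for every $m\in\zn{}$ simultaneously. Hence no reduction to finitely many $m$ is needed, and the margin at $x$ is the full $\alpha(x)$. Continuity and compactness then finish exactly as in your argument.

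The same device would have rescued your proof directly. For your finite family $\mathcal{F}$, set $\alpha_x:=\min\{\delta/4,\tfrac12\min\{|F(x)|:F\in\mathcal{F},\,F(x)\neq0\}\}$. This gives $|F(x)+\alpha_x|\geq\alpha_x$ for all $F\in\mathcal{F}$, including those vanishing at $x$.
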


\begin{proof}
  Let $a_1,\ldots,a_k$ be real numbers and let $m_i\delta$ be the integer multiple of $\delta$ that is closest to $a_i$ (if there are two such numbers,
  either of them may be chosen). By rearranging the $a_i$, we may assume that the first $K$ of them are the ones satisfying $a_i = m_i\delta$. Then
  there is an $0 < \alpha \leq \delta/4$ such that
  \[
  |a_i - m_i\delta - \alpha| \geq \alpha
  \]
  for all $i$. Indeed, we can take $\alpha$ to be half the minimum of $|a_i - m_i\delta|$ for $i \geq K+1$. In fact, the inequality
  \[
  |a_i - m\delta - \alpha| \geq \alpha
  \]
  holds for every integer $m$.

    For every $x \in \Sigma$, we think of the $a_i$ as the numbers $1-\rp_I(x)-\rp_J(x) + \rp_K(x)$ for $I \neq J$, and the $1+\rp_I(x)$. Then by the previous paragraph, there is an $0 < \alpha(x) \leq \delta/4$ such that
    \[
    \begin{split}
        |1-\rp_I(x) - \rp_J(x) + \rp_K(x) - m\delta - \alpha(x)| &\geq \alpha(x),\\
        |1+\rp_I(x) - m\delta - \alpha(x)| &\geq \alpha(x).
    \end{split}
    \]
    By continuity, there is an open set $U_x \subset \Sigma$ containing $x$, such that
    \[
    \begin{split}
        |1-\rp_I(y) - \rp_J(y) + \rp_K(y) - m\delta - \alpha(x)| &\geq \tfrac{\alpha(x)}{2},\\
        |1+\rp_I(y) - m\delta - \alpha(x)| &\geq \tfrac{\alpha(x)}{2},
    \end{split}
    \]
    for all $y \in U_x$. Finally, since the $U_x$ form an open cover of $\Sigma$, there are finitely many of them that cover $\Sigma$.
\end{proof}

\subsection{The eigenframe}\label{ssection:the eigenframe}
Due to the assumptions, the eigenvalues $p_I$ of $\K$ are distinct everywhere. Due to \cite[Lemma~A.1, p.~223]{RinWave}, we can, by taking a finite
covering space of $\S$, if necessary, assume that there are smooth vector fields $\eig_I$ on $M$, for $I = 1,\ldots,n$, tangential
to the hypersurfaces $\S_t$ and such that $\K(\eig_I) = p_I\eig_I$ and $h(\eig_I,\eig_J) = \delta_{IJ}$. In particular, $\{\eig_I\}_{I=1}^n$ is then an
orthonormal frame for $h$ on each $\S_t$. Denote the corresponding dual frame $\{\deig^I\}_{I=1}^n$. Next, we deduce evolution equations for the $\eig_I$.

\begin{lemma} \label{equations for the eigenframe}
  Let $3\leq n\in\nn{}$, $k_0\in\nn{}$, $r>0$, $\delta\in (0,1)$ and fix a $\delta$-admissible potential $V\in C^\infty(\R)$. Given that
  $(M,g,\varphi)$ is a solution to the Einstein--nonlinear scalar field equations satisfying the $(\delta,n,k_0,r)$-assumptions, assume that there is
  an eigenframe $\{\eig_I\}_{I=1}^n$ with the properties listed above. Then $\{\eig_I\}_{I=1}^n$ and the eigenvalues $p_I$ satisfy the system of equations
  \begin{subequations}
    \begin{align}
      [e_0,\eig_I] &= \eig_I(\ln N)e_0 - \theta p_I\eig_I + \tsum_{J \neq I} \tfrac{1}{\theta(p_I-p_J)} h\big((\lie_{e_0}K)(\eig_I),\eig_J\big)\eig_J,\\
      e_0p_I &= -\theta^{-1}e_0(\theta)p_I + \theta^{-1}g\big((\lie_{e_0}K)(\eig_I),\eig_I\big)\label{equation for eigenvalues}
    \end{align}
  \end{subequations}
  (no summation on $I$). In particular,
  \[
    \p_t\eig_I^{\;i} = -\Theta p_I\eig_I^{\;i} + \tsum_{J \neq I} \tfrac{N^2}{\Theta(p_I-p_J)}g\big( (\lie_{e_0}K)(\eig_I),\eig_J \big)\eig_J^{\;i}
  \]
  (no summation on $I$), where $\eig_I = \eig_I^{\;i}E_i$.
\end{lemma}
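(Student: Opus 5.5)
The plan is to differentiate the defining relations $\K(\eig_I)=p_I\eig_I$ and $h(\eig_I,\eig_J)=\delta_{IJ}$ along $e_0$, using the Lie derivative with respect to $e_0$ restricted to tangential fields. Write $K$ for the Weingarten map, so $\K=\theta^{-1}K$. The basic geometric input is $\lie_{e_0}h = 2k$ on tangential vectors, i.e.\ $(\lie_{e_0}h)(X,Y)=2h(K X,Y)$. We also need that $[e_0,X]$ for tangential $X$ has normal component $X(\ln N)e_0$. The latter follows from \eqref{eq:standard form g ito h}: $e_0=N^{-1}\d_t$, and $X$ tangential means $X\,t=0$. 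Hence
\[
  [e_0,X]\,t = e_0(Xt)-X(N^{-1})=N^{-2}X(N),
\]
and since $e_0 t=N^{-1}$, the normal part of $[e_0,X]$ is $N^{-1}X(N)e_0=X(\ln N)e_0$.

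Next decompose the tangential part as $\lie_{e_0}\eig_I - \eig_I(\ln N)e_0 = \sum_J c_{IJ}\eig_J$, where $\lie$ here denotes the tangential projection of the Lie bracket. Differentiating $h(\eig_I,\eig_J)=\delta_{IJ}$ gives
\[
  2h(K\eig_I,\eig_J)+c_{IJ}+c_{JI}=0,
\]
so the symmetric part of $c$ equals $-\theta p_I\delta_{IJ}$. Differentiating $K\eig_I=\theta p_I\eig_I$ gives
\[
  (\lie_{e_0}K)\eig_I + K(\lie_{e_0}\eig_I) = e_0(\theta p_I)\eig_I+\theta p_I\lie_{e_0}\eig_I .
\]
Pairing with $\eig_J$ via $h$ for $J\neq I$ and using self-adjointness of $K$ yields $h((\lie_{e_0}K)\eig_I,\eig_J)=\theta(p_I-p_J)c_{IJ}$. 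Nondegeneracy then gives $c_{IJ}$ for $J\ne I$; the $J=I$ component yields \eqref{equation for eigenvalues} after expanding $e_0(\theta p_I)$. The remaining point is consistency: the antisymmetric part of $c$ must be compatible with the formula. It is, because $h((\lie_{e_0}K)\eig_I,\eig_J)$ is itself symmetric in $I,J$ (as $\lie_{e_0}K$ is $h$-self-adjoint up to terms involving $\lie_{e_0}h$ that cancel). Combined with the factor $(p_I-p_J)^{-1}$, the off-diagonal $c_{IJ}$ are then automatically antisymmetric, which forces the symmetric part to be purely diagonal, namely $c_{II}=-\theta p_I$.

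I expect the main obstacle to be this symmetry check. Concretely, $h((\lie_{e_0}K)X,Y)-h(X,(\lie_{e_0}K)Y)$ should vanish after differentiating $h(KX,Y)=h(X,KY)$ and using $\lie_{e_0}h=2k$, since the $k\circ K$ terms cancel. I would verify it that way. Finally, for the coordinate version, use $\d_t = N e_0$ and $[\d_t,\eig_I]=\d_t\eig_I^{\;i}E_i$, which is purely tangential, since the normal term cancels when multiplying by $N$: $[Ne_0,\eig_I]=N[e_0,\eig_I]-\eig_I(N)e_0$. Substituting $N\theta=\Theta$ turns the factor $N/\theta$ into $N^2/\Theta$, giving the stated formula.
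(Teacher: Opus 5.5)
Your proposal is correct and follows essentially the paper's argument: differentiating orthonormality gives $g([e_0,\eig_I],\eig_I)=-\theta p_I$, and pairing the tangential projection of the derivative of $K(\eig_I)=\theta p_I\eig_I$ with $\eig_J$ gives both the off-diagonal components and the equation for $p_I$. The normal part $\eig_I(\ln N)e_0$ comes from the lapse; you compute it via $[e_0,X]t$, the paper via $g([\eig_I,N^{-1}\p_t],e_0)$, which amounts to the same thing. The $h$-self-adjointness of $\lie_{e_0}K$ and the consistency check you flag as the main obstacle are true but unnecessary, since the eigenvalue relation determines each off-diagonal $c_{IJ}$ directly and orthonormality determines $c_{II}$, so the paper omits that check.
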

\begin{remark}
  Recall from Subsection~\ref{ssection:conv geo}, that with respect to the frame $\{E_i\}_{i=1}^n$ and dual frame $\{\eta^i\}_{i=1}^n$,
  \[
    \lie_{e_0}K=N^{-1}\d_t(K^i_{\phantom{i}j})E_i\otimes \eta^j.
  \]
\end{remark}
\begin{proof}
  Compute
  \[
    0 = e_0\big( g(\eig_I,\eig_J) \big) = (\lie_{e_0}g)(\eig_I,\eig_J) + g\big([e_0,\eig_I],\eig_J\big) + g\big(\eig_I,[e_0,\eig_J]\big).
  \]
  Hence
  \[
    2\theta p_I\delta_{IJ} = -g\big([e_0,\eig_I],\eig_J\big) - g\big(\eig_I,[e_0,\eig_J]\big)
  \]
  (no summation on $I$ on the left hand side). In particular, it follows that
  \[
    g\big([e_0,\eig_I],\eig_I\big) = -\theta p_I
  \]
  (no summation on $I$ on the left hand side). For $X \in \mfx(M)$, denote by $\overline{X}$ the part of $X$ tangential to the $\Sigma_t$. Then, applying
  $\overline{[e_0,\,\cdot\,]}$ to both sides of $K(\eig_I) = \theta p_I \eig_I$ (no summation) yields
  \[
    (\lie_{e_0}K)(\eig_I) + K\big(\overline{[e_0,\eig_I]}\big) = e_0(\theta p_I)\eig_I + \theta p_I \overline{[e_0,\eig_I]}
  \]
  (no summation). Note that $K\big(\overline{[e_0,\eig_I]}\big) = \sum_J g\big([e_0,\eig_I],\eig_J\big)\theta p_J \eig_J$. The equation for $p_I$ follows by
  applying $g(\eig_I,\,\cdot\,)$ to the resulting equation. On the other hand, applying $g(\eig_J,\,\cdot\,)$ with $J \neq I$, yields
  \[
    g\big([e_0,\eig_I],\eig_J\big) = \tfrac{1}{\theta(p_I - p_J)}g\big((\lie_{e_0}K)(\eig_I),\eig_J\big).
  \]
  Next, 
  \[
    -g\big([e_0,\eig_I],e_0\big) = g\big([\eig_I,N^{-1}\p_t],e_0\big) = g\big(-N^{-2} \eig_I(N)\p_t,e_0\big)=g\big(-\eig_I(\ln N)e_0,e_0\big) = \eig_I(\ln N).
  \]
  Finally, we can write
  \[
    [e_0,\eig_I] = -g\big([e_0,\eig_I],e_0\big)e_0 + \tsum_J g\big([e_0,\eig_I],\eig_J\big)\eig_J.
  \]
  The lemma follows.
\end{proof}

The idea is to translate estimates for the Fermi-Walker frame $\{e_I\}_{I=1}^n$, its dual frame $\{\omega^I\}_{I=1}^n$ and its structure coefficients,
to the eigenframe $\{\eig_I\}_{I=1}^n$. To this end, we consider the family of matrices $A$ whose elements are defined by 
\[
\eig_I = A_I^Je_J.
\]
Note that $A$ is a family of orthogonal matrices, so that $e_I = A_J^I\eig_J$. For the dual frames, $\deig^I = A_I^J \omega^J$. Moreover,
$A_I^J = g(\eig_I,e_J)$, so that $A_I^J$ is smooth, and the bound $|A_I^J| \leq 1$ holds everywhere on $M$. Therefore, in order to translate estimates
for the Fermi-Walker frame, we only need to control the derivatives of $A$. To this end, we deduce an evolution equation for $A$.

\begin{lemma}
  Let $3\leq n\in\nn{}$, $k_0\in\nn{}$, $r>0$, $\delta\in (0,1)$ and fix a $\delta$-admissible potential $V\in C^\infty(\R)$. Given that
  $(M,g,\varphi)$ is a solution to the Einstein--nonlinear scalar field equations satisfying the $(\delta,n,k_0,r)$-assumptions, assume that there is
  an eigenframe $\{\eig_I\}_{I=1}^n$ with the properties listed at the beginning of the present subsection. Then the elements of $A$ satisfy the evolution
  equation
  \begin{equation}\label{eq:AIJ dot}
    \p_tA_I^J = \tsum_{K \neq I} \tfrac{N^2}{\Theta(p_I-p_K)}g\big((\lie_{e_0}K)(\eig_I),\eig_K\big)A_K^J.
  \end{equation}
\end{lemma}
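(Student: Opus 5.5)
The identity will follow by differentiating $A_I^J=g(\eig_I,e_J)$ along $e_0$ and using two facts: the Fermi--Walker condition (\ref{fermi walker equation}) for $e_J$, and the commutator formula for $[e_0,\eig_I]$ from Lemma~\ref{equations for the eigenframe}. Since $\n$ is metric,
\[
  e_0(A_I^J)=g(\n_{e_0}\eig_I,e_J)+g(\eig_I,\n_{e_0}e_J).
\]
By (\ref{fermi walker equation}), $\n_{e_0}e_J=e_J(\ln N)e_0$, which is orthogonal to $\eig_I$ because $\eig_I$ is tangent to $\S_t$. The second term therefore vanishes.

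For the first term, write $\n_{e_0}\eig_I=\n_{\eig_I}e_0+[e_0,\eig_I]$. Recall that $\n_{X}e_0=K(X)$ for $X$ tangent to the leaves; with the sign conventions of Subsection~\ref{ssection:conv geo}, this is the convention under which $g([e_0,\eig_I],\eig_I)=-\theta p_I$ in the proof of Lemma~\ref{equations for the eigenframe}. Then
\[
  g(\n_{\eig_I}e_0,e_J)=g(K\eig_I,e_J)=\theta p_I\,g(\eig_I,e_J)=\theta p_I A_I^J.
\]
Next, insert the formula for $[e_0,\eig_I]$ from Lemma~\ref{equations for the eigenframe}. The $e_0$ component drops out against $e_J$. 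The $-\theta p_I\eig_I$ term contributes $-\theta p_I A_I^J$, which cancels the previous contribution exactly. The remaining off-diagonal part contributes
\[
  \tsum_{K\neq I}\tfrac{1}{\theta(p_I-p_K)}\,g\big((\lie_{e_0}K)(\eig_I),\eig_K\big)\,A_K^J.
\]

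Finally, convert from $e_0$ to $\d_t$ by multiplying by $N$, using $e_0=N^{-1}\d_t$ and $A_I^J$ being a function. Since $\Theta=N\theta$, we have $N/\theta=N^2/\Theta$, and this yields (\ref{eq:AIJ dot}).

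The only delicate point is the sign convention linking $\n_X e_0$ and $K$. It must match the one used in Lemma~\ref{equations for the eigenframe}; this can be checked from $(\lie_{e_0}g)(\eig_I,\eig_I)=2\theta p_I$ there, which gives $2g(\n_{\eig_I}e_0,\eig_I)=2\theta p_I$. With matching conventions, the diagonal terms cancel and the argument is a short computation.
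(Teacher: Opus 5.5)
Your proof is correct and is essentially the paper's argument: both combine the Fermi--Walker condition (\ref{fermi walker equation}) with the formula for $[e_0,\eig_I]$ from Lemma~\ref{equations for the eigenframe}, and in both the diagonal $\theta p_I$ contributions cancel exactly. The only difference is bookkeeping. The paper differentiates the component identity $\eig_I^{\;i}=A_I^Je_J^i$ in $t$, using $\p_t e_J^i=-Nk_{JK}e_K^i$. You instead differentiate $A_I^J=g(\eig_I,e_J)$ covariantly, via metric compatibility and torsion-freeness of $\n$.
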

\begin{proof}
    First, note that
    \begin{equation}\label{eq:eigIi dot}
      \p_t\eig_I^{\;i} = \big(\p_tA_I^J\big)e_J^i + A_I^J\p_te_J^i.
    \end{equation}
    From \eqref{fermi walker equation}, it follows that
    \[
    \p_te_I^i = -Nk_{IJ}e_J^i.
    \]
    Therefore,
    \begin{equation}\label{eq:AJI eJi dot}
      \begin{split}
        A_I^J\p_te_J^i = & -NA_I^Jk_{JK}e_K^i = -NA_M^Kk(\eig_I,\eig_M)e_K^i\\
        = & -NA_I^K\theta p_Ie_K^i = -N\theta p_I\eig_I^{\;i}
      \end{split}
    \end{equation}    
    (no summation on $I$). We can then conclude from Lemma~\ref{equations for the eigenframe} that
    \[
    \big(\p_tA_I^J\big)e_J = \tsum_{K \neq I} \tfrac{N^2}{\Theta(p_I-p_K)}g\big( \lie_{e_0}K(\eig_I),\eig_K \big)\eig_K.
    \]
    Since $\eig_K = A_K^Je_J$, the result follows.
\end{proof}

Since we only have information about the Fermi-Walker frame, we rewrite the evolution equation for $A_I^J$ as follows. Define $B_{IKLM}$ by
\[
B_{IKLM} := \tfrac{N^2}{\Theta(p_I-p_K)}\big( -\sric(e_L,e_M) + N^{-1}(\ovn^2N)(e_L,e_M) + e_L(\varphi)e_M(\varphi) \big),
\]
for $I \neq K$, and $B_{IILM} = 0$. Then, \eqref{evolution equation for the second fundamental form} implies that $A_I^J$ satisfies the equation
\begin{equation} \label{equation for A}
    \p_tA_I^J = \tsum_K B_{IKLM}A_I^LA_K^MA_K^J.
\end{equation}
It is thus of interest to deduce estimates for $B_{IKLM}$.

\begin{lemma} \label{estimate for B}
  Let $3\leq n\in\nn{}$, $k_0\in\nn{}$, $r>0$, $\delta\in (0,1)$ and fix a $\delta$-admissible potential $V\in C^\infty(\R)$. Given that
  $(M,g,\varphi)$ is a solution to the Einstein--nonlinear scalar field equations satisfying the $(\delta,n,k_0,r)$-assumptions, assume that there is
  an eigenframe $\{\eig_I\}_{I=1}^n$ with the properties listed at the beginning of the present subsection. Then there is a constant $C$ such that
  \begin{equation}\label{eq:Ric and ovn sq N est}
    \|\sric(e_I,e_J)\|_{C^{k_0-1}(\Sigma)} \leq Ct^{-2+6\delta}, \qquad \|N^{-1}(\ovn^2N)(e_I,e_J)\|_{C^{k_0-1}(\Sigma)} \leq Ct^{-2+2\delta}
  \end{equation}
  for $t<T$. Moreover, for $t<T$, 
  \[
  \begin{split}
    \|B_{IKLM}\|_{C^{k_0-1}(\Sigma)} \leq Ct^{-1+2\delta}.
  \end{split}
  \]
\end{lemma}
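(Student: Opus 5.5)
We estimate the two geometric quantities separately, using Lemma~\ref{lemma: spatial curvatures} for the Ricci tensor and the Fermi-Walker frame for the Hessian of the lapse. Then we combine them with the bound on the scalar field and the eigenvalue gap.

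\emph{Ricci term.} By (\ref{eq: SpatialRicciCurvature}), $\sric(e_I,e_J)$ is a sum of terms of the form $e_K(\g_{\cdot\cdot\cdot})$ and quadratic terms $\g\g$.
\begin{itemize}
\item Writing $e_K=e_K^iE_i$, a derivative term is $e_K^iE_i\g$. Taking $k_0-1$ derivatives of this product and using $t^{1-3\delta}\|e\|_{C^{k_0}}+t^{1-3\delta}\|\g\|_{C^{k_0}}\leq \bfc_0$ from (\ref{bounds on the solution}), it is bounded in $C^{k_0-1}$ by $Ct^{-1+3\delta}\cdot t^{-1+3\delta}=Ct^{-2+6\delta}$. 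Here $E_i\g$ costs one derivative, so the $C^{k_0}$ bound on $\g$ is exactly sufficient.
\item The quadratic terms are bounded in $C^{k_0-1}$ by $Ct^{-2+6\delta}$ directly.
\end{itemize}
This gives the first estimate in (\ref{eq:Ric and ovn sq N est}).

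\emph{Hessian term.} We have $(\ovn^2N)(e_I,e_J)=e_Ie_J(N)-(\ovn_{e_I}e_J)(N)$. The connection coefficients of the orthonormal frame are linear combinations of the $\g_{IJK}$ (Koszul formula), so $\ovn_{e_I}e_J=\Gamma_{IJ}^Ke_K$ with $\Gamma$ linear in $\g$. Hence
\[
N^{-1}(\ovn^2N)(e_I,e_J)=N^{-1}e_I\big(N e_J(\ln N)\big)-\Gamma_{IJ}^K e_K(\ln N),
\]
which expands to $e_I e_J(\ln N)+e_I(\ln N)e_J(\ln N)-\Gamma_{IJ}^K e_K(\ln N)$.
\begin{itemize}
\item For $e_Ie_J(\ln N)=e_I^iE_i\big(e_J(\ln N)\big)$, combine $t^{1-3\delta}\|e\|_{C^{k_0}}\leq\bfc_0$ with $t^{1-\delta}\|\ear(\ln N)\|_{C^{k_0}}\leq\bfc_0$. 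This gives $Ct^{-1+3\delta}t^{-1+\delta}=Ct^{-2+4\delta}$ in $C^{k_0-1}$.
\item The product $e_I(\ln N)e_J(\ln N)$ gives $Ct^{-2+2\delta}$.
\item The connection term $\Gamma_{IJ}^K e_K(\ln N)$ gives $Ct^{-2+4\delta}$.
\item The factor $N^{-1}$ has been absorbed by working with $\ln N$. Derivatives of $\ln N$ are controlled by $\|N\|_{C^{k_0}}$ and $\|N^{-1}\|_{C^0}$, since they are polynomials in derivatives of $N$ times powers of $N^{-1}$.
\end{itemize}
The worst of these is $t^{-2+2\delta}$, which is the second estimate in (\ref{eq:Ric and ovn sq N est}).

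\emph{The bound on $B$.} We have $\|e_L(\varphi)e_M(\varphi)\|_{C^{k_0-1}}\leq Ct^{-2+2\delta}$ by the assumption on $\ear\varphi$. So the bracket in $B_{IKLM}$ is $O(t^{-2+2\delta})$ in $C^{k_0-1}$. For the prefactor $N^2/(\Theta(p_I-p_K))$ we use three ingredients.
\begin{itemize}
\item $N^2$ is bounded in $C^{k_0}$.
\item We write $\Theta^{-1}=t\,(t\Theta)^{-1}$. Here $(t\Theta)^{-1}\leq 2$ by (\ref{eq:tTheta and N invertible}), and $t\Theta$ is bounded in $C^{k_0}$, so $(t\Theta)^{-1}$ is bounded in $C^{k_0}$ by the chain rule.
\item For $(p_I-p_K)^{-1}$, the gap gives $|p_I-p_K|\geq r$. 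The $C^{k_0}$ norms of $p_I$ are bounded, since $\|p_I-\rp_I\|_{C^{k_0}}\leq\bfc_0t^\delta$ and $\rp_I\in C^{k_0}$. Hence $(p_I-p_K)^{-1}$ is bounded in $C^{k_0}$.
\end{itemize}
So the prefactor is $t$ times a $C^{k_0}$-bounded function. Multiplying, $\|B_{IKLM}\|_{C^{k_0-1}}\leq Ct\cdot t^{-2+2\delta}=Ct^{-1+2\delta}$.

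The only delicate point is the bookkeeping that exactly one derivative is lost, from $C^{k_0}$ to $C^{k_0-1}$. This works because every second-order expression is written as a frame derivative $e_K^iE_i$ of a quantity already controlled in $C^{k_0}$. We also need the Koszul expression of $\ovn_{e_I}e_J$ in terms of $\g$, which is standard.
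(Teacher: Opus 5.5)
Your proposal is correct and follows essentially the same route as the paper: the Ricci bound comes from Lemma~\ref{lemma: spatial curvatures}, and the Hessian bound comes from the expansion $N^{-1}(\ovn^2N)(e_I,e_J)=e_Ie_J(\ln N)+e_I(\ln N)e_J(\ln N)-\Gamma_{IJK}e_K(\ln N)$ with the Koszul formula. The $B_{IKLM}$ estimate then follows from its definition, using the bounds on $p_I$, $N$, $\Theta$, $e_I\varphi$ and (\ref{eq:tTheta and N invertible}). Your handling of the prefactor via $\Theta^{-1}=t(t\Theta)^{-1}$ and the gap $|p_I-p_K|\geq r$ just spells out details the paper leaves implicit.
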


\begin{proof}
  We begin with $\sric(e_I,e_J)$. Due to Lemma~\ref{lemma: spatial curvatures}, 
  \[
  \sric(e_I,e_J) = e_K\gamma_{K(IJ)} + e_{(I}\gamma_{J)KK} - \gamma_{I(KL)}\gamma_{J(KL)} + \tfrac{1}{4}\gamma_{KLI}\gamma_{KLJ} - \gamma_{KLL}\gamma_{K(IJ)}.
  \]
  Thus the first estimate in (\ref{eq:Ric and ovn sq N est}) follows from \eqref{bounds on the solution}. Next, note that
  \[
  N^{-1}(\ovn^2N)(e_I,e_J) = e_Ie_J(\ln N) + e_I(\ln N)e_J(\ln N) - \Gamma_{IJK}e_K(\ln N),
  \]
  where $\Gamma_{IJK} = g(\n_{e_I}e_J,e_K)$ are the connection coefficients of the Fermi-Walker frame. Since
  $\Gamma_{IJK} = \frac{1}{2}(\gamma_{IJK} + \gamma_{KIJ} - \gamma_{JKI})$ due to the Koszul formula, we conclude that
  the second estimate in (\ref{eq:Ric and ovn sq N est}) holds. The estimate for $B_{IKLM}$ now follows by its definition,
  and the estimates for the $p_I$, $N$, $\Theta$ and $e_I(\varphi)$,
  including the estimate (\ref{eq:tTheta and N invertible}).
\end{proof}

\begin{prop} \label{bounds for change of basis matrix}
  Let $3\leq n\in\nn{}$, $k_0\in\nn{}$, $r>0$, $\delta\in (0,1)$ and fix a $\delta$-admissible potential $V\in C^\infty(\R)$. Given that
  $(M,g,\varphi)$ is a solution to the Einstein--nonlinear scalar field equations satisfying the $(\delta,n,k_0,r)$-assumptions, assume that there is
  an eigenframe $\{\eig_I\}_{I=1}^n$ with the properties listed at the beginning of the present subsection. Given $t_0\in(0,T)$, there is a constant
  $C$ and a $C^{k_0-1}(\Sigma)$ family $\mathring{A}$ of orthogonal matrices such that, for $t\leq t_0$, 
  \begin{equation}\label{eq:A minus Aring}
    \|A_I^J(t) - \mathring{A}_I^J\|_{C^{k_0-1}(\Sigma)} \leq Ct^{2\delta}.
  \end{equation}
\end{prop}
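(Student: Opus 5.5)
The plan is to treat (\ref{equation for A}) as a linear ODE in $t$ for the matrix $A$ whose coefficients are integrable in $t$ near $0$ in $C^{k_0-1}$. The limit will then be obtained by integrating from $0$. Write (\ref{equation for A}) as $\p_t A = \mathcal{M}(t)A$. Here $\mathcal{M}_I^K := \sum_{L,M} B_{IKLM}A_I^LA_K^M$ (no sum over $I$), so that $\p_tA_I^J=\sum_K\mathcal{M}_I^KA_K^J$. Since $A$ is orthogonal, $|A_I^J|\le 1$ pointwise. Lemma~\ref{estimate for B} then gives $|\mathcal{M}|\le Ct^{-1+2\delta}$, and hence $|\p_tA|\le Ct^{-1+2\delta}$ in $C^0$.

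The first step is to bound the spatial derivatives of $A$ up to order $k_0-1$ uniformly in $t$. We proceed by induction on $|\I|\le k_0-1$. Apply $E_\I$ to (\ref{equation for A}). The resulting equation has the form
\[
\p_t E_\I A = \textstyle{\sum} B\cdot A\cdot A\cdot E_\I A + \mathcal{F}_\I,
\]
where the top-order terms are linear in $E_\I A$ with coefficients bounded by $Ct^{-1+2\delta}$. The remainder $\mathcal{F}_\I$ consists of products of $E_{\I_1}B$ with lower-order derivatives of $A$. By Lemma~\ref{estimate for B} and the induction hypothesis, $|\mathcal{F}_\I|\le Ct^{-1+2\delta}$. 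Integrating from $t_0$ down to $t$ and applying Grönwall's inequality gives
\[
|E_\I A(t)|\le \Big(|E_\I A(t_0)|+\textstyle{\int}_t^{t_0}Cs^{-1+2\delta}ds\Big)\exp\Big(\textstyle{\int}_t^{t_0}Cs^{-1+2\delta}ds\Big).
\]
The right-hand side is bounded uniformly, because $s^{-1+2\delta}$ is integrable at $0$. The initial value $|E_\I A(t_0)|$ is finite: $A$ is smooth on $M$, and $\S\times\{t_0\}$ is compact. This yields $\|A(t)\|_{C^{k_0-1}(\S)}\le C$ for $t\le t_0$. Consequently $\|\p_tA(t)\|_{C^{k_0-1}(\S)}\le Ct^{-1+2\delta}$.

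The second step is to define $\mathring{A}:=A(t_0)-\int_0^{t_0}\p_tA\,ds$. The integral converges absolutely in $C^{k_0-1}(\S)$, so $\mathring{A}\in C^{k_0-1}$. Moreover
\[
\|A(t)-\mathring{A}\|_{C^{k_0-1}}\le\textstyle{\int}_0^t Cs^{-1+2\delta}ds\le Ct^{2\delta},
\]
which is (\ref{eq:A minus Aring}). Orthogonality of $\mathring{A}$ follows by passing to the limit pointwise in $A(t)A(t)^T=\Id$.

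I expect the main obstacle to be bookkeeping rather than anything conceptual. One point is to make sure that Lemma~\ref{estimate for B} really provides $C^{k_0-1}$ control of $B$, including the factor $(p_I-p_K)^{-1}$; this is handled via $|p_I-p_K|\ge r$ together with the $C^{k_0}$ bounds on the $p_I$. The other point is to organise the product-rule terms so that every term without a top-order derivative of $A$ carries the integrable weight $t^{-1+2\delta}$. A mild issue is that the $E_i$ are frame derivatives on a possibly covering space. Since the covering space is finite and compact, this changes nothing.
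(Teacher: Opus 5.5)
Your proposal is correct and follows essentially the same route as the paper. Both argue by induction on the order of spatial derivatives of $A$, using Grönwall and the integrable bound $\|B_{IKLM}\|_{C^{k_0-1}(\Sigma)}\leq Ct^{-1+2\delta}$ from Lemma~\ref{estimate for B}; both then integrate $|E_\I\p_tA_I^J|\leq Ct^{-1+2\delta}$ to obtain $\mathring{A}$ and pass orthogonality to the pointwise limit. The only cosmetic difference is that the paper runs Grönwall on $\rho_m=\sum_{I,J}\sum_{|\I|=m}(E_\I A_I^J)^2$, which handles the coupling among the components of $E_\I A$ in the top-order terms; your $|E_\I A|$ should likewise be read as a norm over all the components of $E_\I A$.
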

\begin{remark}
  When we speak of $t_0$ in what follows, we take it for granted to be the number fixed in this proposition. 
\end{remark}
\begin{proof}
  For $m <k_0$, define the functions
  \[
  \rho_m := \tsum_{I,J} \tsum_{|\I| = m} \big(E_\I A_I^J\big)^2.
  \]
  Since $|A_I^J| \leq 1$ everywhere, it follows that $\rho_0 \leq n^2$. We derive uniform bounds for the remaining $\rho_m$ inductively.
  Fix $m<k_0$. Suppose that there is a constant $C$, such that $\rho_k \leq C$ for all $k \leq m-1$. Then, by Lemma~\ref{estimate for B} and
  \eqref{equation for A}, 
  \[
  \begin{split}
    |\p_t\rho_{m}| &= 2\tsum_{I,J} \tsum_{|\I| = m}\big(E_\I A_I^J\big)\big(E_\I\p_t A_I^J\big) \leq 2\rho_{m}^{1/2}
    \big( \tsum_{I,J} \tsum_{|\I| = m} \big(E_\I\p_tA_I^J\big)^2 \big)^{1/2}\\
    &\leq 2\rho_{m}^{1/2}\big( Ct^{-1+2\delta}\rho_{m}^{1/2} + Ct^{-1+2\delta} \big)
    \leq Ct^{-1+2\delta}\rho_{m} + Ct^{-1+2\delta}\rho_{m}^{1/2}\\
    &\leq Ct^{-1+2\delta}\rho_{m} + Ct^{-1+2\delta},
  \end{split}
  \]
  where we used Young's inequality. Since $t^{-1+2\delta}$ is integrable on $(0,t_0]$, it follows that
  \[
  \rho_{m}(s) = \rho_{m}(t_0) + \tint_s^{t_0} - \p_t\rho_{m}(t)\md t \leq C + C\tint_s^{t_0}t^{-1+2\delta}\rho_{m}(t)\md t
  \]
  for all $s \in (0,t_0]$. By Grönwall's lemma, we conclude that $\rho_{m}(t) \leq C$ for some constant $C$ and all $t \in (0,t_0]$.

  By the above, $\|A_I^J(t)\|_{C^{k_0-1}(\Sigma)} \leq C$ for all $t \in (0,t_0]$. To find the matrix $\mathring{A}_I^J$, note that for $u < s$ and
  $|\I| \leq k_0-1$,
  \[
  \big|E_\I A_I^J(s) - E_\I A_I^J(u)\big| \leq \tint_u^s \big|E_\I\p_t A_I^J(t)\big|\md t \leq C(s^{2\delta} - u^{2\delta}).
  \]
  Hence, there are functions $\mathring{A}_I^J \in C^{k_0-1}(\Sigma)$ such that (\ref{eq:A minus Aring}) holds. Finally, $\mathring{A}_I^J$ is
  orthogonal, since it is the pointwise limit of $A_I^J$, which is orthogonal for all $t \in (0,T)$.
\end{proof}

Now that we know $A_I^J$ to be bounded in $C^{k_0-1}(\Sigma)$, we can obtain estimates in $C^{k_0-1}(\Sigma)$ for the $\eig_I$, the $\deig^I$, and the corresponding structure coefficients.

\begin{lemma}\label{lemma:eom hat est}
  Let $3\leq n\in\nn{}$, $k_0\in\nn{}$, $r>0$, $\delta\in (0,1)$ and fix a $\delta$-admissible potential $V\in C^\infty(\R)$. Given that
  $(M,g,\varphi)$ is a solution to the Einstein--nonlinear scalar field equations satisfying the $(\delta,n,k_0,r)$-assumptions, assume that there is
  an eigenframe $\{\eig_I\}_{I=1}^n$ with the properties listed at the beginning of the present subsection. Let
  $\lambda_{IJK} := g\big([\eig_I,\eig_J],\eig_K\big)$. Then there is a constant $C$ such that, for all $t \in (0,t_0]$,
  \[
  \|\eig_I^{\;i}\|_{C^{k_0-1}(\Sigma)} + \|\deig_i^I\|_{C^{k_0-1}(\Sigma)} + \|\lambda_{IJK}\|_{C^{k_0-2}(\Sigma)} \leq Ct^{-1+3\delta}.
  \]
\end{lemma}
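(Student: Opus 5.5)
The estimates for $\eig_I^{\;i}$ and $\deig^I_i$ follow directly from the change of basis relations and the Leibniz rule. Since $\eig_I=A_I^Je_J$, we have $\eig_I^{\;i}=A_I^Je_J^i$. Similarly, $\deig^I=A_I^J\omega^J$ gives $\deig^I_i=A_I^J\omega^J_i$.

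Applying $E_\I$ with $|\I|\leq k_0-1$ and distributing derivatives, each term is a product $E_{\I_1}(A_I^J)E_{\I_2}(e_J^i)$. By Proposition~\ref{bounds for change of basis matrix} (in fact, the uniform $C^{k_0-1}$ bound on $A$ obtained in its proof), the first factor is bounded by a constant for $t\leq t_0$. By \eqref{bounds on the solution}, the second factor is bounded by $\bfc_0t^{-1+3\delta}$. Summing yields $\|\eig_I^{\;i}\|_{C^{k_0-1}(\Sigma)}\leq Ct^{-1+3\delta}$, and the dual frame is handled identically using the bound on $\omega$.

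For the structure coefficients, I would express $\lambda_{IJK}$ in terms of $A$ and the Fermi-Walker data. Using $\eig_I=A_I^Le_L$,
\begin{equation*}
  [\eig_I,\eig_J]=A_I^LA_J^M[e_L,e_M]+A_I^Le_L(A_J^M)e_M-A_J^Me_M(A_I^L)e_L .
\end{equation*}
Pairing with $\eig_K=A_K^Pe_P$ via $g$ and using orthonormality of $\{e_I\}$ gives
\begin{equation*}
  \lambda_{IJK}=A_I^LA_J^MA_K^P\gamma_{LMP}+A_I^LA_K^Me_L(A_J^M)-A_J^MA_K^Le_M(A_I^L).
\end{equation*}
The first term is bounded in $C^{k_0-1}$ by $Ct^{-1+3\delta}$, by the $A$ bounds and the bound on $\gamma$ in \eqref{bounds on the solution}.

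The remaining terms involve $e_L(A_J^M)=e_L^iE_i(A_J^M)$, which contains one spatial derivative of $A$. This is where the loss of one derivative occurs: only $C^{k_0-1}$ control of $A$ is available, so $E_i(A)$ is controlled only in $C^{k_0-2}$. Combined with the $C^{k_0-1}$ bound on $e_L^i$, which is $O(t^{-1+3\delta})$, the Leibniz rule gives a $C^{k_0-2}$ bound of order $t^{-1+3\delta}$. The only point needing care is this derivative count, which is why $\lambda$ is estimated in $C^{k_0-2}$ rather than $C^{k_0-1}$. Everything else is routine product estimates on the compact manifold $\Sigma$ with the uniform bounds already available.
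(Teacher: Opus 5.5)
Your proposal is correct and is essentially the paper's proof. You write $\eig_I^{\;i}=A_I^Je_J^i$ and $\deig^I_i=A_I^J\omega^J_i$, expand $[\eig_I,\eig_J]$ via $\eig_I=A_I^Le_L$ to get the same formula for $\lambda_{IJK}$ in terms of $A$, $e_L(A)$ and $\gamma$, and conclude from the $C^{k_0-1}$ bounds on $A$ (Proposition~\ref{bounds for change of basis matrix}) together with the assumed bounds on $e$, $\omega$ and $\gamma$. Your explicit count of the one-derivative loss in the terms $e_L(A_J^M)$, which the paper leaves implicit, correctly explains why $\lambda_{IJK}$ is only estimated in $C^{k_0-2}$.
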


\begin{proof}
    The estimates for $\eig_I$ and $\deig^I$ follow immediately from Proposition~\ref{bounds for change of basis matrix}, since $\eig_I = A_I^Je_J$ and $\deig^I = A_I^J\omega^J$. For $\lambda_{IJK}$,
    \[
    [\eig_I,\eig_J] = \big[A_I^Le_L,A_J^Me_M\big] = A_I^Le_L\big(A_J^M\big)e_M - A_J^Me_M\big(A_I^L\big)e_L + A_J^MA_I^L[e_L,e_M].
    \]
    Hence,
    \[
    \lambda_{IJK} = A_I^Le_L\big(A_J^M\big)A_K^M - A_J^Me_M\big(A_I^L\big)A_K^L + A_J^MA_I^LA_K^N\gamma_{LMN}.
    \]
    The estimates now follow from Proposition~\ref{bounds for change of basis matrix}.
\end{proof}

\subsection{The expansion normalized eigenframe} \label{ssection: the expansion normalized eigenframe}

We introduce an expansion normalized eigenframe $\{\ce_I\}_{I=1}^n$, with dual frame $\{\co^I\}_{I=1}^n$, defined by
\[
\ce_I := \theta^{-p_I}\eig_I, \qquad \co^I := \theta^{p_I}\deig^I
\]
(no summation). Denote by $\Lambda_{IJ}^K$ the corresponding structure coefficients, given by
\[
\Lambda_{IJ}^K := \co^K\big([\ce_I,\ce_J]\big).
\]

\begin{lemma} \label{initial estimates for expansion normalized variables}
  Let $3\leq n\in\nn{}$, $k_0\in\nn{}$, $r>0$, $\delta\in (0,1)$ and fix a $\delta$-admissible potential $V\in C^\infty(\R)$. Given that
  $(M,g,\varphi)$ is a solution to the Einstein--nonlinear scalar field equations satisfying the $(\delta,n,k_0,r)$-assumptions, assume that there is
  an eigenframe $\{\eig_I\}_{I=1}^n$ with the properties listed at the beginning of Subsection~\ref{ssection:the eigenframe}. Then there is a constant
  $C$ such that 
  \[
  \big|E_\I \big(\ce_I^i\big)\big| \leq Ct^{-1+\rp_I+2\delta}, \quad \big|E_\I \big(\co_i^I\big)\big|
  \leq Ct^{-1-\rp_I+2\delta}, \quad \big|E_\J\big(\Lambda_{IJ}^K\big)\big| \leq Ct^{-1+\rp_I+\rp_J-\rp_K + 2\delta}
  \]
  hold for $t\leq t_0$, $|\I| \leq k_0-1$ and $|\J| \leq k_0-2$.
\end{lemma}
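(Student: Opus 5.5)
The plan is to write each expansion normalised quantity as a product of a power of $\theta$ and a quantity already controlled by Lemma~\ref{lemma:eom hat est}. Then apply the Leibniz rule, using Lemma~\ref{estimates for powers of theta} for the $\theta$ factor. The logarithmic losses $\ldr{\ln t}^{|\I|}$ are absorbed by sacrificing a power $t^{\delta}$, which accounts for the $2\delta$ in place of $3\delta$.

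\emph{The frame and co-frame.} We have $\ce_I^i=\theta^{-p_I}\eig_I^{\;i}$ and $\co^I_i=\theta^{p_I}\deig^I_i$ (no summation). For $|\I|\leq k_0-1$, the derivative $E_\I(\ce_I^i)$ is a sum of terms $E_{\I_1}(\theta^{-p_I})E_{\I_2}(\eig_I^{\;i})$ with $|\I_1|+|\I_2|=|\I|$. Lemma~\ref{estimates for powers of theta}, applied with $\alpha_I=-1$ and the remaining $\alpha$ and $\beta$ equal to zero, gives
\[
|E_{\I_1}(\theta^{-p_I})|\leq C\ldr{\ln t}^{|\I_1|}t^{\rp_I}.
\]
This is valid since $|\I_1|\leq k_0$. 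Lemma~\ref{lemma:eom hat est} gives $|E_{\I_2}\eig_I^{\;i}|\leq Ct^{-1+3\delta}$. Since $\ldr{\ln t}^{k_0}t^{\delta}$ is bounded on $(0,t_0]$, we obtain $|E_\I\ce_I^i|\leq Ct^{-1+\rp_I+2\delta}$. The co-frame estimate follows in the same way with $\alpha_I=1$.

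\emph{The structure coefficients.} Compute
\[
[\ce_I,\ce_J]=\theta^{-p_I-p_J}[\eig_I,\eig_J]+\theta^{-p_I}\eig_I(\theta^{-p_J})\eig_J-\theta^{-p_J}\eig_J(\theta^{-p_I})\eig_I .
\]
Apply $\co^K=\theta^{p_K}\deig^K$ to this. The first term gives $\theta^{p_K-p_I-p_J}\lambda_{IJK}$. The remaining terms give
\[
\delta_{JK}\,\theta^{p_J-p_I}\eig_I(\theta^{-p_J})-\delta_{IK}\,\theta^{p_I-p_J}\eig_J(\theta^{-p_I}).
\]
We have $\theta^{p_J}\eig_I(\theta^{-p_J})=-\eig_I(p_J\ln\theta)$. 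Hence, for $K=J$, the second type of term equals $-\theta^{-p_I}\eig_I(p_J\ln\theta)$, which equals $-\ce_I(p_J\ln\theta)$. The case $K=I$ is analogous.

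\emph{Estimating the first term.} For $|\J|\leq k_0-2$, Lemma~\ref{estimates for powers of theta} bounds the $\theta$-power. Lemma~\ref{lemma:eom hat est} gives $\lambda_{IJK}$ in $C^{k_0-2}$ with bound $t^{-1+3\delta}$. Together these yield $Ct^{-1+\rp_I+\rp_J-\rp_K+2\delta}$.

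\emph{Estimating the second type of term.} This requires $k_0-1$ derivatives of $\ln\theta$ and $p_J$ in total, and both are bounded in $C^{k_0}$ by the assumptions and (\ref{eq:basic theta est}). Combined with the $\ce_I$ bound just proved, the term is bounded by $C\ldr{\ln t}t^{-1+\rp_I+2\delta}$. When $K=J$ the target exponent is exactly $-1+\rp_I+2\delta$, so this matches, up to the logarithm.

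\emph{Main obstacle.} The delicate point is this logarithmic loss, since here no spare $t^{\delta}$ is left to absorb it. I would first avoid the loss by noting that
\[
E_i(\ln\theta)=E_i\big(\ln(t\Theta)-\ln N\big),
\]
which is bounded in $C^{k_0-1}$ without any logarithm, as in the proof of Lemma~\ref{estimates for powers of theta}. Moreover, $E_\J(p_J)$ is bounded. Thus $\ce_I(p_J\ln\theta)$ splits as
\[
\ln\theta\,\ce_I(p_J)+p_J\,\ce_I(\ln\theta).
\]
The second piece has no logarithm. The first piece would need $\ce_I(p_J)$ to be small. If a logarithm cannot be avoided, I would fall back on the fact that Lemma~\ref{lemma:eom hat est} really gives $3\delta$, so that working with $\ce_I$ bounds at rate $3\delta-\eta$ leaves room to absorb $\ldr{\ln t}$. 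That is, I would first prove the $\ce$ estimate with $t^{-1+\rp_I+3\delta}\ldr{\ln t}^{k_0}$ and use it here, yielding the stated $2\delta$ bound.
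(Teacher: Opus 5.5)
Your proposal is correct and follows the paper's proof: write $\ce_I^i$, $\co^I_i$ and $\Lambda_{IJ}^K$ as powers of $\theta$ times $\eig_I$, $\deig^I$, $\lambda_{IJK}$ and $\eig_I(p_J\ln\theta)$, apply the Leibniz rule, and combine Lemmas~\ref{estimates for powers of theta} and \ref{lemma:eom hat est}. The logarithmic ``obstacle'' you flag is settled exactly by your fallback, which is what the paper does implicitly: bound $\theta^{-p_I}\eig_I(p_J\ln\theta)$ using the $t^{-1+3\delta}$ estimate for $\eig_I$, so that all factors $\ldr{\ln t}^{*}$ are absorbed by the spare $t^{\delta}$; the splitting of $\ce_I(p_J\ln\theta)$ is an unnecessary detour.
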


\begin{proof}
  The estimates for $\ce_I^i$ and $\co_i^I$ follow directly from the definitions, Lemma~\ref{estimates for powers of theta} and
  Lemma~\ref{lemma:eom hat est}, since $\langle \ln t \rangle^{k_0-1}t^\delta$ is bounded for $t \in (0,t_0]$. For $\Lambda_{IJ}^K$, 
  \[
  \begin{split}
    [\ce_I,\ce_J] &= [\theta^{-p_I}\eig_I, \theta^{-p_J}\eig_J]\\
    &= -\theta^{-p_I}\eig_I(p_J\ln\theta)\ce_J + \theta^{-p_J}\eig_J(p_I\ln\theta)\ce_I + \tsum_K \theta^{p_K-p_I-p_J} \lambda_{IJK}\ce_K.
  \end{split}
  \]
  Hence,
  \[
  \Lambda_{IJ}^K = \theta^{p_K-p_I-p_J} \big( -\eig_I(p_J\ln\theta)\delta_J^K + \eig_J(p_I\ln\theta)\delta_I^K + \lambda_{IJK} \big).
  \]
  The last estimate now follows from Lemmas~\ref{estimates for powers of theta} and \ref{lemma:eom hat est}.
\end{proof}

Next, we deduce evolution equations for these objects.

\begin{lemma} \label{equations for the expansion normalized eigenframe lemma}
  Let $3\leq n\in\nn{}$, $k_0\in\nn{}$, $r>0$, $\delta\in (0,1)$ and fix a $\delta$-admissible potential $V\in C^\infty(\R)$. Given that
  $(M,g,\varphi)$ is a solution to the Einstein--nonlinear scalar field equations satisfying the $(\delta,n,k_0,r)$-assumptions, assume that there is
  an eigenframe $\{\eig_I\}_{I=1}^n$ with the properties listed at the beginning of Subsection~\ref{ssection:the eigenframe}. Then $\ce_I$, $\co^I$ and
  $\Lambda_{IJ}^K$ satisfy the following system of equations
  \begin{subequations}
    \begin{align}
      \p_t\ce_I^i &= \cb_I^J\ce_J^i,\label{equation for expansion normalized frame}\\
      \p_t \co_i^I &= -\cb_J^I \co_i^J,\label{equation for expansion normalized dual frame}\\
      \p_t\Lambda_{IJ}^K &= -\cb_K^K\Lambda_{IJ}^K + \cb_I^L\Lambda_{LJ}^K - \cb_J^L\Lambda_{LI}^K + \ce_I\big(\cb_J^K\big) - \ce_J\big(\cb_I^K\big)
      + \tsum_{L \neq K} \theta^{2(p_K-p_L)} \cb_K^L\Lambda_{IJ}^L,\label{equation for expansion normalized structure coefficients}
    \end{align}
  \end{subequations}
  where $\cb_I^J$ is given by
  \begin{align*}
    \cb_I^I := -(\ln\theta)\p_t p_I - \Theta^{-1}N^2p_I\big( -\roScal_h + N^{-1}\Delta_h N + |\md\varphi|_h^2 + \tfrac{2n}{n-1}V\circ\varphi \big),
  \end{align*}
  (no summation) and
  \[
  \cb_I^J := \tfrac{N^2}{\Theta(p_I-p_J)} (\lie_{e_0}K)\big(\ce_I,\co^J\big),
  \]
  for $I \neq J$.
\end{lemma}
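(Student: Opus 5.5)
We derive all three equations by direct differentiation, starting from Lemma~\ref{equations for the eigenframe}, its consequence (\ref{eq:AIJ dot}), and the evolution equation (\ref{evolution equation for the second fundamental form}) for $K$. Throughout, $\ce_I=\theta^{-p_I}\eig_I$ and $\co^I=\theta^{p_I}\deig^I$.

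\textbf{The frame equation.} Differentiating the definition of $\ce_I$ gives
\[
\p_t\ce_I^i=-\p_t(p_I\ln\theta)\,\ce_I^i+\theta^{-p_I}\p_t\eig_I^{\;i}.
\]
By Lemma~\ref{equations for the eigenframe}, $\theta^{-p_I}\p_t\eig_I^{\;i}$ equals $-\Theta p_I\ce_I^i$ plus an off-diagonal sum. In that sum we write $\eig_J=\theta^{p_J}\ce_J$ and $\eig_I=\theta^{p_I}\ce_I$, so the coefficient of $\ce_J^i$ becomes
\[
\tfrac{N^2}{\Theta(p_I-p_J)}\,g\big((\lie_{e_0}K)(\ce_I),\eig_J\big)\theta^{p_J}.
\]
Since $\co^J=\theta^{p_J}\deig^J=\theta^{p_J}g(\eig_J,\cdot)$, this coefficient is exactly $\cb_I^J$ for $J\neq I$.

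For the diagonal coefficient we must show
\[
-(\ln\theta)\p_tp_I-p_I\p_t\ln\theta-\Theta p_I=\cb_I^I.
\]
The term $-(\ln\theta)\p_tp_I$ already appears in $\cb_I^I$, so it remains to show
\[
-p_I\big(\p_t\ln\theta+\Theta\big)=-\Theta^{-1}N^2p_I\big(-\roScal_h+N^{-1}\Delta_hN+|\md\varphi|_h^2+\tfrac{2n}{n-1}V\circ\varphi\big).
\]
To do this, take the trace of (\ref{evolution equation for the second fundamental form}); this gives $e_0\theta$. The term $\tr\sric^\sharp$ is $\roScal_h$, and we combine with the Hamiltonian constraint (\ref{hamiltonian constraint}) to eliminate $\tr K^2$ and $e_0(\varphi)^2$. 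We expect this to give
\[
e_0\theta+\theta^2=-\roScal_h+N^{-1}\Delta_hN+|\md\varphi|_h^2+\tfrac{2n}{n-1}V\circ\varphi.
\]
Multiplying by $N/\theta=N^2/\Theta$ and using $\p_t\ln\theta=N\theta^{-1}e_0\theta$ and $\Theta=N\theta$ then gives the identity. This trace/constraint manipulation is the one step where the constants ($\tfrac{2n}{n-1}$, signs) have to be checked carefully. It is the main obstacle, although it is algebraic.

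\textbf{The dual frame equation.} Differentiating $\co^I(\ce_J)=\delta^I_J$ in $t$ gives $\p_t\co_i^I\,\ce_J^i=-\co_i^I\cb_J^K\ce_K^i=-\cb_J^I$. Hence $\p_t\co^I_i=-\cb_J^I\co^J_i$.

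\textbf{The structure coefficients.} We have $\Lambda_{IJ}^K=\co^K_i[\ce_I,\ce_J]^i$. Since $\p_t$ commutes with the coordinate frame $E_i$, the component form (\ref{equation for expansion normalized frame}) gives
\[
\p_t[\ce_I,\ce_J]=[\cb_I^L\ce_L,\ce_J]+[\ce_I,\cb_J^L\ce_L].
\]
Expanding, this equals
\[
\cb_I^L[\ce_L,\ce_J]-\ce_J(\cb_I^L)\ce_L+\ce_I(\cb_J^L)\ce_L+\cb_J^L[\ce_I,\ce_L].
\]
Pairing with $\co^K$ and adding $(\p_t\co^K)([\ce_I,\ce_J])=-\cb_L^K\Lambda_{IJ}^L$ yields
\[
\p_t\Lambda_{IJ}^K=\cb_I^L\Lambda_{LJ}^K-\cb_J^L\Lambda_{LI}^K+\ce_I(\cb_J^K)-\ce_J(\cb_I^K)-\cb_L^K\Lambda_{IJ}^L,
\]
where we used antisymmetry to write $\cb_J^L\Lambda_{IL}^K=-\cb_J^L\Lambda_{LI}^K$.

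Finally we split $-\cb_L^K\Lambda^L_{IJ}$ into the diagonal term $-\cb_K^K\Lambda_{IJ}^K$ and the terms with $L\neq K$. For $L\neq K$, self-adjointness of $\lie_{e_0}K$ with respect to $h$ (obtained by differentiating (\ref{evolution equation for the second fundamental form}), whose right hand side is symmetric) together with the rescalings gives
\[
(\lie_{e_0}K)(\ce_L,\co^K)=\theta^{-p_L+p_K}\,h\big((\lie_{e_0}K)\eig_L,\eig_K\big)=\theta^{2(p_K-p_L)}(\lie_{e_0}K)(\ce_K,\co^L).
\]
Combined with $p_L-p_K=-(p_K-p_L)$, this gives $-\cb_L^K=\theta^{2(p_K-p_L)}\cb_K^L$, which is exactly the last term in (\ref{equation for expansion normalized structure coefficients}).
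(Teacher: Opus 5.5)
Your proposal is correct. For $\ce_I$ and $\co^I$ it follows the paper exactly. For $\Lambda_{IJ}^K$ you take a genuinely different, leaner route.

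\textbf{How the routes differ for $\Lambda_{IJ}^K$.} The paper writes $\Lambda_{IJ}^K=\theta^{2p_K}g([\ce_I,\ce_J],\ce_K)$ and differentiates along $e_0$ in spacetime. It uses $\lie_{e_0}g=2k$ on tangential vectors, the Jacobi identity, and $[e_0,\ce_I]=N^{-1}\cb_I^L\ce_L+\ce_I(\ln N)e_0$. The normal components cancel. The terms $2\Theta p_K\Lambda_{IJ}^K$ and $2\p_t(p_K\ln\theta)\Lambda_{IJ}^K$ combine into $-2\cb_K^K\Lambda_{IJ}^K$. The term $g([\ce_I,\ce_J],[e_0,\ce_K])$ then produces $\sum_L\theta^{2(p_K-p_L)}\cb_K^L\Lambda_{IJ}^L$ directly, in exactly the asymmetric form of the statement.

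You instead stay on $\Sigma$ and use $[\p_t,E_i]=0$. This gives the general identity $\p_t\Lambda_{IJ}^K=\dots-\cb_L^K\Lambda_{IJ}^L$, valid for any frame evolving by $\p_t\ce_I=\cb_I^L\ce_L$. You then need one extra input: the $h$-self-adjointness of $\lie_{e_0}K$, to convert $-\cb_L^K$ into $\theta^{2(p_K-p_L)}\cb_K^L$ for $L\neq K$.

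What each buys:
\begin{itemize}
\item Your route avoids normal components and the $2k$ term entirely.
\item The paper's route needs no symmetry argument, because the metric is built in through $\co^K=\theta^{2p_K}g(\ce_K,\cdot)$, at the price of a longer bracket computation.
\end{itemize}
The two formulas for $\p_t\co^K$ agree precisely because of that self-adjointness relation.

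\textbf{Two small corrections.}
\begin{itemize}
\item The step you flag as the main obstacle is easier than you expect. Tracing \eqref{evolution equation for the second fundamental form} gives $e_0\theta+\roScal_h+\theta^2-N^{-1}\Delta_hN=|\md\varphi|_h^2+\tfrac{2n}{n-1}V\circ\varphi$ directly. Since $\tr(\theta K)=\theta^2$, no $\tr K^2$ or $e_0(\varphi)^2$ terms arise, so there is nothing to eliminate. Do not invoke the Hamiltonian constraint: substituting it would yield an on-shell equivalent expression, not the one that defines $\cb_I^I$.
\item Self-adjointness of $\lie_{e_0}K$ needs no differentiation. Solve \eqref{evolution equation for the second fundamental form} for $\lie_{e_0}K$ and note that every remaining term is $h$-self-adjoint. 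Alternatively, use $\p_th=2Nk$ to get $h_{mi}\p_tK^i_j=\p_tk_{mj}-2Nk_{ml}K^l_j$, which is symmetric in $m,j$.
\end{itemize}
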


\begin{proof}
  Due to (\ref{eq:AIJ dot}), (\ref{eq:eigIi dot}) and (\ref{eq:AJI eJi dot}), 
  \[
  \begin{split}
    \p_t\ce_I^i &= \p_t( \theta^{-p_I}\eig_I^{\;i} )= -\p_t(p_I\ln\theta)\ce_I^i + \theta^{-p_I}\p_t\eig_I^{\;i}\\
    &= -\p_t(p_I\ln\theta)\ce_I^i + \theta^{-p_I}\big( -\Theta p_I\eig_I^{\;i}
    + \tsum_{J \neq I} \tfrac{N^2}{\Theta(p_I-p_J)} g\big((\lie_{e_0}K)(\eig_I),\eig_J\big)\eig_J^{\;i} \big)\\
    &= -\big( \p_t(p_I\ln\theta) + \Theta p_I \big)\ce_I^i + \tsum_{J\neq I} \tfrac{N^2}{\Theta(p_I-p_J)} (\lie_{e_0}K)\big( \ce_I,\co^J \big)\ce_J^i.
  \end{split}
  \]
  The result for $\ce_I^i$ now follows by computing $\p_t\theta$ using \eqref{evolution equation for the second fundamental form}.
  Since $\delta^I_J=\co^I(\ce_J)$, (\ref{equation for expansion normalized dual frame}) follows from
  (\ref{equation for expansion normalized frame}). Finally, for $\Lambda_{IJ}^K$, note that
  \[
  \Lambda_{IJ}^K = \theta^{2p_K} g\big([\ce_I,\ce_J],\ce_K\big).
  \]
  Hence, appealing to the Jacobi identity, 
  \[
  \begin{split}
    \p_t\Lambda_{IJ}^K &= \p_t(\theta^{2p_K})g\big([\ce_I,\ce_J],\ce_K\big) + N\theta^{2p_K} \Big( 2k\big([\ce_I,\ce_J],\ce_K\big)\\
    &\quad + g\big([[e_0,\ce_I],\ce_J],\ce_K\big) + g\big([\ce_I,[e_0,\ce_J]],\ce_K\big) + g\big([\ce_I,\ce_J],[e_0,\ce_K]\big) \Big).
  \end{split}
  \]
  Note that 
  \[
  [e_0,\ce_I] = N^{-1}[\p_t,\ce_I] + \ce_I(\ln N)e_0 = N^{-1}\cb_I^L\ce_L + \ce_I(\ln N)e_0,
  \]
  so that 
  \[
  \begin{split}
    [[e_0,\ce_I],\ce_J] &= N^{-1}\cb_I^L[\ce_L,\ce_J] + N^{-1}\ce_J(\ln N)\cb_I^L\ce_L - N^{-1}\ce_J(\cb_I^L)\ce_L\\
    &\quad + \ce_I(\ln N)[e_0,\ce_J] - \ce_J\ce_I(\ln N)e_0.
  \end{split}
  \]
  Going back to $\p_t\Lambda_{IJ}^K$, our previous observations yield (no summation on $K$)
  \begin{align*}
    \p_t\Lambda_{IJ}^K &= 2\p_t(p_K\ln\theta)\Lambda_{IJ}^K + 2\Theta p_K \Lambda_{IJ}^K\\
    &\quad + \theta^{2p_K}\Big( \cb_I^Lg\big([\ce_L,\ce_J],\ce_K\big) + \ce_J(\ln N)\cb_I^Lg(\ce_L,\ce_K) - \ce_J\big(\cb_I^L\big)g(\ce_L,\ce_K)\\
    &\quad + N\ce_I(\ln N)g\big([e_0,\ce_J],\ce_K\big) - \cb_J^Lg\big([\ce_L,\ce_I],\ce_K\big) - \ce_I(\ln N)\cb_J^Lg(\ce_L,\ce_K)\\
    &\quad + \ce_I\big(\cb_J^L\big)g(\ce_L,\ce_K) - N\ce_J(\ln N)g\big([e_0,\ce_I],\ce_K\big) + \cb_K^Lg\big([\ce_I,\ce_J],\ce_L\big) \Big)\\
    &= -2\cb_K^K\Lambda_{IJ}^K + \cb_I^L\Lambda_{LJ}^K - \ce_J\big(\cb_I^K\big) - \cb_J^L\Lambda_{LI}^K + \ce_I\big(\cb_J^K\big)
    + \tsum_L \theta^{2(p_K-p_L)} \cb_K^L\Lambda_{IJ}^L,
  \end{align*}
  where we used that $-\cb_K^K = \p_t(p_K\ln \theta) + \Theta p_K$ (no summation). The lemma follows.
\end{proof}

\subsection{Asymptotics} \label{ssection: asymptotics}

In order to use the equations of Lemma~\ref{equations for the expansion normalized eigenframe lemma}, we need an expression for the off-diagonal
components of the spatial Ricci tensor in terms of $\ce_I$ and $\Lambda_{IJ}^K$.

\begin{lemma} \label{spatial ricci}
  Let $3\leq n\in\nn{}$, $k_0\in\nn{}$, $r>0$, $\delta\in (0,1)$ and fix a $\delta$-admissible potential $V\in C^\infty(\R)$. Given that
  $(M,g,\varphi)$ is a solution to the Einstein--nonlinear scalar field equations satisfying the $(\delta,n,k_0,r)$-assumptions, assume that there is
  an eigenframe $\{\eig_I\}_{I=1}^n$ with the properties listed at the beginning of Subsection~\ref{ssection:the eigenframe}. For $I \neq J$, 
  \[
  \sric^{\sharp}\big(\ce_I,\co^J\big) = \cR_{1,I}{}^J + \cR_{2,I}{}^J + \cR_{3,I}{}^J,
  \]
  where $\cR_{i,I}{}^J$ are given schematically by
  \begin{align*}
    \cR_{1,I}{}^J &= \theta^{2p_J} * \ce_I * \ce_J,\\
    \cR_{2,I}{}^J &= \theta^{2p_J} * \ce_I * \Lambda_{JK}^K + \theta^{2p_J} * \ce_K * \Lambda_{IJ}^K + \tsum_K \theta^{2(p_J+p_K-p_I)} * \ce_K * \Lambda_{KJ}^I\\
    &\quad + \tsum_K \theta^{2p_K} * \ce_K * \Lambda_{IK}^J + \theta^{2p_J} * \ce_J * \Lambda_{IK}^K,\\
    \cR_{3,I}{}^J &= \tsum_{K,L} \theta^{2(p_K+p_L-p_I)} * \Lambda_{KL}^I * \Lambda_{KL}^J + \theta^{2p_J} * \Lambda_{IK}^L * \Lambda_{LJ}^K
    + \tsum_K \theta^{2p_K} * \Lambda_{IK}^L * \Lambda_{LK}^J\\
    &\quad + \tsum_{K,L} \theta^{2(p_J+p_K-p_L)} * \Lambda_{IK}^L * \Lambda_{JK}^L + \tsum_L \theta^{2p_L} * \Lambda_{LK}^K * \Lambda_{IL}^J
    + \theta^{2p_J} * \Lambda_{LK}^K * \Lambda_{IJ}^L\\
    &\quad + \tsum_L \theta^{2(p_L+p_J-p_I)} * \Lambda_{LK}^K * \Lambda_{LJ}^I + \tsum_K \theta^{2(p_J+p_K-p_I)} * \Lambda_{KL}^I * \Lambda_{KJ}^L.
  \end{align*}
  Here, each type of term may contain up to two factors of the form $E_\I(p_I\ln\theta)$, with $|\I| \leq 2$. Moreover, each factor $\ce_I$
  and $\Lambda_{IJ}^K$ may have at most one $E_i$ derivative applied to it.
\end{lemma}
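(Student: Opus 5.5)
Our plan is to apply the frame formula of Lemma~\ref{lemma: spatial curvatures} to the orthonormal eigenframe $\{\eig_I\}$, with structure coefficients $\lambda_{IJK}=g([\eig_I,\eig_J],\eig_K)$. This gives
\[
\sric(\eig_I,\eig_J)=\eig_K(\lambda_{K(IJ)})+\eig_{(I}(\lambda_{J)KK})-\lambda_{I(KL)}\lambda_{J(KL)}+\tfrac14\lambda_{KLI}\lambda_{KLJ}-\lambda_{KLL}\lambda_{K(IJ)}.
\]
Since $\ce_I=\theta^{-p_I}\eig_I$ and $\co^J=\theta^{p_J}\deig^J$, we have
\[
\sric^\sharp(\ce_I,\co^J)=\theta^{p_J-p_I}\sric(\eig_I,\eig_J).
\]
The task is then to rewrite every $\eig_K$ and $\lambda$ in terms of $\ce$, $\Lambda$ and powers of $\theta$, and to sort the result into the three schematic types.

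For the translation we invert the identity from the proof of Lemma~\ref{initial estimates for expansion normalized variables}:
\[
\lambda_{IJK}=\theta^{p_I+p_J-p_K}\Lambda_{IJ}^K+\theta^{p_I}\eig_I(p_J\ln\theta)\delta_{JK}-\theta^{p_J}\eig_J(p_I\ln\theta)\delta_{IK},
\]
and we use $\eig_K=\theta^{p_K}\ce_K$. In this way each $\lambda$ becomes either a $\theta$-weighted $\Lambda$ or a term of the form $\theta^{p}\ce(p\ln\theta)$. The latter is a $\ce$ factor applied to $p\ln\theta$; in the schematic notation, such a term counts as a factor $\ce$ with a coefficient $E_\I(p\ln\theta)$, $|\I|\le1$, because $\ce_K=\ce_K^iE_i$. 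A derivative term $\eig_K(\lambda)$ yields $\theta^{p_K}\ce_K$ applied either to a $\theta$-power times $\Lambda$ or to $\theta^{p}\ce(p\ln\theta)$.
\begin{itemize}
\item When the derivative lands on $\Lambda$ or on $\ce$, we obtain a factor carrying a single $E_i$, namely $\ce(\Lambda)$ or $\ce(\ce^i)$.
\item When the derivative lands on a $\theta$-power, or on the inner $p\ln\theta$, it produces extra $E_\I(p\ln\theta)$ factors with $|\I|\le 2$.
\end{itemize}
Counting shows that at most two such factors occur per term. The resulting terms sort as follows: terms containing two $\ce$'s and no $\Lambda$ go into $\cR_1$, mixed $\ce\Lambda$ terms into $\cR_2$, and quadratic $\Lambda\Lambda$ terms into $\cR_3$.

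The main step is the bookkeeping of the $\theta$ exponents, i.e.\ checking that after multiplying by $\theta^{p_J-p_I}$ each term carries exactly one of the listed weights. For example, $-\lambda_{I(KL)}\lambda_{J(KL)}$ contains $\theta^{p_I+p_K-p_L}\Lambda_{IK}^L\,\theta^{p_J+p_K-p_L}\Lambda_{JK}^L$. Multiplying by $\theta^{p_J-p_I}$ gives $\theta^{2(p_J+p_K-p_L)}\Lambda_{IK}^L\Lambda_{JK}^L$, which matches the listed term. Likewise, $\tfrac14\lambda_{KLI}\lambda_{KLJ}$ gives $\theta^{2(p_K+p_L-p_I)}\Lambda_{KL}^I\Lambda_{KL}^J$.

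We would go through each of the five groups of terms in the same way, using the antisymmetry $\Lambda_{IJ}^K=-\Lambda_{JI}^K$ to reduce to the index placements listed. Terms where Kronecker deltas collapse indices, such as those involving $\lambda_{KLL}$, produce the weights $\theta^{2p_L}$ and $\theta^{2p_J}$ together with the traces $\Lambda_{LK}^K$. Throughout we use $I\ne J$, which removes the diagonal contributions. We expect this exponent matching, together with the careful tracking of the derivative-generated factors, to be the only real work; there are no estimates involved.
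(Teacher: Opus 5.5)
Your proposal is correct, but it takes a different route from the paper.

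\textbf{The paper's route.} The paper works directly in the $h$-orthogonal but non-orthonormal frame $\{\ce_I\}$. It computes the connection coefficients $\Gamma_{IJ}^K=\co^K(\ovn_{\ce_I}\ce_J)$ via the Koszul formula, which gives (\ref{eq:Koszul ech}). It then writes $\sric^\sharp(\ce_I)=\sum_K\theta^{2p_K}\bar R(\ce_I,\ce_K)\ce_K$ in terms of $\ce(\Gamma)$, $\Gamma\Gamma$ and $\Lambda\Gamma$, and finally expands each piece in $\Lambda$ and derivatives of $p\ln\theta$.

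\textbf{Your route.} You reuse the orthonormal-frame formula of Lemma~\ref{lemma: spatial curvatures} for $\{\eig_I\}$ and pull out the single prefactor $\theta^{p_J-p_I}$. You then substitute $\lambda_{ABC}=\theta^{p_A+p_B-p_C}\Lambda_{AB}^C$ plus Kronecker-delta terms, together with $\eig_K=\theta^{p_K}\ce_K$.

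\textbf{Comparison.} Your route is more economical, because the $\theta$-weights come out mechanically. Carrying out the check over the five groups confirms that every term lands in one of the listed types. The contributions that would not fit the list (e.g.\ $\sum_K\theta^{2p_K}\ce_K*\ce_K$ or $\sum_K\theta^{2p_K}\ce_K*\Lambda_{KL}^L$) all carry a factor $\delta_{IJ}$ and vanish for $I\neq J$, as you anticipated. Your expansion actually produces only a subset of the listed $\Lambda\Lambda$ types, which is harmless for a schematic statement. What the paper's route buys is the formula (\ref{eq:Koszul ech}) for $\Gamma_{IJ}^K$. The paper reuses it in the proof of Theorem~\ref{asymptotics of the eigenframe} to identify the limit of the momentum constraint with $\diver_{\rch}\rK$.

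\textbf{One slip.} In your inverted identity the Kronecker-delta terms should read $\theta^{p_I}\ce_I(p_J\ln\theta)\delta_{JK}-\theta^{p_J}\ce_J(p_I\ln\theta)\delta_{IK}$, or equivalently $\eig_I(p_J\ln\theta)\delta_{JK}-\eig_J(p_I\ln\theta)\delta_{IK}$. As written, with $\eig$ together with the extra powers of $\theta$, the weights would be off. Your next sentence uses the correct form, though.
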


\begin{proof}
  Define the connection coefficients
  \[
  \Gamma_{IJ}^K := \co^K\big(\ovn_{\ce_I}\ce_J\big).
  \]
  Then, by the Koszul formula,
  \[
  \begin{split}
    2\theta^{-2p_K} \Gamma_{IJ}^K &= 2h\big(\ovn_{\ce_I}\ce_J,\ce_K\big)\\
    &= \ce_I(\theta^{-2p_J}\delta_{JK}) + \ce_J(\theta^{-2p_I}\delta_{IK}) - \ce_K(\theta^{-2p_I}\delta_{IJ})\\
    &\quad + h\big(\ce_K,[\ce_I,\ce_J]\big) - h\big(\ce_J,[\ce_I,\ce_K]\big) - h\big(\ce_I,[\ce_J,\ce_K]\big)
  \end{split}
  \]
  (no summation on any index). Hence,
  \begin{equation}\label{eq:Koszul ech}
    \begin{split}
      \Gamma_{IJ}^K = & -\ce_I(p_J\ln\theta)\delta_{JK} - \ce_J(p_I\ln\theta)\delta_{IK} + \theta^{2(p_K-p_I)} \ce_K(p_I\ln\theta) \delta_{IJ}\\
      & + \tfrac{1}{2}\big( \Lambda_{IJ}^K - \theta^{2(p_K-p_J)} \Lambda_{IK}^J - \theta^{2(p_K-p_I)} \Lambda_{JK}^I \big).
    \end{split}
  \end{equation}
  Next, 
  \[
  \begin{split}
    \sric^{\sharp}(\ce_I) &= \tsum_K \theta^{2p_K} \bar R(\ce_I,\ce_K)\ce_K\\
    &= \tsum_K \theta^{2p_K} \big( \ovn_{\ce_I} \ovn_{\ce_K} \ce_K - \ovn_{\ce_K} \ovn_{\ce_I} \ce_K - \ovn_{[\ce_I,\ce_K]} \ce_K \big)\\
    &= \tsum_K \theta^{2p_K} \big( \ce_I\big(\Gamma_{KK}^L\big) \ce_L - \ce_K\big(\Gamma_{IK}^L\big) \ce_L + \Gamma_{KK}^L \Gamma_{IL}^M \ce_M
    - \Gamma_{IK}^L \Gamma_{KL}^M \ce_M - \Lambda_{IK}^L \Gamma_{LK}^M \ce_M \big).
  \end{split}
  \]
  Therefore,
  \[
  \sric^{\sharp}\big(\ce_I,\co^J\big) = \tsum_K \theta^{2p_K} \big( \ce_I\big(\Gamma_{KK}^J\big) - \ce_K\big(\Gamma_{IK}^J\big)
  + \Gamma_{KK}^L \Gamma_{IL}^J - \Gamma_{IK}^L \Gamma_{KL}^J - \Lambda_{IK}^L \Gamma_{LK}^J \big).
  \]
  Now we expand each term in terms of $\Lambda_{IJ}^K$. First,
  \begin{align*}
    \tsum_K \theta^{2p_K} \ce_I\big(\Gamma_{KK}^J\big) &= \theta^{2p_J} \big( -2\ce_I\ce_J(p_J\ln\theta) + \ce_I\ce_J(\ln\theta)\\
    &\quad + 2\tsum_K \ce_J(p_K\ln\theta) \ce_I\big( (p_J-p_K)\ln\theta \big)\\
    &\quad + \ce_I\big(\Lambda_{JK}^K\big) + 2\tsum_K \ce_I\big( (p_J-p_K)\ln\theta \big) \Lambda_{JK}^K \big).
  \end{align*}
  Next,
  \begin{align*}
    &\tsum_K \theta^{2p_K} \ce_K\big(\Gamma_{IK}^J\big)\\
    = & -\theta^{2p_J} \ce_J\ce_I(p_J\ln\theta) - \tsum_K \theta^{2p_K} \ce_K\ce_K(p_I\ln\theta) \delta_{IJ}
    + \theta^{2p_J} \ce_I\ce_J(p_I\ln\theta)\\
    & + 2\theta^{2p_J} \ce_J(p_I\ln\theta) \ce_I\big( (p_J-p_I)\ln\theta \big)
    + \tfrac{1}{2} \tsum_K \theta^{2p_K} \ce_K\big(\Lambda_{IK}^J\big) - \tfrac{1}{2} \tsum_K \theta^{2p_J} \ce_K\big(\Lambda_{IJ}^K\big)\\
    & - \tsum_K \theta^{2p_J} \ce_K\big( (p_J-p_K)\ln\theta \big) \Lambda_{IJ}^K
    - \tfrac{1}{2} \tsum_K \theta^{2(p_K+p_J-p_I)} \ce_K\big(\Lambda_{KJ}^I\big)\\
    & - \tsum_K \theta^{2(p_J+p_K-p_I)} \ce_K\big( (p_J-p_I)\ln\theta \big) \Lambda_{KJ}^I.
  \end{align*}
  In addition,
  \begin{align*}
    \tsum_K \theta^{2p_K} \Lambda_{IK}^L \Gamma_{LK}^J =& -\theta^{2p_J} \Lambda_{IJ}^L \ce_L(p_J\ln\theta)
    - \tsum_K \theta^{2p_K} \Lambda_{IK}^J \ce_K(p_J\ln\theta) + \tsum_K \theta^{2p_J} \ce_J(p_K\ln\theta) \Lambda_{IK}^K\\
    & + \tfrac{1}{2} \tsum_K \theta^{2p_K} \Lambda_{IK}^L \Lambda_{LK}^J - \tfrac{1}{2} \theta^{2p_J} \Lambda_{IK}^L \Lambda_{LJ}^K
    + \tfrac{1}{2} \tsum_{K,L} \theta^{2(p_J+p_K-p_L)} \Lambda_{IK}^L \Lambda_{JK}^L.
  \end{align*}
  Moreover,
  \begin{align*}
    \tsum_K \theta^{2p_K} \Gamma_{KK}^L \Gamma_{IL}^J &= \tsum_{K,L} \theta^{2p_K} \big( -2\ce_K(p_K\ln\theta)\delta_{KL}
    + \theta^{2(p_L-p_K)}\ce_L(p_K\ln\theta)\\
    &\quad + \theta^{2(p_L-p_K)} \Lambda_{LK}^K \big) \big( -\ce_I(p_L\ln\theta) \delta_{LJ} - \ce_L(p_I\ln\theta) \delta_{IJ}\\
    &\quad + \theta^{2(p_J-p_I)}\ce_J(p_I\ln\theta) \delta_{IL} + \tfrac{1}{2} \Lambda_{IL}^J
    - \tfrac{1}{2}\theta^{2(p_J-p_L)} \Lambda_{IJ}^L - \tfrac{1}{2}\theta^{2(p_J-p_I)} \Lambda_{LJ}^I \big).
  \end{align*}
  Finally,
  \begin{align*}
    \tsum_K \theta^{2p_K} \Gamma_{IK}^L \Gamma_{KL}^J &= \tsum_{K,L} \theta^{2p_K} \big( -\ce_I(p_K\ln\theta) \delta_{KL}
    - \ce_K(p_I\ln\theta) \delta_{IL} + \theta^{2(p_L-p_I)} \ce_L(p_I\ln\theta) \delta_{IK}\\
    &\quad + \tfrac{1}{2}\Lambda_{IK}^L - \tfrac{1}{2} \theta^{2(p_L-p_K)} \Lambda_{IL}^K - \tfrac{1}{2} \theta^{2(p_L-p_I)} \Lambda_{KL}^I \big)\\
    &\quad \,\big( -\ce_K(p_L\ln\theta) \delta_{LJ} - \ce_L(p_K\ln\theta) \delta_{KJ} + \theta^{2(p_J-p_K)} \ce_J(p_K\ln\theta) \delta_{KL}\\
    &\quad + \tfrac{1}{2} \Lambda_{KL}^J - \tfrac{1}{2} \theta^{2(p_J-p_L)} \Lambda_{KJ}^L - \tfrac{1}{2} \theta^{2(p_J-p_K)} \Lambda_{LJ}^K \big).
  \end{align*}
  It may now be verified that for $I \neq J$, every term that appears in $\sric^{\sharp}\big(\ce_I,\co^J\big)$ is of one of the types
  described in the statement of the lemma.
\end{proof}

In the following two lemmas, we deduce the bounds for $\cb_I^J$ that will be required for improving on the estimates for the expansion normalized eigenframe.

\begin{lemma} \label{estimates for curly B}
  Let $3\leq n\in\nn{}$, $k_0\in\nn{}$, $r>0$, $\delta\in (0,1)$ and fix a $\delta$-admissible potential $V\in C^\infty(\R)$. Given that
  $(M,g,\varphi)$ is a solution to the Einstein--nonlinear scalar field equations satisfying the $(\delta,n,k_0,r)$-assumptions, assume that there is
  an eigenframe $\{\eig_I\}_{I=1}^n$ with the properties listed at the beginning of Subsection~\ref{ssection:the eigenframe}. Then there is a constant
  $C$ such that 
  \[
  \big|E_\I\big(\cb_I^I\big)\big| \leq Ct^{-1+\delta}, \qquad \big|E_\I\big(\cb_I^J\big)\big| \leq C\langle \ln t \rangle^{|\I|} t^{-1+\rp_I-\rp_J+2\delta}
  \]
  for $t\leq t_0$, where $I \neq J$, there is no summation over $I$, and $|\I| \leq k_0-1$.
\end{lemma}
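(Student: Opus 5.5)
The approach is to estimate the diagonal and off-diagonal components separately, using the explicit formulas of Lemma~\ref{equations for the expansion normalized eigenframe lemma}. The only inputs are the $(\delta,n,k_0,r)$-assumptions, Lemmas~\ref{estimates for powers of theta}, \ref{bounds for the scalar field} and \ref{estimate for B}, and Proposition~\ref{bounds for change of basis matrix}.

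\emph{Diagonal terms.} Consider $\cb_I^I=-(\ln\theta)\p_tp_I-\Theta^{-1}N^2p_I(\dots)$. We estimate the scalar quantities in the bracket one at a time, in $C^{k_0-1}$.
\begin{itemize}
\item By Lemma~\ref{lemma: spatial curvatures} and the bounds on $e$ and $\gamma$ in \eqref{bounds on the solution}, $\roScal_h$ is bounded by $Ct^{-2+6\delta}$.
\item $N^{-1}\Delta_hN$ is the trace of $N^{-1}\ovn^2N$, so by Lemma~\ref{estimate for B} it is bounded by $Ct^{-2+2\delta}$.
\item $|\md\varphi|^2_h=\sum_I(e_I\varphi)^2$ is bounded by $Ct^{-2+2\delta}$.
\item $V\circ\varphi$ is bounded by $C\ldr{\ln t}^{k_0}t^{-2+2\delta}$ by Lemma~\ref{bounds for the scalar field}.
\end{itemize}
Since $\Theta^{-1}\le Ct$ by \eqref{eq:tTheta and N invertible} and the bounds on $t\Theta$, and $N$, $p_I$ are bounded in $C^{k_0}$, the second term is $O(\ldr{\ln t}^{k_0}t^{-1+2\delta})\le Ct^{-1+\delta}$.

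For the first term we need $\p_tp_I$. We use \eqref{equation for eigenvalues}, $\p_tp_I=Ne_0p_I$, together with \eqref{evolution equation for the second fundamental form} and the evolution of $\theta$ (its trace). The $\theta K$ and $\Id$ parts cancel against $-\theta^{-1}e_0(\theta)p_I$, leaving
\[
\p_tp_I=\Theta^{-1}N^2\big[(-\sric^\sharp+N^{-1}(\ovn^2N)^\sharp+\md\varphi\otimes\rograd\varphi)(\eig_I,\deig^I)-p_I\,\tr(\cdot)\big].
\]
Expanding $\eig_I=A_I^Je_J$ and using Proposition~\ref{bounds for change of basis matrix} and Lemma~\ref{estimate for B}, this is $O(t^{-1+2\delta})$ in $C^{k_0-1}$. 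Multiplying by $\ln\theta=O(\ldr{\ln t})$ (Lemma~\ref{estimates for powers of theta}) gives $Ct^{-1+\delta}$. Derivatives are handled by the Leibniz rule, since all the factors are controlled in $C^{k_0-1}$.

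\emph{Off-diagonal terms.} For $I\neq J$ we write
\[
(\lie_{e_0}K)(\ce_I,\co^J)=\theta^{p_J-p_I}(\lie_{e_0}K)(\eig_I,\deig^J).
\]
The $\theta K$ and $\Id$ terms drop out because the frame is orthonormal and $I\ne J$. What remains is
\[
\theta^{p_J-p_I}\Big(\tsum_{L,M}A_I^LA_J^M\big[-\sric(e_L,e_M)+N^{-1}\ovn^2N(e_L,e_M)+e_L\varphi\, e_M\varphi\big]\Big).
\]
Equivalently, $\cb_I^J=\theta^{p_J-p_I}B_{IJLM}A_I^LA_J^M$. By Lemma~\ref{estimate for B} and Proposition~\ref{bounds for change of basis matrix}, the bracket times $N^2\Theta^{-1}(p_I-p_J)^{-1}$ is bounded by $Ct^{-1+2\delta}$ in $C^{k_0-1}$; here $|p_I-p_J|\ge r$ controls the denominator and its derivatives. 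Lemma~\ref{estimates for powers of theta} with exponent $p_J-p_I$ gives
\[
|E_\I\theta^{p_J-p_I}|\le C\ldr{\ln t}^{|\I|}t^{\rp_I-\rp_J}.
\]
Distributing $E_\I$ by Leibniz then yields $C\ldr{\ln t}^{|\I|}t^{-1+\rp_I-\rp_J+2\delta}$.

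The main obstacle is bookkeeping rather than any single estimate. One must verify the cancellation in $\p_tp_I$, so that no bare $t^{-1}$ term survives; this is where the trace of \eqref{evolution equation for the second fundamental form} and the $-\theta^{-1}e_0(\theta)p_I$ term must combine exactly. One must also check that the $\ldr{\ln t}$ powers in the diagonal estimate can be absorbed by sacrificing $\delta$ from $t^{2\delta}$, while in the off-diagonal case they are kept explicitly.
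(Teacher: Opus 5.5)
Your proposal is correct and follows the paper's proof essentially step by step. For $\cb_I^I$ it uses the cancellation of the $\theta^2$ terms in \eqref{equation for eigenvalues}, and for $I\neq J$ it uses $\cb_I^J=\theta^{p_J-p_I}B_{IJLM}A_I^LA_J^M$ with Lemma~\ref{estimates for powers of theta}. One small slip: only the $\theta K$ parts cancel in $\p_tp_I$, while the $\Id$ parts leave $\tfrac{2}{n-1}\Theta^{-1}N^2(1-np_I)\,V\circ\varphi$. Hence your bound on $\p_tp_I$ picks up a factor $\ldr{\ln t}^{k_0-1}$, exactly as in the paper's estimate of $t\,e_0p_I$. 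This is harmless, since you absorb the logarithms into $t^{\delta}$ anyway.
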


\begin{proof}
  We begin by estimating $e_0p_I$ using Lemma~\ref{equations for the eigenframe}. By \eqref{evolution equation for the second fundamental form},
  \begin{align*}
    e_0 \theta &= -\roScal_h - \theta^2 + N^{-1} \Delta_h N + |\md\varphi|_h^2 + \tfrac{2n}{n-1} V\circ\varphi,\\
    g\big((\lie_{e_0}K)(\eig_I),\eig_I\big) &= -\sric(\eig_I,\eig_I) - \theta^2p_I + N^{-1}(\ovn^2 N)(\eig_I,\eig_I)
    + \eig_I(\varphi)^2 + \tfrac{2}{n-1} V\circ\varphi
  \end{align*}
  (no summation). Note that the terms involving $\theta^2$ cancel out in the right-hand side of \eqref{equation for eigenvalues}. Hence, we can
  estimate the remaining terms using the assumptions, Lemma~\ref{bounds for the scalar field}, Lemma~\ref{estimate for B} and
  Proposition~\ref{bounds for change of basis matrix} to obtain
  \[
  t\|e_0p_I\|_{C^{k_0-1}(\Sigma)} \leq C\langle \ln t \rangle^{k_0-1}t^{2\delta}.
  \]
  The result for $\cb_I^I$ now follows from its definition and Lemma~\ref{estimate for B}.

  For $\cb_I^J$, we have
  \[
  (\lie_{e_0}K)\big(\ce_I,\co^J\big) = \theta^{p_J-p_I} g\big((\lie_{e_0}K)(\eig_I),\eig_J\big).
  \]
  The result follows from the assumptions, the definition of $\cb_I^J$, \eqref{evolution equation for the second fundamental form},
  Lemmas~\ref{estimates for powers of theta} and \ref{estimate for B}, and Proposition~\ref{bounds for change of basis matrix}.
\end{proof}

\begin{lemma} \label{improved estimates for curly B}
  Let $3\leq n\in\nn{}$, $2\leq k_0\in\nn{}$, $r>0$, $\delta\in (0,1)$ and fix a $\delta$-admissible potential $V\in C^\infty(\R)$. Given that
  $(M,g,\varphi)$ is a solution to the Einstein--nonlinear scalar field equations satisfying the $(\delta,n,k_0,r)$-assumptions, assume that there is
  an eigenframe $\{\eig_I\}_{I=1}^n$ with the properties listed at the beginning of Subsection~\ref{ssection:the eigenframe}. Let $0<\bfc_1\in\rn{}$,
  $1\leq k\leq k_0-1$, $x \in \Sigma$, and suppose that the following estimates hold for some $m\in\nn{}$, $t\leq t_0$ and all $|\I| \leq k$: For each
  $\ce_I$ separately, either
  \begin{equation}\label{eq:ceI ind assump}
    \big|E_\I\big(\ce_I^i\big)(x)\big| \leq \bfc_1t^{-1+\rp_I(x)+m\delta}, \quad \textrm{or} \quad \big|E_\I\big(\ce_I^i\big)(x)\big| \leq \bfc_1;
  \end{equation}
  and for each $\Lambda_{IJ}^K$ separately, either
  \[
  \big|E_\I\big(\Lambda_{IJ}^K\big)(x)\big| \leq \bfc_1t^{-1+\rp_I(x)+\rp_J(x)-\rp_K(x) + m\delta},
  \quad \textrm{or} \quad \big|E_\I\big(\Lambda_{IJ}^K\big)(x)\big| \leq \bfc_1.
  \]
  Then there is a constant $C_2$, depending only on $\bfc_1$ and the constants appearing in the assumptions of Theorem~\ref{thm:asymptotics} and the
  previous lemmas of the present section, such that, for $t\leq t_0$ and $I \neq J$,
  \[
  \big|E_\I\big(\cb_I^J\big)(x)\big| \leq C_2t^{-1+\rp_I(x)-\rp_J(x) + (m+1)\delta} + C_2t^{-1+2\delta} \min\{1,t^{2(\rp_I(x)-\rp_J(x))}\},
  \]
  where $|\I| \leq k-1$. In particular, if all the $E_\I\big(\ce_I^i\big)(x)$ and the $E_\I\big(\Lambda_{IJ}^K\big)(x)$ are bounded, then
  \[
  \big|E_\I\big(\cb_I^J\big)(x)\big| \leq C_2t^{-1+2\delta} \min\{1,t^{2(\rp_I(x)-\rp_J(x))}\}.
  \]
\end{lemma}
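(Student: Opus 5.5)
The plan is to write out $\cb_I^J$ for $I\neq J$ explicitly and bound it term by term. From \eqref{evolution equation for the second fundamental form},
\[
\lie_{e_0}K=-\sric^{\sharp}-\theta K+N^{-1}(\ovn^2N)^\sharp+\md\varphi\otimes\rograd_h\varphi+\tfrac{2}{n-1}(V\circ\varphi)\Id .
\]
The terms $\theta K$ and $\Id$ are diagonal in the eigenframe, so their $(\ce_I,\co^J)$-components vanish for $I\neq J$. Hence
\[
\cb_I^J=\tfrac{N^2}{\Theta(p_I-p_J)}\Big(-\sric^{\sharp}(\ce_I,\co^J)+\theta^{2p_J}N^{-1}(\ovn^2N)(\ce_I,\ce_J)+\theta^{2p_J}\ce_I(\varphi)\ce_J(\varphi)\Big).
\]
The prefactor $N^2/(\Theta(p_I-p_J))$ is $t$ times a function whose $C^{k_0}$-norm is bounded. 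This uses $|p_I-p_J|\geq r$, \eqref{eq:tTheta and N invertible} and the bounds on $N$, $p_I$ and $t\Theta$ in \eqref{bounds on the solution}. By the Leibniz rule it therefore suffices to estimate $E_\I$, $|\I|\leq k-1$, of the bracket, multiplied by $t$.

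For the Ricci part I use Lemma~\ref{spatial ricci}, which splits it into $\cR_1+\cR_2+\cR_3$. Each factor $\ce$ or $\Lambda$ carries at most one $E_i$, so $k-1$ further derivatives require control of at most $k$ derivatives. That is exactly what the hypothesis \eqref{eq:ceI ind assump} provides; derivatives landing on $\theta$-powers or on $E(p\ln\theta)$ factors are handled by Lemma~\ref{estimates for powers of theta}. The lapse and scalar field terms are written in the same schematic form. Expanding $\ce_I(\varphi)=\ce_I^iE_i\varphi$ gives $\theta^{2p_J}*\ce_I*\ce_J*E\varphi*E\varphi$, where $\|\varphi\|_{C^{k_0}}\leq C\ldr{\ln t}$ by Lemma~\ref{bounds for the scalar field}. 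Similarly, $N^{-1}\ovn^2N(\ce_I,\ce_J)$ is built from $\ce_I\ce_J(\ln N)$, from products of $\ce(\ln N)$, and from the connection coefficients \eqref{eq:Koszul ech} contracted with $\ce(\ln N)$. All of these have the schematic form of $\cR_1$ or $\cR_2$ with bounded extra factors.

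Each schematic term then reduces to a case analysis on exponents, using only $|\rp_I|<1-2\delta$ (Lemma~\ref{bounds for the pI}) and \eqref{subcritical condition}. I illustrate with $t\,\theta^{2p_J}\ce_I\ce_J$.

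If both factors satisfy the decaying alternative, the term is bounded by $t^{-1+\rp_I-\rp_J+2m\delta}$, which is at most the first term of the claim since $m\geq 1$.

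If $\ce_I$ decays and $\ce_J$ is bounded, the bound is $t^{\rp_I-2\rp_J+m\delta}$. This is dominated by $t^{-1+\rp_I-\rp_J+(m+1)\delta}$ because $-\rp_J\geq -1+2\delta$.

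If $\ce_J$ decays and $\ce_I$ is bounded, the bound is $t^{\rp_J-2\rp_J+m\delta}\cdot t^{0}=t^{-\rp_J+m\delta}$, which is dominated by the same term via $\rp_I<1-2\delta$.

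If both are bounded, the bound is $t^{1-2\rp_J}$. This is at most $t^{-1+2\delta}$ because $\rp_J<1-\delta$. It is also at most $t^{-1+2\delta+2(\rp_I-\rp_J)}$ because $\rp_I<1-\delta$. This case produces the $\min\{1,t^{2(\rp_I-\rp_J)}\}$ term.

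For the $\Lambda\Lambda$ and $\ce\Lambda$ terms the same four-way split applies. Every $\theta$-power in Lemma~\ref{spatial ricci} is paired with indices so that, after inserting the decaying bounds, the exponents telescope to $-1+\rp_I-\rp_J+(\text{multiple of }\delta)$. In the bounded-bounded cases the subcritical inequality $\rp_K+\rp_L-\rp_I<1-2\delta$ is what keeps exponents such as $2(\rp_K+\rp_L-\rp_I)$ above $-2+4\delta$. The final claim, where all the $E_\I\ce$ and $E_\I\Lambda$ are bounded, is just the bounded-bounded case for every term.

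The main obstacle I expect is the logarithmic losses $\ldr{\ln t}^{|\I|}$ coming from derivatives of $\theta^{p}$ and of $\varphi$. In most cases there is at least a $\delta$ of slack in the exponent, and the logs are absorbed there. The borderline case is "both decaying with $m=1$", where $2m\delta=(m+1)\delta$ exactly. There I would first check whether this case really occurs with a log factor. Genuine $\ce\ce$ products appear only in $\cR_1$ and in the lapse and scalar field terms. In the latter the extra factor $E\varphi$ or $E\ln N$ only costs logs, and one can trade $\delta/2$ of the gain against them. If it does occur with no slack, I would instead use the weaker a priori bound of Lemma~\ref{initial estimates for expansion normalized variables} for one of the two factors; that bound has a $2\delta$ gain and so provides the missing slack.
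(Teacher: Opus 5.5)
Your proposal is correct and follows essentially the same route as the paper. Both arguments drop the off-diagonal $\theta K$ and $\Id$ terms, extract the factor $t$ from the prefactor, expand the Ricci part via Lemma~\ref{spatial ricci} and the lapse and scalar-field parts the same way, and then run a decaying/bounded case analysis closed by Lemma~\ref{bounds for the pI} and \eqref{subcritical condition}. Note that \eqref{subcritical condition} is also needed in some mixed cases with a bounded $\Lambda$ factor, not only in the bounded--bounded ones.

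The borderline $m=1$ logarithmic loss you worry about occurs for every doubly decaying product, not only the $\ce\ce$ ones, because derivatives of the $\theta$-powers and the $E_\I(p_I\ln\theta)$ factors produce $\ldr{\ln t}$ throughout. Your fallback handles this: use the $2\delta$ gain of Lemma~\ref{initial estimates for expansion normalized variables} on one factor. That is what the paper does through the exponent $2\max\{m,2\}\delta$.
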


\begin{proof}
  For each $\ce_I$, we, in this proof, say that case 1 (or 2) holds if the left (or right) inequality in (\ref{eq:ceI ind assump}) holds. We use similar
  terminology for the $\Lambda_{IJ}^K$. In addition, when we write $\langle \ln t \rangle^*$, it is understood that $*$ stands for a positive integer
  whose exact value is unimportant (but only depends on $k$). In particular, its exact value may change from line to line. Finally, to simplify the
  notation, we omit the argument $x$ throughout the proof. Thus, it is to be understood that all functions on $\Sigma$ that appear below are evaluated
  at the point $x \in \Sigma$.
    
  Let $I \neq J$. We illustrate the idea with one of the types of terms appearing in Lemma~\ref{spatial ricci}. Let $|\I| \leq k-1$. Consider
  $\theta^{2p_J} * \ce_I * \ce_J$. If case 1 holds for both $\ce_I$ and $\ce_J$, then, keeping
  Lemma~\ref{initial estimates for expansion normalized variables} in mind, 
  \[
  |E_\I(\theta^{2p_J} * \ce_I * \ce_J)| \leq C\langle \ln t \rangle^* t^{-2+\rp_I-\rp_J+2\max\{m,2\}\delta}.
  \]
  If case 2 holds for $\ce_I$, and case 1 holds for $\ce_J$, Lemma~\ref{bounds for the pI} implies
  \[
  |E_\I(\theta^{2p_J} * \ce_I * \ce_J)| \leq C\langle \ln t \rangle^* t^{-1-\rp_J+m\delta} \leq C\langle \ln t \rangle^* t^{-2+\rp_I-\rp_J + (m+2)\delta}.
  \]
  If case 1 holds for $\ce_I$ and case 2 holds for $\ce_J$, a similar estimate results. If case 2 holds for both and $I < J$, then
  \begin{equation}\label{eq:simple two two term}
    |E_\I(\theta^{2p_J} * \ce_I * \ce_J)| \leq C \langle \ln t \rangle^* t^{-2\rp_J} \leq C\langle \ln t \rangle^* t^{-2+4\delta};
  \end{equation}
  and if $I > J$, then
  \begin{equation}\label{eq:simple two two term I gt J}
    |E_\I(\theta^{2p_J} * \ce_I * \ce_J)| \leq C\langle \ln t \rangle^* t^{-2+2(\rp_I-\rp_J) + 2(1-\rp_I)} \leq C\langle \ln t \rangle^* t^{-2+2(\rp_I-\rp_J) + 4\delta}.
  \end{equation}
  If $\theta * \ce * \Lambda$ and $\theta * \Lambda * \Lambda$ are any of the types of terms appearing in
  $\sric^{\,\sharp}\big(\ce_I,\co^J\big)$, it is readily verified that, whenever both of the non $\theta$ factors are in case 1, or one is in case 1 and
  the other in case 2, the same pattern as above arises. Thus
  \[
  |E_\I(\theta * \ce * \Lambda)| + |E_\I(\theta * \Lambda * \Lambda)| \leq C \langle \ln t \rangle^* t^{-2+\rp_I-\rp_J+(m+2)\delta}
  \]
  when both non $\theta$ factors are in case 1, or one is in case 1 and the other in case 2. Note, however, that if one is in case 1 and the other
  in case 2, we sometimes need to appeal to (\ref{subcritical condition}). It remains to check what happens when both factors
  are in case 2. Note that in that case, the only time dependent contribution to the bound comes from the power of $\theta$.

  The following are the possible upper bounds for the terms in $\sric^{\,\sharp}\big(\ce_I,\co^J\big)$, for $I \neq J$, when both of the factors are in
  case 2:
  \begin{subequations}\label{seq:two two all cases}
    \begin{align}
      &C\langle \ln t \rangle^* t^{-2\rp_J};\label{first case}\\
      &C\langle \ln t \rangle^* t^{2(\rp_I-\rp_J-\rp_K)}, & K &\neq J;\label{second case}\\
      &C\langle \ln t \rangle^* t^{-2\rp_K}, &K &\neq I;\label{third case}\\
      &C\langle \ln t \rangle^* t^{2(\rp_I-\rp_K-\rp_L)}, &K &\neq L;\label{fourth case}\\
      &C\langle \ln t \rangle^* t^{2(\rp_L-\rp_J-\rp_K)}, &K &\neq J, \quad K \neq I.\label{fifth case}
    \end{align}
  \end{subequations}
  All the expressions in (\ref{seq:two two all cases}) can be estimated by the far right hand side of (\ref{eq:simple two two term}) due to
  (\ref{subcritical condition}) and Lemma~\ref{bounds for the pI}. Assume now that $I>J£$. Then the expression \eqref{first case} can be estimated
  as in (\ref{eq:simple two two term I gt J}). In the case of \eqref{second case}, 
  \[
  2(\rp_I-\rp_J-\rp_K) = -2+2(\rp_I-\rp_J) + 2(1-\rp_K) > -2+2(\rp_I-\rp_J) + 4\delta.
  \]
  In the case of \eqref{third case}, since $K\neq I$, 
  \[
  -2\rp_K = -2+2(\rp_I-\rp_J) + 2(1+\rp_J-\rp_I-\rp_K) > -2+2(\rp_I-\rp_J) + 4\delta.
  \]
  In the case of \eqref{fourth case}, since $K \neq L$,
  \[
  2(\rp_I-\rp_K-\rp_L) = -2+2(\rp_I-\rp_J) + 2(1+\rp_J-\rp_K-\rp_L) > -2+2(\rp_I-\rp_J) + 4\delta.
  \]
  Finally, in the case of \eqref{fifth case}, since $K \neq I$,
  \[
  2(\rp_L-\rp_J-\rp_K) = -2+2(\rp_I-\rp_J) + 2(1+\rp_L-\rp_I-\rp_K) > -2 + 2(\rp_I-\rp_J) + 4\delta.
  \]
  Summing up, we have shown that
  \[
  \big|E_\I\big( \sric^{\sharp}\big(\ce_I,\co^J\big) \big)\big| \leq Ct^{-2+\rp_I-\rp_J + (m+1)\delta} + Ct^{-2+2\delta} \min\{1,t^{2(\rp_I-\rp_J)}\},
  \]
  for $I \neq J$ and $|\I| \leq k-1$. In particular, if all the $E_\I\big(\ce_I^i\big)$ and the $E_\I\big(\Lambda_{IJ}^K\big)$ are bounded for
  $|\I| \leq k$, then
  \[
  \big|E_\I\big( \sric^{\sharp}\big(\ce_I,\co^J\big) \big)\big| \leq Ct^{-2+2\delta} \min\{1,t^{2(\rp_I-\rp_J)}\}.
  \]
  Next, we consider the remaining terms in $\cb_I^J$, starting with
  \begin{equation}\label{eq:dvp gradphi B est}
    (\md\varphi \otimes \rograd_h\varphi)\big(\ce_I,\co^J\big) = \theta^{2p_J} \ce_I(\varphi) \ce_J(\varphi).
  \end{equation}
  Moreover,
  \begin{equation}\label{eq: nsq N sharp B est}
    \begin{split}
      (\ovn^2 N)^\sharp\big(\ce_I,\co^J\big)
      &= \theta^{2p_J} (\ovn^2N)(\ce_I,\ce_J)\\
      &= \theta^{2p_J} \ce_I\ce_JN + \theta^{2p_J} \ce_I(p_J\ln\theta) \ce_J(N) + \theta^{2p_J} \ce_J(p_I\ln\theta)\ce_I(N)\\
      &\quad - \tfrac{1}{2} \theta^{2p_J} \Lambda_{IJ}^K \ce_K(N) + \tfrac{1}{2} \tsum_K \theta^{2p_K} \Lambda_{IK}^J \ce_K(N)\\
      &\quad + \tfrac{1}{2} \tsum_K \theta^{2(p_J+p_K-p_I)} \Lambda_{JK}^I \ce_K(N).
    \end{split}
  \end{equation}
  This observation finishes the proof, since each of the terms on the far right hand sides of (\ref{eq:dvp gradphi B est}) and
  (\ref{eq: nsq N sharp B est}) is qualitatively of the same type as a term that appeared when estimating the components of the spatial Ricci tensor. 
\end{proof}

Next, we prove the crucial inductive step required to obtain the asymptotics: it is possible to improve on the estimates for the expansion normalized
eigenframe, at the cost of losing two derivatives at each step.

\begin{lemma} \label{improved estimates for expansion normalized variables}
  Let $3\leq n\in\nn{}$, $4\leq k_0\in\nn{}$, $r>0$, $\delta\in (0,1)$ and fix a $\delta$-admissible potential $V\in C^\infty(\R)$. Given that
  $(M,g,\varphi)$ is a solution to the Einstein--nonlinear scalar field equations satisfying the $(\delta,n,k_0,r)$-assumptions, assume that there is
  an eigenframe $\{\eig_I\}_{I=1}^n$ with the properties listed at the beginning of Subsection~\ref{ssection:the eigenframe}. Fix an open cover
  $U_j$ of $\Sigma$, with corresponding constants $\alpha_j$, as in Lemma~\ref{technical lemma}, and let $m\in\nn{}$. Suppose that for each $j$, the
  following pointwise estimates hold in $U_j$ for all $|\I| \leq k$, $t\leq t_0$ and some $k\leq k_0-2$: for $\ce_I$,
  \begin{align*}
    \big|E_\I\big(\ce_I^i\big)\big| &\leq C_jt^{-1+\rp_I+m\delta + \alpha_j}, & \text{if} \quad -1+\rp_I+m\delta + \alpha_j &< 0,\\
    \big|E_\I\big(\ce_I^i\big)\big| &\leq C_j, & \text{if} \quad -1+\rp_I+m\delta + \alpha_j &> 0;
  \end{align*}
  for $\co^I$,
  \begin{align*}
    \big|E_\I\big(\co_i^I\big)\big| &\leq C_jt^{-1-\rp_I+m\delta + \alpha_j}, & \text{if} \quad -1-\rp_I+m\delta + \alpha_j &< 0,\\
    \big|E_\I\big(\co_i^I\big)\big| &\leq C_j, & \text{if} \quad -1-\rp_I+m\delta + \alpha_j &> 0;
  \end{align*}
  and for $\Lambda_{IJ}^K$,
  \begin{align*}
    \big|E_\I\big(\Lambda_{IJ}^K\big)\big| &\leq C_jt^{-1+\rp_I+\rp_J-\rp_K + m\delta + \alpha_j},
    & \text{if} \quad -1+\rp_I+\rp_J-\rp_K + m\delta + \alpha_j &< 0,\\
    \big|E_\I\big(\Lambda_{IJ}^K\big)\big| &\leq C_j, & \text{if} \quad -1+\rp_I+\rp_J-\rp_K + m\delta + \alpha_j &> 0.
  \end{align*}
  Then the same estimates hold for $|\I| \leq k-2$ and $t\leq t_0$, with $m$ replaced by $m+1$.
\end{lemma}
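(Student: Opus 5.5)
The plan is to integrate the evolution equations of Lemma~\ref{equations for the expansion normalized eigenframe lemma} as ODEs in $t$ at each fixed $x\in U_j$, using Lemma~\ref{improved estimates for curly B} to bound the coefficients $\cb_I^J$. First we apply $E_\I$ with $|\I|\leq k-2$ to (\ref{equation for expansion normalized frame}), (\ref{equation for expansion normalized dual frame}) and (\ref{equation for expansion normalized structure coefficients}). In each $U_j$, the hypotheses of the present lemma give, for every $\ce_I$ and $\Lambda_{IJ}^K$, either a bound of the form $\bfc_1t^{-1+\rp_I+m\delta}$ (respectively $\bfc_1t^{-1+\rp_I+\rp_J-\rp_K+m\delta}$), or boundedness. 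Here the extra $t^{\alpha_j}$ only helps, since $t\leq t_0$. Lemma~\ref{improved estimates for curly B} then applies with $k$ replaced by $k$, and yields for $|\I|\leq k-1$ and $I\neq J$
\[
|E_\I\cb_I^J|\leq C t^{-1+\rp_I-\rp_J+(m+1)\delta}+Ct^{-1+2\delta}\min\{1,t^{2(\rp_I-\rp_J)}\}.
\]
For the diagonal terms, Lemma~\ref{estimates for curly B} gives $|E_\I\cb_I^I|\leq Ct^{-1+\delta}$, which is integrable. The structure-coefficient equation contains $\ce_I(\cb_J^K)$, so it consumes one more derivative of $\cb$. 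This, together with the loss in Lemma~\ref{improved estimates for curly B}, explains why the conclusion only holds for $|\I|\leq k-2$.

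Next we treat each variable by an integrating-factor argument. For $\ce_I$, the equation reads $\p_tE_\I\ce_I^i=\cb_I^IE_\I\ce_I^i+\sum_{J\neq I}\cb_I^JE_\I\ce_J^i+(\text{lower-order products})$. Since $\cb_I^I$ is integrable in $t$, the factor $\exp(\int\cb_I^I)$ is bounded above and below, so it suffices to estimate the source. For each $J\neq I$, we multiply the bound on $\cb_I^J$ by the inductive bound on $\ce_J$, which is either $t^{-1+\rp_J+m\delta+\alpha_j}$ or $1$. The terms produced are of the following two kinds.
\begin{itemize}
\item From $t^{-1+\rp_I-\rp_J+(m+1)\delta}$ times $t^{-1+\rp_J+\ldots}$ we get $t^{-2+\rp_I+(m+1)\delta+\alpha_j+\ldots}$, which is of the correct form.
\item From the $\min$ term times the bound on $\ce_J$, we use $\min\{1,t^{2(\rp_I-\rp_J)}\}\leq t^{\rp_I-\rp_J}$ together with Lemma~\ref{bounds for the pI}, which gives $1-\rp_J>2\delta$, to dominate this term by $t^{-2+\rp_I+(m+1)\delta+\alpha_j}$ as well. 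Alternatively, when $\ce_J$ is bounded, we get a term already integrable with a power at least $-1+\rp_I+2\delta$.
\end{itemize}
The lower-order products, where derivatives are distributed among the factors, are handled in the same way using Lemma~\ref{estimates for curly B} and Lemma~\ref{initial estimates for expansion normalized variables}, up to factors of $\langle\ln t\rangle^{*}$. Those logarithmic factors are absorbed using the slack between $\alpha_j$ and the gap guaranteed by Lemma~\ref{technical lemma}.

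We then integrate from $t$ to $t_0$. If the exponent $-1+\rp_I+(m+1)\delta+\alpha_j$ is negative, integrating $t^{-2+\rp_I+(m+1)\delta+\alpha_j}$ gives $Ct^{-1+\rp_I+(m+1)\delta+\alpha_j}$. If the exponent is positive, the integral converges and the quantity is bounded. Lemma~\ref{technical lemma} guarantees that this exponent stays at distance at least $\alpha_j/2$ from zero for every $m$, so neither case is logarithmically borderline, and the constants depend only on $\alpha_j$. The dual frame $\co^I$ is handled identically with $-\rp_I$ in place of $\rp_I$, again using the second inequality of Lemma~\ref{technical lemma}.

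The main obstacle is $\Lambda_{IJ}^K$. Its equation contains the off-diagonal term $\sum_{L\neq K}\theta^{2(p_K-p_L)}\cb_K^L\Lambda_{IJ}^L$ and the derivative terms $\ce_I(\cb_J^K)$, and the exponents have to be checked case by case against (\ref{subcritical condition}). I would first show that $\theta^{2(p_K-p_L)}\cb_K^L$ is bounded by $Ct^{-1+\rp_K-\rp_L+\ldots}$ plus $t^{-1+2\delta}\min\{\cdot\}$. This uses Lemma~\ref{estimates for powers of theta} and the fact that $\theta^{2(p_K-p_L)}\min\{1,t^{2(\rp_K-\rp_L)}\}\lesssim \langle\ln t\rangle^{*}$. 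Multiplied by the bound on $\Lambda_{IJ}^L$, this gives the right power. For $\ce_I(\cb_J^K)$, the product $\ce_I^i\,E_iE_\I\cb_J^K$ gives $t^{-1+\rp_I}\cdot t^{-1+\rp_J-\rp_K+(m+1)\delta}$, which is exactly the target. When $\ce_I$ is merely bounded, the $\min$ term instead produces $t^{-1+2\delta}\min\{1,t^{2(\rp_J-\rp_K)}\}$, and we compare it with $t^{-2+\rp_I+\rp_J-\rp_K+(m+1)\delta}$ using $1-\rp_I-\rp_J+\rp_K>2\delta$ from (\ref{subcritical condition}). This case is precisely where $I\neq J$ is needed. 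Once these source bounds are in place, integration proceeds as before, with Lemma~\ref{technical lemma}(\ref{eq:subdominant bd away from zero}) ensuring the exponents are bounded away from zero.
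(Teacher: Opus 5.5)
Your proposal is correct and follows essentially the same route as the paper: you bound the right-hand sides of (\ref{equation for expansion normalized frame})--(\ref{equation for expansion normalized structure coefficients}) using the inductive hypotheses, Lemmas~\ref{estimates for curly B}, \ref{improved estimates for curly B} and \ref{bounds for the pI}, and (\ref{subcritical condition}), and then integrate in $t$, with Lemma~\ref{technical lemma} keeping the exponents away from zero. There are two small slips to fix. First, the leading exponent of $\theta^{2(p_K-p_L)}\cb_K^L$ is $-1+\rp_L-\rp_K$, not $-1+\rp_K-\rp_L$. Second, the $t^{-1+2\delta}\min\{\cdots\}$ contributions that multiply merely bounded factors should not be compared with $t^{-2+\rp_I+\rp_J-\rp_K+(m+1)\delta}$ or assigned a power $-1+\rp_I+2\delta$, since neither claim holds in general once $m$ is large. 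They should simply be recorded as integrable, as the paper does with its extra $Ct^{-1+\delta}$ term, because a bounded contribution is acceptable in both cases of the conclusion.
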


\begin{proof}
  Fix one of the open sets $U_j$. For the rest of the proof it is understood that all functions on $\Sigma$ are restricted to $U_j$. For each
  $\ce_I$, we say that case 1 holds if $-1+\rp_I+m\delta + \alpha_j < 0$, and we say that case 2 holds if $-1+\rp_I+m\delta + \alpha_j > 0$. The conventions
  concerning the $\co^I$ and the $\Lambda_{IJ}^K$ are similar. Moreover, we use the notation $\langle \ln t \rangle^*$ introduced in the proof of
  Lemma~\ref{improved estimates for curly B}. 

  Consider $\Lambda_{IJ}^K$. The idea is to estimate $\p_t\Lambda_{IJ}^K$ by estimating all the terms on the right-hand side of
  \eqref{equation for expansion normalized structure coefficients}. Let $|\I| \leq k-2$. By Lemma~\ref{estimates for curly B}, if case 1 holds
  for $\Lambda_{IJ}^K$,
  \[
  \big|E_\I\big( \cb_L^L \Lambda_{IJ}^K \big)\big| \leq C_jt^{-2+\rp_I+\rp_J-\rp_K + (m+1)\delta + \alpha_j},
  \]
  (no summation on $L$). Next, we look at $\tsum_{L \neq K} \theta^{2(p_K-p_L)} \cb_K^L \Lambda_{IJ}^L$. If case 1 holds for $\Lambda_{IJ}^L$, then
  Lemma~\ref{estimates for curly B} implies
  \[
  \begin{split}
    \big|E_\I\big( \theta^{2(p_K-p_L)} \cb_K^L \Lambda_{IJ}^L \big)\big|
    &\leq C_j\langle \ln t \rangle^* t^{-2(\rp_K-\rp_L)} t^{-1+\rp_K-\rp_L + 2\delta} t^{-1+\rp_I+\rp_J-\rp_L + m\delta + \alpha_j}\\
    &= C_j\langle \ln t \rangle^* t^{-2+\rp_I+\rp_J-\rp_K+ (m+2)\delta + \alpha_j};
  \end{split}
  \]
  and if case 2 holds for $\Lambda_{IJ}^L$, then Lemma~\ref{improved estimates for curly B} and \eqref{subcritical condition} yield ($I\neq J$)
  \[
  \begin{split}
    \big|E_\I\big( \theta^{2(p_K-p_L)} \cb_K^L \Lambda_{IJ}^L \big)\big|
    &\leq C_j\langle \ln t \rangle^* t^{-2(\rp_K-\rp_L)}\big( t^{-1+\rp_K-\rp_L + (m+1)\delta} + t^{-1+2\delta}\min\{1,t^{2(\rp_K-\rp_L)}\} \big)\\
    &\leq C_j\langle \ln t \rangle^* \big( t^{-2+\rp_I+\rp_J-\rp_K + (m+3)\delta} + t^{-1+2\delta} \big).
  \end{split}
  \]
  The reader may verify that the remaining terms on the right-hand side of \eqref{equation for expansion normalized structure coefficients}
  can be estimated in a similar way. We conclude that
  \[
  \big|E_\I\big(\p_t\Lambda_{IJ}^K\big)\big| \leq C_jt^{-2+\rp_I+\rp_J-\rp_K+(m+1)\delta + \alpha_j} + C_jt^{-1+\delta}.
  \]
  Note that Lemma~\ref{technical lemma} ensures that (\ref{eq:subdominant bd away from zero}) holds in $U_j$ for $I\neq J$. Therefore, for $s < t_0$,
  \[
  \begin{split}
    \big|E_\I\big(\Lambda_{IJ}^K\big)(t_0) - E_\I\big(\Lambda_{IJ}^K\big)(s)\big| &\leq \tint_s^{t_0}\big|E_\I\big(\p_t\Lambda_{IJ}^K\big)(t)\big|\md t\\
    &\leq C_j\big(t_0^{\delta} - s^{\delta}\big)\\
    &\quad + C_j \frac{t_0^{-1+\rp_I+\rp_J-\rp_K+(m+1)\delta + \alpha_j} - s^{-1+\rp_I+\rp_J-\rp_K+(m+1)\delta + \alpha_j}}{-1+\rp_I+\rp_J-\rp_K+(m+1)\delta + \alpha_j}. 
  \end{split}
  \]
  There is thus a constant $C_j$ such that, if $-1+\rp_I+\rp_J-\rp_K + (m+1)\delta + \alpha_j > 0$, 
  \[
  \big|E_\I\big(\Lambda_{IJ}^K\big)(s)\big| \leq C_j
  \]
  in $U_j$ and for all $s < t_0$. On the other hand, if $-1+\rp_I+\rp_J-\rp_K + (m+1)\delta + \alpha_j < 0$, then
  \[
  \big|E_\I\big(\Lambda_{IJ}^K\big)(s)\big| \leq C_j s^{-1+\rp_I+\rp_J-\rp_K + (m+1)\delta + \alpha_j}
  \]
  in $U_j$ and for all $s < t_0$. This proves the result for the $\Lambda_{IJ}^K$. The result for the $\ce_I$ and the $\co^I$ follows similarly by
  using \eqref{equation for expansion normalized frame} and \eqref{equation for expansion normalized dual frame}.
\end{proof}

Now we are ready to deduce asymptotics for $\ce_I$ and $\co^I$.

\begin{thm} \label{asymptotics of the eigenframe}
  Let $3\leq n\in\nn{}$, $\ell\in\nn{}$, $r>0$, $\delta\in (0,1)$ and fix a $\delta$-admissible potential $V\in C^\infty(\R)$. If $k_0$ is large enough,
  depending only on $\delta$ and $\ell$; $(M,g,\varphi)$ is a solution to the Einstein--nonlinear scalar field equations satisfying the
  $(\delta,n,k_0,r)$-assumptions; and there is an eigenframe $\{\eig_I\}_{I=1}^n$ with the properties listed at the beginning of
  Subsection~\ref{ssection:the eigenframe}, then there is a $C^\ell$ frame $\{\rce_I\}_{I=1}^n$ on $\Sigma$, with dual frame $\{\ro^I\}_{I=1}^n$, such that
  \[
  \|\ce_I(t) - \rce_I\|_{C^\ell(\Sigma)} + \|\co^I(t) - \ro^I\|_{C^\ell(\Sigma)} \leq Ct^\delta
  \]
  for all $t\leq t_0$. Define the $C^\ell$ Riemannian metric $\rch$ and the $C^\ell$ $(1,1)$-tensor field $\rK$ on $\Sigma$ by
  \begin{equation}\label{eq:K ring H ring def}
    \rch := \tsum_I \ro^I \otimes \ro^I, \qquad \rK := \tsum_I \rp_I \ro^I \otimes \rce_I.
  \end{equation}
  Then
  \[
  \|\ch(t) - \rch\|_{C^\ell(\Sigma)} + \|\K(t) - \rK\|_{C^\ell(\Sigma)} \leq Ct^\delta
  \]
  for all $t\leq t_0$. Finally, $(\Sigma,\rch,\rK,\rphi,\rpsi)$ are $C^\ell$ robust nondegenerate quiescent initial data on the singularity for the
  Einstein--nonlinear scalar field equations. 
\end{thm}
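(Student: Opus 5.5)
The plan is to iterate Lemma~\ref{improved estimates for expansion normalized variables} starting from Lemma~\ref{initial estimates for expansion normalized variables}, until all expansion normalised quantities are bounded in $C^{\ell+1}$, and then integrate the evolution equations once more to obtain limits.

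\textbf{Starting the induction.} On each $U_j$ from Lemma~\ref{technical lemma}, Lemma~\ref{initial estimates for expansion normalized variables} gives bounds of the form $t^{-1+\rp_I+2\delta}$, $t^{-1-\rp_I+2\delta}$ and $t^{-1+\rp_I+\rp_J-\rp_K+2\delta}$ in $C^{k_0-2}$. Since $\alpha_j\leq\delta/4<\delta$, these imply the hypotheses of Lemma~\ref{improved estimates for expansion normalized variables} with $m=1$. Whenever the exponent $-1+\dots+\delta+\alpha_j$ is positive, the quantity is trivially bounded, because the base exponent plus $2\delta$ exceeds it.

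\textbf{Iteration.} Each application raises $m$ by one and costs two derivatives. By Lemma~\ref{bounds for the pI}, $|\rp_I|<1$ and $|\rp_I+\rp_J-\rp_K|$ is bounded. Hence after a number $m_*$ of steps depending only on $\delta$ (for instance, with $m_*\delta>3$), every exponent $-1\pm\rp_I+m\delta+\alpha_j$ and $-1+\rp_I+\rp_J-\rp_K+m\delta+\alpha_j$ is positive. At that point every $E_\I\ce_I^i$, $E_\I\co^I_i$ and $E_\I\Lambda_{IJ}^K$ is uniformly bounded on each $U_j$, for $|\I|\leq k_0-2-2m_*$. Lemma~\ref{technical lemma} guarantees that no exponent is ever exactly zero, so the integration in the lemma is legitimate. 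Since the cover is finite, the bounds are uniform on $\S$. Choosing $k_0\geq \ell+4+2m_*$ leaves us with $C^{\ell+2}$ bounds.

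\textbf{Limits.} With everything bounded, the second part of Lemma~\ref{improved estimates for curly B} gives $|E_\I\cb_I^J|\leq Ct^{-1+2\delta}$ for $I\neq J$, and Lemma~\ref{estimates for curly B} gives $|E_\I\cb_I^I|\leq Ct^{-1+\delta}$. Then \eqref{equation for expansion normalized frame} and \eqref{equation for expansion normalized dual frame} yield $\|\p_t\ce_I\|_{C^\ell}+\|\p_t\co^I\|_{C^\ell}\leq Ct^{-1+\delta}$. Integrating gives limits $\rce_I,\ro^I$ with rate $t^\delta$, and these remain dual to each other. We then use $\ch=\sum_I\co^I\otimes\co^I$ and $\K=\sum_I p_I\co^I\otimes\ce_I$. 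The first identity holds because $\mH(\ce_I,\ce_J)=h(\theta^{p_I}\ce_I,\theta^{p_J}\ce_J)=\delta_{IJ}$. Combined with \eqref{eq:pI lim Phi lim Psi lim}, this gives convergence of $\ch$ and $\K$ to $\rch,\rK$ at rate $t^\delta$.

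The remaining issue, which I expect to be the main obstacle, is that $\ro^I$ is a genuine coframe, that is, the limits do not degenerate. I would argue via determinants. Let $\det(\co^I_i)$ denote the determinant with respect to $\eta^i$. It satisfies $\p_t\ln\det=-\sum_I\cb_I^I$, which is integrable in $t$, so the determinant stays bounded away from zero and its limit is nonzero. Consequently $\rch$ is Riemannian and $\rK$ is $\rch$-self adjoint with trace $\sum\rp_I=1$ and distinct eigenvalues $\rp_I$.

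\textbf{Constraints.} The condition $\tr\rK^2+\rpsi^2=1$ is \eqref{eq:Kasner relations}, and \eqref{eq:cond on eigenvalues} follows from \eqref{subcritical condition}. For $\diver_{\rch}\rK=\rpsi\md\rphi$, I would rewrite the momentum constraint \eqref{momentum constraint} in terms of expansion normalised variables, i.e.\ in the frame $\ce_I$ with the $\Lambda$'s. Every term other than $\diver_{\ch}\K-\Psi\md\Phi$ carries a factor that decays, such as $\ln\theta\,\md p_I$ paired against $p_I-\rp_I$, or $\md\ln\theta$ terms; this uses the $C^1$ convergences just proven. Passing to the limit then gives the identity.
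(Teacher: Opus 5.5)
Your proposal is correct and follows essentially the paper's route. You start Lemma~\ref{improved estimates for expansion normalized variables} at $m=1$ from Lemma~\ref{initial estimates for expansion normalized variables} on the cover of Lemma~\ref{technical lemma}, iterate a $\delta$-dependent number of times, and integrate the frame equations once $\cb_I^J=O(t^{-1+\delta})$ in $C^\ell$. You then pass to the limit in the momentum constraint written in the frame $\{\ce_I\}$: there the terms carrying $\ln\theta$ and $\ce_I(\ln\theta)$ are multiplied by (a derivative of) $\sum_J p_J^2+\Psi^2-1$, which the paper bounds via the Hamiltonian constraint but which is also $O(t^\delta)$ directly by \eqref{eq:pI lim Phi lim Psi lim} and \eqref{eq:Kasner relations}.

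Two small remarks. Your determinant argument is superfluous: letting $t\downarrow 0$ in $\co^J(\ce_I)=\delta^J_I$ already shows that $\{\rce_I\}$ is a frame with dual $\{\ro^I\}$, which is how the paper argues. Also, your sample choice $m_*\delta>3$ requires $\sum_I\rp_I^2\leq 1$, i.e.\ $|\rp_I+\rp_J-\rp_K|\leq\sqrt{3}$, rather than merely $|\rp_I|<1$.
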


\begin{proof}
  First, note that if $I$, $J$ and $K$ are distinct, then
  \[
  1-\rp_I-\rp_J+\rp_K \leq 1+\sqrt{3}\big(\rp_I^2 + \rp_J^2 + \rp_K^2\big)^{1/2} \leq 1 + \sqrt{3}.
  \]
  Moreover, by Lemma~\ref{bounds for the pI}, $1\pm\rp_I < 2-2\delta$ for all $I$. Hence, there is a large enough positive integer $N$,
  depending only on $\delta$, such that 
  \[
  -1+\rp_I+\rp_J-\rp_K + N\delta > 0, \qquad -1-\rp_I + N\delta > 0,
  \]
  where $I \neq J$ in the first inequality. Fix an open cover $U_j$ of $\Sigma$ as in Lemma~\ref{technical lemma}. Starting from the bounds of
  Lemma~\ref{initial estimates for expansion normalized variables}, we can now apply Lemma~\ref{improved estimates for expansion normalized variables}
  repeatedly, at most $N-1$ times, to obtain
  \[
  \big|E_\I\big(\ce_I^i\big)\big| + \big|E_\I\big(\co_i^I\big)\big| + \big|E_\I\big(\Lambda_{IJ}^K\big)\big| \leq C_j
  \]
  in $(0,t_0] \times U_j$, and for $|\I| \leq \ell + 2$, assuming $k_0 \geq 2N + \ell + 2$. By taking the maximum of the constants $C_j$,
  we get a constant $C$ such that
  \[
  \big|E_\I\big(\ce_I^i\big)\big| + \big|E_\I\big(\co_i^I\big)\big| + \big|E_\I\big(\Lambda_{IJ}^K\big)\big| \leq C,
  \]
  for $|\I| \leq \ell + 2$, in $(0,t_0] \times \Sigma$.

  Now knowing that all the $\ce_I$ and the $\Lambda_{IJ}^K$ are bounded in $C^{\ell+1}(\Sigma)$ for all $t \in (0,t_0]$,
  Lemma~\ref{improved estimates for curly B} yields
  \[
  \big|E_\I\big(\cb_I^J\big)\big| \leq Ct^{-1+2\delta} \min\{1,t^{2(\rp_I-\rp_J)}\},
  \]
  for $I \neq J$ and $|\I| \leq \ell+1$. Combining this estimate with Lemmas~\ref{estimates for powers of theta},
  \ref{equations for the expansion normalized eigenframe lemma} and \ref{estimates for curly B} yields
  \[
  \|\p_t\ce_I^i\|_{C^\ell(\Sigma)} + \|\p_t\co_i^I\|_{C^\ell(\Sigma)} + \|\p_t \Lambda_{IJ}^K\|_{C^\ell(\Sigma)} \leq C t^{-1+\delta}.
  \]
  But then, for $u < s\leq t_0$,
  \[
  \big|E_\I\big(\ce_I^i\big)(s) - E_\I\big(\ce_I^i\big)(u)\big| \leq \tint_u^s \big|E_\I\big(\p_t\ce_I^i\big)(t)\big|\md t \leq
  C(s^\delta - u^\delta).
  \]
  for all $t\leq t_0$ and all $|\I| \leq \ell$. Thus, there are functions $\rce_I^i \in C^\ell(\Sigma)$ such that 
  \[
  \|\ce_I^i(t) - \rce_I^i\|_{C^\ell(\Sigma)} \leq C t^\delta,
  \]
  and we define $\rce_I := \rce_I^iE_i$. Similarly, there are functions $\ro_i^I, \mathring{\Lambda}_{IJ}^K \in C^\ell(\Sigma)$ such that
  \[
  \|\co_i^I(t) - \ro_i^I\|_{C^\ell(\Sigma)} + \|\Lambda_{IJ}^K(t) - \mathring{\Lambda}_{IJ}^K\|_{C^\ell(\Sigma)} \leq C t^\delta,
  \]
  and we define $\ro^I := \ro_i^I\eta^i$. Letting $t\downarrow 0$ in the relation $\co_i^J \ce_I^i = \delta_I^J$, it is clear that
  $\{\rce_I\}_{I=1}^n$ is a well defined frame on $\Sigma$ with dual frame $\{\ro^I\}_{I=1}^n$. Moreover, since
  \[
  \Lambda_{IJ}^K = \co^K\big([\ce_I,\ce_J]\big) = \ce_I\big(\ce_J^j\big)\co_j^K - \ce_J\big(\ce_I^i\big) \co_i^K + \ce_I^i\ce_J^j \co^K\big([E_i,E_j]\big),
  \]
  we can let $t \downarrow 0$ to conclude that $\mathring{\Lambda}_{IJ}^K$ are the structure coefficients of the frame $\{\rce_I\}_{I=1}^n$. It only
  remains to prove the statement about the initial data on the singularity. Note, to this end, that
  \[
  \ch = \tsum_I \co^I \otimes \co^I, \qquad \K = \tsum_I p_I \co^I \otimes \ce_I.
  \]
  Hence, from the definition of $\rch$ and $\rK$, it is clear that
  \[
  \|\ch(t) - \rch\|_{C^\ell(\Sigma)} + \|\K(t) - \rK\|_{C^\ell(\Sigma)} \leq C t^\delta.
  \]
  Now, we verify that the conditions of initial data on the singularity hold. First,
  \[
  \tr\rK = \tsum_I \rp_I = 1, \qquad \tr\rK^2 + \rpsi^2 = \tsum_I \rp_I^2 + \rpsi^2 = 1.
  \]
  Next, it is clear that $\rK$ is self adjoint with respect to $\rch$. Moreover, $|\rp_I-\rp_J|\geq r$ for $I\neq J$, since $|p_I-p_J|\geq r$ for $I\neq J$;
  and (\ref{eq:cond on eigenvalues}) follows from (\ref{subcritical condition}).
  Finally, for the asymptotic momentum constraint (the second equation in part (2) of Definition~\ref{def:idos}), we compute, appealing to
  (\ref{eq:Koszul ech}),
  \begin{align*}
    &\diver_h K(\ce_I)\\
    &= \tsum_J \theta^{2p_J} h\big(\ovn_{\ce_J} K(\ce_I), \ce_J\big)\\
    &= \tsum_J \theta^{2p_J} h\big(\ovn_{\ce_J}(\theta p_I \ce_I) - \Gamma_{JI}^K K(\ce_K), \ce_J\big)\\
    &= \tsum_J \theta^{2p_J} h\big( \ce_J(\theta p_I) \ce_I + \theta p_I \Gamma_{JI}^L \ce_L - \tsum_K \theta p_K \Gamma_{JI}^K \ce_K, \ce_J \big)\\
    &= \ce_I(\theta p_I) + \tsum_J \theta(p_I-p_J)\Gamma_{JI}^J\\
    &= \theta\ce_I(p_I) + p_I \ce_I(\theta) + \tsum_J \theta(p_I-p_J)\big( -\ce_I(p_J\ln\theta) + \Lambda_{JI}^J \big)\\
    &= \theta \ce_I(p_I) + p_I\ce_I(\theta) - \theta p_I \ce_I\big( \tsum_J p_J\ln\theta \big)
    + \displaystyle\tsum_J\theta p_J \ce_I(p_J\ln\theta) + \tsum_J \theta(p_I-p_J) \Lambda_{JI}^J\\
    &= \theta\ce_I(p_I) + \theta\ce_I(\ln\theta) \tsum_J p_J^2 + \tfrac{\theta\ln\theta}{2} \ce_I\big( \tsum_J p_J^2 \big)
    + \displaystyle\tsum_J \theta(p_I-p_J) \Lambda_{JI}^J
  \end{align*}
  (no summation on $I$). On the other hand,
  \[
  e_0(\varphi)\ce_I(\varphi)  = \theta\Psi\ce_I( \Phi - \Psi\ln\theta ) = \theta\Psi\ce_I(\Phi) - \theta\ce_I(\ln\theta) \Psi^2
  - \tfrac{\theta\ln\theta}{2} \ce_I\big(\Psi^2\big).
  \]
  Hence, after multiplying by $1/\theta$, the momentum constraint becomes
  \[
  \ce_I(p_I) + \ce_I(\ln\theta)\big( \tsum_J p_J^2 + \Psi^2 - 1 \big) + \frac{1}{2}\ln\theta\ce_I\big( \tsum_Jp_J^2 + \Psi^2 \big)
  + \tsum_J(p_I-p_J)\Lambda_{JI}^J = \Psi \ce_I(\Phi).
  \]
  Now, after multiplication by $\theta^{-2}$, the Hamiltonian constraint reads
  \[
  \tsum_J p_J^2 + \Psi^2 - 1 = \theta^{-2}( \roScal_h - |\md\varphi|_h^2 - 2V \circ\varphi ).
  \]
  Thus, by the assumptions and Lemmas~\ref{bounds for the scalar field} and \ref{estimate for B},
  \[
  \|\tsum_J p_J^2 + \Psi^2 - 1\|_{C^{k_0-1}(\Sigma)} \leq Ct^\delta.
  \]
  Going back to the momentum constraint, we can now let $t \downarrow 0$ to conclude that
  \[
  \rce_I(\rp_I) + \tsum_J (\rp_I-\rp_J) \mathring{\Lambda}_{JI}^J = \rpsi \rce_I(\rphi).
  \]
  Since the left-hand side of this equation coincides with $\diver_{\rch} \rK(\rce_I)$, the lemma follows. 
\end{proof}

\begin{proof}[Proof of Theorem~\ref{thm:asymptotics}]
  The statement follows essentially immediately from Theorem~\ref{asymptotics of the eigenframe}. The only difference is that we do not
  assume the existence of an eigenframe in the statement of Theorem~\ref{thm:asymptotics}. However, this is easily achieved by taking a
  finite covering space; cf. the comments at the beginning of Subsection~\ref{ssection:the eigenframe}. On this covering space, the
  conclusions of Theorem~\ref{asymptotics of the eigenframe} hold. Moreover, since the only ambiguity in the choice of frame is a sign, the objects
  introduced in (\ref{eq:K ring H ring def}) descend to the quotient. Moreover, the conclusions of Theorem~\ref{thm:asymptotics} follow
  from the conclusions of Theorem~\ref{asymptotics of the eigenframe}. 
\end{proof}

\section{Proof of the main result}\label{section: proof of main theorem}

At this point, we are ready to prove Theorem~\ref{thm: big bang formation} by combining the main theorem of \cite{OPR} with
Theorems~\ref{thm:hod} and \ref{thm:asymptotics}.

\begin{proof}[Proof of Theorem~\ref{thm: big bang formation}]
  Given admissibility thresholds $\sigma_V,\sigma_p\in (0,1)$ as in the statement of the theorem, $\sigma$ is defined as in
  \cite[(11), p.~6]{OPR}:
  \begin{equation}\label{eq:sigmadef}
    \sigma=\min\big(\tfrac{\sigma_V}{3},\tfrac{\sigma_p}{5}\big).
  \end{equation}
  In \cite[(12), p.~6]{OPR}, lower bounds on $k_0$ and $k_1$ are specified. Here we let 
  \[
  k_0:=\left\lceil \tfrac {n+1}2 \right\rceil;
  \]
  i.e., $k_0$ is the minimal number such that \cite[(12a), p.~6]{OPR} holds. Note, however, that $k_0\geq 2$. We also assume that
  $k_1$ is the minimal integer such that \cite[(12b), p.~6]{OPR} holds. In particular, $\sigma$ is determined by $\sigma_V$ and $\sigma_p$;
  $k_0$ is determined by $n$; and $k_1$ is determined by $\sigma_V$, $\sigma_p$ and $n$. We define $\kappa:=k_1+2$.

  Next, note that the assumptions include the requirement that the eigenvalues of $\mK_0$ be distinct. Due to the arguments presented in the
  proof of \cite[Lemma~A.1, p.~223]{RinWave}, it follows that a finite (at most a $2^n$-fold) covering space of $\S$ has a global frame. Going
  to this finite cover, we can assume
  that there is a global orthonormal frame $\{E_i\}_{i=1}^n$ of $(\S,h_{\refer})$. Considering the Sobolev norms on the left hand side of
  (\ref{eq: Sobolev bound assumption}), they are at most increased by a factor of $2^n$ when going to the finite cover. Moreover, the norms
  used in the statement of \cite[Theorem~12, pp.~6-7]{OPR} are defined using a frame $\{E_i\}_{i=1}^n$ and the norms in
  (\ref{eq: Sobolev bound assumption}) are defined using the metric $h_{\refer}$. However, the two norms are equivalent up to constants depending
  only on $(\S,h_{\refer})$ and the frame. In other words, this difference does not affect the statement. Fixing a $\zeta_0$ as in the statement
  of the theorem, there is a $\zeta_0$ as in the statement of \cite[Theorem~12, pp.~6-7]{OPR} (obtained by compensating for the covering space
  and the equivalence of the norms), say $\zeta_0'$. Due to \cite[Theorem~12, pp.~6-7]{OPR}, we then obtain a $\zeta_1$. Given that the assumptions
  of the theorem are satisfied with this $\zeta_1$ (and the original $\zeta_0$), the conclusions of \cite[Theorem~12, pp.~6-7]{OPR} hold. Moreover,
  the estimates derived in the proof of this theorem hold.

  Due to \cite[Proposition~74, p.~29]{OPR}, there is a constant $\rho_0>0$ depending only on $\zeta_0$, $\sigma$, $k_0$, $k_1$, $(\Sigma,h_{\refer})$
  and $\{E_i\}_{i=1}^n$. In our setting, $\rho_0$ thus only depends on $\zeta_0$, $\sigma_V$, $\sigma_p$, $(\Sigma,h_{\refer})$ and $\{E_i\}_{i=1}^n$. In
  accordance with \cite[Notation~75, p.~29]{OPR}, we then say that $C$ is a \textit{standard constant} if it is strictly positive and only depends on
  $\rho_0$, $\s_p$, $\s_V$, $k_0$, $k_1$, $c_{k_{1}+2}$ (see Definition~\ref{def:Vadm}), $(\S, h_\refer)$ and the orthonormal frame $\{E_i\}_{i=1}^{n}$.

  Next, we wish to appeal to Theorem~\ref{thm:hod}. Note, therefore, that we have a solution to
  (\ref{eq: transport frame})--(\ref{eq:LapseEquation}) as described in Subsection~\ref{ssection:eqs}; the equations solved in \cite{OPR}
  are the ones written down in \cite[Proposition~70, pp.~23-24]{OPR}, and these equations coincide with
  (\ref{eq: transport frame})--(\ref{eq:LapseEquation}). Next, note that (\ref{eq:estdynvarmfltot gen}) with $\varepsilon=\sigma$ is an immediate
  consequence of \cite[(110), p.~37]{OPR} and \cite[Lemma~100, p.~39]{OPR}; recall \cite[Definition~87, p.~36]{OPR}. The estimate
  (\ref{eq:almost asymptotic Hcon gen}) is an immediate consequence of \cite[(147), p.~49]{OPR}. Note also that $3\sigma\leq \sigma_V$ due to
  (\ref{eq:sigmadef}). This means that the conclusions of Theorem~\ref{thm:hod} hold.
  
  Next, we wish to combine the above conclusions with Theorem~\ref{thm:asymptotics}. From \cite[Proposition~70, pp.~23-24]{OPR}, it follows
  that $g$ takes the form (\ref{eq:standard form g ito h}) and that (\ref{fermi walker equation}) holds. Next, let $\mrPhi$ and $\mrPsi$ be as
  in the statement of Theorem~\ref{thm:hod} and let $\mrp_I$ be the eigenvalues of $\mrmfK$ (the matrix with components $\mrmfK_{IJ}$). Then the
  $\mrp_I$ are distinct due to \cite[Theorem~12, pp.~6-7]{OPR} and smooth since the $\mrmfK_{IJ}$ are smooth. Due to \cite[(14), p.~7]{OPR}, it is
  clear that (\ref{eq:Kasner relations}) holds. Next, due to \cite[(226), p.~70]{OPR}, we can, by demanding that $\zeta_1$ be large enough,
  ensure that $p_I+p_J-p_K\leq 1-\sigma_p/2$ for all $t\leq t_0$ and $I$, $J$, $K$ such that $I\neq J$. This means that (\ref{subcritical condition})
  holds for any $\delta\leq\sigma$. Next, due to (\ref{seq:k phi prel est}), the estimate (\ref{eq:pI lim Phi lim Psi lim}) holds for all $k_0$,
  assuming $\delta<\sigma/4$. Assuming $\delta\leq \sigma/3$, the first three estimates in (\ref{bounds on the solution}) follow immediately
  from (\ref{seq:hod}). The final estimate follows from (\ref{seq:hod}) by demanding that $T$ be small enough. By demanding that $T$ be small enough,
  we can also assume that $|p_I-p_J|>r$ for $I\neq J$ and some $r>0$. Fixing such $r$, $T$ and $\delta<\sigma/4$, it follows that $(M,g,\varphi)$
  is a solution to the Einstein--nonlinear scalar field equations satisfying the $(\delta,n,k_0,r)$-assumptions for any $k_0\in\nn{}$.

  At this point we can appeal to Theorem~\ref{thm:asymptotics}. This yields the conclusion that there are smooth data on the singularity, as
  desired. Moreover, (\ref{eq:Cl asymptotics as data}) holds. The remaining conclusions of the theorem follow from \cite[Theorem~12, pp.~6-7]{OPR}.
  This finishes the proof of the theorem, with one caveat: we have demonstrated the desired conclusions for the finite covering space. On the
  other hand, the covering maps are local isometries, so that the conclusions descend to the quotient. The theorem follows. 
\end{proof}

\appendix

\section{Notation, conventions, estimates and identities}

\subsection{Conventions concerning number systems}
Throughout this article, we use the convention that $\nn{}=\{1,2,\dots\}$ and that $\nn{}_0=\{0,1,2,3,\dots\}$. 

\subsection{Symmetrisation and antisymmetrisation}\label{ssection:symm and antisymm}
Parentheses around indices in an expression denote symmetrisation and brackets denote antisymmetrisation. For example, in
(\ref{eq:concoef}), 
\[
- 2 N^{-1} e_{[I}(Nk_{J]K})=- N^{-1} e_{I}(Nk_{JK})+N^{-1} e_{J}(Nk_{IK}). 
\]
Moreover, in (\ref{eq:sff}), 
\[
\g_{K(IJ)}=\tfrac{1}{2}(\g_{KIJ}+\g_{KJI}).
\]

\subsection{Conventions concerning the geometry}\label{ssection:conv geo}
A \textit{spacetime} is a time oriented Lorentz manifold. Let $(M,g)$ be a spacetime with Levi-Civita connection $\nabla$ and let $\Sigma \subset M$ a
spacelike hypersurface with future pointing unit normal $e_0$ and induced metric $h$. Then the \textit{second fundamental form} of $\Sigma$ is given by $k(X,Y):=g(\n_X e_0,Y)$
for $X$ and $Y$ tangential to $\S$. If $s$ is a symmetric covariant $\ell$-tensor field on $\S$, $s^\sharp$ denotes the $(1,\ell-1)$--tensor field on $\S$
obtained by raising one index with $h$. In particular, the \textit{Weingarten map} is the $(1,1)$-tensor field defined by $K:=k^\sharp$. The mean
curvature is defined by $\theta:=\tr K$.

Let $(M,g)$ be an $n+1$--dimensional spacetime. Then $(M,g)$ is said to have a \textit{smooth foliation by spacelike hypersurfaces} if there is a smooth
$n$--dimensional manifold $\S$, an interval $\mI$ and a diffeomorphism $\Xi:\S\times \mI\rightarrow M$ such that the hypersurfaces
$\S_t:=\S\times \{t\}$ are spacelike with respect to $\Xi^*g$. The foliation is said to have \textit{lapse} $N>0$ and \textit{vanishing shift vector
field} if 
\begin{equation}\label{eq:Xi pullb g}
  \Xi^*g = -N^2\md t \otimes \md t + h,
\end{equation}
where $h$ denotes the family of induced (Riemannian) metrics on the hypersurfaces $\Sigma_t$. When we are in this situation, we usually omit reference
to the
diffeomorphism $\Xi$ and identify $M$ with $\S\times\mI$. In particular, we normally replace $\Xi^*g$ on the left hand side of (\ref{eq:Xi pullb g})
by $g$. Next, note that $e_0 := N^{-1}\d_t$ is the future pointing unit normal of the $\Sigma_t$.
If $\{E_i\}_{i=1}^n$ is a smooth global frame for $\Sigma$, it can be extended to $\Sigma \times \mI$ by requiring that $[\p_t,E_i] = 0$ for all $i = 1,\ldots,n$. Let $\{\eta^i\}_{i=1}^n$ be the corresponding dual frame. Then, the normal Lie derivative of $K$ is given in terms of this frame by  
\[
    \lie_{e_0}K=N^{-1}\d_t(K^i_{\phantom{i}j})E_i\otimes \eta^j;
\]
see, e.g., \cite[Subsection~A.2, pp.~225-227]{RinWave}. Moreover, the Levi-Civita connection of $h$ is
denoted by $\ovn$. We also use the notation
\[
\ldr{u,v}_h:=h_{p_1r_1}\cdots h_{p_kr_k}h^{q_1s_1}\cdots h^{q_\ell s_\ell}u^{p_1\cdots p_k}_{q_1\cdots q_\ell}v^{r_1\cdots r_k}_{s_1\cdots s_\ell}
\]
for $(k,\ell)$-tensor fields $u$ and $v$ on $\S_t$. Moreover, 
\[
|u|_h^2:=\ldr{u,u}_h.
\]
The notation $|u|_g^2$ is defined similarly, but now using the Lorentz metric $g$ and assuming $u$ to be a $(k,\ell)$-tensor field on $M$. Note,
however, that $|u|_g^2$ need not be non-negative in this case. 

\subsection{Norms}\label{ssection:norms and basic est}
Let $(\S,h_{\refer})$ be a closed Riemannian manifold with an orthonormal frame $\{E_i\}_{i=1}^n$. Denote the dual frame by $\{\eta^i\}_{i=1}^n$. 
In this article, bold capitalised Latin indices denote elements of $\{1,\dots,n\}^\ell$ for some $\ell\in\nn{}_0$. If
$\bfI\in \{1,\dots,n\}^\ell$, we let $|\bfI|:=\ell$ and if $\bfI=(i_1,\dots,i_\ell)$, then $E_{\bfI}:=E_{i_1}\cdots E_{i_\ell}$.
If $k\in\nn{}_0$ and $\psi\in C^\infty(\S)$, we define the $C^k$- and $H^k$-norms of $\psi$ on $\Sigma$ as follows
\begin{align*}
  \|\psi\|_{C^k(\S)} &:= \textstyle{\sum}_{|\bfI|\leq k}\sup_{x\in\S}|(E_{\bfI}\psi)(x)|,\\
  \|\psi\|_{H^k(\S)} &:=  \big(\textstyle{\sum}_{|\bfI|\leq k}\int_{\S}|E_{\bfI}\psi|^2\md\mu_{h_{\refer}}\big)^{1/2}.
\end{align*}
If $u$ is a tensor field on $\Sigma$, the $C^k$-norm of $u$ on $\Sigma$ is defined as the sum of the $C^k$-norms of the components of $u$, with respect to the frame $\{E_i\}_{i=1}^n$. Just as in \cite[(71), p.~28]{OPR}, we use the notation
\begin{subequations}\label{seq:e g Ck Hk norms}
  \begin{align}
    \|e\|_{C^k(\S)} &:= \textstyle{\sum}_{I,i}\|e_I^i\|_{C^k(\S)},\\
    \|\g\|_{H^k(\S)} &:= \big(\textstyle{\sum}_{I,J,K}\|\gamma_{IJK}\|_{H^k(\S)}^2\big)^{1/2}
  \end{align}
\end{subequations}
and similarly for $\omega^I_i$ and $k_{IJ}$. We also use the notation
\[
  \|\ear\varphi\|_{C^k(\S)} := \textstyle{\sum}_{I}\|e_I\varphi\|_{C^k(\S)},\ \ \
  \|\ear\varphi\|_{H^k(\S)} := \big(\textstyle{\sum}_{I}\|e_{I}\varphi\|_{H^k(\S)}^2\big)^{1/2}
\]
and similarly when $\varphi$ is replaced by $N$. 

\subsection{Basic estimates and identities}\label{ssection:bas es and identities}

Next, we recall the Gagliardo-Nirenberg estimates, which will be of central importance in what follows.
\begin{lemma}[Gagliardo-Nirenberg estimates] \label{lemma: products in L2}
  Let $(\S,h_{\refer})$ be a closed Riemannian manifold and $\{E_{i}\}_{i=1}^{n}$ be a global orthonormal frame with respect to $h_{\refer}$. Let
  $\ell_{i}\in\nn{}_0$, $i=1,\dots,j$, and $\ell=\ell_{1}+\dots+\ell_{j}$. If $|\bfI_{i}|=\ell_{i}$ and $\psi_{i}\in C^{\infty}(\S)$, $i=1,\dots,j$, then
\begin{equation}\label{eq:moser}
    \|E_{\bfI_{1}}\psi_{1}\cdots E_{\bfI_{j}}\psi_{j}\|_{L^2(\S)}
    \leq C\textstyle{\sum}_{i=1}^{j}\|\psi_{i}\|_{H^{\ell}(\S)}
        \prod_{m\neq i}\|\psi_{m}\|_{C^{0}(\S)},
  \end{equation}
where $C$ only depends on $\ell$, $(\S,h_{\refer})$ and $\{E_i\}_{i=1}^n$. In particular,
\begin{equation}\label{eq:moser2}
    \|\psi_1 \cdots \psi_j \|_{H^\ell(\S)}
    \leq C \textstyle{\sum}_{i=1}^{j}\|\psi_{i}\|_{H^{\ell}(\S)}
        \prod_{m\neq i}\|\psi_{m}\|_{C^{0}(\S)}.
\end{equation}
\end{lemma}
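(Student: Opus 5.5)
The plan is to reduce \eqref{eq:moser} to the classical interpolation inequality
\begin{equation}\label{eq:GN interp plan}
  \|E_{\bfJ}\psi\|_{L^{2\ell/k}(\S)}\leq C\|\psi\|_{C^0(\S)}^{1-k/\ell}\|\psi\|_{H^\ell(\S)}^{k/\ell},\qquad |\bfJ|=k,\ 0\leq k\leq \ell .
\end{equation}
Once \eqref{eq:GN interp plan} is available, \eqref{eq:moser} follows from H\"older and Young. If $\ell=0$ there is nothing to prove, so assume $\ell\geq 1$. Then factors with $\ell_i=0$ are simply estimated in $C^0$. For the remaining factors, set $p_i:=2\ell/\ell_i$ and note that $\sum_i 1/p_i=1/2$. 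H\"older's inequality then gives
\[
\|E_{\bfI_1}\psi_1\cdots E_{\bfI_j}\psi_j\|_{L^2}\leq \textstyle{\prod_i}\|E_{\bfI_i}\psi_i\|_{L^{p_i}}\leq C\prod_i\|\psi_i\|_{C^0}^{1-\ell_i/\ell}\|\psi_i\|_{H^\ell}^{\ell_i/\ell}.
\]
Write $X_i:=\|\psi_i\|_{H^\ell}\prod_{m\neq i}\|\psi_m\|_{C^0}$. The right hand side equals $C\prod_i X_i^{\ell_i/\ell}$, since the $C^0$-norm of each $\psi_m$ appears with total exponent $\sum_{i\neq m}\ell_i/\ell+(1-\ell_m/\ell)=1$. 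Since $\sum_i\ell_i/\ell=1$, the weighted arithmetic–geometric mean inequality gives $\prod_i X_i^{\ell_i/\ell}\leq\sum_i X_i$, which is \eqref{eq:moser}. The estimate \eqref{eq:moser2} then follows by applying $E_{\bfI}$ with $|\bfI|\leq\ell$ to $\psi_1\cdots\psi_j$. By the Leibniz rule this is a sum of terms $E_{\bfI_1}\psi_1\cdots E_{\bfI_j}\psi_j$ with $\sum|\bfI_i|=|\bfI|$. Apply \eqref{eq:moser} with $\ell$ replaced by $|\bfI|$ and use $\|\cdot\|_{H^{|\bfI|}}\leq\|\cdot\|_{H^\ell}$.

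To prove \eqref{eq:GN interp plan}, I would argue directly with the frame and integration by parts, avoiding charts. The basic step is the following: for $q\geq 1$, $f$ smooth and fixed $i$,
\[
\textstyle\int_\S|E_if|^{2q+2}\md\mu_{h_{\refer}}=\int_\S E_i f\,|E_if|^{2q}E_if\,\md\mu_{h_{\refer}}.
\]
Move $E_i$ off the first factor using the divergence formula $\int_\S E_i(u)\md\mu_{h_{\refer}}=-\int_\S u\,\rodiv_{h_{\refer}}E_i\,\md\mu_{h_{\refer}}$. Since $\rodiv_{h_{\refer}}E_i$ is smooth and bounded, this yields
\[
\|E_if\|_{L^{2q+2}}^2\leq C\|f\|_{L^{r}}\big(\|E_iE_if\|_{L^{s}}+\|E_if\|_{L^{s'}}\big)
\]
with H\"older-conjugate exponents. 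Iterating this with $f=E_{\bfJ'}\psi$ gives the discrete log-convexity
\[
\|E^k\psi\|_{L^{2\ell/k}}^2\lesssim\|E^{k-1}\psi\|_{L^{2\ell/(k-1)}}\|E^{k+1}\psi\|_{L^{2\ell/(k+1)}}+\text{lower order}.
\]
Here $E^k\psi$ denotes the collection of all $E_{\bfJ}\psi$ with $|\bfJ|=k$. The endpoints are controlled by $\|\psi\|_{C^0}$ at $k=0$ and by $\|\psi\|_{H^\ell}$ at $k=\ell$. A standard convexity argument over $k$ then yields \eqref{eq:GN interp plan}, with the lower order terms absorbed because $\S$ is compact.

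An alternative is to use a finite atlas and a partition of unity to reduce to the Euclidean Gagliardo–Nirenberg inequality. One then notes that $E_{\bfJ}$ differs from coordinate derivatives of order $k$ only by lower order derivatives with smooth bounded coefficients. This yields the same inequality with $C$ depending only on $\ell$, $(\S,h_{\refer})$ and $\{E_i\}_{i=1}^n$.

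The main obstacle is the bookkeeping for the lower order and commutator terms in \eqref{eq:GN interp plan}. These arise because the $E_i$ do not commute and have nonzero divergence, so they produce terms of order below $k$. I would control each such term by \eqref{eq:GN interp plan} at a lower level, by induction on $\ell$, and then use $\|\psi\|_{C^0}\lesssim$ the interpolated quantity. This is possible because on a closed manifold $\|\psi\|_{L^2}\leq C\|\psi\|_{C^0}$ and $\|\psi\|_{H^{\ell'}}\leq\|\psi\|_{H^\ell}$ for $\ell'\leq\ell$. Consequently every lower order contribution is bounded by $C\|\psi\|_{C^0}^{1-k/\ell}\|\psi\|_{H^\ell}^{k/\ell}$ as well.
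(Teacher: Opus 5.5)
Your reduction of \eqref{eq:moser} to the interpolation inequality $\|E_{\bfJ}\psi\|_{L^{2\ell/k}}\le C\|\psi\|_{C^0}^{1-k/\ell}\|\psi\|_{H^\ell}^{k/\ell}$ ($|\bfJ|=k\le\ell$), via H\"older and weighted AM--GM, is correct. Your Leibniz derivation of \eqref{eq:moser2} is exactly the paper's. The paper does not prove \eqref{eq:moser} itself; it quotes it from \cite[Corollary~B.8]{RinWave}. Your exponent count is miscomputed, but the identity you want holds: $\|\psi_m\|_{C^0}$ appears in $\prod_iX_i^{\ell_i/\ell}$ with exponent $\sum_{i\neq m}\ell_i/\ell=1-\ell_m/\ell$, as in the H\"older bound. (The Leibniz step also makes the constant in \eqref{eq:moser2} depend on $j$; that is unavoidable in general and harmless.)

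The gap is in your proof of the interpolation inequality, in the lower order terms. No commutators actually arise: with $f=E_{\bfJ'}\psi$, $E_iE_if$ is again an ordered frame derivative, and $H^\ell$ controls all of these. The only lower order source is $\rodiv_{h_{\refer}}E_i$. You propose to bound these terms by the inequality at a lower level and then use $\|\psi\|_{C^0}\lesssim\|\psi\|_{C^0}^{1-k/\ell}\|\psi\|_{H^\ell}^{k/\ell}$. That is equivalent to $\|\psi\|_{C^0}\lesssim\|\psi\|_{H^\ell}$, i.e.\ to $H^\ell\hookrightarrow C^0$, which fails for $\ell\le n/2$. The facts $\|\psi\|_{L^2}\le C\|\psi\|_{C^0}$ and $\|\psi\|_{H^{\ell'}}\le\|\psi\|_{H^\ell}$ do not help. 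Lower level bounds carry too high a power of $\|\psi\|_{C^0}$ and too low a power of $\|\psi\|_{H^\ell}$, and neither norm dominates the other.

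Concretely, for $\ell=2$, $k=1$, your integration by parts gives
\[
\|E_i\psi\|_{L^4}^4\le 3\|\psi\|_{C^0}\|E_i\psi\|_{L^4}^2\|E_iE_i\psi\|_{L^2}+C\|\psi\|_{C^0}\textstyle{\int}_{\S}|E_i\psi|^3\md\mu_{h_{\refer}}.
\]
Bounding the last integral by $C\|E_i\psi\|_{L^4}^3$ (``absorbed because the manifold is compact'') yields $\|E_i\psi\|_{L^4}^2\le C\|\psi\|_{C^0}\|\psi\|_{H^2}+C\|\psi\|_{C^0}^2$. The last term is not dominated by $\|\psi\|_{C^0}\|\psi\|_{H^2}$ when $n\ge 4$. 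So, as written, your argument closes only for $\ell>n/2$; for $n=3$ this happens to cover every nontrivial case. The lemma, however, is stated and used in the energy estimates for all $\ell$ and all $n\ge 3$.

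What is missing is a way to estimate lower order terms that preserves the homogeneity. You can do this by exploiting the Lebesgue exponent slack on the compact manifold together with $L^2$ bounds on intermediate derivatives, which $\|\psi\|_{H^\ell}$ controls. In the example, $\int_{\S}|E_i\psi|^3\md\mu_{h_{\refer}}\le\|E_i\psi\|_{L^2}\|E_i\psi\|_{L^4}^2$ immediately gives $\|E_i\psi\|_{L^4}^2\le C\|\psi\|_{C^0}\|\psi\|_{H^2}$. In general, for $|\bfJ'|=k'<k$, use H\"older interpolation,
\[
\|E_{\bfJ'}\psi\|_{L^{2\ell/k}}\le\|E_{\bfJ'}\psi\|_{L^{2\ell/k'}}^{\theta}\,\|E_{\bfJ'}\psi\|_{L^2}^{1-\theta},\qquad \theta:=\tfrac{\ell-k}{\ell-k'}.
\]
Combine this with the bound at level $k'$ (same $\ell$; induct on $k$, not on $\ell$) and with $\|E_{\bfJ'}\psi\|_{L^2}\le\|\psi\|_{H^\ell}$. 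This gives exactly $C\|\psi\|_{C^0}^{1-k/\ell}\|\psi\|_{H^\ell}^{k/\ell}$, since $\theta(1-k'/\ell)=1-k/\ell$.

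This makes your chart alternative rigorous directly. Apply the Euclidean inequality to $\tilde\chi\psi$, where $\tilde\chi=1$ on the support of the partition function, for the top order part of $E_{\bfJ}$ in coordinates. Use the displayed interpolation for its lower order part. In your integration by parts route, the same device turns the recursion into
\[
a_k^2\le Ca_{k-1}a_{k+1}+Ca_{k-1}a_k^{\theta}\|\psi\|_{H^\ell}^{1-\theta},\qquad \theta=\tfrac{\ell-k-1}{\ell-k},
\]
with $a_k:=\max_{|\bfJ|=k}\|E_{\bfJ}\psi\|_{L^{2\ell/k}}$. This recursion is homogeneous and can be closed by a perturbed log-convexity argument.
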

\begin{proof}
  The first statement is a special case of \cite[Corollary~B.8]{RinWave}. The second statement is an immediate consequence of the first. 
\end{proof}
We also need the following interpolation estimate.
\begin{lemma}[Lemma~135, p.~74, \cite{OPR}] \label{le: interpolation}
  With $(\S,h_{\refer})$ and $\{E_{i}\}_{i=1}^{n}$ as in Subsection~\ref{ssection:eqs}, let
  $k$, $\ell\in\nn{}_0$ be such that $0\leq \ell \leq k$ and $k\geq 1$. Then there is a constant $C$, depending
  only on $k$, $\ell$, $(\S,h_{\refer})$ and $(E_i)_{i=1}^n$, such that for all $\psi \in C^{\infty}(\S)$,  
\begin{equation}\label{eq:psiinterpol}
    \|\psi\|_{H^{\ell}(\S)}\leq C\|\psi\|_{L^{2}(\S)}^{1-\ell/k}\|\psi\|_{H^{k}(\S)}^{\ell/k}.
\end{equation}
\end{lemma}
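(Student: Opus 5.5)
The plan is to reduce the estimate to a spectral statement for the Laplacian of $h_{\refer}$, where it becomes a direct application of Hölder's inequality. The only geometric input is that the frame-based norms $\|\cdot\|_{H^k(\S)}$ of Subsection~\ref{ssection:norms and basic est} are equivalent to spectrally defined norms.

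\emph{Step 1 (norm equivalence).} Let $\Delta$ be the Laplace--Beltrami operator of $h_{\refer}$ on the closed manifold $\S$. Let $\{\phi_j\}$ be an $L^2$-orthonormal eigenbasis with $-\Delta\phi_j=\lambda_j\phi_j$ and $\lambda_j\geq 0$. For $\psi=\sum_j c_j\phi_j$, define
\[
\|\psi\|_{\mathcal{H}^s}^2:=\textstyle{\sum}_j(1+\lambda_j)^s|c_j|^2 .
\]
I will show that for each integer $m\geq 0$ there is a constant $C_m$, depending only on $m$, $(\S,h_{\refer})$ and $\{E_i\}_{i=1}^n$, such that
\[
C_m^{-1}\|\psi\|_{H^m(\S)}\leq \|\psi\|_{\mathcal{H}^m}\leq C_m\|\psi\|_{H^m(\S)}.
\]
\begin{itemize}
\item \emph{Upper bound on $\|\psi\|_{H^m(\S)}$.} For even $m=2p$, $\|\psi\|_{\mathcal{H}^{2p}}=\|(1-\Delta)^p\psi\|_{L^2}$. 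Standard elliptic regularity on a closed manifold, written in the global frame $\{E_i\}$, gives $\|\psi\|_{H^{2p}(\S)}\leq C\|(1-\Delta)^p\psi\|_{L^2}$.
\item \emph{Lower bound on $\|\psi\|_{H^m(\S)}$, even $m$.} Write $\Delta=\sum_iE_iE_i+\sum_i(\rodiv_{h_{\refer}}E_i)E_i$. Then $(1-\Delta)^p$ is a differential operator of order $2p$ in the $E_i$ with smooth coefficients, so $\|(1-\Delta)^p\psi\|_{L^2}\leq C\|\psi\|_{H^{2p}(\S)}$.
\item \emph{Odd $m$.} Both $H^m$ and $\mathcal{H}^m$ are the complex interpolation spaces between the neighbouring even orders. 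Alternatively, for $m=1$ one can use directly $\|\psi\|_{\mathcal{H}^1}^2=\int(\psi^2+|d\psi|^2_{h_{\refer}})\md\mu_{h_{\refer}}$, together with $|d\psi|^2_{h_{\refer}}=\sum_i(E_i\psi)^2$ since the frame is orthonormal. For general odd $m=2p+1$, write $\|\psi\|_{\mathcal{H}^{2p+1}}^2=\|(1-\Delta)^p\psi\|_{\mathcal{H}^1}^2$ and combine the $m=1$ case with the even case, using commutators of $E_i$ with $(1-\Delta)^p$, which are of order $2p$.
\end{itemize}

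\emph{Step 2 (Hölder on the spectral side).} Let $\theta=\ell/k\in[0,1]$. Split $(1+\lambda_j)^\ell|c_j|^2=\big(|c_j|^2\big)^{1-\theta}\big((1+\lambda_j)^k|c_j|^2\big)^{\theta}$. Hölder's inequality with exponents $1/(1-\theta)$ and $1/\theta$ then gives
\[
\|\psi\|_{\mathcal{H}^\ell}^2\leq \|\psi\|_{L^2}^{2(1-\ell/k)}\|\psi\|_{\mathcal{H}^k}^{2\ell/k}.
\]
The endpoint cases $\ell=0$ and $\ell=k$ are trivial. Combining this with Step 1 for $m=\ell$ and $m=k$ yields (\ref{eq:psiinterpol}), with a constant depending only on $k$, $\ell$, $(\S,h_{\refer})$ and the frame.

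\emph{Main obstacle.} The only nontrivial point is Step 1, the equivalence of the frame norms with the spectral norms. The paper's norms use arbitrary, non-symmetrised products $E_{\bfI}$, and the frame is not assumed divergence free. I would handle this by the commutator bookkeeping above: every commutator $[E_i,E_j]$ is a smooth combination of the $E_k$, so it lowers the order by one and is absorbed into lower-order terms. Those lower-order terms are in turn controlled by the elliptic estimate.

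\emph{Fallback if the spectral route is undesirable.} One can prove directly the discrete convexity estimate
\[
\|\psi\|_{H^j}^2\leq C\|\psi\|_{H^{j-1}}\|\psi\|_{H^{j+1}}.
\]
This follows by integrating by parts, $\int|E_iE_{\bfI}\psi|^2=-\int E_{\bfI}\psi\,E_iE_iE_{\bfI}\psi-\int E_{\bfI}\psi\,E_iE_{\bfI}\psi\,\rodiv_{h_{\refer}}E_i$, and absorbing the lower-order terms. Iterating this log-convexity inequality in $j$ gives the interpolation estimate.
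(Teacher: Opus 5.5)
Your argument is correct, and both of your routes work. The paper contains no proof of its own to compare against: the lemma is quoted from \cite[Lemma~135, p.~74]{OPR} and used as a black box, for example in the proof of Theorem~\ref{thm:hod}.

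Your spectral route puts all the work into Step~1: elliptic regularity for $(1-\Delta)^p$, absorbing the $L^2$ term via $\|\psi\|_{L^2}\leq\|(1-\Delta)^p\psi\|_{L^2}$, and commutators for the odd orders. After that, H\"older gives the inequality with constant $1$ on the spectral side. Appealing to complex interpolation for the odd orders is heavier than necessary, since it would by itself yield the lemma. The explicit commutator argument is the one to keep.

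Your fallback is more elementary and closer to the tools the paper actually uses. It needs only (\ref{eq:divergence formula}), Cauchy--Schwarz and Young's inequality. The monotonicity $\|\psi\|_{H^{j-1}(\S)}\leq\|\psi\|_{H^{j+1}(\S)}$ absorbs both the $\rodiv_{h_{\refer}}E_i$ term and the lower-order part of the norm. It also works directly with the frame-defined norms, so no norm equivalence or spectral theory is needed.

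The one step worth writing out there is the iteration. Set $a_j:=\|\psi\|_{H^j(\S)}$ and take $\psi\neq 0$. The inequality $a_j^2\leq Ca_{j-1}a_{j+1}$, with $C\geq 1$, says that $j\mapsto\ln a_j+\tfrac12(\ln C)j^2$ is discretely convex. It therefore lies below its chord on $\{0,\dots,k\}$, which gives
\[
a_\ell\leq C^{\ell(k-\ell)/2}a_0^{1-\ell/k}a_k^{\ell/k},
\]
with a constant depending only on $k$, $\ell$, $(\S,h_{\refer})$ and the frame, as required.
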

Finally, in the derivation of the energy estimates, we make use of the following elementary observation. If $\psi\in C^{\infty}(\S)$ and
$X$ is a smooth vector field on $\S$, then 
\begin{equation}\label{eq:divergence formula}
  \textstyle{\int}_{\S}X(\psi)\md\mu_{h_{\refer}}=-\textstyle{\int}_{\S}\psi(\rodiv_{h_{\refer}}X)\md\mu_{h_{\refer}};
\end{equation}
see, e.g., \cite[(B.7), p.~233]{RinWave}.



\begin{thebibliography}{1}
\bibitem{ABIF} Ames, E.; Beyer, F.; Isenberg, J.; LeFloch, P. G.: Quasilinear hyperbolic Fuchsian systems and AVTD
behavior in $T^2$-symmetric vacuum spacetimes. Ann. Henri Poincar\'{e} {\bf 14}, no. 6, 1445--1523 (2013)
\bibitem{aeta} Ames, E.; Beyer, F.; Isenberg, J.; LeFloch, P. G.: A class of solutions to the Einstein equations with AVTD behavior
in generalized wave gauges. J. Geom. Phys. {\bf 121}, 42--71 (2017)
\bibitem{anetal} An, X.; He, T.; Shen, D.: Stability of Big Bang singularity for the Einstein-Maxwell-scalar field-Vlasov system in
  the full strong sub-critical regime. Preprint, \href{https://arxiv.org/abs/2507.18585}{arXiv:2507.18585}
\bibitem{aarendall} Andersson, L.; Rendall, A. D.: Quiescent cosmological singularities. Comm. Math. Phys. {\bf 218}, 479--511 (2001)
\bibitem{aaf} Athanasiou, N.; Fournodavlos, G.: A localized construction of Kasner-like singularities.
  Comm. Math. Phys. {\bf 406}, no. 10, Paper No. 252, 51 pp. (2025)
\bibitem{beguin}  B\'{e}guin, F.: Aperiodic oscillatory asymptotic behavior for some Bianchi spacetimes. Class. Quantum Grav. {\bf 27},
  185005 (2010)
\bibitem{bad} B\'{e}guin, F.; Dutilleul, T.: Chaotic dynamics of spatially homogeneous spacetimes.
Comm. Math. Phys. {\bf 399}, no.2, 737--927 (2023)
\bibitem{bat} Beyer, F.; Oliynyk, T. A.: Localized big bang stability for the Einstein-scalar field equations.
  Arch. Ration. Mech. Anal. {\bf 248}, no. 1, Paper No. 3 (2024)
\bibitem{baotwo} Beyer, F.; Oliynyk, T. A.: Past stability of FLRW solutions to the Einstein-Euler-scalar field equations and their big bang
  singularites. Beijing J. of Pure and Appl. Math. {\bf 1}, no. 2, 515--637, (2024)
\bibitem{boz} Beyer, F.; Oliynyk, T. A.; Zheng, W.: Localized past stability of the subcritical Kasner-scalar field spacetimes.
   Preprint, \href{https://arxiv.org/abs/2502.09210}{arXiv:2502.09210}
\bibitem{bkl70} Belinski\v{\i}, V. A.; Khalatnikov, I. M.; Lifshitz, E. M.: Oscillatory approach to a singular point in the relativistic
  cosmology. Adv. Phys. {\bf 19}, 525--573 (1970)
\bibitem{bkl82} Belinski\v{\i}, V. A.; Khalatnikov, I. M.; Lifshitz, E. M.: A general solution of the Einstein equations with a time
  singularity. Adv. Phys. {\bf 31}, 639--667 (1982)
\bibitem{brehm} Brehm, B.: Bianchi VIII and IX vacuum cosmologies: Almost every solution forms particle horizons and converges to the
  Mixmaster attractor. Preprint, \href{https://arxiv.org/abs/1606.08058}{arXiv:1606.08058}
\bibitem{cim} Chru\'{s}ciel, P. T.; Isenberg, J.; Moncrief, V.: Strong cosmic censorship in polarised Gowdy spacetimes
Classical Quantum Gravity {\bf 7}, no. 10, 1671--1680 (1990)
\bibitem{damouretal} Damour, T.; Henneaux, M.; Rendall, A. D.; Weaver, M.: Kasner-like behaviour for subcritical Einstein-matter systems. 
Ann. Henri Poincar\'{e} {\bf 3}, 1049--1111 (2002)
\bibitem{dhn} Damour, T., Henneaux, M., Nicolai, H.: Cosmological billiards. Class. Quantum Grav. {\bf 20}, R145 (2003)
\bibitem{dn}  Damour, T., Nicolai, H.: Higher order M theory corrections and the Kac-Moody algebra E10. Class. Quantum
Grav. {\bf 22} 2849 (2005)
\bibitem{dhs} Demaret, J.; Henneaux, M.; Spindel, P.: Nonoscillatory Behavior In Vacuum Kaluza-Klein Cosmologies. Phys. Lett. {\bf 164B}, 27--30 (1985)
\bibitem{fau} Fajman, D.; Urban, L.: Cosmic Censorship near FLRW spacetimes with negative spatial curvature.
  Anal. PDE {\bf 18}, no. 7, 1615--1713 (2025)
\bibitem{fauVl} Fajman, D.; Urban, L.: On the past maximal development of near-FLRW data for the Einstein scalar-field Vlasov system.
  Preprint, \href{https://arxiv.org/abs/2402.08544}{arXiv:2402.08544}
\bibitem{fal} Fournodavlos, G.; Luk, J.: Asymptotically Kasner-like singularities. Amer. J. Math. {\bf 145}, no.4, 1183--1272 (2023)
\bibitem{FRS} Fournodavlos, G.; Rodnianski, I.; Speck, J.: Stable big bang formation for Einstein's equations: the complete sub-critical regime.
  J. Amer. Math. Soc. {\bf 36}, no.3, 827--916 (2023)
\bibitem{andres} Franco-Grisales, A.: Developments of initial data on big bang singularities for the Einstein--nonlinear scalar field equations.
  Preprint, \href{https://arxiv.org/abs/2409.17065}{arXiv:2409.17065}
\bibitem{cbag} Choquet-Bruhat, Y.; Geroch, R.: Global aspects of the Cauchy problem in general relativity. Comm. Math. Phys. {\bf 14} 329--335 (1969)
\bibitem{hur} Heinzle, J. M.; Uggla, C., Rohr, N.: The cosmological billiard attractor. Adv. Theor. Math. Phys. {\bf 13},
  293–407 (2009)
\bibitem{hul}  Heinzle, J. M.; Uggla, C.; Lim, W. C.: Spike oscillations. Phys. Rev. D {\bf 86}, 104049 (2012)  
\bibitem{iak} Isenberg, J.; Kichenassamy, S.: Asymptotic behavior in polarized $T^{2}$-symmetric vacuum space-times. J. Math. Phys. {\bf 40},
  no. 1, 340--352 (1999)
\bibitem{iam} Isenberg, J.; Moncrief, V.: Asymptotic behaviour in polarized and half-polarized $U(1)$ symmetric vacuum spacetimes.
  Class. Quantum Grav. {\bf 19}, no. 21, 5361--5386 (2002)
\bibitem{kar} Kichenassamy, S.; Rendall, A.: Analytic description of singularities in Gowdy spacetimes. Class. Quantum Grav. {\bf 15}, no. 5,
  1339--1355 (1998)  
\bibitem{klinger} Klinger, P.: A new class of asymptotically non-chaotic vacuum singularities. Ann. Physics {\bf 363}, 1--35 (2015)
\bibitem{lhwg} Liebscher, S.; Härterich, J.; Webster, K.; Georgi, M.: Ancient dynamics in Bianchi models: approach to
  periodic cycles. Comm. Math. Phys. {\bf 305}, no. 1, 59–83 (2011)
\bibitem{lrt} Liebscher, S.; Rendall, A. D.; Tchapnda, S. B.: Oscillatory singularities in Bianchi models with magnetic fields.
Ann. Henri Poincar\'{e} {\bf 14}, no. 5, 1043--1075  (2013) 
\bibitem{OPR} Oude Groeniger, H.; Petersen, O.; Ringstr\"{o}m, H.: Formation of quiescent big bang singularities. Preprint,
  \href{https://arxiv.org/abs/2309.11370}{arXiv:2309.11370}
\bibitem{ren} Rendall, A. D.: Fuchsian analysis of singularities in Gowdy spacetimes beyond analyticity. Class. Quantum Grav. {\bf 17},
  no. 16, 3305--3316 (2000)
\bibitem{raw} Rendall, A. D.; Weaver, M.: Manufacture of Gowdy spacetimes with spikes. Class. Quantum Grav. {\bf 18},
  no. 16, 2959--2975 (2001)
\bibitem{cbu} Ringstr\"{o}m,  H.: Curvature blow up in Bianchi VIII and IX  vacuum spacetimes. Class. Quantum Grav. \textbf{17}, 713--731 (2000)
\bibitem{BianchiIXattr} Ringstr\"{o}m, H.: The Bianchi IX attractor, Ann. Henri Poincar\'{e} {\bf 2}, 405--500 (2001)
\bibitem{RinAsVel} Ringstr\"{o}m, H.:  Existence of an asymptotic velocity and implications for the asymptotic behavior in the direction of
  the singularity in $T^3$-Gowdy. Comm. Pure Appl. Math. {\bf 59}, no. 7, 977--1041 (2006)
\bibitem{RinCauchy} Ringstr\"{o}m, H.: The Cauchy problem in General Relativity, 
  European Mathematical Society, Z\"{u}rich (2009)
\bibitem{RinSCC} Ringstr\"{o}m, H.: Strong cosmic censorship in $T^3$-Gowdy spacetimes. Ann. Math. (2) {\bf 170}, no. 3, 1181--1240 (2009)
\bibitem{stab} Ringstr\"{o}m, H.: On the Topology and Future Stability of 
the Universe, Oxford University Press, Oxford (2013)
\bibitem{RinWave} Ringstr\"{o}m, H.: Wave equations on silent big bang backgrounds. Memoirs of the AMS {\bf 314}, no. 1595, (2025)
\bibitem{RinGeo} Ringstr\"{o}m, H.: On the geometry of silent and anisotropic big bang singularities (extended arXiv version). Preprint,
  \href{https://arxiv.org/abs/2101.04955}{arXiv:2101.04955}
\bibitem{RinGeoJDG} Ringstr\"{o}m, H.: On the geometry of silent and anisotropic big bang singularities.
  J. Differential Geom. {\bf 132}, no. 3, 461--545 (2026)
\bibitem{RinQC} Ringstr\"{o}m, H.: Initial data on big bang singularities. J. Eur. Math. Soc. (JEMS), Online first,
  \href{https://ems.press/journals/jems/articles/14298869}{DOI 10.4171/JEMS/1645}
\bibitem{RinSHID} Ringstr\"{o}m, H.: Initial data on big bang singularities in symmetric settings. Pure Appl. Math. Q. {\bf 20}, no. 4,
  1505--1539 (2024)
\bibitem{RinModel} Ringstr\"{o}m, H.: On the structure of big bang singularities in spatially homogenous solutions to the Einstein non-linear
  scalar field equations. Preprint, \href{https://arxiv.org/abs/2505.00429}{arXiv:2505.00429}
\bibitem{RinLocEx} Ringstr\"{o}m, H.: Local existence theory for a class of CMC gauges for the Einstein-non-linear scalar field equations. Preprint,
  \href{https://arxiv.org/abs/2509.14110}{arXiv:2509.14110}  
\bibitem{rasql} Rodnianski, I.; Speck, J.: A regime of linear stability for the Einstein-scalar field system with applications to
  non-linear big bang formation. Ann. Math. (2) {\bf 187}, no. 1, 65--156 (2018)
\bibitem{rasq} Rodnianski, I.; Speck, J.: Stable big bang formation in near-FLRW solutions to the Einstein-scalar field and 
Einstein-stiff fluid systems. Selecta Math. (N.S.) {\bf 24}, no. 5, 4293--4459 (2018)
\bibitem{rsh} Rodnianski, I.; Speck, J.: On the nature of Hawking's incompleteness for the Einstein-vacuum equations: the regime of 
  moderately spatially anisotropic initial data. J. Eur. Math. Soc. (JEMS) {\bf 24}, no. 1, 167--263 (2022)
\bibitem{specks3} Speck, J.:  The maximal development of near-FLRW data for the Einstein-scalar field system with spatial topology $\sn{3}$.
  Comm. Math. Phys. {\bf 364}, no. 3, 879--979 (2018)
\bibitem{sta} St\aa hl, F.: Fuchsian analysis of $S^{2}\times S^{1}$ and $S^{3}$ Gowdy spacetimes. Class. Quantum Grav. {\bf 19}, no. 17,
  4483--4504 (2002)  
\bibitem{weaver} Weaver, M.: Dynamics of magnetic Bianchi $\mathrm{VI}_0$ -cosmologies. Class. Quantum Grav. {\bf 17}, no. 2, 421--434 (2000)
\bibitem{zheng} Zheng, W.: Localized Big Bang Stability of Spacetime Dimensions $n\geq 4$. Preprint,
  \href{https://arxiv.org/abs/2601.21677}{arXiv:2601.21677}
\end{thebibliography}
\end{document}